\DeclareMathAlphabet\mathbfcal{OMS}{cmsy}{b}{n}
\definecolor{ao}{rgb}{0.0, 0.4, 0.0}
\numberwithin{equation}{section}
\renewcommand{\d}{{\mathrm d}}
\newcommand{\im}{\mathrm{i}}
\newcommand{\e}{\mathrm{e}}
\def\tr{\mathop{\mathrm{tr}}\limits}
\newcommand{\tw}{\stackrel{\boldsymbol{\cdot}}{=}}
\newtheorem{theo}{Theorem}[section]
\newtheorem{lem}[theo]{Lemma}
\newtheorem{rem}[theo]{Remark}
\newtheorem{prop}[theo]{Proposition} 
\newtheorem{cor}[theo]{Corollary} 
\newtheorem{definition}[theo]{Definition}
\begin{document}

\title[Bulk level spacings in the eGinUE]{The complex elliptic Ginibre ensemble at weak non-Hermiticity: bulk spacing distributions}

\author{Thomas Bothner}
\address{School of Mathematics, University of Bristol, Fry Building, Woodland Road, Bristol, BS8 1UG, United Kingdom}
\email{thomas.bothner@bristol.ac.uk}

\author{Alex Little}
\address{School of Mathematics, University of Bristol, Fry Building, Woodland Road, Bristol, BS8 1UG, United Kingdom}
\email{al17344@bristol.ac.uk}
\date{\today}

\keywords{Complex elliptic Ginibre Ensemble, Fredholm determinants, Gaudin-Mehta gap statistics, integro-differential Painlev\'e functions, Gaudin-Mehta and Poisson gap distributions.}

\subjclass[2020]{Primary 60B20; Secondary 60G55, 33E17, 47B35}
\thanks{This work is supported by the Engineering and Physical Sciences Research Council through grant EP/T013893/2.}

\begin{abstract}
We show that the distribution of bulk spacings between pairs of adjacent eigenvalue real parts of a random matrix drawn from the complex elliptic Ginibre ensemble is asymptotically given by a generalization of the Gaudin-Mehta distribution, in the limit of weak non-Hermiticity. The same generalization is expressed in terms of an integro-differential Painlev\'e function and it is shown that the generalized Gaudin-Mehta distribution describes the crossover, with increasing degree of non-Hermiticity, from Gaudin-Mehta nearest-neighbor bulk statistics in the Gaussian Unitary Ensemble to Poisson gap statistics for eigenvalue real parts in the bulk of the Complex Ginibre Ensemble. 

\end{abstract}

\maketitle

%%%%%%%%%%%%%%%%%%%%%%%%%%%%%%%%%%%%%%%%%%%%%%%%%%%%%%%%%%%%%%

\section{Introduction and statement of results}\label{sec1}
Given a matrix $M_n\in\mathbb{C}^{n\times n}$, we let $\{\lambda_j(M_n)\}_{j=1}^n\subset\mathbb{C}$ be the eigenvalues of $M_n$ counted with algebraic multiplicity. The purpose of this paper is to study nearest-neighbor bulk spacings of elements in the set $\{\Re\lambda_j(M_n)\}_{j=1}^n\subset\mathbb{R}$ when $M_n$ is drawn from the complex elliptic Ginibre ensemble (eGinUE), in a suitable scaling limit coined the limit of weak non-Hermiticity, cf. \cite{FKS1,FKS2}. To begin with, we set our notational conventions for the eGinUE, as well as for the Gaussian Unitary Ensemble (GUE) and the Complex Ginibre Ensemble (GinUE).
\begin{definition}[eGinUE and GinUE, \cite{Gir,Gi,LS}] Let $\mathcal{M}_n$ denote the space of matrices $M_n=[\xi_{jk}]_{j,k=1}^n\in\mathbb{C}^{n\times n}$ equipped with the measure
\begin{equation*}
	\d M_n=\prod_{j,k=1}^n\d\Re \xi_{jk}\,\d\Im \xi_{jk}.
\end{equation*}
The \textnormal{eGinUE} is defined as the following probability law on $\mathcal{M}_n$,
\begin{equation}\label{r1}
	\mathbb{P}_{n,\tau}(\d M_n)=Z_{n,\tau}^{-1}\exp\left\{-\frac{n}{1-\tau^2}\tr\Big(M^{\dagger}_nM_n-\frac{\tau}{2}\big(M_n^2+(M_n^{\dagger})^2\big)\Big)\right\}\d M_n,\ \ \ \tau\in[0,1),
\end{equation}
where $Z_{n,\tau}$ is a normalizing constant and $M_n^{\dagger}$ denotes the adjoint of $M_n$. The \textnormal{GinUE} is defined on $\mathcal{M}_n$ by taking $\tau=0$ in \eqref{r1}.
\end{definition}
\begin{definition}[GUE, \cite{Po}] Let $\mathcal{H}_n$ denote the space of Hermitian matrices $M_n=[\zeta_{jk}]_{j,k=1}^n\in\mathbb{C}^{n\times n}$ equipped with the measure
\begin{equation*}
	\d M_n=\prod_{j=1}^n\d\zeta_{jj}\prod_{1\leq j<k\leq n}\d\Re\zeta_{jk}\,\d\Im\zeta_{jk}.
\end{equation*}
The \textnormal{GUE} is defined as the following probability law on $\mathcal{H}_n$,
\begin{equation}\label{r1a}
	\mathbb{P}_n(\d M_n)=Z_n^{-1}\exp\big\{-n\tr(M_n^2)\big\}\d M_n,
\end{equation}
where $Z_n$ is a normalizing constant.
\end{definition}

%%%%%%%%%%%%%%%%%%%%

\subsection{Spacings in the GUE} The bulk distribution of eigenvalues in the GUE is a classical topic in Random Matrix Theory (RMT), cf. \cite{AGZ,F1,M,PS}, which is driven by the Wigner semicircle law. To the point, if $\mathcal{N}_n(\Delta)$ denotes the number of eigenvalues of $M_n\in\textnormal{GUE}$ in a bounded interval $\Delta\subset\mathbb{R}$, then under the law \eqref{r1a},
\begin{equation}\label{r2}
	\mathbb{E}_n\{\mathcal{N}_n(\Delta)\}=n\int_{\Delta}\rho_{\textnormal{sc}}(x)\,\d x+o(n)\ \ \ \ \textnormal{as}\ \ n\rightarrow\infty,
\end{equation}
uniformly in $\Delta$, with $\rho_{\textnormal{sc}}$ equal to the semi-circular density
\begin{equation*}
	\rho_{\textnormal{sc}}(x):=\frac{1}{\pi}\sqrt{(2-x^2)_+}\,,\ \ \ \ \ \ x_+:=\max\{x,0\},\ \ \ x\in\mathbb{R}.
\end{equation*}
Informally, the mean law \eqref{r2} says that the eigenvalues $\lambda_j(M_n)$ concentrate in the interval $[-\sqrt{2},\sqrt{2}]\subset\mathbb{R}$ and for a level $\lambda_0\in\mathbb{R}$ in the interior of the bulk region $-\sqrt{2}+\epsilon\leq \lambda_0\leq \sqrt{2}-\epsilon$ with $\epsilon\in(0,\sqrt{2})$ fixed, the expected eigenvalue spacing should be $1/(n\rho_{\textnormal{sc}}(\lambda_0))$ near $\lambda_0$. In more detail, if we first define the eigenvalue density $p_n:\mathbb{R}^n\rightarrow\mathbb{R}_+$ in the GUE to be the unique symmetric function for which under \eqref{r1a}
\begin{equation}\label{r3}
	\mathbb{E}_n\Big\{F\big(\lambda_1(M_n),\ldots,\lambda_n(M_n)\big)\Big\}=\int_{\mathbb{R}^n}F(x_1,\ldots,x_n)p_n(x_1,\ldots,x_n)\,\d x_1\cdots\d x_n,\ \ \ \ M_n\in\textnormal{GUE},
\end{equation}
for any symmetric Borel function $F:\mathbb{R}^n\rightarrow\mathbb{R}$ of compact support in the region $\{-\infty<x_1<\cdots<x_n<\infty\}$, then, by the work of Dyson \cite{Dy1},
\begin{equation*}
	p_n(x_1,\ldots,x_n)=Q_n^{-1}\exp\left\{-n\sum_{\ell=1}^nx_{\ell}^2\right\}\big|\Delta(x_1,\ldots,x_n)\big|^2,
\end{equation*}
in terms of the Vandermonde $\Delta(x_1,\ldots,x_n):=\prod_{1\leq j<k\leq n}(x_k-x_j)$ and the normalizing constant $Q_n$ given in \cite[$(4.1.29)$]{PS}. In particular, the marginal density
\begin{equation*}
	p_{\ell}^{(n)}(x_1,\ldots,x_{\ell}):=\int_{\mathbb{R}^{n-\ell}}p_n(x_1,\ldots,x_n)\,\d x_{\ell+1}\cdots\d x_n,\ \ \ \ \ 1\leq\ell\leq n,
\end{equation*}
is given by the famous Gaudin-Mehta formula, cf. \cite[$(4.2.22)$]{PS},
\begin{equation}\label{r4}
	p_{\ell}^{(n)}(x_1,\ldots,x_{\ell})=\frac{(n-\ell)!}{n!}\det\big[K_n(x_j,x_k)\big]_{j,k=1}^{\ell},
\end{equation}
where $K_n(x,y)$ is the reproducing kernel
\begin{equation*}
	K_n(x,y)=\e^{-\frac{n}{2}x^2-\frac{n}{2}y^2}\sum_{\ell=0}^{n-1}P_{\ell}^{(n)}(x)P_{\ell}^{(n)}(y),
\end{equation*}
and $\{P_{\ell}^{(n)}\}_{\ell=0}^{\infty}\subset\mathbb{R}[x]$ are the orthonormal polynomials with respect to the measure $\e^{-nx^2}\d x$ on $\mathbb{R}$, i.e.
\begin{equation*}
	\int_{-\infty}^{\infty}\e^{-nx^2}P_j^{(n)}(x)P_k^{(n)}(x)\,\d x=\delta_{jk}\ \ \ \ \ \ \ \ \textnormal{and so}\ \ \ \ \ \ \ \ \ P_{\ell}^{(n)}(x)=n^{\frac{1}{4}}h_{\ell}\big(\sqrt{n}x\big),\end{equation*}
where the family $\{h_{\ell}\}_{\ell=0}^{\infty}\subset\mathbb{R}[x]$ relates to the Hermite polynomials $\{H_{\ell}\}_{\ell=0}^{\infty}\subset\mathbb{R}[x]$, cf. \cite[$\S 18.3$]{NIST}, via
\begin{equation}\label{r5}
	h_{\ell}(x):=\gamma_{\ell}H_{\ell}(x),\ \ \ \ \ \gamma_{\ell}^{-2}:=\sqrt{\pi}\,2^{\ell}\ell!,\ \ \ \ \ \ H_{\ell}(x):=\frac{\ell!}{2\pi\im}\oint_{|\eta|=\delta>0}\e^{2x\eta-\eta^2}\frac{\d\eta}{\eta^{\ell+1}}.
\end{equation}
Furthermore, by \eqref{r3} and \eqref{r4}, the generating functional in the GUE, i.e. the quantity
\begin{equation*}
	E_n[\phi]:=\mathbb{E}_n\left\{\prod_{\ell=1}^n\Big(1-\phi\big(\lambda_{\ell}(M_n)\big)\Big)\right\}=1+\sum_{\ell=1}^n(-1)^{\ell}\binom{n}{\ell}\int_{\mathbb{R}^{\ell}}p_{\ell}^{(n)}(x_1,\ldots,x_{\ell})\phi(x_1)\cdots\phi(x_{\ell})\,\d x_1\cdots\d x_{\ell},
\end{equation*}
admits the Fredholm determinant representation, cf. \cite[$(4.2.20)$]{PS},
\begin{equation}\label{r6}
	E_n[\phi]=1+\sum_{\ell=1}^n\frac{(-1)^{\ell}}{\ell!}\int_{J^{\ell}}\phi(x_1)\cdots\phi(x_{\ell})\det\big[K_n(x_j,x_k)\big]_{j,k=1}^{\ell}\d x_1\cdots\d x_{\ell},
\end{equation}
as soon as the support $J=\textnormal{supp}(\phi)$ of the bounded test function $\phi:\mathbb{R}\rightarrow\mathbb{C}$ is compact. Hence, by \eqref{r6} and the inclusion-exclusion principle, the probability $E_n(\Delta):=\mathbb{P}_n\{\mathcal{N}_n(\Delta)=0\}$ that $M_n\in\textnormal{GUE}$ has no eigenvalues in a given bounded interval $\Delta\subset\mathbb{R}$, is equal to
\begin{equation}\label{r7}
	E_n(\Delta)=E_n[\chi_{\Delta}]=1+\sum_{\ell=1}^n\frac{(-1)^{\ell}}{\ell!}\int_{\Delta^{\ell}}\det\big[K_n(x_j,x_k)\big]_{j,k=1}^{\ell}\d x_1\cdots\d x_{\ell},
\end{equation}
with the indicator function $\chi_{\Delta}$ on $\Delta$. Seeing that the asymptotics of $K_n$ are very-well understood, the precise eigenvalue spacing near a level $\lambda_0$ in the interior of the bulk region becomes accessible: by the Plancherel-Rotach asymptotics \cite[$\S 18.15$(v)]{NIST},
\begin{equation}\label{r8}
	\lim_{n\rightarrow\infty}\big(\rho_n(\lambda_0)\big)^{-\ell}p_{\ell}^{(n)}\left(\lambda_0+\frac{\mu_1}{n\rho_n(\lambda_0)},\ldots,\lambda_0+\frac{\mu_{\ell}}{n\rho_n(\lambda_0)}\right)=\det\left[\frac{\sin\pi(\mu-\mu_k)}{\pi(\mu_j-\mu_k)}\right]_{j,k=1}^{\ell},
\end{equation}
uniformly in $(\mu_1,\ldots,\mu_{\ell})$ chosen from any compact subset of $\mathbb{R}^{\ell}$ where $\rho_n(x):=p_1^{(n)}(x)$ is the density of the averaged normalized counting measure, i.e. the density of $\mathbb{E}_n\{\frac{1}{n}\mathcal{N}_n(\cdot)\}$. Note the consistency of the normalization in the LHS of \eqref{r8} with the aforementioned heuristic that the expected eigenvalue spacing should be $1/(n\rho_{\textnormal{sc}}(\lambda_0))\sim1/(n\rho_n(\lambda_0))$ near $\lambda_0$ for large $n$. From \eqref{r8}, one obtains subsequently that
\begin{equation*}
	E_n\left(\lambda_0+\frac{\Delta}{n\rho_n(\lambda_0)}\right)=1+\sum_{\ell=1}^{\infty}\frac{(-1)^{\ell}}{\ell!}\int_{\Delta^{\ell}}\det\left[\frac{\sin\pi(\mu_j-\mu_k)}{\pi(\mu_j-\mu_k)}\right]_{j,k=1}^{\ell}\d\mu_1\cdots\d\mu_k+o(1)
\end{equation*}
as $n\rightarrow\infty$ uniformly in the bounded interval $\Delta\subset\mathbb{R}$, or in equivalent form, as $n\rightarrow\infty$,
\begin{equation}\label{r9}
	E_n\left(\lambda_0+\frac{\Delta}{n\rho_n(\lambda_0)}\right)=\prod_{j=1}^{\infty}\big(1-\omega_j(K_{\sin},\Delta)\big)+o(1),\ \ \ \ \ \ \ \lambda_0+\frac{\Delta}{c}:=\left\{\lambda_0+\frac{x}{c}:\,x\in\Delta\right\},\ c>0,
\end{equation}
where $\omega_j(K_{\sin},\Delta)$ are the non-zero eigenvalues, counted according to their algebraic multiplicities, of the trace class integral operator $K_{\sin}:L^2(\Delta)\rightarrow L^2(\Delta)$ with 
\begin{equation*}
	(K_{\sin}f)(x):=\int_{\Delta}\frac{\sin\pi(x-y)}{\pi(x-y)}f(y)\,\d y.
\end{equation*}
The asymptotic \eqref{r9} can now be used to control the spacing distribution of an individual bulk eigenvalue $\lambda_0$. Indeed, the conditional probability that the distance between an eigenvalue at $\lambda_0$ and its nearest neighbor to the right is bigger than $s/(n\rho_n(\lambda_0))$ for large $n$ and fixed $s>0$, equals
\begin{equation}\label{r10}
	q_n(s):=\lim_{h\downarrow 0}\mathbb{P}_n\left\{\textnormal{no eigenvalues in}\ \left(\lambda_0,\lambda_0+\frac{s}{n\rho_n(\lambda_0)}\right]\ \,\bigg|\,\ \textnormal{an eigenvalue in}\ \left(\lambda_0-\frac{h}{n\rho_n(\lambda_0)},\lambda_0\right]\right\},
\end{equation}
and so, by \eqref{r9}, the to \eqref{r10} corresponding limiting spacing distribution becomes the Gaudin-Mehta distribution, 
\begin{equation}\label{r11}
	F_1(s):=1-\lim_{n\rightarrow\infty}q_n(s)=\int_0^s\frac{\d^2}{\d u^2}\prod_{j=1}^{\infty}\Big(1-\omega_j\big(K_{\sin},(0,u)\big)\Big)\,\d u,\ \ \ s>0,
\end{equation}
see \cite[$(5.2.57)$]{PS}, uniformly in $s\in(0,\infty)$ on any compact set, with
\begin{equation*}
	\frac{\d}{\d s}F_1(s)=\frac{1}{3}(\pi s)^2+\mathcal{O}\big(s^4\big),\ \ \ s\downarrow 0. 
\end{equation*}
Note that \eqref{r11} also equals the limiting distribution function for the average \cite[$(1.85)$]{DKMVZ} and single \cite[$(7)$]{Ta} gap spacing in the GUE. From an integrable systems viewpoint we highlight that the Fredholm determinant in the leading order of \eqref{r9} relates to a distinguished solution of Painlev\'e-V, see \cite{JMMS}, however for us the following alternative Painlev\'e representation will be more interesting, see \cite[Lemma $3.6.6$]{AGZ}\footnote{One needs the identification $u(t)=2\pi r(2\pi t)$ to map $r=r(t)$ in \cite[$(3.6.36),(3.6.37)$]{AGZ} to our $u=u(t)$ in \eqref{r12}.}. For any $t>0$,
\begin{equation}\label{r12}
	D(t):=\prod_{j=1}^{\infty}\Big(1-\omega_j\big(K_{\sin},(-t,t)\big)\Big)=\exp\left[-2t-\int_0^t(t-s)\big(u(s)\big)^2\,\d s\right],
\end{equation}
where $u=u(t)$ solves the differential equation
\begin{equation}\label{r13}
	\left[\frac{\d^2}{\d t^2}\big(tu(t)\big)+4\pi^2tu(t)\right]^2=4\big(u(t)\big)^2\left[\bigg(\frac{\d}{\d t}\big(tu(t)\big)\bigg)^2+4\pi^2\big(tu(t)\big)^2\right]
\end{equation}
with boundary constraint $u(t)\sim 2+\mathcal{O}(t)$ %$+4t+\mathcal{O}(t^2)$ 
as $t\downarrow 0$. Equation \eqref{r13} also relates to a Painlev\'e-V equation, see \cite[$(21.3.19),(21.3.20)$]{M} and in particular \cite[Chapter $21.5$]{M}.
\begin{center}
\begin{figure}[tbh]
\resizebox{0.456\textwidth}{!}{\includegraphics{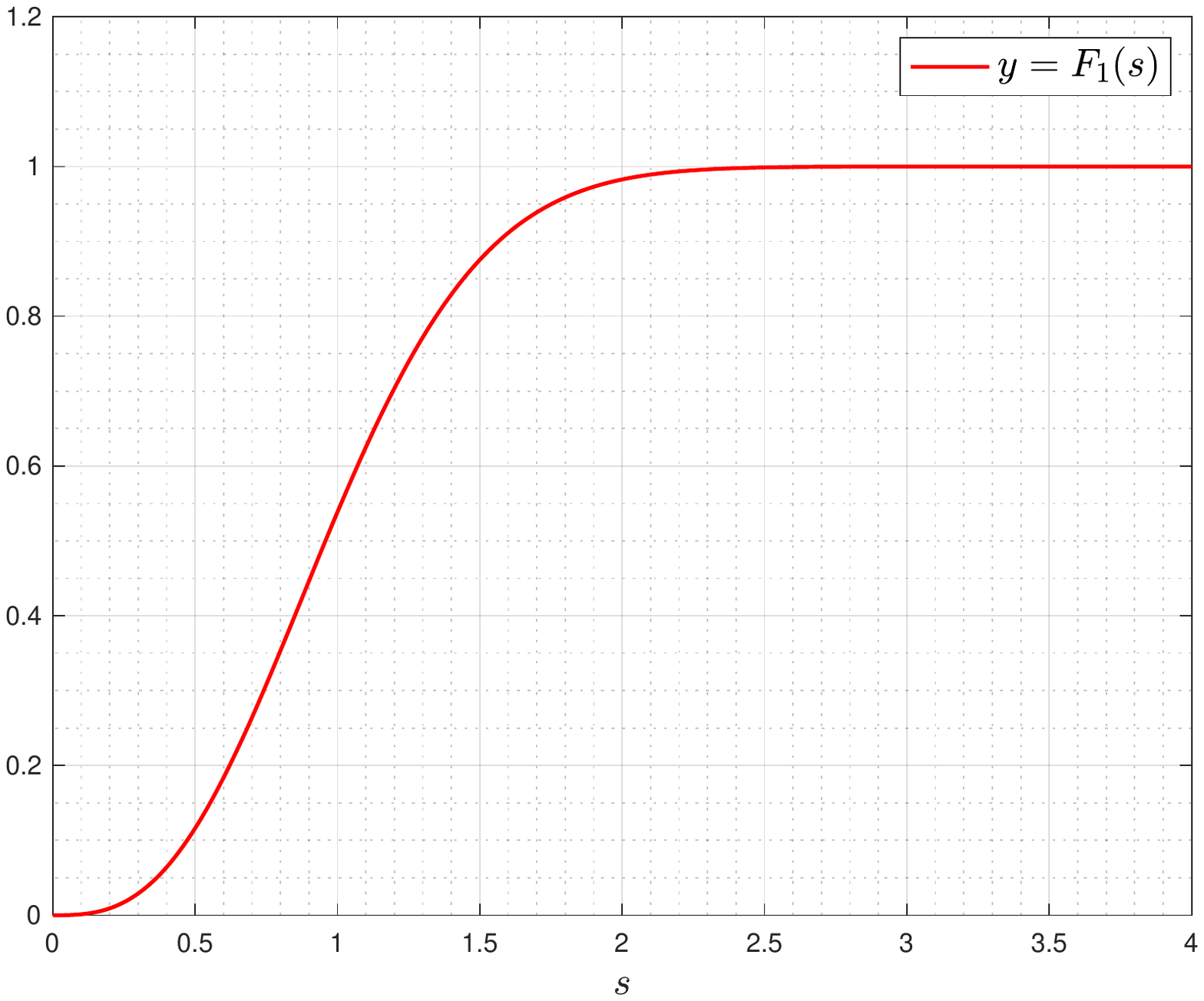}}\ \ \ \ \ \ \ \ \resizebox{0.455\textwidth}{!}{\includegraphics{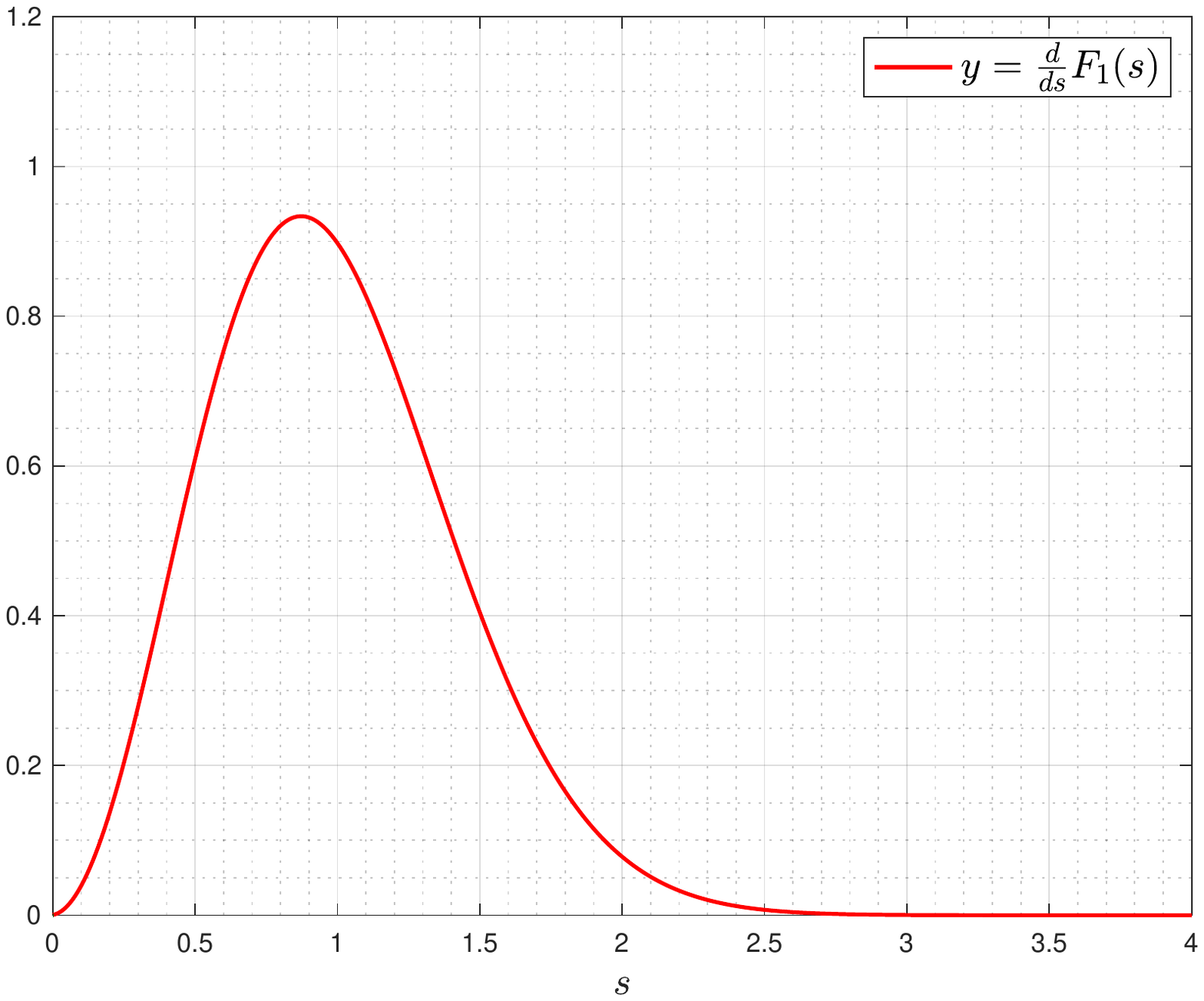}}
\caption{The Gaudin-Mehta bulk spacing distribution \eqref{r11} in the GUE. The plots were generated in MATLAB with $m=80$ quadrature points using the Nystr\"om method with Gauss-Legendre quadrature. On the left the distribution function, on the right the density function.}
\label{figure1}
\end{figure}
\end{center}

\subsection{Spacings in the GinUE} Moving to the GinUE, bulk spacings for eigenvalue real parts should also be considered classical, albeit the same topic is somewhat under appreciated in RMT literature. Namely, if $\mathcal{R}_n(\Delta)$ denotes the number of eigenvalue real parts of $M_n\in\textnormal{GinUE}$ in a bounded interval $\Delta\subset\mathbb{R}$, then our workings in Section \ref{sec2} below show that under the law \eqref{r1},
\begin{equation}\label{r14}
	\mathbb{E}_{n,0}\{\mathcal{R}_n(\Delta)\}=n\int_{\Delta}\varrho_0(x)\,\d x+o(n)\ \ \ \ \textnormal{as}\ \ n\rightarrow\infty,
\end{equation}
uniformly in $\Delta$, with $\varrho_0(x)$ equal to the density 
\begin{equation*}
	\varrho_0(x):=\frac{2}{\pi}\sqrt{(1-x^2)_+},\ \ \ x\in\mathbb{R}.
\end{equation*}
Note that the integral in the right hand side of \eqref{r14} is the integral of the uniform density $\rho_{\textnormal{c}}$ on the unit disk over $\Omega_{\Delta}:=\Delta\times\mathbb{R}\subset\mathbb{R}^2\simeq\mathbb{C}$, i.e.
\begin{equation}\label{r15}
	\mathbb{E}_{n,0}\{\mathcal{R}_n(\Delta)\}=n\int_{\Omega_{\Delta}}\rho_{\textnormal{c}}(z)\,\d^2 z+o(n)\ \ \ \ \ \textnormal{as}\ \ n\rightarrow\infty.
\end{equation}
Hence, eigenvalue real parts of $M_n\in\textnormal{GinUE}$ concentrate predictably, in view of the circular law \cite{Gi}, in the interval $[-1,1]\subset\mathbb{R}$ and for a level $\lambda_0\in\mathbb{C}$ in the interior of the bulk region $|\lambda_0|\leq 1-\epsilon$ with $\epsilon\in(0,1)$ fixed, the expected \textit{horizontal} eigenvalue spacing should be $1/(n\varrho_0(\Re\lambda_0))$ near $\lambda_0$. 
\begin{rem}\label{rem:1}
The circular law foretells the expected radial eigenvalue spacing to be of order $1/(\sqrt{n}\rho_{\textnormal{c}}(\lambda_0))$ near a bulk level $\lambda_0\in\mathbb{C}$ and the same spacing yields the Ginibre bulk kernel, cf. \cite[$(1.42)$]{Gi}. Indeed, if 
\begin{equation*}
	p_{\ell}^{(n,0)}(z_1,\ldots,z_{\ell}):=\frac{(n-\ell)!}{n!}\det\big[K_{n,0}(z_j,z_k)\big]_{j,k=1}^{\ell},\ \ \ K_{n,0}(z,w):=\frac{n}{\pi}\e^{-\frac{n}{2}|z|^2-\frac{n}{2}|w|^2}\sum_{\ell=0}^{n-1}\frac{n^{\ell}}{\ell!}(z\overline{w})^{\ell},
\end{equation*}
denotes the \textnormal{GinUE} marginal density of its symmetric joint eigenvalue density, see e.g. \cite[$(15.1.31)$]{M}, then
%\footnote{Our normalization in the left hand side of \eqref{r16} is somewhat non-standard in that we divide out the mean density $\rho_n(\lambda_0)\sim\frac{1}{\pi},n\rightarrow\infty$. This leads to the additional factors of $\pi$ in the exponential in the right hand side of \eqref{r16}, which are absent in \cite[$(15.1.31)$]{M} and in \cite[$(1.42)$]{Gi}.}
\begin{equation}\label{r16}
	\lim_{n\rightarrow\infty}p_{\ell}^{(n,0)}\left(\lambda_0+\frac{w_1}{\sqrt{n}},\ldots,\lambda_0+\frac{w_{\ell}}{\sqrt{n}}\right)=\det\left[\frac{1}{\pi}\e^{-\frac{1}{2}|w_j|^2-\frac{1}{2}|w_k|^2+w_j\overline{w_k}}\,\right]_{j,k=1}^{\ell},
\end{equation}
uniformly in $(w_1,\ldots,w_{\ell})$ chosen from any compact subset of $\mathbb{C}^{\ell}$, and valid for any $\lambda_0$ in the interior of the unit disk. %Here, $\rho_n(z)$ denotes the density of $\mathbb{E}_{n,0}\{\frac{1}{n}\mathcal{N}_n(\cdot)\}$, i.e. the density of the averaged normalized counting measure in the GinUE. 
As we are interested in spacings of eigenvalue real parts, \eqref{r16} will not be useful in the following.
\end{rem}
Aiming at a better understanding of bulk spacings of eigenvalue real parts in the GinUE, we now formulate an analogue of \eqref{r9} which will prove useful later on. Similarly to \eqref{r7} and the GUE, we let $E_{n,0}(\Omega):=\mathbb{P}_{n,0}\{\mathcal{N}_n(\Omega)=0\}$ denote the probability, under the law \eqref{r1} with $\tau=0$, that $M_n\in\textnormal{GinUE}$ has no eigenvalues in a given rectangle $\Omega\subset\mathbb{C}$.
\begin{lem}\label{lem:1} Assume $\lambda_0\in\mathbb{C}$ belongs to the interior of the support $\{z\in\mathbb{C}:\,|z|\leq 1\}$ of the circular law and suppose $\Delta\subset\mathbb{R}$ is a bounded interval of length $|\Delta|\in(0,\infty)$. Then
\begin{equation}\label{r17}
	\lim_{n\rightarrow\infty}E_{n,0}\left(\lambda_0+\frac{\Omega_{\Delta}}{n\varrho_0(\Re\lambda_0)}\right)=\e^{-|\Delta|},\ \ \ \ \ \Omega_{\Delta}=\Delta\times\mathbb{R}\subset\mathbb{R}^2\simeq\mathbb{C},
\end{equation}
where $\varrho_0(x)$ appeared in \eqref{r14} %is the density of $\mathbb{E}_{n,0}\{\frac{1}{n}\mathcal{R}_n(\cdot)\}$, $n\in\mathbb{Z}_{\geq 1}$, compare \eqref{r14}, 
and where we abbreviate $\lambda_0+\frac{\Omega}{c}:=\{\lambda_0+\frac{z}{c}:\,z\in\Omega\}$ for any $c>0$.
\end{lem}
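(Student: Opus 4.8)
The plan is to exploit the determinantal structure of the GinUE. The eigenvalues of $M_n\in\textnormal{GinUE}$ form a determinantal point process on $\mathbb{C}$ whose correlation kernel $K_{n,0}$ (Remark~\ref{rem:1}) is the integral kernel of the orthogonal rank-$n$ projection $P_n$ of $L^2(\mathbb{C},\d^2z)$ onto $\textnormal{span}\{z^j\e^{-\frac n2|z|^2}:0\leq j\leq n-1\}$; hence for any Borel $\Omega\subset\mathbb{C}$ the gap probability is $E_{n,0}(\Omega)=\det(1-\mathbf{1}_{\Omega}P_n\mathbf{1}_{\Omega})$, a Fredholm determinant of a self-adjoint non-negative trace class contraction. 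Writing $x_0:=\Re\lambda_0$ (so $\varrho_0(x_0)>0$ as $|x_0|<1$), the rescaled set in \eqref{r17} is exactly the vertical strip $\Sigma_n:=\{z:\Re z\in x_0+\Delta/(n\varrho_0(x_0))\}$; with $\mathcal{K}_n:=\mathbf{1}_{\Sigma_n}P_n\mathbf{1}_{\Sigma_n}$ and its eigenvalues $\omega_{j,n}\in[0,1]$, the elementary inequalities $x\leq-\log(1-x)\leq x+\frac{x^2}{2(1-x)}$ on $[0,1)$ give, once $\|\mathcal{K}_n\|<1$,
\begin{equation*}
	\operatorname{Tr}\mathcal{K}_n\ \leq\ -\log E_{n,0}(\Sigma_n)\ \leq\ \operatorname{Tr}\mathcal{K}_n+\frac{\operatorname{Tr}(\mathcal{K}_n^2)}{2(1-\|\mathcal{K}_n\|)},\qquad \|\mathcal{K}_n\|^2\leq\operatorname{Tr}(\mathcal{K}_n^2).
\end{equation*}
So I would reduce \eqref{r17} to proving the two limits $\operatorname{Tr}\mathcal{K}_n\to|\Delta|$ and $\operatorname{Tr}(\mathcal{K}_n^2)\to0$ (the latter forces $\|\mathcal{K}_n\|\to0$, so the bound applies for all large $n$).

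The first limit I would treat as a one-point computation: $\operatorname{Tr}\mathcal{K}_n=\int_{\Sigma_n}K_{n,0}(z,z)\,\d^2z=\mathbb{E}_{n,0}\{\mathcal{R}_n(x_0+\Delta/(n\varrho_0(x_0)))\}=\int_{x_0+\Delta/(n\varrho_0(x_0))}n\varrho_n(x)\,\d x$, where $n\varrho_n(x):=\int_{\mathbb{R}}K_{n,0}(x+\im y,x+\im y)\,\d y$ is the density of eigenvalue real parts. The asymptotic analysis behind \eqref{r14}, carried out in Section~\ref{sec2}, in fact yields $\varrho_n\to\varrho_0$ locally uniformly on $(-1,1)$; substituting $x=x_0+a/(n\varrho_0(x_0))$ then gives $\operatorname{Tr}\mathcal{K}_n=\varrho_0(x_0)^{-1}\int_{\Delta}\varrho_n\big(x_0+a/(n\varrho_0(x_0))\big)\,\d a\to|\Delta|$.

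For the second limit I would write $\operatorname{Tr}(\mathcal{K}_n^2)=\int_{\Sigma_n}\int_{\Sigma_n}|K_{n,0}(z,w)|^2\,\d^2z\,\d^2w$ and use the identity $|K_{n,0}(z,w)|=\tfrac n\pi\e^{-\frac n2|z-w|^2}|Q(n,nz\overline{w})|$, which follows from $\sum_{\ell<n}s^\ell/\ell!=\e^sQ(n,s)$ with $Q(n,s):=\e^{-s}\sum_{\ell<n}s^\ell/\ell!$ the regularized incomplete Gamma function. Fixing a small $\rho>0$: where $|z|,|w|\leq1-\rho$, the elementary estimate $|Q(n,n\lambda)|\leq1+C_\rho n^{-1/2}\e^{n(1+\log|\lambda|-\Re\lambda)}$ for $|\lambda|\leq1-\rho$, together with the algebraic simplification of the resulting exponent $-\tfrac12|z-w|^2+1+\log|z\overline{w}|-\Re(z\overline{w})=-\tfrac12 g(|z|)-\tfrac12 g(|w|)$, where $g(R):=R^2-2\log R-1\geq0$ vanishes only at $R=1$, produces
\begin{equation*}
	|K_{n,0}(z,w)|\ \leq\ \tfrac n\pi\,\e^{-\frac n2|z-w|^2}\ +\ C'n^{1/2}\,\e^{-\frac n2 g(|z|)}\,\e^{-\frac n2 g(|w|)}.
\end{equation*}
In $\operatorname{Tr}(\mathcal{K}_n^2)$ the square of the first term then contributes $O(n^2)\cdot|\Sigma_n\cap\{|w|\leq1\}|\cdot\sup_w\int_{\Sigma_n}\e^{-n|z-w|^2}\d^2z=O(n^2\cdot n^{-1}\cdot n^{-3/2})=O(n^{-1/2})$ — the gain being that $\Sigma_n$ has width $O(n^{-1})$ whereas the Gaussian lives on the scale $n^{-1/2}$ — while the square of the second term, a product in $z$ and $w$, contributes $O\big(n(\int_{\Sigma_n}\e^{-ng(|z|)}\d^2z)^2\big)=O(n\cdot(n^{-3/2})^2)=O(n^{-2})$; and the leftover region where $|z|$ or $|w|$ lies in the sliver $(1-\rho,1+\rho)$ (outside which $K_{n,0}$ is exponentially small), of area $O(\rho/n)$ inside $\Sigma_n$, contributes $O(\rho)$ via $|K_{n,0}(z,w)|^2\leq K_{n,0}(z,z)K_{n,0}(w,w)\leq(n/\pi)^2$. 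Letting $n\to\infty$ and then $\rho\downarrow0$ gives $\operatorname{Tr}(\mathcal{K}_n^2)\to0$, and hence \eqref{r17}.

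I expect the soft parts — the Fredholm determinant formalism, the logarithmic inequality, and the bookkeeping that converts the $n^{-1/2}$ scale mismatch into $\operatorname{Tr}(\mathcal{K}_n^2)=o(1)$ — to be quick, and the main obstacle to be the uniform-in-$n$ control of the incomplete Gamma factor $Q(n,nz\overline{w})$ over the whole (unbounded) strip: one must verify that wherever $|Q(n,nz\overline{w})|$ is exponentially large — which happens for near-antipodal pairs $z,w$, most acutely when $\lambda_0$ sits close to the imaginary axis — the Gaussian prefactor $\e^{-\frac n2|z-w|^2}$ overwhelms it, producing precisely the factorized off-diagonal decay $\e^{-\frac n2 g(|z|)}\e^{-\frac n2 g(|w|)}$ used above, which in turn renders these configurations negligible away from the boundary circle $|z|=1$.
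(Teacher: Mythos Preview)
Your approach is correct and, in fact, cleaner than the paper's own proof of this lemma. Both arguments compute $\operatorname{Tr}\mathcal{K}_n\to|\Delta|$ via the incomplete gamma function, but they diverge on how to control the remainder. The paper estimates \emph{all} higher traces $\operatorname{Tr}\mathcal{K}_n^k$ for $k\geq 2$ individually, passing to polar coordinates and splitting into the cases $\lambda_0=0$ and $\Re\lambda_0\neq 0$, arriving at bounds of the type $|\operatorname{Tr}\mathcal{K}_n^k|\leq c\,n^{-(k-1)/8}$; it then invokes the Plemelj--Smithies expansion and analytic continuation in $\gamma$. Your route is more economical: only the Hilbert--Schmidt norm is needed, and the key algebraic identity $-\tfrac12|z-w|^2+1+\log|z\overline{w}|-\Re(z\overline{w})=-\tfrac12 g(|z|)-\tfrac12 g(|w|)$ converts the potentially dangerous growth of $|Q(n,nz\overline{w})|$ into radial decay that factorizes. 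It is worth noting that your scheme --- show $\|\mathcal{K}_n\|_2\to 0$ and use the log inequality (equivalently, the estimate \eqref{p37} borrowed from \cite{CESX}) --- is precisely the one the paper adopts for the eGinUE analogue, Lemma~\ref{lem:2}; you have simply applied it to the GinUE case as well. Two small remarks: (i) the constant in your bound on $|Q(n,n\lambda)|$ is really $C_\rho$ rather than $C_\rho n^{-1/2}$ (compare \eqref{alittle0}), but this is harmless since on $\{|z|,|w|\leq 1-\rho\}$ the factor $\e^{-\frac n2 g(|z|)}$ is exponentially small; (ii) in the annular region $1-\rho<|w|<1+\rho$ you should not use the pointwise bound $(n/\pi)^2$ alone --- the $z$-integral over the unbounded strip would diverge --- but rather pair the Cauchy--Schwarz estimate $|K_{n,0}(z,w)|^2\leq K_{n,0}(z,z)K_{n,0}(w,w)$ with the already-established $\int_{\Sigma_n}K_{n,0}(z,z)\,\d^2z=O(1)$ to get the claimed $O(\rho)$.
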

The limit \eqref{r17} tells us that bulk eigenvalue real parts in the GinUE, as $n\rightarrow\infty$, behave like points in a Poisson process on $\mathbb{R}$ with unit rate. To the point, for large $n$, bulk eigenvalue real parts in the GinUE form a determinantal point process on $\mathbb{R}$ determined by the correlation kernel
\begin{equation*}
	K_{\textnormal{Poi}}(x,y):=\begin{cases}1,&x=y\\ 0,&x\neq y\end{cases},\ \ \ \ \ \ (x,y)\in\mathbb{R}^2.
\end{equation*}
This observation underwrites our next result about the spacing distribution of an individual bulk eigenvalue real part. We view the same result as the GinUE analogue of \eqref{r11} and it reads as follows.
\begin{center}
\begin{figure}[tbh]
\resizebox{0.456\textwidth}{!}{\includegraphics{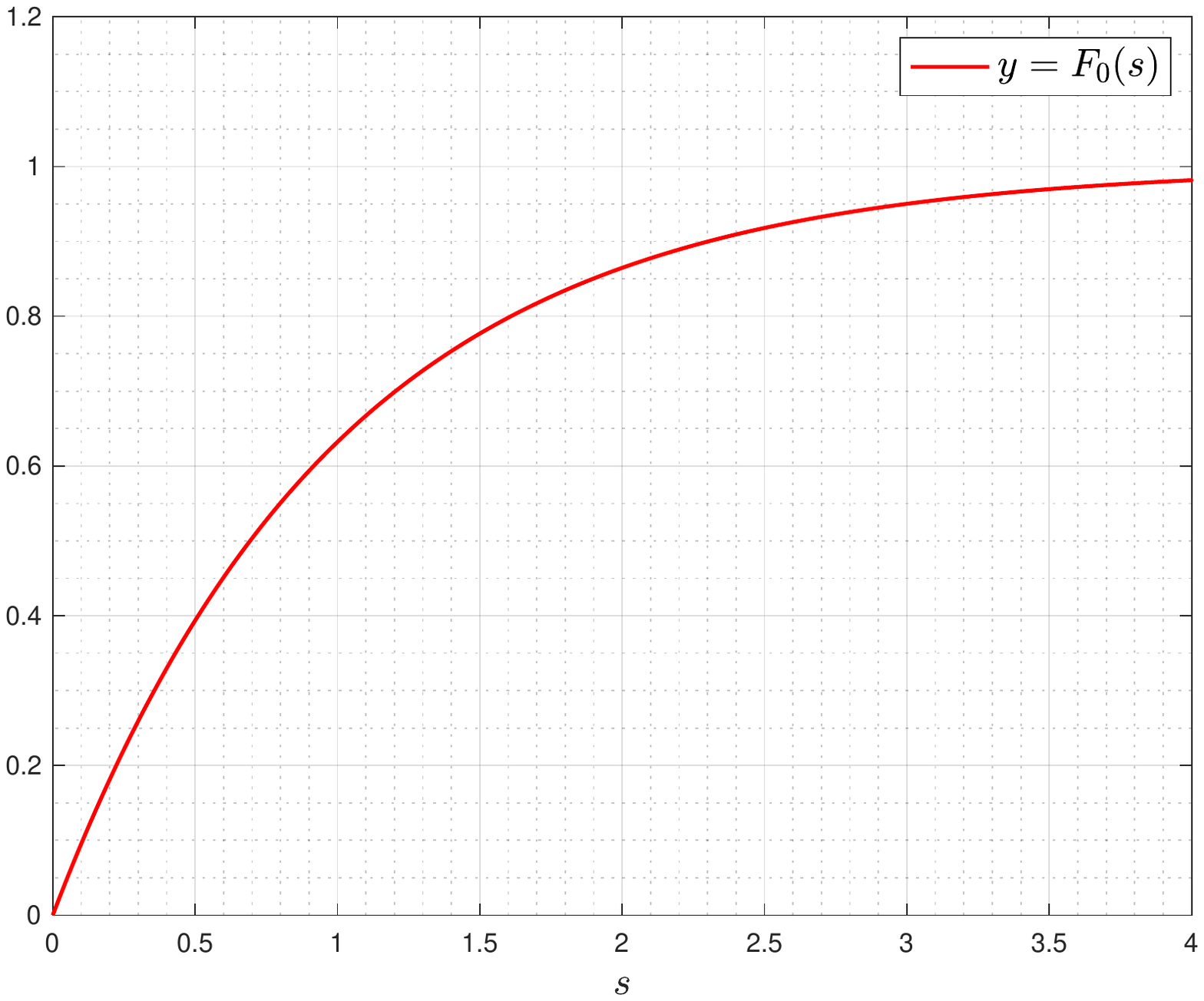}}\ \ \ \ \ \ \ \ \resizebox{0.455\textwidth}{!}{\includegraphics{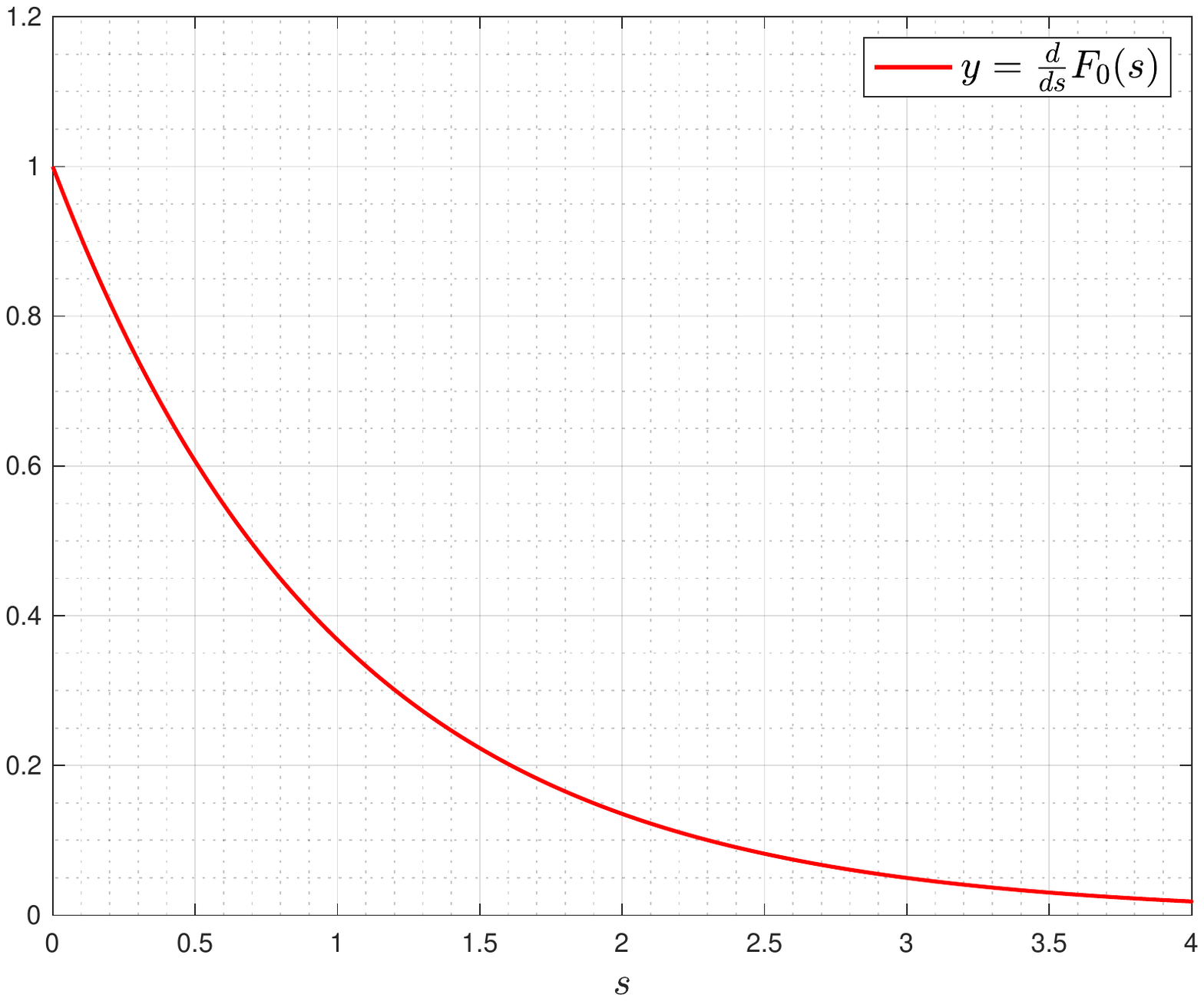}}
\caption{The Poisson bulk spacing distribution \eqref{r18} in the GinUE for real parts. The plots were generated in MATLAB. On the left the distribution function, on the right the density function.}
\label{figure2}
\end{figure}
\end{center} 
\begin{cor}\label{cor:1} Assume $\lambda_0\in\mathbb{R}$ belongs to the interior of the interval $[-1,1]\subset\mathbb{R}$ and let $q_{n,0}(s)$ denote the conditional probability that the distance between an eigenvalue real part at $\lambda_0$ and its nearest real part neighbor to the right is bigger than $s/(n\varrho_0(\lambda_0))$ for large $n$ and fixed $s>0$, under \eqref{r1} with $\tau=0$, i.e.
\begin{equation*}
	q_{n,0}(s):=\lim_{h\downarrow 0}\mathbb{P}_{n,0}\bigg\{\textnormal{no eigenvalue r.ps. in}\ \left(\lambda_0,\lambda_0+\frac{s}{n\varrho_0(\lambda_0)}\right]\,\bigg|\,\textnormal{an eigenvalue r.p. in}\ \left(\lambda_0-\frac{h}{n\varrho_0(\lambda_0)},\lambda_0\right]\bigg\}.
\end{equation*}
Then the corresponding limiting spacing distribution is the Poisson gap distribution, 
\begin{equation}\label{r18}
	F_0(s):=1-\lim_{n\rightarrow\infty}q_{n,0}(s)=\int_0^s\frac{\d^2}{\d u^2}\,\e^{-u}\,\d u=1-\e^{-s},\ \ \ s>0,
\end{equation}
uniformly in $s\in(0,\infty)$ in any compact subset with 
\begin{equation*}
	\frac{\d}{\d s}F_0(s)=1+\mathcal{O}(s),\ \ \ s\downarrow 0.
\end{equation*}
\end{cor}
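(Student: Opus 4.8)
The plan is to reproduce, step for step, the passage from the bulk gap asymptotics \eqref{r9} to the Gaudin--Mehta spacing law \eqref{r11} in the GUE, with the Poisson hole probability $\e^{-|\Delta|}$ supplied by Lemma \ref{lem:1} replacing the sine-kernel one; the only genuinely new input is Lemma \ref{lem:1} itself, and the computation that closes the argument is then elementary. Heuristically: rescaling by $c:=n\varrho_0(\lambda_0)$ normalizes the bulk real-part intensity to $1$ (by \eqref{r14}), while Lemma \ref{lem:1} says the rescaled eigenvalue-real-part process near $\lambda_0$ converges to a unit-rate Poisson process on $\mathbb{R}$; for such a process the probability that, given a point at the origin, the nearest point to its right lies beyond $s$ equals $-\frac{\d}{\d s}\e^{-s}=\e^{-s}$, so $q_{n,0}(s)\to\e^{-s}$ and $F_0(s)=1-\e^{-s}$. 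It remains to make this rigorous through the gap-probability route.

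\textbf{Finite $n$ reduction.} Because a real part of $M_n\in\textnormal{GinUE}$ lies in an interval $\Delta\subset\mathbb{R}$ exactly when an eigenvalue of $M_n$ lies in the vertical strip $\Omega_\Delta=\Delta\times\mathbb{R}$, inclusion--exclusion gives, for fixed $n$ and $h,s>0$,
\begin{equation*}
	\mathbb{P}_{n,0}\Big\{\textnormal{an e.v.\ r.p.\ in }\big(\lambda_0-\tfrac{h}{c},\lambda_0\big],\ \textnormal{no e.v.\ r.p.\ in }\big(\lambda_0,\lambda_0+\tfrac{s}{c}\big]\Big\}=E_{n,0}\big(\lambda_0+\tfrac{\Omega_{(0,s]}}{c}\big)-E_{n,0}\big(\lambda_0+\tfrac{\Omega_{(-h,s]}}{c}\big),
\end{equation*}
together with $\mathbb{P}_{n,0}\{\textnormal{an e.v.\ r.p.\ in }(\lambda_0-\tfrac{h}{c},\lambda_0]\}=1-E_{n,0}(\lambda_0+\tfrac{\Omega_{(-h,0]}}{c})$. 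By the workings of Section \ref{sec2} the eigenvalue-real-part process of $M_n$ has a smooth one-point intensity whose $n$-normalization converges locally uniformly to $\varrho_0$, together with locally bounded higher correlation functions; hence $\mathcal{E}_n(a,b):=E_{n,0}(\lambda_0+\Omega_{(-a,b]}/c)$ is smooth and nonincreasing in $(a,b)\in[0,\infty)^2$ with $\mathcal{E}_n(0,0)=1$. Dividing the two displays and letting $h\downarrow0$ then identifies
\begin{equation*}
	q_{n,0}(s)=\frac{\partial_a\mathcal{E}_n(a,s)\big|_{a=0}}{\partial_a\mathcal{E}_n(a,0)\big|_{a=0}},
\end{equation*}
i.e.\ $q_{n,0}(s)$ is precisely the $\mathbb{P}_{n,0}$-conditional probability of no eigenvalue real part in $(\lambda_0,\lambda_0+s/c]$ given one at $\lambda_0$; the denominator equals $-1$ times the local real-part intensity at $\lambda_0$ divided by $n\varrho_0(\lambda_0)$ and so tends to $-1$ as $n\to\infty$ by \eqref{r14}.

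\textbf{Passing to the limit.} By the previous step it remains to prove $\partial_a\mathcal{E}_n(a,s)|_{a=0}\to-\e^{-s}$, locally uniformly in $s>0$. Lemma \ref{lem:1} already supplies the pointwise limit $\mathcal{E}_n(a,b)\to\e^{-(a+b)}$, uniformly for $a+b$ in compacts of $(0,\infty)$; what must be added is convergence of the derivative in the strip endpoint, and this is obtained exactly as in the classical argument behind \eqref{r9}$\Rightarrow$\eqref{r11}. Indeed $\mathcal{E}_n(a,b)$ is the Fredholm determinant of the GinUE correlation kernel $K_{n,0}$ restricted to $L^2\big((\lambda_0-a/c,\lambda_0+b/c)\times\mathbb{R}\big)$ — trace class thanks to the Gaussian decay of $K_{n,0}$ in the imaginary direction — and, after conjugating the integration domain onto a fixed one, it extends holomorphically to a complex neighborhood of the real endpoints; the trace-norm kernel asymptotics of Section \ref{sec2} that underlie Lemma \ref{lem:1} being uniform for the endpoints in compacts, the convergence $\mathcal{E}_n\to\e^{-(a+b)}$ upgrades to $C^1$ by a Vitali/Cauchy-estimate argument. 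Therefore $q_{n,0}(s)\to\e^{-s}$, locally uniformly in $s>0$.

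\textbf{Conclusion.} It follows that $F_0(s)=1-\lim_{n\to\infty}q_{n,0}(s)=1-\e^{-s}$; since $\frac{\d^2}{\d u^2}\e^{-u}=\e^{-u}$ this equals $\int_0^s\frac{\d^2}{\d u^2}\e^{-u}\,\d u$, which is \eqref{r18}, and differentiating gives $\frac{\d}{\d s}F_0(s)=\e^{-s}=1+\mathcal{O}(s)$ as $s\downarrow0$. The single non-bookkeeping step, which I expect to be the main obstacle, is the $C^1$-upgrade of Lemma \ref{lem:1} in the strip endpoints — equivalently, extracting the derivative $\tfrac{\d}{\d s}$ of the limiting gap probability from the GinUE kernel analysis of Section \ref{sec2} — the exact counterpart, in this setting, of the classical step for the sine process used to go from \eqref{r9} to \eqref{r11}.
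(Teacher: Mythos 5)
Your finite-$n$ reduction is exactly the paper's: write the conditional probability as a ratio of differences of strip gap probabilities, let $h\downarrow 0$ by l'Hospital, and identify $q_{n,0}(s)=\partial_a\mathcal{E}_n(a,s)\big|_{a=0}\big/\partial_a\mathcal{E}_n(a,0)\big|_{a=0}$. The gap is in the step you yourself flag as the main obstacle and then dispose of in one sentence: the claim that the convergence $\mathcal{E}_n(a,b)\to\e^{-(a+b)}$ ``upgrades to $C^1$ by a Vitali/Cauchy-estimate argument.'' Uniform convergence on real compacts does not imply convergence of derivatives; to invoke Vitali you would need (i) analyticity of $a\mapsto\mathcal{E}_n(a,b)$ in a fixed complex neighborhood of $a=0$ and (ii) a bound on $|\mathcal{E}_n(a,b)|$ there that is uniform in $n$. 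Neither is established. Point (ii) is the sticking point: the gap probability is trivially bounded by $1$ only for real endpoints, and for complex $a$ one must control the Fredholm determinant of a kernel on an \emph{unbounded} vertical strip with a complex endpoint parameter, which means re-running all of the trace estimates of Section \ref{sec2} (Hadamard, polar coordinates, the $Q(n,\cdot)$ bounds) for complexified domains. That is at least as much work as the direct route, so the Vitali shortcut is not free, and as written the proof does not close.

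What the paper actually does at this point is differentiate the exponential trace representation $e_n(x,a)=\exp[-\sum_{k\ge1}\frac1k\tr_{L^2(\Omega_x^a)}K_n^k]$ (valid for $n\ge n_0$ by the estimates \eqref{p13},\eqref{p16} from the proof of Lemma \ref{lem:1}) term by term in the endpoint $x$. The derivative of the $k$-th trace is a boundary integral over the vertical line $\Re z_1=-x/\varrho_0(\lambda_0)$, see \eqref{p21}; the $k=1$ term converges to $1$ (this is \eqref{p20}, essentially your statement about the denominator), while for $k\ge2$ the same Hadamard/polar-coordinate machinery gives $|\frac{\d}{\d x}T_k^n(x,a)|\le c\,n^{-1/2}n^{-(k-2)/8}$, summable and vanishing as $n\to\infty$. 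This is the concrete content you would need to supply in place of the Vitali assertion. Your final bookkeeping (identifying $\e^{-s}=\int_0^s\frac{\d^2}{\d u^2}\e^{-u}\d u$ and the small-$s$ expansion) is fine.
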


When comparing \eqref{r11} and \eqref{r18} one can wonder which dynamical system interpolates between $D(\frac{t}{2})$, appearing in \eqref{r11},\eqref{r12} and upheld by the Painlev\'e type equation \eqref{r13}, and the elementary $\e^{-t}$ appearing in \eqref{r18}. We will answer this question affirmatively through our upcoming analysis of the eGinUE.
\begin{rem} The limiting behavior \eqref{r18} reaffirms the statistical independence of eigenvalue real parts in the \textnormal{GinUE} in the large $n$-limit. This was previously known to hold true near the edge, cf. \cite{Ben,CESX}, and is now known to hold true in the bulk as well. More classical is the independence of eigenvalue moduli in the \textnormal{GinUE}, see \cite{Kos}.
\end{rem}
\subsection{Spacings in the eGinUE} The statistical analysis of eigenvalue real parts in the eGinUE is made possible by the remarkable insights of \cite{FGIL,FJ} on the exact solvability of the same matrix model. Namely, if we define the eigenvalue density $p_{n,\tau}:\mathbb{C}^n\rightarrow\mathbb{R}_+$ in the eGinUE to be the unique symmetric function for which
\begin{equation*}
	\mathbb{E}_{n,\tau}\Big\{F\big(\lambda_1(M_n),\ldots,\lambda_n(M_n)\big)\Big\}=\int_{\mathbb{C}^n}F(z_1,\ldots,z_n)p_{n,\tau}(z_1,\ldots,z_n)\,\d^2z_1\cdots\d^2z_n,\ \ \ M_n\in\textnormal{eGinUE},
\end{equation*}
for any symmetric Borel function $F:\mathbb{C}^n\rightarrow\mathbb{R}$ of compact support in the region $\{-\infty<|z_1|\leq\cdots\leq|z_n|<\infty\ \textnormal{and}\ \textnormal{arg}\,z_j<\textnormal{arg}\,z_{j+1}\ \textnormal{if}\ |z_j|=|z_{j+1}|\ \textnormal{for some}\ j\}$, then by the works \cite{FKS1,FKS2,FSK} of Fyodorov, Khoruzhenko and Sommers, assuming $\tau\in(0,1)$ henceforth,
\begin{equation*}
	p_{n,\tau}(z_1,\ldots,z_n)=Q_{n,\tau}^{-1}\exp\left\{-n\sum_{\ell=1}^nV_{\tau}(z_{\ell})\right\}\big|\Delta(z_1,\ldots,z_n)\big|^2,\ \ \ V_{\tau}(z):=\frac{1}{1-\tau^2}\Big(|z|^2-\tau\,\Re(z^2)\Big),
\end{equation*}
in terms of the Vandermonde $\Delta(z_1,\ldots,z_n)$ and the normalizing constant $Q_{n,\tau}$ in \cite[$(2)$]{FKS2}. Consequently, with the help of \cite{FGIL,FJ}, the orthogonal polynomial method in RMT allows one to compute the marginal density
\begin{equation*}
	p_{\ell}^{(n,\tau)}(z_1,\ldots,z_{\ell}):=\int_{\mathbb{C}^{n-\ell}}p_{n,\tau}(z_1,\ldots,z_n)\,\d^2z_{\ell+1}\cdots\d^2z_n,\ \ \ 1\leq\ell\leq n,
\end{equation*}
in Gaudin-Mehta fashion, cf. \cite[Section $3.1$]{FGIL},
\begin{equation}\label{r19}
	p_{\ell}^{(n,\tau)}(z_1,\ldots,z_{\ell})=\frac{(n-\ell)!}{n!}\det\big[K_{n,\tau}(z_j,z_k)\big]_{j,k=1}^{\ell},
\end{equation}
where $K_{n,\tau}(z,w)$ is the reproducing kernel
\begin{equation}\label{r20}
	K_{n,\tau}(z,w)=\e^{-\frac{n}{2}V_{\tau}(z)-\frac{n}{2}V_{\tau}(w)}\sum_{\ell=0}^{n-1}P_{\ell}^{(n,\tau)}(z)\overline{P_{\ell}^{(n,\tau)}(w)},
\end{equation}
and $\{P_{\ell}^{(n,\tau)}\}_{\ell=0}^{\infty}\subset\mathbb{R}[z]$ are the orthonormal polynomials with respect to the measure $\e^{-nV_{\tau}(z)}\d^2 z$ on $\mathbb{C}$, i.e.
\begin{equation*}
	\int_{\mathbb{C}}\e^{-nV_{\tau}(z)}P_j^{(n,\tau)}(z)\overline{P_k^{(n,\tau)}(z)}\,\d^2 z=\delta_{jk}\ \ \ \ \ \ \textnormal{and so}\ \ \ \ \ \ P_{\ell}^{(n,\tau)}(z)=n^{\frac{1}{2}}h_{\ell}^{\tau}(\sqrt{n}\,z),
\end{equation*}
where the family $\{h_{\ell}^{\tau}\}_{\ell=0}^{\infty}\subset\mathbb{R}[z]$ relates again to the Hermite polynomials $\{H_{\ell}\}_{\ell=0}^{\infty}\subset\mathbb{R}[x]$ in \eqref{r5},
\begin{equation}\label{r21}
	h_{\ell}^{\tau}(z):=\gamma_{\ell}^{\tau}H_{\ell}\Big(\frac{z}{\sqrt{2\tau}}\Big),\ \ \ \ \ \ (\gamma_{\ell}^{\tau})^{-2}:=\pi\,\ell!\sqrt{1-\tau^2}\Big(\frac{2}{\tau}\Big)^{\ell}.
\end{equation}
With \eqref{r19} in place it is now straightforward to compute the generating functional in the eGinUE, i.e. the quantity
\begin{equation*}
	E_{n,\tau}[\phi]:=\mathbb{E}_{n,\tau}\left\{\prod_{\ell=1}^n\big(1-\phi(\lambda_{\ell}(M_n))\big)\right\},\ \ \ \ M_n\in\textnormal{eGinUE},
\end{equation*}
taken under the law \eqref{r1} with $\tau\in(0,1)$, as Fredholm determinant, using the general theory of determinantal point processes, see for instance \cite{Joh0}. In turn, we are able to express the gap probability $E_{n,\tau}(\Omega)$, i.e. the probability that $M_n\in\textnormal{eGinUE}$ has no eigenvalues in a rectangle $\Omega\subset\mathbb{C}$, as Fredholm determinant, too.
\begin{prop}[{\cite[Proposition $2.2$]{Joh0}}]\label{prop:1} Suppose the bounded test function $\phi:\mathbb{C}\rightarrow\mathbb{C}$ has compact support $J=\textnormal{supp}\,\phi\subset\mathbb{C}$. Then
\begin{equation}\label{r22}
	E_{n,\tau}[\phi]=1+\sum_{\ell=1}^n\frac{(-1)^{\ell}}{\ell!}\int_{J^{\ell}}\phi(z_1)\cdots\phi(z_{\ell})\det\big[K_{n,\tau}(z_j,z_k)\big]_{j,k=1}^{\ell}\d^2 z_1\cdots\d^2 z_{\ell},
\end{equation}
valid for all $\tau\in(0,1)$ in terms of the kernel $K_{n,\tau}(z,w)$ in \eqref{r20}. In particular, 
\begin{equation*}
	E_{n,\tau}(\Omega)=E_{n,\tau}[\chi_{\Omega}]
\end{equation*}
valid for any bounded rectangle $\Omega\subset\mathbb{C}$ in terms of the indicator function $\chi_{\Omega}$ on $\Omega$.
\end{prop}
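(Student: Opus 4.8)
The plan is to reduce the claim to the Gaudin--Mehta formula \eqref{r19} by an elementary symmetric-function expansion of the generating product, mirroring the GUE passage from \eqref{r3}--\eqref{r4} to \eqref{r6}. First I would use, for a fixed matrix $M_n\in\textnormal{eGinUE}$ with eigenvalues $\lambda_1,\ldots,\lambda_n$, the purely algebraic identity
\[
\prod_{\ell=1}^n\big(1-\phi(\lambda_{\ell})\big)=1+\sum_{\ell=1}^n(-1)^{\ell}\sum_{1\le j_1<\cdots<j_{\ell}\le n}\phi(\lambda_{j_1})\cdots\phi(\lambda_{j_{\ell}}),
\]
which is valid for complex-valued $\phi$. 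Taking $\mathbb{E}_{n,\tau}$ of both sides and using that $p_{n,\tau}$ is symmetric in its arguments, each of the $\binom{n}{\ell}$ inner summands contributes the same integral, so
\[
E_{n,\tau}[\phi]=1+\sum_{\ell=1}^n(-1)^{\ell}\binom{n}{\ell}\int_{\mathbb{C}^n}\phi(z_1)\cdots\phi(z_{\ell})\,p_{n,\tau}(z_1,\ldots,z_n)\,\d^2z_1\cdots\d^2z_n,
\]
where the interchange of the (finite) sum with the integral, and of the order of integration, is justified by Fubini since $\phi$ is bounded with compact support $J$ and $p_{n,\tau}\in L^1(\mathbb{C}^n)$.

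Next I would integrate out $z_{\ell+1},\ldots,z_n$, which by definition turns $p_{n,\tau}$ into the marginal $p_\ell^{(n,\tau)}$, and then invoke \eqref{r19} to replace $p_\ell^{(n,\tau)}$ by $\frac{(n-\ell)!}{n!}\det[K_{n,\tau}(z_j,z_k)]_{j,k=1}^{\ell}$. Since $\binom{n}{\ell}\cdot\frac{(n-\ell)!}{n!}=\frac1{\ell!}$ and $\operatorname{supp}\phi\subset J$ confines each integral to $J^{\ell}$, this produces exactly \eqref{r22}. For the last assertion I would take $\phi=\chi_{\Omega}$: this is bounded and, $\Omega$ being a bounded rectangle, has compact support, so \eqref{r22} applies; moreover $\prod_{\ell=1}^n\big(1-\chi_{\Omega}(\lambda_{\ell}(M_n))\big)$ equals $1$ when $M_n$ has no eigenvalue in $\Omega$ and $0$ otherwise, whence $E_{n,\tau}[\chi_{\Omega}]=\mathbb{P}_{n,\tau}\{\mathcal{N}_n(\Omega)=0\}=E_{n,\tau}(\Omega)$, the boundary of $\Omega$ being $\mathbb{P}_{n,\tau}$-negligible by absolute continuity of the law.

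The argument is essentially bookkeeping once \eqref{r19} is granted, so I do not expect a genuine obstacle; the two points that merit care are the combinatorial reduction via the symmetry of $p_{n,\tau}$ together with $\binom{n}{\ell}(n-\ell)!=n!/\ell!$, and the legitimacy of all interchanges of summation and integration — both harmless here because the series terminates at $\ell=n$ and every integrand is bounded with compact support. The real content of the proposition lies entirely in the exact-solvability input \eqref{r19}, itself supplied by \cite{FGIL,FJ}; the remainder is the standard determinantal-point-process formalism of \cite{Joh0}.
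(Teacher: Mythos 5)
Your argument is correct and is precisely the standard determinantal-point-process computation that the paper itself invokes by citation to \cite{Joh0} (and that it sketches for the GUE in the passage from \eqref{r3}--\eqref{r4} to \eqref{r6}): inclusion--exclusion expansion of the product, symmetrization giving the factor $\binom{n}{\ell}$, marginalization to $p_{\ell}^{(n,\tau)}$, and the Gaudin--Mehta formula \eqref{r19} with $\binom{n}{\ell}\frac{(n-\ell)!}{n!}=\frac{1}{\ell!}$. The paper offers no independent proof, so there is nothing to compare beyond noting that your write-up supplies the routine justifications (finiteness of the sum, Fubini, absolute continuity for the indicator case) correctly.
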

Having flushed out \eqref{r19} and \eqref{r22} we proceed with our analysis of the bulk distribution of eigenvalue real parts in the eGinUE. First, we remind the reader about the celebrated elliptic law \cite{Gir}: if $\mathcal{N}_{n,\tau}(\Omega)$ denotes the number of eigenvalues of $M_n\in\textnormal{eGinUE}$ in a bounded rectangle $\Omega\subset\mathbb{C}$, then under the law \eqref{r1},
\begin{equation}\label{r23}
	\mathbb{E}_{n,\tau}\{\mathcal{N}_{n,\tau}(\Omega)\}=n\int_{\Omega}\rho_{\textnormal{e}}(z)\,\d^2 z+o(n)\ \ \ \ \textnormal{as}\ \ n\rightarrow\infty,
\end{equation}
uniformly in $\Omega$ and uniformly in $\tau\in(0,1)$ on compact subsets, with $\rho_{\textnormal{e}}$ equal to the uniform density 
\begin{equation*}
	\rho_{\textnormal{e}}(z):=\frac{1}{\pi(1-\tau^2)}\chi_{\mathbb{D}_{\tau}}(z),\ \ \ \ \ \mathbb{D}_{\tau}:=\Big\{z\in\mathbb{C}:\ \bigg(\frac{\Re z}{1+\tau}\bigg)^2+\bigg(\frac{\Im z}{1-\tau}\bigg)^2\leq 1\Big\}.
\end{equation*}
Moving now to the statistics of bulk eigenvalue real parts we can, in principle, continue in two qualitatively different directions: One $\textnormal{(i)}$, we keep $\tau\in(0,1)$ \textit{fixed} and find in Section \ref{sec3} that for the number $\mathcal{R}_{n,\tau}(\Delta)$ of eigenvalue real parts of $M_n\in\textnormal{eGinUE}$ in a bounded interval $\Delta\subset\mathbb{R}$,% $\mathcal{R}_{n,\tau}(\Delta)$ denotes the number of eigenvalue real parts of $M_n\in\textnormal{eGinUE}$ in a bounded interval $\Delta\subset\mathbb{R}$, then by (ref) below under the law \eqref{r1} for any fixed $\tau\in(0,1)$,
\begin{equation}\label{r24}
	\mathbb{E}_{n,\tau}\{\mathcal{R}_{n,\tau}(\Delta)\}=n\int_{\Delta}\varrho_{\tau}(x)\,\d x+o(n)\ \ \ \ \textnormal{as}\ \ n\rightarrow\infty,
\end{equation}
uniformly in $\Delta$, with $\varrho_{\tau}(x)$ equal to the density
\begin{equation}\label{r25}
	\varrho_{\tau}(x):=\frac{2}{\pi(1+\tau)}\sqrt{\bigg(1-\Big(\frac{x}{1+\tau}\Big)^2\bigg)_+},\ \ \ x\in\mathbb{R}.
\end{equation}
The asymptotic \eqref{r24} constitutes the limiting mean law for eigenvalue real parts at \textit{strong non-Hermiticity}, a term coined in \cite{FKS1} when $1-\tau>0$ uniformly in $n$. In this limit we are studying eigenvalue correlations of a genuinely non-Hermitian model and it is therefore\footnote{Also because of the findings in \cite[page $558$]{FKS1} and in \cite[Theorem $1$]{ACV}.} less surprising that the bulk spacings of eigenvalue real parts are as in the GinUE, i.e. as in \eqref{r17}. In detail, we record the below result.
\begin{lem}\label{lem:2} Assume $\lambda_0\in\mathbb{C}$ belongs to the interior of the support $\mathbb{D}_{\tau}$ of the elliptic law with $\tau\in(0,1)$ fixed and suppose $\Delta\subset\mathbb{R}$ is a bounded interval of length $|\Delta|\in(0,\infty)$. Then
\begin{equation}\label{r26}
	\lim_{n\rightarrow\infty}E_{n,\tau}\left(\lambda_0+\frac{\Omega_{\Delta}}{n\varrho_{\tau}(\Re\lambda_0)}\right)=\e^{-|\Delta|},\ \ \ \ \ \ \Omega_{\Delta}=\Delta\times\mathbb{R}\subset\mathbb{R}^2\simeq\mathbb{C},
\end{equation}
where $\varrho_{\tau}(x)$ appeared in \eqref{r24}.
\end{lem}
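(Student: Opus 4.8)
The plan is to reduce the claimed Poisson-type limit \eqref{r26} to the analysis of a concrete Fredholm determinant. By Proposition~\ref{prop:1}, $E_{n,\tau}(\lambda_0+\frac{\Omega_\Delta}{n\varrho_\tau(\Re\lambda_0)})$ is the Fredholm determinant $\det(1-\mathcal{K}_{n,\tau})$ of the integral operator on $L^2(\Omega_\Delta)$ with kernel obtained from $K_{n,\tau}$ by the affine change of variables $z\mapsto\lambda_0+z/(n\varrho_\tau(\Re\lambda_0))$ and the associated Jacobian $(n\varrho_\tau(\Re\lambda_0))^{-2}$. So the real content is a pointwise (and locally uniform) asymptotic analysis of the \emph{rescaled} kernel in the regime of strong non-Hermiticity, $\tau\in(0,1)$ fixed, followed by a trace-norm estimate that upgrades kernel convergence to convergence of the determinant on the (unbounded-in-the-imaginary-direction) strip $\Omega_\Delta$.

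First I would obtain the local scaling limit of $K_{n,\tau}$ near an interior bulk point $\lambda_0\in\mathbb{D}_\tau$. Using $P_\ell^{(n,\tau)}(z)=n^{1/2}h_\ell^\tau(\sqrt n z)$ with $h_\ell^\tau$ given by \eqref{r21} in terms of Hermite polynomials, the sum in \eqref{r20} can be written via the Christoffel–Darboux-type contour-integral representation for Hermite kernels (doubling the contour integral in \eqref{r5} and summing the geometric series in $\ell$), and then evaluated by steepest descent as $n\to\infty$. This is the standard route to the elliptic-Ginibre bulk kernel; the outcome is that, after the rescaling by $1/(n\varrho_\tau(\Re\lambda_0))$ in \emph{both} variables and stripping the (gauge) prefactor $\e^{-\frac n2 V_\tau(z)-\frac n2 V_\tau(w)}$ against the analogous factors, the limiting kernel on $\Omega_\Delta$ is of the ``Ginibre-type'' Gaussian form $\frac{1}{\sqrt{\pi}\,b}\,\e^{-(x-y)^2/b^2+\,\text{(imaginary cross terms)}}$ for an explicit width $b=b(\tau,\Re\lambda_0)>0$ fixed by the local curvature of the spectral density; the normalization constant $\varrho_\tau(\Re\lambda_0)$ in \eqref{r25} is precisely the $x$-marginal of this two-dimensional limiting density, so that the limiting operator on the strip $L^2(\Delta\times\mathbb{R})$ has, after integrating out the imaginary direction, the one-dimensional kernel $K_{\mathrm{Poi}}(x,y)$ (equivalently, it is a rank-``continuum'' projection whose Fredholm determinant equals $\e^{-|\Delta|}$). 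Concretely, once the Gaussian limiting kernel is in hand, I would either (a) diagonalize the limiting operator by Fourier transform in the imaginary variable, reducing $\det(1-\mathcal{K}_\infty)$ to $\prod$ over frequencies of $\det(1-\text{(rank-1 in }x))$, which collapses to $\exp(-|\Delta|)$, or (b) invoke exactly the GinUE computation of Lemma~\ref{lem:1} applied after an affine map sending the local elliptic geometry to the circular one, since $\e^{-nV_\tau}$ differs from the Ginibre weight only by the completion of a square absorbed into the gauge prefactor of the kernel.

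The remaining step is the analytic justification: the strip $\Omega_\Delta$ is unbounded, so one cannot blindly pass to the limit inside the Fredholm expansion. Here I would exploit the Gaussian decay of $K_{n,\tau}(z,w)$ in $|\Im z-\Im w|$ and in $|\Im z+\Im w|$ (visible already from $\Re V_\tau$ for $\tau\in(0,1)$, which grows quadratically in $\Im z$ once $\Re z$ is in the bulk), to show that the rescaled operators $\mathcal{K}_{n,\tau}$ are uniformly trace class on $L^2(\Omega_\Delta)$ with $\|\mathcal{K}_{n,\tau}-\mathcal{K}_\infty\|_{\mathrm{tr}}\to 0$; continuity of $\det(1-\cdot)$ in trace norm then yields \eqref{r26}. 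I expect \emph{this} uniform-in-$n$ trace-class control on the unbounded strip, together with getting the steepest-descent error estimates for the rescaled Hermite kernel uniform in the imaginary coordinate, to be the main obstacle — the pointwise kernel limit and the final $\e^{-|\Delta|}$ evaluation are comparatively routine, and indeed parallel Lemma~\ref{lem:1} and the GinUE computations referenced in Remark~\ref{rem:1}. A convenient shortcut, if available from \cite{FGIL,FJ}, is a closed-form expression for $K_{n,\tau}$ as a single confluent contour integral, which makes both the steepest-descent analysis and the tail bounds cleaner; absent that, the Mehler-type bilinear generating function for Hermite polynomials gives an equally serviceable integral representation for the sum in \eqref{r20}.
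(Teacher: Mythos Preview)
Your proposal has a genuine structural gap. You assume that the rescaled operators $\mathcal{K}_{n,\tau}$ converge in trace norm to a nontrivial limit $\mathcal{K}_\infty$ whose Fredholm determinant equals $\e^{-|\Delta|}$. This is not what happens at the $1/n$ horizontal scale. At that scale the rescaled kernel $n^{-2}K_{n,\tau}(\lambda_0+z/n,\lambda_0+w/n)$ converges to zero pointwise almost everywhere on $(\Omega_\Delta)^2$, and in fact $\|\mathcal{K}_{n,\tau}\|_2\to 0$; at the same time $\tr\mathcal{K}_{n,\tau}\to|\Delta|$ stays bounded away from zero. So the family $(\mathcal{K}_{n,\tau})$ does \emph{not} converge in trace norm to anything: a trace-norm limit would have to have trace $|\Delta|$ and Hilbert--Schmidt norm $0$, which is impossible. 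The ``Ginibre-type Gaussian'' local limit you describe lives at the $1/\sqrt{n}$ scale, cf.~\eqref{r16}, not at the $1/n$ scale relevant here; and even formally a translation-invariant Gaussian kernel on the unbounded strip $\Omega_\Delta$ would not be trace class, so your Fourier-diagonalization route (a) cannot produce a legitimate Fredholm determinant.

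The paper exploits precisely this ``small HS norm, order-one trace'' phenomenon. After extending the Fredholm representation \eqref{r22} to the unbounded strip via Hadamard's inequality, it uses the determinant estimate
\[
\Big|\det(1-K)-\e^{-\tr K}\Big|\leq \|K\|_2\,\exp\Big[\tfrac12(\|K\|_2+1)^2-\tr K\Big],
\]
and then verifies separately that $\tr_{L^2(\Omega_0^\tau)}\mathcal{K}_{n,\tau}\to|\Delta|$ (via the integral representation of $K_{n,\tau}$ in terms of the normalized incomplete gamma function $Q(n,n\lambda)$ and dominated convergence) and that $\|\mathcal{K}_{n,\tau}\|_2\to 0$ (via pointwise decay of the rescaled kernel controlled by $Q$-function estimates and a H\"older argument). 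Your instinct that the uniform tail control on the unbounded strip is the main technical point is correct; but the mechanism is not convergence to a limiting operator, it is degeneration of the operator in Hilbert--Schmidt norm while its trace survives.
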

As can be inferred from \eqref{r26} the $\tau$-dependence in the limiting gap statistics of bulk eigenvalue real parts at strong non-Hermiticity is trivial, and thus the same limit has no chance of interpolating between \eqref{r18} and \eqref{r11}. For this reason we turn our attention to the second direction: Two $\textnormal{(ii)}$, the limit of \textit{weak non-Hermiticity} when $\tau\uparrow 1$ at an explicit $n$-dependent rate. In this case our discussion starts from the below result that can be distilled from \cite{ACV}.
\begin{prop}[{\cite[Theorem $3$]{ACV}}]\label{prop:2} Let $\rho_{n,\tau}$ denoted the density of the averaged normalized counting measure $\mathbb{E}_{n,\tau}\{\frac{1}{n}\mathcal{N}_{n,\tau}(\cdot)\}$ in the \textnormal{eGinUE} with $\tau\in(0,1)$ and set $\tau_n:=1-\frac{\sigma}{n}\in(0,1)$ with $\sigma>0$. Then
\begin{equation}\label{r28}
	\lim_{n\rightarrow\infty}\rho_{n,\tau_n}(z)=0,
\end{equation}
uniformly in any compact subset of $(\mathbb{C}\setminus\mathbb{R})\times(0,\infty)\ni (z,\sigma)$ and
\begin{equation}\label{r29}
	\lim_{n\rightarrow\infty}\int_{-\infty}^{\infty}\rho_{n,\tau_n}(x+\im y)\,\d y=\varrho_1(x)\stackrel{\eqref{r25}}{=}\frac{1}{2\pi}\sqrt{(2-x^2)_+},
\end{equation}
uniformly in any compact subset of $(\mathbb{R}\setminus\{\pm 2\})\times(0,\infty)\ni (x,\sigma)$. In particular, by \cite[$(52)$]{ACV} and the dominated convergence theorem, the number $\mathcal{R}_{n,\tau_n}(\cdot)$ of eigenvalue real parts in the \textnormal{eGinUE} at weak non-Hermiticity obeys the mean law
\begin{equation*}
	\mathbb{E}_{n,\tau_n}\{\mathcal{R}_{n,\tau_n}(\Delta)\}=n\int_{\Delta}\varrho_1(x)\,\d x+o(n)\ \ \ \ \textnormal{as}\ \ n\rightarrow\infty,
\end{equation*}
uniformly in the bounded interval $\Delta\subset\mathbb{R}$, for any fixed $\sigma>0$. See \eqref{r25} for $\varrho_1=\varrho_1(x)$.
\end{prop}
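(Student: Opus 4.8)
The plan is to treat the local limits \eqref{r28} and \eqref{r29} as imported facts --- up to matching scaling conventions these are exactly \cite[Theorem~$3$]{ACV} --- and to supply in full only the elementary deduction of the stated mean law from \eqref{r29}. For context, I would briefly recall why \eqref{r28}, \eqref{r29} hold. By the determinantal structure \eqref{r19}--\eqref{r20}, the density of $\mathbb{E}_{n,\tau}\{\frac1n\mathcal{N}_{n,\tau}(\cdot)\}$ is the one-point marginal $\rho_{n,\tau}(z)=\frac1n K_{n,\tau}(z,z)=\frac1n\e^{-nV_{\tau}(z)}\sum_{\ell=0}^{n-1}|P_{\ell}^{(n,\tau)}(z)|^2$. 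Writing $z=x+\im y$ and $V_{\tau}(z)=\frac{x^2}{1+\tau}+\frac{y^2}{1-\tau}$, at $\tau=\tau_n=1-\frac{\sigma}{n}$ the prefactor $\e^{-nV_{\tau_n}(z)}$ behaves like $\e^{-\frac n2 x^2}\e^{-\frac{n^2}{\sigma}y^2}$, which already pins the mass to an $\mathcal{O}(n^{-1})$-neighborhood of $\mathbb{R}$ and forces the off-axis decay \eqref{r28}. For the on-axis profile one substitutes the contour representation of $H_{\ell}$ from \eqref{r5} into \eqref{r21}, recognizes $\sum_{\ell}|P_{\ell}^{(n,\tau)}(z)|^2$ as a truncated Mehler sum, rewrites the truncation as a contour integral carrying a factor $(\,\cdot\,)^n$, and performs a steepest-descent analysis at the weak non-Hermiticity scale $1-\tau_n^2\sim 2\sigma/n$; the $y$-integrated diagonal then converges to the rescaled semicircle $\varrho_1$ of \eqref{r25}, the values $x=\pm 2$ being excluded since there the two governing saddles coalesce.

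Granting \eqref{r29}, the mean law is almost immediate. An eigenvalue of $M_n\in\textnormal{eGinUE}$ contributes to $\mathcal{R}_{n,\tau_n}(\Delta)$ exactly when it lies in the strip $\Omega_{\Delta}=\Delta\times\mathbb{R}$, so that, by the $\ell=1$ case of \eqref{r19},
\begin{equation*}
	\mathbb{E}_{n,\tau_n}\{\mathcal{R}_{n,\tau_n}(\Delta)\}=\int_{\Omega_{\Delta}}K_{n,\tau_n}(z,z)\,\d^2z=n\int_{\Delta}g_n(x)\,\d x,\qquad g_n(x):=\int_{-\infty}^{\infty}\rho_{n,\tau_n}(x+\im y)\,\d y,
\end{equation*}
where the integral over the strip is finite because $K_{n,\tau_n}(z,z)$ decays rapidly in $|z|$. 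By \eqref{r29} one has $g_n(x)\to\varrho_1(x)$ for every $x\in\mathbb{R}\setminus\{\pm 2\}$, hence for almost every $x$, and it remains only to pass this limit through $\int_{\Delta}$. This I would do by dominated convergence, the required integrable majorant $C\chi_{\Delta}$ being furnished by the uniform bound on the averaged density recorded in \cite[$(52)$]{ACV}, say $g_n(x)\le C(\sigma)$ for all $n\ge 1$ and all $x$ in a fixed bounded neighborhood of $\Delta$. Then $\int_{\Delta}g_n\to\int_{\Delta}\varrho_1$, i.e. $\mathbb{E}_{n,\tau_n}\{\mathcal{R}_{n,\tau_n}(\Delta)\}=n\int_{\Delta}\varrho_1(x)\,\d x+o(n)$; uniformity over bounded intervals $\Delta$ follows by splitting $\Delta$ into a compact part avoiding $\pm 2$, where the convergence $g_n\to\varrho_1$ is uniform by \eqref{r29}, and arbitrarily short neighborhoods of $\pm 2$, whose contribution to $\int_{\Delta}(g_n-\varrho_1)$ is uniformly small in $n$ thanks to the same $C$-bound.

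The genuinely delicate ingredient is not this last step but the one being imported: the steepest-descent analysis behind \eqref{r28}--\eqref{r29} requires uniform control of the double contour integral simultaneously in the spectral variable $x$ and in the parameter $\sigma\in(0,\infty)$, together with a careful treatment of the saddle coalescence at the spectral edges $x=\pm 2$ (an Airy-type transition). Since all of this is already carried out in \cite{ACV}, the only work on our side is the measure-theoretic passage of the previous paragraph, for which the single input beyond \eqref{r29} is the $L^{\infty}$ control of $g_n$ from \cite[$(52)$]{ACV}.
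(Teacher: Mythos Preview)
Your proposal is correct and matches the paper's treatment: the paper does not prove Proposition~\ref{prop:2} but imports \eqref{r28} and \eqref{r29} from \cite[Theorem~3]{ACV}, and the mean law is stated to follow ``by \cite[$(52)$]{ACV} and the dominated convergence theorem'' --- precisely the argument you spell out. Your additional heuristic sketch of the steepest-descent mechanism behind \eqref{r28}--\eqref{r29} is context the paper does not provide, but it is accurate and does not conflict with anything in the text.
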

By \eqref{r28} and \eqref{r29}, at weak non-Hermiticity, we observe a limiting scenario where the global eigenvalue distribution of a non-Hermitian random matrix is supported on the real line, yet local correlations still extend to the complex plane, see the upcoming Proposition \ref{theo:1}. This feature was first discovered in the eGinUE bulk by Fyodorov, Khoruzhenko and Sommers in theoretical physics, see \cite{FKS1,FKS2,FSK}. It was made rigorous in mathematics at the eGinUE edge in \cite{Ben} and in the eGinUE bulk, even for more general complex elliptic models, in \cite{ACV}. Our upcoming spacing analysis of eigenvalue real parts in the eGinUE bulk will rest on some of the results in \cite{ACV}, but we need more, in particular we need Fredholm determinant formul\ae\, for the generating functional, for gap probabilities and lastly for the limiting bulk real part spacing distribution. 
\begin{prop}\label{theo:1} Assume $\lambda_0\in\mathbb{R}$ belongs to the interior of the interval $[-2,2]$ and define
\begin{equation}\label{r30}
	\tau_n:=1-\frac{1}{n}\bigg(\frac{\sigma}{\varrho_1(\lambda_0)}\bigg)^2\in (0,1),
\end{equation}
with $\sigma>0$ and $\varrho_1=\varrho_1(x)$ as in \eqref{r25}. Then for any $\ell\in\mathbb{Z}_{\geq 1}$, with $p_{\ell}^{(n,\tau_n)}(z_1,\ldots,z_{\ell})$ as in \eqref{r19},
\begin{equation}\label{r31}
	\lim_{n\rightarrow\infty}\big(\varrho_1(\lambda_0)\sqrt{n}\,\big)^{-2\ell}p_{\ell}^{(n,\tau_n)}\left(\lambda_0+\frac{w_1}{n\varrho_1(\lambda_0)},\ldots,\lambda_0+\frac{w_{\ell}}{n\varrho_1(\lambda_0)}\right)=\det\big[K_{\sin}^{\sigma}(w_j,w_k)\big]_{j,k=1}^{\ell},
\end{equation}
uniformly in $(w_1,\ldots,w_{\ell})$ chosen from any compact subset in $\mathbb{C}^{\ell}$ and uniformly in $\sigma\in(0,\infty)$ also chosen from any compact subset. Here, using $z_k:=x_k+\im y_k\in\mathbb{C}$,
\begin{equation}\label{r32}
	K_{\sin}^{\sigma}(z_1,z_2):=\frac{1}{\sigma\sqrt{\pi}}\exp\left[-\frac{y_1^2+y_2^2}{2\sigma^2}\right]\frac{1}{2\pi}\int_{-\pi}^{\pi}\e^{-(\sigma u)^2}\cos\big(u(z_1-\overline{z_2})\big)\,\d u.
\end{equation}
Next, the limiting generating functional of \eqref{r22} at weak non-Hermiticity, i.e.
\begin{equation*}
	E_{\sigma}[\phi]:=\lim_{n\rightarrow\infty}E_{n,\tau_n}[\phi_n],
\end{equation*}
where $\phi_n(z):=\phi((z-\lambda_0)n\varrho_1(\lambda_0)\big)$ with $\phi:\mathbb{C}\rightarrow\mathbb{C}$ bounded and of compact support $J=\textnormal{supp}\,\phi\subset\mathbb{C}$, equals
\begin{equation}\label{r33}
	E_{\sigma}[\phi]=1+\sum_{\ell=1}^{\infty}\frac{(-1)^{\ell}}{\ell!}\int_{J^{\ell}}\phi(z_1)\cdots\phi(z_{\ell})\det\big[K_{\sin}^{\sigma}(z_j,z_k)\big]_{j,k=1}^{\ell}\d^2 z_1\cdots \d^2 z_{\ell},
\end{equation}
for any fixed $\sigma>0$, and so the limiting gap probability
\begin{equation}\label{r34}
	E_{\sigma}(\Omega):=\lim_{n\rightarrow\infty}E_{n,\tau_n}\left(\lambda_0+\frac{\Omega}{n\varrho_1(\lambda_0)}\right),\ \ \ \ \Omega\subset\mathbb{C}\ \ \textnormal{measurable and bounded}
\end{equation}
is given by $E_{\sigma}(\Omega)=E_{\sigma}[\chi_{\Omega}]$ in terms of the indicator function $\chi_{\Omega}$ on $\Omega$.
\end{prop}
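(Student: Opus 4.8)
All three assertions will be obtained from a single analytic input — the bulk scaling limit of the reproducing kernel $K_{n,\tau_n}$ of \eqref{r20}, together with a crude uniform bound for it. Once this is in place, \eqref{r31} follows from the determinantal formula \eqref{r19} and continuity of the determinant, \eqref{r33} follows from the Fredholm series \eqref{r22} by a change of variables plus a Hadamard/dominated-convergence argument, and $E_\sigma(\Omega)=E_\sigma[\chi_\Omega]$ follows from Proposition \ref{prop:1}.

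\textbf{Step 1: kernel asymptotics (the heart of the matter).} By \eqref{r20}--\eqref{r21} one has $K_{n,\tau}(z,w)=n\,\e^{-\frac n2V_\tau(z)-\frac n2V_\tau(w)}\sum_{\ell=0}^{n-1}(\gamma_\ell^\tau)^2H_\ell(u)H_\ell(v)$ with $u=\sqrt n\,z/\sqrt{2\tau}$ and $v=\sqrt n\,\overline w/\sqrt{2\tau}$. The plan is to resum the \emph{full} Hermite bilinear series by Mehler's formula and to encode the truncation $\sum_{\ell=0}^{n-1}$ through an auxiliary contour integral, thereby arriving at a (double) contour-integral representation of $K_{n,\tau}$ in which the cut-off enters as an incomplete-gamma-type factor — entirely parallel to the GinUE kernel, with the Mehler kernel playing the role of $\e^{nz\overline w}$. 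Substituting $\tau=\tau_n$ from \eqref{r30} and the rescaled points $\lambda_0+w_j/(n\varrho_1(\lambda_0))$ puts this integral into a large-parameter regime, and a steepest-descent/Laplace analysis is expected to yield, locally uniformly in $(w_1,w_2)$ on compact subsets of $\mathbb C^2$ and in $\sigma$ on compact subsets of $(0,\infty)$,
\begin{equation*}
	\frac{1}{n^2\varrho_1(\lambda_0)^2}\,K_{n,\tau_n}\!\left(\lambda_0+\frac{w_1}{n\varrho_1(\lambda_0)},\ \lambda_0+\frac{w_2}{n\varrho_1(\lambda_0)}\right)\ \longrightarrow\ K_{\sin}^\sigma(w_1,w_2),\qquad n\to\infty.
\end{equation*}
The three structural features of \eqref{r32} should then be accounted for as follows: the Gaussian prefactor $\e^{-(y_1^2+y_2^2)/(2\sigma^2)}$ (with $y_k=\Im w_k$) comes from the $\Im$-part of the potential $V_{\tau_n}(z)=\frac{(\Re z)^2}{1+\tau_n}+\frac{(\Im z)^2}{1-\tau_n}$ evaluated at the rescaled points; the factor $\e^{-(\sigma u)^2}$ emerges from the Mehler exponent after the Laplace integration over the summation band, the relevant scale being $1-\tau_n\sim\frac1n(\sigma/\varrho_1(\lambda_0))^2$; and the finite integration range $[-\pi,\pi]$ is produced by the two endpoints of the band of indices $\ell$ that actually contribute to the truncated sum. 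I would run the same estimate so as to also extract the crude bound $\big|K_{n,\tau_n}(\lambda_0+w_1/(n\varrho_1(\lambda_0)),\lambda_0+w_2/(n\varrho_1(\lambda_0)))\big|\le C\,n^2$ valid for all large $n$ and all $(w_1,w_2,\sigma)$ in the relevant compact sets, since this is exactly what the Fredholm step below will consume.

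\textbf{Step 2: harvesting \eqref{r31}, \eqref{r33} and \eqref{r34}.} Write $a_j:=\lambda_0+w_j/(n\varrho_1(\lambda_0))$. Inserting the kernel limit of Step 1 into \eqref{r19} and using $\frac{(n-\ell)!}{n!}=n^{-\ell}\big(1+O(\ell^2/n)\big)$ for fixed $\ell$, one gets
\begin{equation*}
	\big(\varrho_1(\lambda_0)\sqrt n\,\big)^{-2\ell}p_\ell^{(n,\tau_n)}(a_1,\ldots,a_\ell)=\big(1+o(1)\big)\det\!\Big[\tfrac{1}{n^2\varrho_1(\lambda_0)^2}K_{n,\tau_n}(a_j,a_k)\Big]_{j,k=1}^{\ell},
\end{equation*}
which by continuity of the determinant converges to $\det\big[K_{\sin}^\sigma(w_j,w_k)\big]_{j,k=1}^{\ell}$, i.e. \eqref{r31}. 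For \eqref{r33} I would start from \eqref{r22} with $\phi$ replaced by $\phi_n(z)=\phi\big((z-\lambda_0)n\varrho_1(\lambda_0)\big)$ and perform the substitution $z_j=a_j$, under which $\d^2z_j=(n\varrho_1(\lambda_0))^{-2}\d^2w_j$: the $\ell$-th term of $E_{n,\tau_n}[\phi_n]$ then becomes $\frac{(-1)^\ell}{\ell!}\int_{J^\ell}\phi(w_1)\cdots\phi(w_\ell)\det\big[\tfrac{1}{n^2\varrho_1(\lambda_0)^2}K_{n,\tau_n}(a_j,a_k)\big]_{j,k=1}^\ell\,\d^2w_1\cdots\d^2w_\ell$, the normalizing power of $n^2\varrho_1(\lambda_0)^2$ being supplied for free by the Jacobian. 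By the previous display the integrand converges pointwise, and by the $Cn^2$-bound of Step 1 together with Hadamard's inequality it is dominated by $\|\phi\|_\infty^\ell\,|J|^\ell\,\ell^{\ell/2}C^\ell\,\chi_{J^\ell}$, whose sum over $\ell$ converges; two applications of dominated convergence — inside each term, then over $\ell$, where the $n$-dependent upper limit of the sum is handled by the same bound — give \eqref{r33} and, in particular, show that $E_\sigma[\phi]=\lim_nE_{n,\tau_n}[\phi_n]$ exists for every bounded $\phi$ of compact support, continuous or not. Finally, by Proposition \ref{prop:1} one has $E_{n,\tau_n}\big(\lambda_0+\Omega/(n\varrho_1(\lambda_0))\big)=E_{n,\tau_n}[\chi_{\lambda_0+\Omega/(n\varrho_1(\lambda_0))}]=E_{n,\tau_n}[(\chi_\Omega)_n]$ identically in $n$, so applying the preceding conclusion to $\phi=\chi_\Omega$ yields $E_\sigma(\Omega)=E_\sigma[\chi_\Omega]$ as in \eqref{r34}.

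\textbf{Main obstacle.} Step 2 is routine determinantal bookkeeping; the genuine difficulty is concentrated in Step 1. The underlying saddle-point mechanism is the one discovered on the physics side by Fyodorov, Khoruzhenko and Sommers and made rigorous for the one- and two-point functions in \cite{ACV}; what is needed here beyond that — and what will cost the most effort — is that the convergence be \emph{uniform} in $(w_1,w_2)$ and $\sigma$ on compact sets and be accompanied by the pointwise $Cn^2$-bound, because the Fredholm series in \eqref{r33} lives precisely off this uniformity. Concretely, I expect the delicate points to be the uniform control of the incomplete-gamma-type truncation factor (exponentially negligible in the bulk, but this must be established uniformly on compacts) and of the contributions coming from the two endpoints of the summation band, which are what ultimately shape the $[-\pi,\pi]$ integral in \eqref{r32}.
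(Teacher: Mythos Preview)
Your plan is correct and, for the parts that the paper actually proves, essentially the same as the paper's argument. Two points of comparison are worth noting.

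First, the paper does not carry out your Step~1 at all: it simply cites \cite[Theorem~3(b)]{ACV} for the kernel limit \eqref{r31} (and, implicitly, for the accompanying uniform bounds you correctly flag as necessary). So what you identify as the ``main obstacle'' is outsourced rather than overcome; your outline of the mechanism (Mehler resummation, encoding the truncation via a contour/incomplete-gamma factor, Laplace analysis, emergence of the three structural pieces of \eqref{r32}) is the right heuristic, and is in fact how \cite{FSK,ACV} proceed, but you should be aware that a full uniform proof here is work comparable to \cite{ACV} itself and is not reproduced in the present paper.

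Second, for Step~2 the paper's bookkeeping is the same as yours --- change of variables in \eqref{r22}, termwise convergence via \eqref{r31}, domination to justify the interchange --- but the domination is slightly cleaner: instead of the general Hadamard bound with the $\ell^{\ell/2}$ factor, the paper observes that the matrix $\big[(n\varrho_1(\lambda_0))^{-2}K_{n,\tau_n}(a_j,a_k)\big]_{j,k=1}^{\ell}$ is Hermitian positive-definite, so by the positive-definite Hadamard inequality its determinant is bounded by the product of its diagonal entries $n^{-1}\varrho_1(\lambda_0)^{-2}\rho_{n,\tau_n}(a_j)$. One then only needs a uniform bound $n^{-1}\rho_{n,\tau_n}(\lambda_0+w/(n\varrho_1(\lambda_0)))\le a$ on the one-point function over $w\in J$, which the paper extracts directly from the integral representation \eqref{little-1} of $\rho_{n,\tau}$. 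This buys a dominating series $(a\|\phi\|_\infty|J|\varrho_1^{-2}(\lambda_0))^{\ell}/\ell!$ without the $\ell^{\ell/2}$ factor --- a cosmetic improvement, since your bound is also summable, but it localizes the needed analytic input to the diagonal only.
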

The limit \eqref{r31} for the marginal density is a special case of \cite[Theorem $3$(b)]{ACV}, after redefining our $\sigma>0$ to align with $\alpha>0$ in \cite{ACV}. Note that \eqref{r31} was first derived in \cite[$(51),(52)$]{FSK} using saddle point methods without details. To obtain the Fredholm determinant identity \eqref{r33}, and in turn the result for \eqref{r34}, one uses \eqref{r22} and Hadamard's inequality. The proof is easy since $\Omega\subset\mathbb{C}$ is bounded in \eqref{r34}. However, our workings also show that the result for \eqref{r34} remains valid for certain unbounded $\Omega\subset\mathbb{C}$, precisely as needed in our study of eigenvalue real part spacings.
\begin{prop}\label{prop:3} Assume $\lambda_0\in\mathbb{R}$ belongs to the interior of the interval $[-2,2]$ and define $\tau_n$ as in \eqref{r30}. Then the limiting \textnormal{eGinUE} gap probability
\begin{equation}\label{cool}
	E_{\sigma}(s):=\lim_{n\rightarrow\infty}E_{n,\tau_n}\left(\lambda_0+\frac{\Omega_s}{n\varrho_1(\lambda_0)}\right),\ \ \ \Omega_s:=(0,s)\times\mathbb{R}\subset\mathbb{R}^2\simeq\mathbb{C}
\end{equation}
exists for all fixed $s,\sigma>0$ and equals $E_{\sigma}(s)=E_{\sigma}[\chi_{\Omega_s}]$, compare \eqref{r33}.
\end{prop}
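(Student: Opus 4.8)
The plan is to sandwich $E_{n,\tau_n}\big(\lambda_0+\Omega_s/(n\varrho_1(\lambda_0))\big)$ between the gap probabilities of the truncated rectangles $\Omega_s^T:=(0,s)\times(-T,T)$ --- which are bounded, hence covered by Proposition~\ref{theo:1} --- and then to argue that no eigenvalue mass escapes to large rescaled imaginary parts. Throughout I abbreviate the rescaled kernel by
\[
	\widehat K_n(w,w'):=\frac{1}{n^2\varrho_1(\lambda_0)^2}\,K_{n,\tau_n}\!\Big(\lambda_0+\tfrac{w}{n\varrho_1(\lambda_0)},\,\lambda_0+\tfrac{w'}{n\varrho_1(\lambda_0)}\Big);
\]
the substitution $z=\lambda_0+w/(n\varrho_1(\lambda_0))$ in \eqref{r22} then rewrites $E_{n,\tau_n}\big(\lambda_0+\Omega/(n\varrho_1(\lambda_0))\big)$, for bounded $\Omega\subset\mathbb C$, as the Fredholm determinant of $\widehat K_n$ over $\Omega$, while on the diagonal $\widehat K_n(w,w)=n^{-1}\varrho_1(\lambda_0)^{-2}\rho_{n,\tau_n}(z)$ since $K_{n,\tau_n}(z,z)=n\rho_{n,\tau_n}(z)$. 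By \eqref{r31} with $\ell=1$ (and $p_1^{(n,\tau_n)}(z)=\tfrac1n K_{n,\tau_n}(z,z)$) one has $\widehat K_n(w,w)\to K_{\sin}^\sigma(w,w)$ locally uniformly on $\mathbb C$.

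For the upper bound, monotonicity of gap probabilities in the domain gives $E_{n,\tau_n}\big(\lambda_0+\Omega_s/(n\varrho_1(\lambda_0))\big)\le E_{n,\tau_n}\big(\lambda_0+\Omega_s^T/(n\varrho_1(\lambda_0))\big)$ for every $T>0$, so Proposition~\ref{theo:1} yields $\limsup_{n\to\infty}E_{n,\tau_n}\big(\lambda_0+\Omega_s/(n\varrho_1(\lambda_0))\big)\le E_\sigma[\chi_{\Omega_s^T}]$. Hadamard's inequality for the positive semi-definite (limiting reproducing) kernel $K_{\sin}^\sigma$, together with the normalization $\int_{\mathbb R}K_{\sin}^\sigma(x+\im v,x+\im v)\,\d v=1$ established below, bounds the $\ell$-th term of the series \eqref{r33} with $\phi=\chi_{\Omega_s}$ by $s^\ell/\ell!$, so that series converges absolutely and $E_\sigma[\chi_{\Omega_s^T}]\to E_\sigma[\chi_{\Omega_s}]$ as $T\to\infty$ by dominated convergence; hence $\limsup_n E_{n,\tau_n}\big(\lambda_0+\Omega_s/(n\varrho_1(\lambda_0))\big)\le E_\sigma[\chi_{\Omega_s}]$.

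For the matching lower bound I would use inclusion--exclusion. With $A_n=\{\text{the rescaled configuration has no point in }\lambda_0+\Omega_s^T/(n\varrho_1(\lambda_0))\}$ and $B_n=\{\text{no point in the remaining piece }\lambda_0+(\Omega_s\setminus\Omega_s^T)/(n\varrho_1(\lambda_0))\}$, one has $\{\text{no point in }\lambda_0+\Omega_s/(n\varrho_1(\lambda_0))\}=A_n\cap B_n$, so
\[
	E_{n,\tau_n}\!\Big(\lambda_0+\tfrac{\Omega_s}{n\varrho_1(\lambda_0)}\Big)\ \ge\ \mathbb P(A_n)-\mathbb P(B_n^c)\ \ge\ E_{n,\tau_n}\!\Big(\lambda_0+\tfrac{\Omega_s^T}{n\varrho_1(\lambda_0)}\Big)-\int_{\Omega_s\setminus\Omega_s^T}\widehat K_n(w,w)\,\d^2 w,
\]
where the last step uses Markov's inequality and the determinantal first-intensity identity $\mathbb E_{n,\tau_n}\{\mathcal N_{n,\tau_n}(\Omega)\}=\int_\Omega K_{n,\tau_n}(z,z)\,\d^2 z$. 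Taking $\liminf_{n\to\infty}$ (Proposition~\ref{theo:1} on the first term) and then $T\to\infty$, the bound $\liminf_n E_{n,\tau_n}\big(\lambda_0+\Omega_s/(n\varrho_1(\lambda_0))\big)\ge E_\sigma[\chi_{\Omega_s}]$, and hence the proposition, will follow once I show $\lim_{T\to\infty}\limsup_{n\to\infty}\int_{\Omega_s\setminus\Omega_s^T}\widehat K_n(w,w)\,\d^2 w=0$.

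This tail estimate --- that no intensity escapes to large imaginary parts --- is the heart of the argument and the step I expect to be the main obstacle. Rather than prove a uniform-in-$n$ decay bound on $\widehat K_n(w,w)$ (which would call for Plancherel--Rotach-type control of $\sum_{\ell=0}^{n-1}|h_\ell^{\tau_n}(\sqrt n\,z)|^2$ at complex argument), I would invoke conservation of total intensity. First, by Fubini and $K_{n,\tau_n}(z,z)=n\rho_{n,\tau_n}(z)$,
\[
	\int_{\Omega_s}\widehat K_n(w,w)\,\d^2 w=\int_{\lambda_0}^{\lambda_0+s/(n\varrho_1(\lambda_0))}\!\Big(n\int_{-\infty}^{\infty}\rho_{n,\tau_n}(x+\im y)\,\d y\Big)\,\d x\ \longrightarrow\ s\qquad (n\to\infty),
\]
since the inner integral converges to $\varrho_1(x)$ uniformly for $x$ in a fixed neighbourhood of $\lambda_0\in(-2,2)$ by \eqref{r29}, $\varrho_1$ is continuous at $\lambda_0$, and the $x$-window has length $s/(n\varrho_1(\lambda_0))$. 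Second, for fixed $T>0$ the local uniform convergence $\widehat K_n(w,w)\to K_{\sin}^\sigma(w,w)$ gives $\int_{\Omega_s^T}\widehat K_n(w,w)\,\d^2 w\to\int_{\Omega_s^T}K_{\sin}^\sigma(w,w)\,\d^2 w$, and a direct computation from \eqref{r32} --- integrating out the imaginary part via $\int_{-\infty}^{\infty}\e^{-v^2/\sigma^2}\cosh(2uv)\,\d v=\sigma\sqrt\pi\,\e^{\sigma^2u^2}$ --- gives $\int_{-\infty}^{\infty}K_{\sin}^\sigma(x+\im v,x+\im v)\,\d v=1$, whence $\int_{\Omega_s}K_{\sin}^\sigma(w,w)\,\d^2 w=s$. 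Subtracting, $\limsup_{n\to\infty}\int_{\Omega_s\setminus\Omega_s^T}\widehat K_n(w,w)\,\d^2 w=s-\int_{\Omega_s^T}K_{\sin}^\sigma(w,w)\,\d^2 w\to 0$ as $T\to\infty$, which closes the lower bound; combined with the upper bound this shows the limit in \eqref{cool} exists and equals $E_\sigma[\chi_{\Omega_s}]$.
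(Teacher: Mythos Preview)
Your argument is correct and takes a genuinely different route from the paper's. The paper proceeds directly: it writes the gap probability as the Fredholm series over the full strip (extending \eqref{r22}/\eqref{r34} via Hadamard), invokes the pointwise Gaussian-type bound \cite[(52)]{ACV} on $\rho_{n,\tau_n}(x+\im y/n)$, recorded as \eqref{p41}, to produce a uniform-in-$n$ integrable majorant on the strip, and then applies dominated convergence termwise using \eqref{r31}. Your sandwich argument instead avoids any pointwise decay estimate on $\widehat K_n$: the upper bound is pure monotonicity, and for the lower bound you convert the tail mass $\int_{\Omega_s\setminus\Omega_s^T}\widehat K_n(w,w)\,\d^2 w$ into a difference of two quantities whose limits are already known --- the total strip intensity (via \eqref{r29}) and the truncated intensity (via local uniform convergence from \eqref{r31}). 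The Gaussian calculation giving $\int_{\mathbb R}K_{\sin}^{\sigma}(x+\im v,x+\im v)\,\d v=1$ is correct and closes the loop. This approach is more elementary and more self-contained, since it uses only the statements of Propositions~\ref{prop:2} and~\ref{theo:1} rather than the stronger quantitative bound imported from \cite{ACV}. The paper's route, on the other hand, yields a bit more as a by-product: the explicit integrable majorant \eqref{p41}/\eqref{p42} is reused in the proof of Theorem~\ref{theo:2}, and it establishes directly that the Fredholm expansion \eqref{r33} is termwise valid on the unbounded strip --- something your argument obtains only in the limit rather than uniformly in $n$.
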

Access to the gap probability in the vertical strip $\Omega_s$, as established in Proposition \ref{prop:3}, yields access to the bulk spacing distribution of eigenvalue real parts at weak non-Hermiticity. Indeed, we have the following generalization of \eqref{r10},\eqref{r11} and Corollary \ref{cor:1} in the eGinUE at weak non-Hermiticity. The same constitutes our first main result.
\begin{theo}\label{theo:2} Assume $\lambda_0\in(-2,2)$ and define $\tau_n$ as in \eqref{r30}. Let $q_n^{\sigma}(s)$ denote the conditional probability that the distance between an eigenvalue real part at $\lambda_0$ and its nearest real part neighbor to the right is bigger than $s/(n\varrho_1(\lambda_0))$ for large $n$ and fixed $s,\sigma>0$, under the law \eqref{r1} with $\tau=\tau_n\in(0,1)$ as in \eqref{r30}, i.e.
\begin{equation*}
	q_n^{\sigma}(s):=\lim_{h\downarrow 0}\mathbb{P}_{n,\tau_n}\left\{\textnormal{no eigenvalue r.ps. in}\ \left(\lambda_0,\lambda_0+\frac{s}{n\varrho_1(\lambda_0)}\right]\,\bigg|\,\textnormal{an eigenvalue r.p. in}\ \left(\lambda_0-\frac{h}{n\varrho_1(\lambda_0)},\lambda_0\right]\right\}.
\end{equation*}
Then the corresponding limiting spacing distribution
\begin{equation*}
	\wp(s,\sigma):=1-\lim_{n\rightarrow\infty}q_n^{\sigma}(s),\ \ \ \ s,\sigma>0\ \ \textnormal{fixed},
\end{equation*}
is given by the following generalization of \eqref{r11} and \eqref{r18},
\begin{equation}\label{r35}
	\wp(s,\sigma)=\int_0^s\frac{\d^2}{\d u^2}\prod_{j=1}^{\infty}\Big(1-\omega_j\big(K_{\sin}^{\sigma},\Omega_u\big)\Big)\,\d u,\ \ \ \ \ \ \Omega_u=(0,u)\times\mathbb{R}\subset\mathbb{R}^2\simeq\mathbb{C}.
\end{equation}
Here, $\omega_j(K_{\sin}^{\sigma},\Omega)$ are the non-zero eigenvalues, counted according to their algebraic multiplicities, of the trace class integral operator $K_{\sin}^{\sigma}:L^2(\Omega)\rightarrow L^2(\Omega)$ with
\begin{equation*}
	(K_{\sin}^{\sigma}f)(z):=\int_{\Omega}K_{\sin}^{\sigma}(z,w)f(w)\,\d^2 w,\ \ \ \ \Omega\subset\mathbb{C},
\end{equation*}
see \eqref{r32}. As $s\downarrow 0$ and $\sigma>0$ is fixed,
\begin{equation*}
	\frac{\partial}{\partial s}\wp(s,\sigma)=\frac{1}{8}(\pi s)^2\int_{0}^2\int_{0}^2(x-y)^2\e^{-\frac{1}{2}(\pi\sigma)^2(x-y)^2}\,\d x\,\d y+\mathcal{O}\big(s^4\big).
\end{equation*}
\end{theo}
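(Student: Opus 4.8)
The plan is to mirror exactly the derivation of \eqref{r11} and \eqref{r18} from their respective gap probabilities, but now with the gap probability $E_\sigma(s)$ from Proposition \ref{prop:3} in the role of $D(t/2)$ (resp.\ $\e^{-t}$). First I would set up the conditional probability $q_n^\sigma(s)$ rigorously. Writing $c_n:=n\varrho_1(\lambda_0)$, the event in \eqref{r10}-type conditioning involves the gap probability over the strip $\lambda_0+\Omega_s/c_n$ together with the presence of at least one eigenvalue real part in the thin strip $(\lambda_0-h/c_n,\lambda_0]\times\mathbb{R}$. Using the determinantal structure \eqref{r19} and inclusion–exclusion, the numerator and denominator of $q_n^\sigma(s)$ are both finite linear combinations of $E_{n,\tau_n}$ evaluated on strips, and the standard manipulation (as in \cite[\S5.2]{PS}) expresses $1-q_n^\sigma(s)$ as a second difference quotient in the left endpoint of the strip. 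Concretely, if $G_n(a,b):=E_{n,\tau_n}(\lambda_0+((a,b)\times\mathbb{R})/c_n)$, then $1-q_n^\sigma(s)$ equals, up to the $h\downarrow0$ limit, $\bigl(G_n(-h,0)-G_n(-h,s)-G_n(0,0)+G_n(0,s)\bigr)/\bigl(G_n(-h,0)-G_n(0,0)\bigr)$, i.e.\ a ratio of first differences that in the limit $h\downarrow0$ becomes $-\partial_a\partial_b G_n(a,b)\big|_{a=0,b=s}\big/\bigl(-\partial_a G_n(a,0)\big|_{a=0}\bigr)$ after an integration in $s$, precisely reproducing the structure $\int_0^s \frac{\d^2}{\d u^2}(\cdot)\,\d u$.

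Second, I would pass to the limit $n\to\infty$. Proposition \ref{prop:3} gives $\lim_n G_n(0,s)=E_\sigma(s)=E_\sigma[\chi_{\Omega_s}]$, and more generally $\lim_n G_n(a,b)=E_\sigma[\chi_{(a,b)\times\mathbb{R}}]$ for any $a<b$ by translation invariance of the construction (the strip $(a,b)\times\mathbb{R}$ is a translate of $\Omega_{b-a}$, and the limiting kernel $K_{\sin}^\sigma$ is invariant under real translations, as is visible from \eqref{r32} since it depends on $z_1-\overline{z_2}$ in the relevant factor — one checks the Gaussian prefactor in $y$ is unaffected by a real shift). I then need to justify interchanging the $n\to\infty$ limit with the differentiation in the endpoints and with the $h\downarrow0$ limit. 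For the former, it suffices to have uniform-in-$n$ control of the relevant derivatives; this follows from the Fredholm expansion, trace-class bounds on $K_{n,\tau_n}$ restricted to bounded-then-truncated strips, and Hadamard's inequality, exactly as in the proof of Proposition \ref{prop:3}. The upshot is
\begin{equation*}
	\wp(s,\sigma)=\int_0^s\frac{\d^2}{\d u^2}E_\sigma(u)\,\d u=\int_0^s\frac{\d^2}{\d u^2}\prod_{j=1}^\infty\bigl(1-\omega_j(K_{\sin}^\sigma,\Omega_u)\bigr)\,\d u,
\end{equation*}
where the product representation of $E_\sigma[\chi_{\Omega_u}]$ as a Fredholm determinant in terms of the eigenvalues $\omega_j(K_{\sin}^\sigma,\Omega_u)$ is the standard identity $\det(I-K)=\prod_j(1-\omega_j)$ for trace-class $K$; trace-class-ness of $K_{\sin}^\sigma$ on $L^2(\Omega_u)$ despite $\Omega_u$ being unbounded is exactly what Proposition \ref{prop:3} secures (the Gaussian decay in $y_1,y_2$ in \eqref{r32} makes the strip effectively compact). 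This gives \eqref{r35}.

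Third, for the small-$s$ expansion I would Taylor-expand $E_\sigma(u)=\det(I-K_{\sin}^\sigma|_{L^2(\Omega_u)})$ in the strip width $u$. Writing the Fredholm expansion $\det(I-K_{\sin}^\sigma|_{\Omega_u})=1-\tr(K_{\sin}^\sigma|_{\Omega_u})+\frac12\bigl(\tr^2-\tr((K_{\sin}^\sigma)^2)\bigr)|_{\Omega_u}-\cdots$, and noting that $\tr(K_{\sin}^\sigma|_{\Omega_u})=\int_{\Omega_u}K_{\sin}^\sigma(z,z)\,\d^2z$ is the expected number of real parts in the strip, which by construction is $\sim u$ (since $\int_{\mathbb{R}}K_{\sin}^\sigma(x+\im y,x+\im y)\,\d y=1$ for each $x$, by \eqref{r29} and the normalization), the first-order term is exactly $u+O(u^2)$; the second-order term is $-\frac12\int_{\Omega_u}\int_{\Omega_u}|K_{\sin}^\sigma(z,w)|^2\,\d^2z\,\d^2w+\frac12 u^2+O(u^3)$. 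Taking $\frac{\d^2}{\d u^2}$ kills the $1$ and the linear-in-$u$ contributions, leaving at leading order the second derivative of the $|K_{\sin}^\sigma|^2$ double integral. Performing the $y$-integrations in \eqref{r32} against the Gaussians (a pair of Gaussian integrals in $y_1,y_2$) and the $u$-integrations against $\cos$ reduces $\iint_{\Omega_u\times\Omega_u}|K_{\sin}^\sigma|^2$ to $\frac18 u^2\int_0^2\int_0^2(x-y)^2\e^{-\frac12(\pi\sigma)^2(x-y)^2}\,\d x\,\d y + O(u^4)$ after rescaling the spectral variable $u\mapsto \pi x$ in \eqref{r32}; differentiating twice and evaluating at $u=s$, then recalling $\partial_s\wp(s,\sigma)=\frac{\d^2}{\d u^2}E_\sigma(u)\big|_{u=s}$, gives the stated formula. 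I expect the main obstacle to be the bookkeeping in this last step — correctly carrying the Gaussian $y$-prefactors and the $[-\pi,\pi]$ spectral integral through the two-fold spatial integration and matching constants so that the $\frac18(\pi s)^2$ and the Gaussian weight $\e^{-\frac12(\pi\sigma)^2(x-y)^2}$ come out exactly — rather than any conceptual difficulty; the limit-interchange justifications in the second step are routine given Proposition \ref{prop:3}.
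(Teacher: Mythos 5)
Your derivation of \eqref{r35} is essentially the paper's: write $q_n^\sigma(s)$ as a ratio of first differences of strip gap probabilities, take $h\downarrow 0$ inside the finite-$n$ Fredholm series, pass to $n\to\infty$ using Proposition \ref{prop:3} together with the uniform bound \cite[$(52)$]{ACV}, and use horizontal translation invariance of \eqref{r32} to recognize the limit as $-\frac{\d}{\d s}E_\sigma[\chi_{\Omega_s}]$. Two points of attribution: the trace class property of $K_{\sin}^\sigma$ on the unbounded strip is \emph{not} what Proposition \ref{prop:3} secures — the paper proves it separately (Proposition \ref{alex}) by factoring $K_{\sin}^\sigma=A_\sigma^{\ast}A_\sigma$ through $L^2(-\pi,\pi)$ with Hilbert--Schmidt factors; your "Gaussian decay makes the strip effectively compact" is the right intuition but not yet an argument. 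And the interchange of $h\downarrow0$ with $n\to\infty$ is done in the paper by differentiating the finite-$n$ series in the left endpoint (using symmetry of the correlation functions) before taking $n\to\infty$, which is the uniform control you gesture at.

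The genuine gap is in your small-$s$ expansion, exactly at the step you flagged as risky. You assume $\iint_{\Omega_u\times\Omega_u}|K_{\sin}^\sigma|^2\,\d^2z\,\d^2w=u^2+\mathcal{O}(u^4)$, so that the quadratic Fredholm term $\tfrac12 u^2-\tfrac12\tr(K_{\sin}^\sigma)^2$ is $\mathcal{O}(u^4)$. This fails: after the dimensional reduction \eqref{h3}--\eqref{h4} (which the paper establishes in Section \ref{sec5} and invokes here, but which you do not have at this point), $E_\sigma(u)=\det(I-S^{\sigma}_{u/2})$ on $L^2(-u/2,u/2)$, and the Gaussian factor $\exp[-(\tfrac{\pi\sigma}{2t}(a-b))^2]$ with $t=u/2$ depends on $(a-b)/u$ and therefore does \emph{not} tend to $1$ on the shrinking interval. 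Rescaling gives
\begin{equation*}
\tr\big(K_{\sin}^{\sigma}\big)^2\Big|_{\Omega_u}=\frac{u^2}{4}\int_0^2\!\!\int_0^2\e^{-\frac12(\pi\sigma)^2(x-y)^2}\,\d x\,\d y-\frac{\pi^2u^4}{48}\int_0^2\!\!\int_0^2(x-y)^2\e^{-\frac12(\pi\sigma)^2(x-y)^2}\,\d x\,\d y+\mathcal{O}\big(u^6\big),
\end{equation*}
so the $u^2$ coefficient of $E_\sigma(u)$ equals $\tfrac12-\tfrac18\int_0^2\int_0^2\e^{-\frac12(\pi\sigma)^2(x-y)^2}\d x\,\d y$, which is strictly positive for every $\sigma>0$ and vanishes only at $\sigma=0$. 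Hence $\partial_s\wp(s,\sigma)$ tends to the positive constant $1-\tfrac14\int_0^2\int_0^2\e^{-\frac12(\pi\sigma)^2(x-y)^2}\d x\,\d y$ as $s\downarrow0$ rather than vanishing to order $s^2$ — which is also what one expects probabilistically, since real parts of eigenvalues that are vertically separated do not repel, and the density must tend to $1$ at $s=0$ in the Poisson limit $\sigma\to\infty$. Be aware that the expansion stated in the theorem exhibits the same cancellation you assumed, so both should be re-examined; moreover the cubic and quartic Fredholm terms contribute to $\partial_s^2E_\sigma$ at orders $s$ and $s^2$ respectively and would have to be tracked before any $\mathcal{O}(s^4)$ error bound can be asserted.
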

The limiting spacing distribution function \eqref{r35} is our sought after generalization of the Gaudin-Mehta distribution function \eqref{r11} in the eGinUE at weak non-Hermiticity. An easy argument, see Section \ref{sec5} below in Corollary \ref{deg1} and \ref{deg2}, shows that $\wp(s,\sigma)$ indeed degenerates to $F_1(s)$ in \eqref{r11}, once $\sigma\downarrow 0$ as we move to the GUE, and to $F_0(s)$ in \eqref{r18}, once $\sigma\rightarrow\infty$ as we move to the GinUE, for fixed $s>0$. The same interpolation phenomenon is visualized in Figure \ref{figure3} for selected choices of $\sigma\in[0,\infty)$. 
\begin{center}
\begin{figure}[tbh]
\resizebox{0.456\textwidth}{!}{\includegraphics{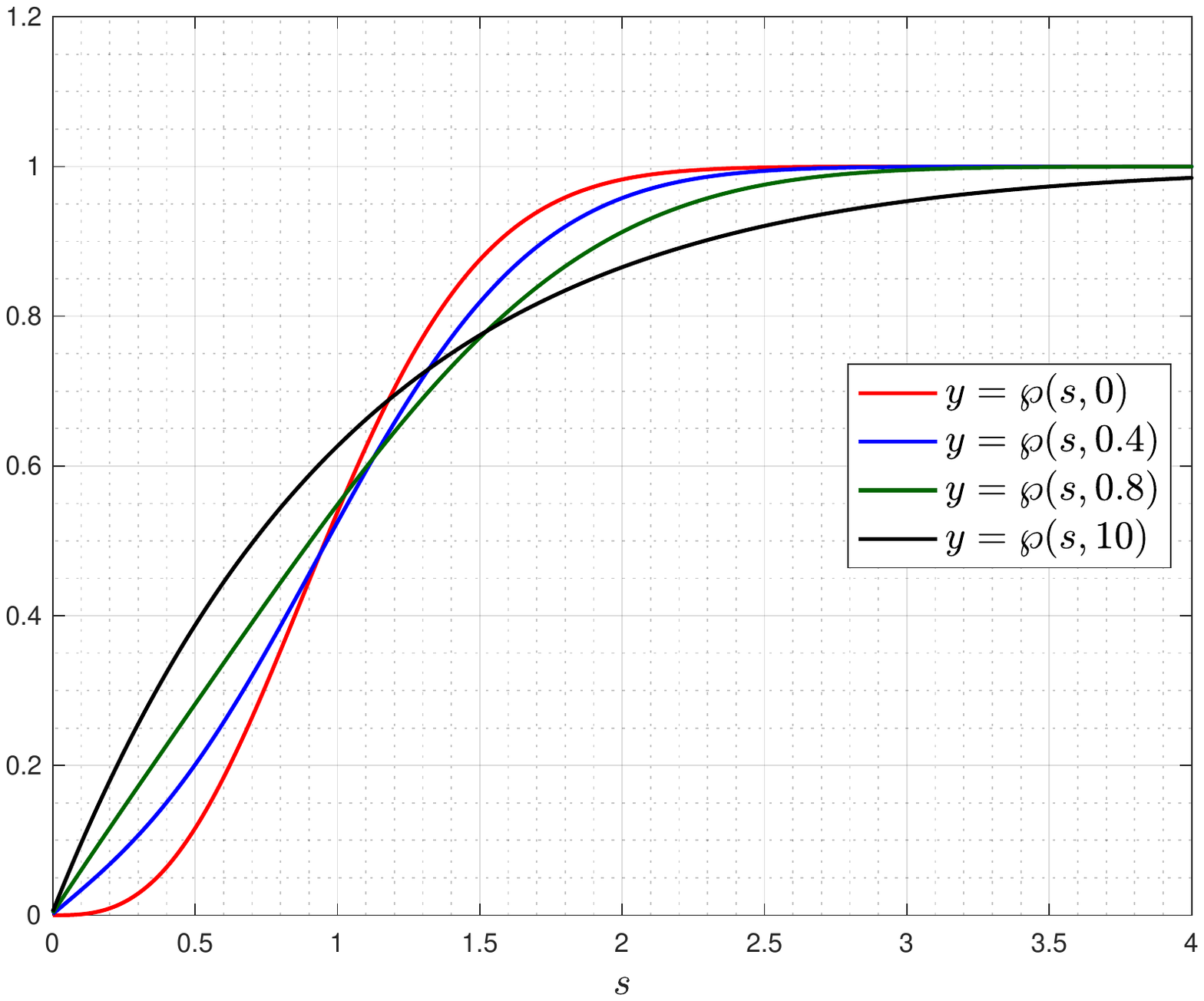}}\ \ \ \ \ \ \ \ \resizebox{0.455\textwidth}{!}{\includegraphics{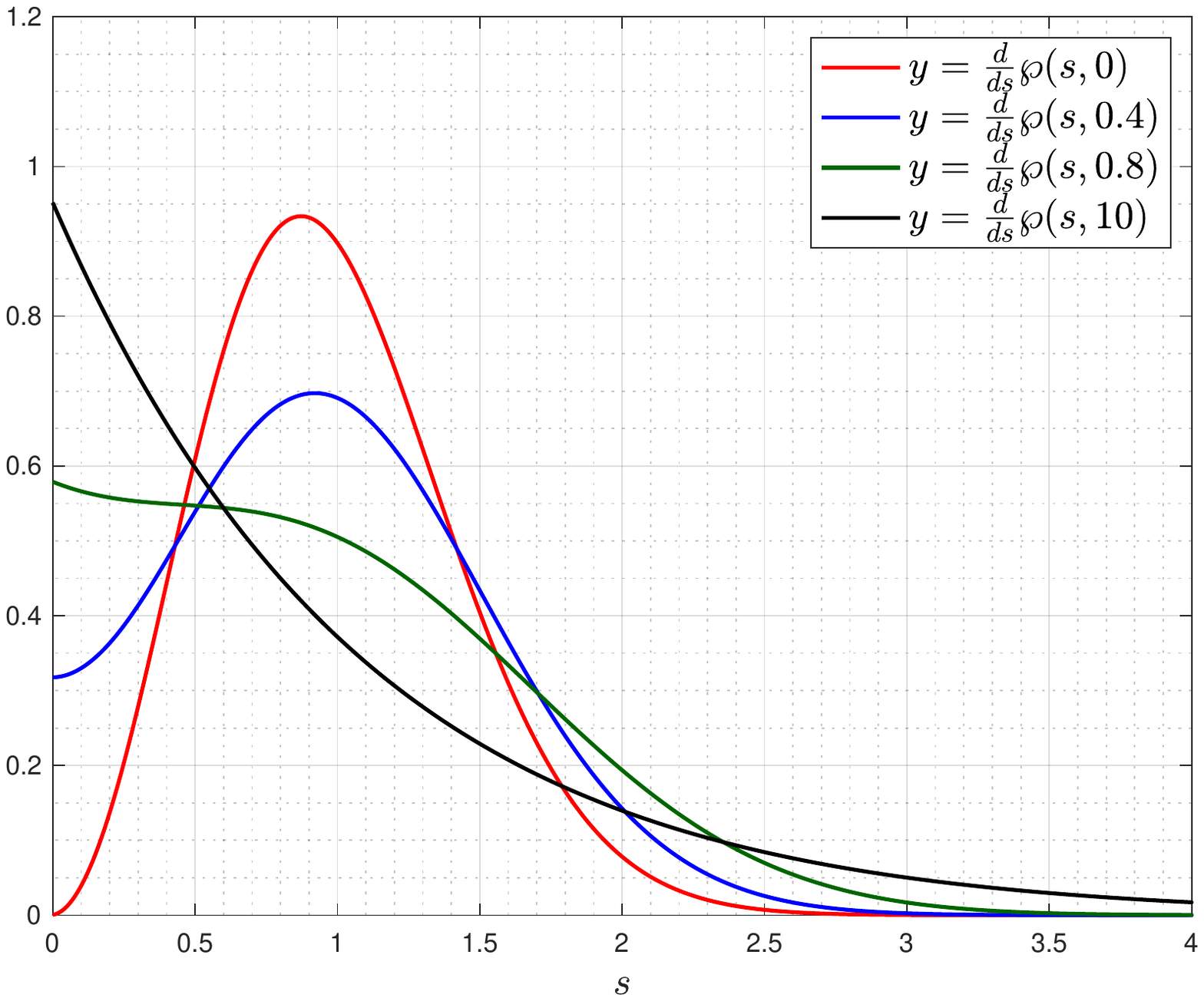}}
\caption{The generalized Gaudin-Mehta bulk spacing distribution \eqref{r35} in the eGinUE at weak non-Hermiticity. The plots were generated in MATLAB with $m=80$ quadrature points using the Nystr\"om method with Gauss-Legendre quadrature. On the left the distribution function, on the right the density function, both for varying $\sigma\in[0,\infty)$.}
\label{figure3}
\end{figure}
\end{center} 

Moving forward, we are now prepared to formulate the aforementioned integrable system that achieves the interpolation between $D(\frac{t}{2})$ in \eqref{r11}, and the exponential in \eqref{r18}. This integrable system will underwrite the Fredholm determinant appearing in the integrand of  \eqref{r35}. For brevity, we define for any $t,\sigma>0$,
\begin{equation}\label{r36}
	D_{\sigma}(t):=\prod_{j=1}^{\infty}\Big(1-\omega_j\big(K_{\sin}^{\sigma},(-t,t)\times\mathbb{R}\big)\Big),\ \ \ \ \ \ \ (-t,t)\times\mathbb{R}\subset\mathbb{R}^2\simeq\mathbb{C},
\end{equation}
and state our second main result.
\begin{theo}\label{theo:3} Set
\begin{equation*}
	w_{\alpha}(y):=\Phi\big(\alpha(y+1)\big)-\Phi\big(\alpha(y-1)\big),\ \ y\in\mathbb{R},\ \alpha>0;\ \ \ \ \ \ \ \ \ \Phi(z):=1-\frac{1}{2}\textnormal{erfc}(z),\ \ z\in\mathbb{C},
\end{equation*}
in terms of the complementary error function $w=\textnormal{erfc}(z)$, see \cite[$7.2.2$]{NIST}. Then for all $t,\sigma>0$, 
\begin{equation}\label{r37}
	D_{\sigma}(t)=\exp\left[-2t\int_0^{\infty}w_{\alpha}(y)\,\d y-\int_0^t(t-s)\left\{\int_0^{\infty}u_{\alpha}(s,y)\,\d w_{\alpha}(y)\right\}^2\d s\right]\Bigg|_{\alpha=t\sigma^{-1}}
\end{equation}
with $\d w_{\alpha}(y)\equiv w_{\alpha}'(y)\d y$ and where $u_{\alpha}=u_{\alpha}(t,y)$ solves the integro-differential equation
\begin{align}\label{r38}
	\left[\frac{\partial^2}{\partial t^2}\big(tu_{\alpha}(t,y)\big)+4\pi^2y^2tu_{\alpha}(t,y)\right]^2=4&\left[\int_0^{\infty}u_{\alpha}(t,y)\,\d w_{\alpha}(y)\right]^2\nonumber\\
	&\ \ \ \times\left[\left(\frac{\partial}{\partial t}\big(tu_{\alpha}(t,y)\big)\right)^2+4\pi^2y^2\big(tu_{\alpha}(t,y)\big)^2\right],
\end{align}
with boundary condition $u_{\alpha}(t,y)\sim 2y$ as $t\downarrow 0$ for any fixed $y,\alpha>0$.
\end{theo}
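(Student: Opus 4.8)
The plan is to reduce $D_{\sigma}(t)$ to a weight‑deformed sine‑kernel Fredholm determinant, to analyze the latter by Its‑Izergin‑Korepin‑Slavnov/Tracy‑Widom‑type methods so as to extract the integro‑differential equation \eqref{r38}, and finally to run a tau‑function computation together with a small‑$t$ expansion to obtain \eqref{r37}.

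\emph{Step 1: operator reduction.} In \eqref{r32}, the $u\mapsto-u$ symmetry of the measure $\e^{-(\sigma u)^2}\d u$ on $[-\pi,\pi]$ lets one replace $\cos\big(u(z_1-\overline{z_2})\big)$ by $\e^{\im u(z_1-\overline{z_2})}$; since $\e^{\im u(z_1-\overline{z_2})}=\e^{\im ux_1-uy_1}\,\overline{\e^{\im ux_2-uy_2}}$ for $z_k=x_k+\im y_k$ and $u$ real, distributing the Gaussian prefactor of \eqref{r32} symmetrically yields a factorization $K_{\sin}^{\sigma}(z_1,z_2)=\tfrac{1}{2\pi}\int_{-\pi}^{\pi}\e^{-(\sigma u)^2}\varphi_u(z_1)\overline{\varphi_u(z_2)}\,\d u$ with $\varphi_u(x+\im y)\propto\e^{-y^2/(2\sigma^2)}\e^{\im ux-uy}$. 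The identity $\det(I-BC)=\det(I-CB)$ for the induced Hilbert‑Schmidt operators then collapses $K_{\sin}^{\sigma}$ on $L^2\big((-t,t)\times\mathbb R\big)$ to an operator on $L^2([-\pi,\pi])$; performing the Gaussian integral in $y_1,y_2$ and the ($\sin$‑type) integral in $x_1,x_2$ and symmetrizing by conjugation with a multiplication operator, its kernel is $\tfrac{\sin(t(u-v))}{\pi(u-v)}\,\e^{-\frac{\sigma^2}{4}(u-v)^2}$. By the convolution theorem this is the Fourier transform (in $u-v$) of $\tfrac{t}{2\pi}w_{\alpha}$, where $\alpha:=t\sigma^{-1}$ and, using $\Phi'(z)=\tfrac{1}{\sqrt\pi}\e^{-z^2}$,
\begin{equation*}
	w_{\alpha}(y)=\int_{-1}^{1}\frac{\alpha}{\sqrt\pi}\,\e^{-\alpha^2(y-s)^2}\,\d s=\Phi\big(\alpha(y+1)\big)-\Phi\big(\alpha(y-1)\big),
\end{equation*}
so the kernel equals $\tfrac{t}{2\pi}\int_{\mathbb R}w_{\alpha}(y)\,\e^{\im t(u-v)y}\,\d y$. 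Applying $\det(I-BC)=\det(I-CB)$ a second time (factoring $\e^{\im t(u-v)y}=\e^{\im tuy}\,\overline{\e^{\im tvy}}$) gives
\begin{equation}\label{red}
	D_{\sigma}(t)=\det\big(I-\mathsf S_t\big)_{L^2(\mathbb R,\,w_{\alpha}(y)\,\d y)},\qquad \mathsf S_t(y,y'):=\frac{\sin\big(\pi t(y-y')\big)}{\pi(y-y')},\quad \alpha=t\sigma^{-1},
\end{equation}
a gap determinant of a sine kernel against the even, Gaussian‑decaying reference density $w_{\alpha}$; the operator is trace class by the decay of $w_\alpha$, and a dominated‑convergence argument paralleling the proof of Proposition \ref{prop:3} legitimizes the manipulations. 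From here $\alpha>0$ is treated as an independent parameter, one studies $\mathcal D(t,\alpha):=\det(I-\mathsf S_t)_{L^2(\mathbb R,w_{\alpha}\d y)}$, and the identification $\alpha=t\sigma^{-1}$ is re‑imposed only at the very end.

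\emph{Step 2: the integro‑differential equation.} Conjugating by multiplication with $\sqrt{w_{\alpha}}$ turns $\mathsf S_t$ into the integrable kernel $\sqrt{w_{\alpha}(y)}\,\mathsf S_t(y,y')\sqrt{w_{\alpha}(y')}$ on $L^2(\mathbb R)$ (equivalently, after an integration by parts in $y$, $\mathcal D(t,\alpha)$ becomes a Hankel‑type composition over the fibre variable $y\in(0,\infty)$ with spectral density $\d w_{\alpha}$, of the kind underlying the integro‑differential Painlev\'e constructions of Amir‑Corwin‑Quastel). I would attach to this integrable kernel the $2\times2$ Riemann‑Hilbert problem of Its‑Izergin‑Korepin‑Slavnov — jump on $\mathbb R$, with continuous spectral density $\d w_{\alpha}$ rather than a jump on a fixed contour — then deform in the frequency variable $t$ and write down the resulting $t$‑Lax pair together with the linear relations satisfied by the resolvent‑dressed plane waves at the fibre point $y$. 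Compatibility of the two linear systems produces, for the suitably normalized fibre function $u_{\alpha}(t,y)$ (the analogue of $2\pi r(2\pi t)$ in the footnote to \eqref{r12}, now carried by the $y$‑slice of frequency $2\pi y$), exactly the equation \eqref{r38}, with $4\pi^2 y^2$ playing the role of $4\pi^2$ in \eqref{r13} and the scalar $\int_0^{\infty}u_{\alpha}(t,y)\,\d w_{\alpha}(y)$ playing the role of $u$ there. Equivalently one may run the Tracy‑Widom operator calculus directly, as in the proof of \cite[Lemma $3.6.6$]{AGZ}: with $a_t(y)=\sqrt{w_{\alpha}(y)}\cos\pi ty$ and $b_t(y)=\sqrt{w_{\alpha}(y)}\sin\pi ty$ one has $\partial_t\big(\sqrt{w_{\alpha}}\,\mathsf S_t\,\sqrt{w_{\alpha}}\big)=|a_t\rangle\langle a_t|+|b_t\rangle\langle b_t|$ and $\big[M_y,\sqrt{w_{\alpha}}\,\mathsf S_t\,\sqrt{w_{\alpha}}\big]=\tfrac1\pi\big(|b_t\rangle\langle a_t|-|a_t\rangle\langle b_t|\big)$ with $M_y$ the multiplication by $y$, so that the resolvent‑dressed functions $(I-\mathsf S_t)^{-1}a_t$, $(I-\mathsf S_t)^{-1}b_t$ and their $M_y$‑images satisfy a closed system of $t$‑ODEs whose only nonlinear coupling between distinct fibres is through $\int_0^{\infty}u_{\alpha}\,\d w_{\alpha}$; eliminating all other unknowns leaves \eqref{r38}. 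The boundary behaviour $u_{\alpha}(t,y)\sim 2y$ as $t\downarrow 0$ follows because at $t=0$ the operator vanishes, the dressing is trivial, and the bare fibre data $\e^{\pm\im\pi ty}=1\pm\im\pi ty+\mathcal O(t^2)$ have leading $\mathcal O(t)$‑part of slope proportional to $y$ in the chosen normalization.

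\emph{Step 3: tau‑function identity and constants.} Differentiating, $\partial_t\log\mathcal D(t,\alpha)=-\mathrm{tr}\big[(I-\mathsf S_t)^{-1}\partial_t\mathsf S_t\big]$, which by the rank‑two form of $\partial_t\mathsf S_t$ is a quadratic expression in the dressed $\cos$ and $\sin$ functions; a second differentiation, combined with the identities of Step 2, telescopes it into the Jimbo‑Miwa‑M\^ori‑Sato‑type relation
\begin{equation}\label{tau}
	\frac{\partial^2}{\partial t^2}\log\mathcal D(t,\alpha)=-\left(\int_0^{\infty}u_{\alpha}(t,y)\,\d w_{\alpha}(y)\right)^{2},
\end{equation}
the exact analogue of the identity $\partial_t^2\log D(t)=-u(t)^2$ implicit in \eqref{r12}. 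Integrating \eqref{tau} twice in $t$ with $\log\mathcal D(0,\alpha)=0$, and using the small‑$t$ expansion $-\log\mathcal D(t,\alpha)=\mathrm{tr}_{L^2(\mathbb R,w_{\alpha}\d y)}\mathsf S_t+\mathcal O(t^2)=t\int_{\mathbb R}w_{\alpha}(y)\,\d y+\mathcal O(t^2)$ (since $\mathsf S_t(y,y)=t$), whence $\partial_t\log\mathcal D(t,\alpha)\big|_{t\downarrow0}=-\int_{\mathbb R}w_{\alpha}(y)\,\d y=-2\int_0^{\infty}w_{\alpha}(y)\,\d y$, one obtains
\begin{equation*}
	\mathcal D(t,\alpha)=\exp\left[-2t\int_0^{\infty}w_{\alpha}(y)\,\d y-\int_0^{t}(t-s)\left\{\int_0^{\infty}u_{\alpha}(s,y)\,\d w_{\alpha}(y)\right\}^2\d s\right],
\end{equation*}
and specializing $\alpha=t\sigma^{-1}$ in view of \eqref{red} is precisely \eqref{r37}.

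\emph{Main obstacle.} The analytic core is Step 2: setting up the integrable‑operator Riemann‑Hilbert problem with a continuous spectral density $\d w_{\alpha}$ in place of a jump on a fixed contour with finitely many turning points, and extracting from its Lax pair the precise shape \eqref{r38} — in particular, isolating which normalized quantity is the Painlev\'e‑type unknown $u_{\alpha}(t,y)$ so that both the boundary value $2y$ and the coefficient $4\pi^2 y^2$ emerge, and checking that the continuum of fibres decouples up to the single scalar $\int_0^{\infty}u_{\alpha}\,\d w_{\alpha}$. A secondary, purely technical point is the rigorous bookkeeping in Steps 1 and 3 (trace‑norm estimates, differentiation under the trace, Fubini and dominated convergence), made slightly delicate by $w_{\alpha}$ being supported on all of $\mathbb R$ although Gaussian‑decaying, together with the logically elementary but easily mishandled discipline of holding $\alpha$ fixed throughout the ODE analysis and substituting $\alpha=t\sigma^{-1}$ only at the last step.
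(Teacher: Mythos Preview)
Your Step~1 reduction coincides with the paper's up to the kernel $\frac{\sin t(u-v)}{\pi(u-v)}\e^{-\sigma^2(u-v)^2/4}$ on $L^2(-\pi,\pi)$ (this is precisely the paper's Proposition~\ref{alex} and the corollary giving \eqref{h3},\eqref{h4}). You then diverge: you apply Sylvester a second time to land on the \emph{ordinary} sine kernel $\mathsf S_t$ acting on the weighted space $L^2(\mathbb R,w_\alpha\,\d y)$, whereas the paper stays on the finite interval $L^2(-t,t)$ and rewrites the kernel as the finite-temperature sine kernel $K_\alpha(x,y)=\int_0^\infty\cos(\pi(x-y)z)\,w_\alpha(z)\,\d z$, see \eqref{h8},\eqref{h9}. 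Your formulation is exactly the one the paper records in the remark around \eqref{r41} as ``a third approach'' but does not carry out.

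Consequently Step~2 is genuinely different. The paper runs the Tracy--Widom method with $t$ as the \emph{moving endpoint}: $\partial_t$ of the cut-off kernel produces boundary delta functions at $\pm t$, the fibre variable $z$ enters through the generalized integrable form \eqref{i1}, and the resolvent-dressed dilated sines/cosines evaluated at the endpoint, $q_\alpha(t,z)=Q_\alpha(t;z,t)$ and $p_\alpha(t,z)=P_\alpha(t;z,t)$, satisfy the clean first-order system \eqref{i13},\eqref{i14} coupled only through the single scalar $R_\alpha^t(t,-t)$. Setting $u_\alpha=\frac{2}{t}p_\alpha q_\alpha$ then gives \eqref{r38} in two lines. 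In your route the domain is fixed and $t$ sits inside the kernel, so $\partial_t\mathsf K_t$ is rank two and $[M_y,\mathsf K_t]$ is rank two as you write; the fibre-by-fibre decoupling is not structural but must be extracted from parity of $w_\alpha$ and commutator bookkeeping, which is precisely the obstacle you flag. Both routes should arrive at \eqref{r38}; the paper's buys a more direct generalization of the classical Gaudin--Mehta identities \eqref{i17}--\eqref{i19}, while yours connects more transparently to the IIKS Riemann--Hilbert machinery and to the standard integrable-kernel picture on $L^2(\mathbb R)$. Your Step~3 (the second logarithmic derivative \eqref{tau}, double integration, and small-$t$ matching) is identical to the paper's \eqref{i19a}--\eqref{i27}.
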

Identity \eqref{r37},\eqref{r38} is a special case of a broader result for \textit{finite-temperature} sine kernel Fredholm determinants of the type
\begin{equation*}
	F(t):=\prod_{j=1}^{\infty}\Big(1-\omega_j\big(K,(-t,t)\big)\Big),\ \ \ \ t>0,
\end{equation*}
as established in Section \ref{sec6} below, see Theorem \ref{main}. Here, $K:L^2(\Delta)\rightarrow L^2(\Delta)$ is the integral operator with kernel
\begin{equation}\label{r39}
	K(x,y):=\int_0^{\infty}\cos\big(\pi(x-y)z\big)w(z)\,\d z,\ \ \ \ x,y\in\Delta\subset\mathbb{R},
\end{equation}
depending on a smooth, even weight function $w:\mathbb{R}\rightarrow[0,1)$ such that $\lim_{|z|\rightarrow\infty}w(z)=0$ exponentially fast, and where $\omega_j(K,\Delta)$ are its non-zero eigenvalues. Integral operators of the type \eqref{r39} first appeared in the analysis of the emptiness formation probability in the one-dimensional Bose gas by Its, Izergin, Korepin and Slavnov \cite{IIKS} for a Fermi weight, see \cite[$(1.4),(1.5)$]{IIKS}. The same authors investigated the corresponding $F(t)$ in the framework of operator-valued Riemann-Hilbert problems, however a single scalar integro-differential equation as in \eqref{r38} was not found in \cite{IIKS}. The next appearance of a kernel of the type \eqref{r39} occurred in Johansson's work \cite[$(38)$]{Joh} on the Moshe-Neuberger-Shapiro model, in a suitable grand canonical scaling limit. But no focus was placed on integrable systems in \cite{Joh} and the same goes for the works \cite{DDMS,LW} where examples of $F(t)$ reappeared. Consequently, to the best of our knowledge, \eqref{r37} and \eqref{r38} mark the first time that the Fredholm determinant of a finite-temperature sine kernel \eqref{r39} has been related to Painlev\'e special function theory.
\begin{rem} The kernel \eqref{r32} is translation invariant in the horizontal direction and so 
\begin{equation*}
	D_{\sigma}(t)=\prod_{j=1}^{\infty}\Big(1-\omega_j\big(K_{\sin}^{\sigma},\Omega_{2t}\big)\Big),\ \ \ \ \ \Omega_{2t}=(0,2t)\times\mathbb{R}\subset\mathbb{R}^2\simeq\mathbb{C},
\end{equation*}
which shows that \eqref{r37} and \eqref{r38} underwrite the limiting distribution function $\wp$ in \eqref{r35}. 
\end{rem}
\begin{rem} Observing the pointwise limit
\begin{equation*}
	\lim_{\alpha\rightarrow\infty}w_{\alpha}(y)=\begin{cases}1,&y\in(-1,1)\\ 0,&y\in(-\infty,-1)\cup(1,\infty)
	\end{cases},
\end{equation*} 
of the weight in Theorem \ref{theo:3}, expression \eqref{r37} becomes \eqref{r12} as $\sigma\downarrow 0$ and $t>0$ is fixed, with $u_{\infty}(s,1)$ in \eqref{r38} formally equal to $u(s)$ in \eqref{r13}. It is in this way that \eqref{r37} and \eqref{r38} generalize \eqref{r12} and \eqref{r13}. 
\end{rem}

%%%%%%%%%%%%%%%%%%%%%%%%%%%%%%%%%%%%%%%%%%%%%%%%%%%%%%%%%%%%%%

\subsection{Methodology and outline of paper}
The study of eigenvalue real parts in non-Hermitian random matrix ensembles was initiated by May in his landmark paper \cite{Ma} on the stability of large ecological systems. Indeed, any quantitatively accurate description of the Gardner-Ashby-May transition \cite{GA} from sharp to unstable behavior, as the system's complexity increases, requires precise control over the statistical behavior of eigenvalue real parts in the underlying matrix model. On the level of the rightmost eigenvalue, that is the eigenvalue with largest real part, such control was achieved in \cite{Ben} for the eGinUE and in \cite{CESX} for the GinUE\footnote{The companion paper \cite{CESX2} of \cite{CESX} showed that the same control is universal within a suitable class of non-Hermitian matrix ensembles with iid entries.}. Moreover, the work of Bender \cite{Ben} was recently connected to the theory of integrable systems in \cite{BL} where it was shown that the limiting distribution function of the rightmost eigenvalue in the eGinUE at weak non-Hermiticity is expressible in terms of a distinguished integro-differential Painlev\'e-II transcendent. The same transcendent interpolates, with varying degree of non-Hermiticity, between the Hastings-McLeod Painlev\'e-II transcendent that underwrites the limiting statistical behavior of the rightmost eigenvalue in the GUE, cf. \cite{TW}, and the elementary exponential that upholds the limiting Gumbel law found in \cite{CESX} at the rightmost edge in the GinUE. In summary, the precise statistical description of the largest eigenvalue real part in the GUE, eGinUE and GinUE has been obtained in the papers \cite{TW,Ben,BL,CESX}. On the level of innermost eigenvalues, i.e. bulk eigenvalue real parts, much less was known in RMT literature up to this point. Save for the ubiquitous Gaudin-Mehta distribution \eqref{r11}, the Poisson gap distribution \eqref{r18} in the GinUE and the generalized Gaudin-Mehta distribution \eqref{r35} in the eGinUE at weak non-Hermiticity are novel results, and, as before at the rightmost edge, these results have now been connected to Painlev\'e special function theory through Theorem \ref{theo:3} and Corollaries \ref{deg1}, \ref{deg2}; nearly $25$ years after the first appearance of \eqref{r32} in \cite{FKS2,FSK}.\smallskip

In proving the limit laws \eqref{r15},\eqref{r17},\eqref{r18},\eqref{r24},\eqref{r26},\eqref{cool} and \eqref{r25} one faces a common technical hurdle in that all vertical eigenvalue coordinates need to be integrated out over $\mathbb{R}$, i.e. unlike for the GUE in \eqref{r2} or \eqref{r9}, one necessarily has to deal with unbounded integration domains when studying eigenvalue real parts in non-Hermitian ensembles. Our favorite tools for this obstacle will be the dominated convergence theorem and Fatou's lemma, so we will oftentimes first establish dominance bounds and pointwise limits before integrating. Once the limits have been established, we then proceed by relating them to integrable systems theory. This is trivial in case of \eqref{r17},\eqref{r18} and \eqref{r26}, but far from trivial in \eqref{r35}. Indeed, the underlying integral operator $K_{\sin}^{\sigma}$ in Theorem \ref{theo:2} acts on $L^2(\Omega)$ where $\Omega\subset\mathbb{C}$ is two-dimensional and so most classical tools in integrable systems theory are not applicable to study the Fredholm determinant in \eqref{r35}. Faced with this obstacle one might be tempted to follow the approach in \cite{BL} where a similar higher-dimensional problem occurred at the rightmost eGinUE edge at weak non-Hermiticity. In \cite{BL} the problem was bypassed through the use of trace identities and quite surprisingly, similar trace identities also hold in the eGinUE bulk at weak non-Hermiticity for the operator $K_{\sin}^{\sigma}$ with kernel \eqref{r32}. Namely, we have for any $t,\sigma>0$,
\begin{equation*}
	\tr_{L^2(J_t)}(K_{\sin}^{\sigma})^k=\tr_{L^2(J_t)}(M_{\sin}^{\sigma})^k\ \ \ \ \ \ \forall\,k\in\mathbb{Z}_{\geq 1}\ \ \textnormal{with}\ \ J_t:=(-t,t)\times\mathbb{R}\subset\mathbb{R}^2\simeq\mathbb{C},
\end{equation*}
where $M_{\sin}^{\sigma}:L^2(J_t)\rightarrow L^2(J_t)$ is trace class with
\begin{equation}\label{r40}
	(M_{\sin}^{\sigma}f)(z):=\int_{J_t}M_{\sin}^{\sigma}(z,w)f(w)\,\d^2 w,\ \ \ M_{\sin}^{\sigma}(z_1,z_2):=\frac{1}{\sqrt{\pi}}\e^{-\frac{1}{2}y_1^2}K_{\sin}(x_1+\sigma y_1,x_2+\sigma y_2)\e^{-\frac{1}{2}y_2^2},
\end{equation}
in terms of the ordinary sine kernel \eqref{r9},\eqref{h5a} and with $z_k=x_k+\im y_k\in\mathbb{C}$. The kernel structure \eqref{r40} also appeared at the rightmost eGinUE edge at weak non-Hermiticity with $K_{\sin}$ replaced by the GUE Airy kernel, see \cite[Proposition $2.3$]{BL}. In turn, the edge equivalent of \eqref{r40} served as starting point for the analysis in \cite{BL} as it identified the underlying integral operator as Hankel composition operator acting on $L^2(\Omega)$ with a suitable domain $\Omega\subset\mathbb{C}$, and so the techniques of \cite{Kra,Bo} became available in the analysis of the underlying Fredholm determinant. Unfortunately, \eqref{r40} does \textit{not} identify $M_{\sin}^{\sigma}$ as Hankel composition operator but instead as Wiener-Hopf type operator and it is therefore not clear how \eqref{r40} can be useful in the derivation of an integrable system for the Fredholm determinant of $K_{\sin}^{\sigma}$ on $L^2(J_t)$, say. For this reason we follow a different route in the analysis of $K_{\sin}^{\sigma}$, a route which relies on a suitable operator factorization and subsequent application of Sylvester's identity. Afterwards we employ a Fourier integral identity and what results is the identification of the Fredholm determinant of $K_{\sin}^{\sigma}$ on $L^2(J_t\subset\mathbb{R}^2)$ as Fredholm determinant of a suitable \textit{finite-temperature} sine kernel \eqref{r39} on $L^2((-t,t)\subset\mathbb{R})$. Equipped with the same dimensional reduction we then adapt parts of the original algebraic Tracy-Widom method \cite{TW0} to our needs and obtain the integro-differential connection \eqref{r37},\eqref{r38}.\smallskip

In more detail, the remaining sections of this paper are organized as follows: in Section \ref{sec2} we derive our horizontal spacing results in the GinUE as stated in \eqref{r14}, in Lemma \ref{lem:1} and Corollary \ref{cor:1}. Our workings rely on the integrable structure of the GinUE and first establish a Fredholm determinant representation for the gap probability in suitable unbounded vertical sectors at finite $n$. Once achieved, we then evaluate the same Fredholm determinant in the large $n$ limit by exploiting the Plemelj-Smithies formula and by estimating the underlying operator traces. With \eqref{r17} in place, formula \eqref{r18} follows by l'Hospital's rule and another operator trace estimation. Afterwards, in Section \ref{sec3}, we establish \eqref{r24} and Lemma \ref{lem:2}. Working in the eGinUE at strong non-Hermiticity requires us to rewrite the correlation kernel \eqref{r20} as double contour integral as in \cite{ACV} and to establish another Fredholm determinant formula for the gap probability \eqref{r26} in a vertical sector, at finite $n$. Subsequently, the limit in \eqref{r26} follows by proving that a suitable integral operator is asymptotically small in Hilbert-Schmidt norm. The same yields the desired result \eqref{r26} via \eqref{p37} and we note that this approach is inspired by the recent GinUE edge workings in \cite[$(20)$]{CESX}. We move to the eGinUE at weak non-Hermiticity in Section \ref{sec4} and first establish the Fredholm determinant formul\ae\,\eqref{r33} and \eqref{r34} for the gap probability on bounded domains $\Omega\subset\mathbb{C}$. The same formul\ae\,are subsequently shown to be valid for unbounded vertical strips, too, and the limiting gap probability \eqref{cool} is obtained by Hadamard's inequality and a suitable dominance estimate. In turn, \eqref{r35} follows by Taylor expansion while exploiting our previous dominance bounds, Hadamard's inequality and the translation invariance of \eqref{r32} in horizontal direction. Moving ahead, in Section \ref{sec5} we establish trace class and invertibility properties of $K_{\sin}^{\sigma}$ and derive two equivalent representations for its Fredholm determinant: one in \eqref{h3} which is particular amenable to numerical simulations, compare Figure \ref{figure3}, and which relies on an operator factorization and Sylvester's identity. The same representation is also useful in establishing the degenerations of $\wp(s,\sigma)$ as $\sigma\downarrow 0$ and $\sigma\rightarrow\infty$ in Corollary \ref{deg1} and \ref{deg2}. The second representation in \eqref{h8} for the Fredholm determinant of $K_{\sin}^{\sigma}$ identifies the same as a finite-temperature sine kernel determinant \eqref{r39} and using this representation we compute the integrable system \eqref{r37},\eqref{r38} in Section \ref{sec6}. In detail, we bring our finite-temperature sine kernel in generalized integrable form, see \eqref{i1}, and afterwards adapt the Tracy-Widom algebraic method \cite[Section VI]{TW0} to the general kernel \eqref{r39}. En route we obtain integro-differential Gaudin-Mehta identities in \eqref{i17},\eqref{i18},\eqref{i19} and subsequently Theorem \ref{main} for the general kernel family \eqref{r39}. A special case of this general result yields \eqref{r37},\eqref{r38} and completes our paper.
\begin{rem} We choose to work with the algebraic Tracy-Widom method in the derivation of Theorem \ref{main} for ease of presentation. The same approach was used in \cite{ACQ} in the derivation of the aforementioned integro-differential Painlev\'e-II connection and only later on it was shown in \cite{Bo0,BCT} that operator-valued Riemann-Hilbert techniques yield the same result. We expect that the operator-valued Riemann-Hilbert methodology can also be applied to \eqref{h9},\eqref{i1} and it would yield \eqref{r37},\eqref{r38} through the compatibility of a suitable operator-valued Lax pair.
\end{rem}
\begin{rem} A third approach to the Fredholm determinant of \eqref{r39} emerges from the trace identities
\begin{equation*}
	\tr_{L^2(-t,t)}K^m=\tr_{L^2(\mathbb{R})}K_t^m,\ \ \ m\in\mathbb{Z}_{\geq 1},\ \ t>0;
\end{equation*}
where $K_t:L^2(\mathbb{R})\rightarrow L^2(\mathbb{R})$ acts as
\begin{equation}\label{r41}	
	f\mapsto (K_tf)(x):=\int_{-\infty}^{\infty}\left[\sqrt{w(x)}\,\frac{\sin(\pi(x-y)t)}{\pi(x-y)}\sqrt{w(y)}\right]f(y)\,\d y\equiv\int_{-\infty}^{\infty}K_t(x,y)f(y)\,\d y.
\end{equation}
The kernel $K_t(x,y)$ in \eqref{r41} is of standard integrable type and it was precisely an operator of this form that appeared in the work of Its, Izergin, Korepin and Slavnov on the one-dimensional Bose gas, see \cite[$(1.4)$]{IIKS}. Being of standard integrable type, one can analyze the Fredholm determinant of $K_t$ on $L^2(\mathbb{R})$ via matrix-valued Riemann-Hilbert techniques as done in \cite[Section $5$]{IIKS}, where it was shown that the associated Fredholm determinant, for a particular weight, solves a PDE \cite[$(5.15)$]{IIKS}\footnote{the weight in \cite[$(1.5)$]{IIKS} depends on an auxiliary parameter $\beta$ and the PDE is formulated in $(t,\beta)$.}. It is reasonable to expect that a matrix-valued Riemann-Hilbert analysis of $K_t$ will bring our integro-differential result \eqref{i24} and the PDE result of \cite{IIKS} under one umbrella, just as it happened for the finite-temperature Airy kernel in \cite{CCR}. Lastly, the matrix-valued Riemann-Hilbert approach is at present well amenable to asymptotics, see \cite[Section III]{IIKS} for some elements thereof, although the weight \eqref{h14} in the \textnormal{eGinUE} will pose additional difficulties because of the non-trivial zero locus of $1-w_{\alpha}(z)$.
\end{rem}

%%%%%%%%%%%%%%%%%%%%%%%%%%%%%%%%%%%%%%%%%%%%%%%%%%%%%%%%%%%%%%
\section{Horizontal spacings in the GinUE: proof of \eqref{r14}, Lemma \ref{lem:1} and Corollary \ref{cor:1}}\label{sec2}
The following basic fact can be found in \cite{Gi}, in \cite[Chapter $15$]{M,F1} or in \cite[Theorem $4.3.10$]{HKPV}, compare also Remark \ref{rem:1}: the eigenvalues $\{\lambda_j(M_n)\}_{j=1}^n\subset\mathbb{C}$ of $M_n\in\textnormal{GinUE}$, see \eqref{r1} with $\tau=0$, form a determinantal point process in the complex plane with kernel
\begin{equation}\label{p1}
	K_{n,0}(z,w)=\frac{n}{\pi}\e^{-\frac{n}{2}|z|^2-\frac{n}{2}|w|^2}\sum_{\ell=0}^{n-1}\frac{n^{\ell}}{\ell!}(z\overline{w})^{\ell},\ \ \ \ z,w\in\mathbb{C},
\end{equation}
with respect to the flat measure $\d^2 z$. As such, using Andr\'eief's and the Cauchy-Binet identity, the generating functional
\begin{equation*}
	E_{n,0}[\phi]:=\mathbb{E}_{n,0}\left\{\prod_{\ell=1}^n\Big(1-\phi\big(\lambda_{\ell}(M_n)\big)\Big)\right\},\ \ \ \ M_n\in\textnormal{GinUE},
\end{equation*}
equals
\begin{equation}\label{p2}
	E_{n,0}[\phi]=1+\sum_{\ell=1}^n\frac{(-1)^{\ell}}{\ell!}\int_{J^{\ell}}\phi(z_1)\cdots\phi(z_{\ell})\det\big[K_{n,0}(z_j,z_k)\big]_{j,k=1}^{\ell}\d^2z_1\cdots\d^2z_{\ell},
\end{equation}
where the support $J=\textnormal{supp}\,\phi\subset\mathbb{C}$ of the bounded test function $\phi:\mathbb{C}\rightarrow\mathbb{C}$ is assumed to be compact. Moving ahead, we let
\begin{equation*}
	\rho_{n,0}(z):=\frac{1}{n}K_{n,0}(z,z),\ \ \ z\in\mathbb{C},
\end{equation*}
denote the density of the averaged normalized counting measure and now prove \eqref{r14} and Lemma \ref{lem:1}.
\begin{proof}[Proof of \eqref{r14} and Lemma \ref{lem:1}] Let $E_{n,0}(\Omega)$ denote the probability, under \eqref{r1} with $\tau=0$, that $M_n\in\textnormal{GinUE}$ has no eigenvalues in $\Omega\subset\mathbb{C}$. Consider the rectangle $\Omega_{ab}^{cd}:=(a,b)\times(c,d)\subset\mathbb{R}^2\simeq\mathbb{C}$ with $-\infty\leq a<b\leq\infty$ and $-\infty\leq c<d\leq\infty$. We begin by establishing the Fredholm determinant identity
\begin{equation}\label{p3}
	E_{n,0}\left(\lambda_0+\frac{\Omega_{ab}^{cd}}{n}\right)=1+\sum_{\ell=1}^n\frac{(-1)^{\ell}}{\ell!}\int_{(\Omega_{ab}^{cd})^{\ell}}\det\left[n^{-2}K_{n,0}\left(\lambda_0+\frac{w_j}{n},\lambda_0+\frac{w_k}{n}\right)\right]_{j,k=1}^{\ell}\d^2 w_1\cdots\d^2 w_{\ell}
\end{equation}
for any finite $n\in\mathbb{Z}_{\geq 1}$ and any $\lambda_0\in\mathbb{C}$. Indeed, if $a,b,c,d$ are all finite, then \eqref{p3} is a consequence of \eqref{p2} and the inclusion-exclusion principle, so we first establish convergence of the right hand side of \eqref{p3}, for any finite $n\in\mathbb{Z}_{\geq 1}$, if at least one of $a,b,c,d$ is infinite. We will achieve this by using a version of Hadamard's inequality. Namely, $B=[B_{jk}]_{j,k=1}^{\ell}\in\mathbb{C}^{\ell\times\ell}$ with
\begin{equation*}
	B_{jk}:=n^{-2}K_{n,0}\left(\lambda_0+\frac{w_j}{n},\lambda_0+\frac{w_k}{n}\right)
\end{equation*}
is a positive-definite matrix, for it is Hermitian, see \eqref{p1}, and we have for all $(t_1,\ldots,t_{\ell})\in\mathbb{C}^{\ell}$,
\begin{equation*}
	\sum_{j,k=1}^{\ell}B_{jk}t_j\overline{t_k}=\frac{1}{n\pi}\sum_{m=0}^{n-1}\frac{n^m}{m!}\left|\sum_{j=1}^{\ell}\exp\left[-\frac{n}{2}\Big|\lambda_0+\frac{w_j}{n}\Big|^2\right]\left(\lambda_0+\frac{w_j}{n}\right)^mt_j\right|^2\geq 0.
\end{equation*}
Hence by \cite[$(5.2.72)$]{PS},
\begin{equation}\label{p4}
	\det\big[B_{jk}\big]_{j,k=1}^{\ell}\leq\prod_{j=1}^{\ell}B_{jj}=\prod_{j=1}^{\ell}n^{-1}\rho_{n,0}\left(\lambda_0+\frac{w_j}{n}\right),
\end{equation}
with 
\begin{equation*}
	\rho_{n,0}\left(\lambda_0+\frac{w}{n}\right)\stackrel{\eqref{p1}}{=}\frac{1}{\pi}\exp\left[-n\bigg|\lambda_0+\frac{w}{n}\bigg|^2\right]\sum_{\ell=0}^{n-1}\frac{n^{\ell}}{\ell!}\left|\lambda_0+\frac{w}{n}\right|^{2\ell},\ \ \ w\in\Omega_{ab}^{cd}.
\end{equation*}
However, for any $z,w\in\mathbb{C}$, by the binomial theorem and non-negativity,
\begin{equation*}
	\sum_{\ell=0}^{n-1}\frac{1}{\ell!}|z+w|^{2\ell}\leq\sum_{\ell=0}^{n-1}\frac{1}{\ell!}\big(\Re(z+w)\big)^{2\ell}\sum_{\ell=0}^{n-1}\frac{1}{\ell!}\big(\Im(z+w)\big)^{2\ell},
\end{equation*}
and so
\begin{equation*}
	\rho_{n,0}\left(\lambda_0+\frac{w}{n}\right)\leq f_n\bigg(\Re\Big(\lambda_0+\frac{w}{n}\Big)\bigg)f_n\bigg(\Im\Big(\lambda_0+\frac{w}{n}\Big)\bigg);\ \ \ \ \ \ f_n(x):=\frac{1}{\sqrt{\pi}}\e^{-nx^2}\sum_{\ell=0}^{n-1}\frac{n^{\ell}}{\ell!}x^{2\ell},\ \ x\in\mathbb{R},
\end{equation*}
with $\mathbb{R}\ni x\mapsto f_n(x)\in L^1(\mathbb{R})$ for all $n\in\mathbb{Z}_{\geq 1}$. Consequently, for any $n,\ell\in\mathbb{Z}_{\geq 1}$,
\begin{align*}
	\int_{(\Omega_{ab}^{cd})^{\ell}}\det\big[B_{jk}\big]_{j,k=1}^{\ell}&\,\d^2 w_1\cdots\d^2 w_{\ell}\stackrel{\eqref{p4}}{\leq}\left[\frac{1}{n}\int_{\Omega_{ab}^{cd}}\rho_{n,0}\left(\lambda_0+\frac{w}{n}\right)\,\d^2 w\right]^{\ell}\\
	&\,\leq\left[\frac{1}{n}\int_a^bf_n\Big(\Re\lambda_0+\frac{x}{n}\Big)\,\d x\,\int_c^df_n\Big(\Im\lambda_0+\frac{y}{n}\Big)\,\d y\right]^{\ell}<\infty.
\end{align*}
This proves that the right hand side of \eqref{p3} is well-defined for all $-\infty\leq a<b\leq\infty,-\infty\leq c<d\leq\infty$ and so by continuity of $\mathbb{P}_{n,0}$, equality in \eqref{p3} follows. In particular, we have the exact identity
\begin{equation}\label{p5}
	\mathbb{E}_{n,0}\{\mathcal{R}_n(\Delta)\}=n\int_{\Omega_{\Delta}}\rho_{n,0}(z)\,\d^2z,\ \ \ \ \ \Omega_{\Delta}=\Delta\times\mathbb{R}\subset\mathbb{R}^2\simeq\mathbb{C},
\end{equation}
for the average number of eigenvalue real parts of $M_n\in\textnormal{GinUE}$ in a bounded interval $\Delta\subset\mathbb{R}$, which will be used to establish \eqref{r14} below. Next, we note that
\begin{equation*}
	\Omega_{ab}^{cd}\times\Omega_{ab}^{cd}\ni (z,w)\mapsto n^{-2}K_{n,0}\left(\lambda_0+\frac{z}{n},\lambda_0+\frac{w}{n}\right)
\end{equation*}
is continuous for any $n\in\mathbb{Z}_{\geq 1}$ and $\lambda_0\in\mathbb{C}$. Moreover, the integral operator $K_n:L^2(\Omega_{ab}^{cd})\rightarrow L^2(\Omega_{ab}^{cd})$ with
\begin{equation*}
	(K_nf)(z):=\int_{\Omega_{ab}^{cd}}n^{-2}K_{n,0}\left(\lambda_0+\frac{z}{n},\lambda_0+\frac{w}{n}\right)f(w)\,\d^2 w,\ \ \ \ \ \ z\in\Omega_{ab}^{cd},
\end{equation*}
is well-defined, since $|K_{n,0}(z,w)|\leq\sqrt{K_{n,0}(z,z)}\sqrt{K_{n,0}(w,w)},z,w\in\mathbb{C}$ for the reproducing kernel, and has rank at most $n$, so by \cite[Chapter I, $(1.8),(6.3)$]{GGK} and \eqref{p3},
\begin{equation}\label{p6}
	E_{n,0}\left(\lambda_0+\frac{\Omega_{ab}^{cd}}{n}\right)=\prod_j\big(1-\omega_j(K_n,\Omega_{ab}^{cd})\big),
\end{equation}
where $\omega_j(K_n,\Omega_{ab}^{cd})$ are the finitely many eigenvalues, counted according to their algebraic multiplicities, of $K_n$ when restricted to the subspace $M_{K_n}$ in the decomposition \cite[Chapter I, $(1.1)$]{GGK}. Noting that $\mathbb{C}\ni\gamma\mapsto\prod_j(1-\gamma\omega_j(K_n,\Omega_{ab}^{cd}))$ is a polynomial and that for $|\gamma|$ small
\begin{equation}\label{p7}
	\prod_{j}\big(1-\gamma\omega_j(K_n,\Omega_{ab}^{cd})\big)=\exp\left[-\sum_{k=1}^{\infty}\frac{\gamma^k}{k}\tr_{L^2(\Omega_{ab}^{cd})}K_n^k\right],
\end{equation}
by \cite[Chapter I, Theorem $3.3$]{GGK}, we proceed by estimating the operator traces that occur in the right hand side of \eqref{p7} for large $n$, when $(a,b)=\Delta/\varrho_0(\Re\lambda_0)=:\Delta_0$ with fixed length $|\Delta|\in(0,\infty)$ and $c=-\infty,d=\infty$, i.e. when $\Omega_{ab}^{cd}=\Omega_{\Delta}/\varrho_0(\Re\lambda_0)=:\Omega_0$. First, by \eqref{p1},
\begin{equation*}
	\tr_{L^2(\Omega_0)}K_n=\frac{1}{\pi}\int_{\Delta_0}\int_{-\infty}^{\infty}Q\bigg(n,n\left[\bigg(\Re\lambda_0+\frac{x}{n}\bigg)^2+y^2\right]\bigg)\d y\,\d x
\end{equation*}
in terms of the normalized incomplete gamma function, cf. \cite[$(8.2.4)$]{NIST},
\begin{equation}\label{p8}
	Q(a,z):=\frac{1}{\Gamma(a)}\int_z^{\infty}t^{a-1}\e^{-t}\,\d t,\ \ \ \Re a>0,\ z\in\mathbb{C}.
\end{equation}
A residue calculation, see also \cite[Proposition $1,2$]{BG}, readily verifies that for $x\in\mathbb{R}\setminus\{\pm 1\}$ and $n\in\mathbb{Z}_{\geq 1}$,
\begin{equation*}
	Q(n,nx)=\e^{-nx}\sum_{\ell=0}^{n-1}\frac{n^{\ell}}{\ell!}x^{\ell}=\frac{\im}{2\pi}\e^{-nx}x^n\ointctrclockwise_{|z|=1}\frac{\e^{nz}}{z-x}\frac{\d z}{z^n}+\begin{cases}1,&|x|<1\\ 0,&|x|>1\end{cases},
\end{equation*}
and thus
\begin{equation}\label{p9}
	\big|Q(n,nx)-\chi_{[0,1)}(x)\big|\leq|1-x|^{-1}\e^{-n(x-\ln x-1)},\ \ \ \ \ x\geq 0,\ x\neq 1.
\end{equation}
Consequently,
\begin{equation}\label{p10}
	Q(n,nx)\leq 1\ \ \ \textnormal{and}\ \ \ \frac{\d}{\d x}Q(n,nx)\leq 0\ \ \ \ \forall\,x\geq 0;\ \ \ \ \ \ \ \lim_{n\rightarrow\infty}Q(n,nx)=\begin{cases}1,&x\in[0,1)\\ 0,&x\in(1,\infty)\end{cases},
\end{equation}
which motivates the decomposition
\begin{align*}
	\tr_{L^2(\Omega_0)}K_n=\frac{1}{\pi}\int_{\Delta_0}\left[\int_{-2}^2+\int_{\mathbb{R}\setminus[-2,2]}\right]Q\bigg(n,n\left[\bigg(\Re\lambda_0+\frac{x}{n}\bigg)^2+y^2\right]\bigg)\d y\,\d x.
\end{align*}
For in the first integral we can use dominated convergence: as $n\rightarrow\infty$, uniformly in $\Delta$, noting $|\lambda_0|<1$,
\begin{equation*}
	\frac{1}{\pi}\int_{\Delta_0}\int_{-2}^2Q\bigg(n,n\left[\bigg(\Re\lambda_0+\frac{x}{n}\bigg)^2+y^2\right]\bigg)\d y\,\d x\stackrel{\eqref{p10}}{=}\frac{1}{\pi}\int_{\Delta_0}\int_{(\Re\lambda_0)^2+y^2<1}\d y\,\d x+o(1)=|\Delta|+o(1),
\end{equation*}
and in the second integral we can employ \eqref{p9},
\begin{align*}
	\frac{1}{\pi}\int_{\Delta_0}&\int_{\mathbb{R}\setminus[-2,2]}Q\bigg(n,n\left[\bigg(\Re\lambda_0+\frac{x}{n}\bigg)^2+y^2\right]\bigg)\d y\,\d x\stackrel{\eqref{p10}}{\leq}\frac{2}{\pi}|\Delta_0|\int_2^{\infty}Q\big(n,ny^2\big)\,\d y\\
	&\leq\frac{2}{3\pi}|\Delta_0|\int_2^{\infty}\e^{-n(y^2-2\ln y-1)}\,\d y\leq\frac{2}{9\pi n}|\Delta_0|\,\e^{-n(3-2\ln 2)},\ \ n\in\mathbb{Z}_{\geq 1}.
\end{align*}
Combined together, for any bounded interval $\Delta\subset\mathbb{R}$ of length $|\Delta|\in(0,\infty)$,
\begin{equation}\label{p11}
	\tr_{L^2(\Omega_0)}K_n=|\Delta|+o(1)\ \ \ \textnormal{as}\ \ n\rightarrow\infty.
\end{equation}
Before moving ahead with the higher order traces in \eqref{p7}, we point out that the argument leading to \eqref{p11} yields \eqref{r14}: indeed, by \eqref{p5},\eqref{p1} and \eqref{p8},
\begin{equation*}
	\mathbb{E}_{n,0}\{\mathcal{R}_n(\Delta)\}=n\int_{\Omega_{\Delta}}\rho_{n,0}(z)\,\d^2 z=\frac{n}{\pi}\int_{\Delta}\int_{-\infty}^{\infty}Q\Big(n,n\big[x^2+y^2\big]\Big)\,\d y\,\d x=n\int_{\Delta}\varrho_0(x)\,\d x+o(n),
\end{equation*}
as $n\rightarrow\infty$ uniformly in the bounded interval $\Delta\subset\mathbb{R}$. Back to \eqref{p7}, we proceed with the elementary estimate
\begin{align}
	&\bigg|\tr_{L^2(\Omega_0)}K_n^k\,\bigg|\leq\int_{\Omega_0^k}\prod_{j=1}^k\bigg|n^{-2}K_{n,0}\left(\lambda_0+\frac{z_j}{n},\lambda_0+\frac{z_{j+1}}{n}\right)\bigg|\,\d^2 z_1\cdots\d^2 z_k\nonumber\\
	&\leq\frac{1}{(\pi n)^k}\int_{\Omega_{\ast}^k}\left\{\prod_{j=1}^{k-1}\exp\left[-\frac{n}{2}\left(\bigg|\lambda_0+\frac{z_j}{n}\bigg|-\bigg|\lambda_0+\frac{z_{j+1}}{n}\bigg|\right)^2\right]\right\}\sqrt{Q\Big(n,n\bigg|\lambda_0+\frac{z_k}{n}\bigg|^2\Big)}\,\d^2 z_1\cdots\d^2z_k,\label{p12}
\end{align}
where $z_{k+1}\equiv z_1$ by convention, where $\Omega_{\ast}:=(-\tau,\tau)\times\mathbb{R}\supset \Omega_0$ with $\tau:=\sup\{|x|:\,x\in\Delta_0\}>0$ and where we have used the Cauchy-Schwarz inequality,
\begin{equation*}
	Q(n,n|zw|)\leq\e^{\frac{n}{2}(|z|-|w|)^2}\sqrt{Q(n,n|z|^2)}\sqrt{Q(n,n|w|^2)},\ \ \ z,w\in\mathbb{C},
\end{equation*}
as well as \eqref{p10}. Note that the right hand side of the inequality \eqref{p12} only depends on the modulus of $\lambda_0+z_j/n$, so we will evaluate the same via polar coordinates. First, for $\lambda_0=0$, the value of the angular integral for each $z_j=r_j\e^{\im\theta_j}$ is given by
\begin{equation*}
	\int_{\Pi_j}\d\theta_j=f\Big(\frac{r_j}{\tau}\Big),\ \ \ \ \ \ \ f(x):=\begin{cases}2\pi,&0\leq x\leq 1\\ 4\arcsin(\frac{1}{x}),&x>1\end{cases},
\end{equation*}
where $\Pi_j$ is the range of angles of the intersection of a circle of radius $r_j>0$ with $\Omega_{\ast}$. Hence,
\begin{align*}
\bigg|\tr_{L^2(\Omega_0)}K_n^k\,\bigg|\leq\frac{1}{\pi^k}\int_{\mathbb{R}_+^k}&\,\left\{\prod_{j=1}^{k-1}r_jf\bigg(\frac{r_j}{\tau}\sqrt{n}\bigg)\exp\left[-\frac{1}{2}(r_j-r_{j+1})^2\right]\right\}r_kf\bigg(\frac{r_k}{\tau}\sqrt{n}\bigg)\\
&\,\,\,\times\sqrt{Q(n,r_k^2)}\,\d r_1\cdots\d r_k,
\end{align*}
and since $xf(x)\leq 2\pi$ for all $x\geq 0$,
\begin{align}
	\bigg|\tr_{L^2(\Omega_0)}&\,K_n^k\,\bigg|\leq\bigg(\frac{2\tau}{\sqrt{n}}\bigg)^k\int_{\mathbb{R}_+^k}\left\{\prod_{j=1}^{k-1}\exp\left[-\frac{1}{2}(r_j-r_{j+1})^2\right]\right\}\sqrt{Q(n,r_k^2)}\,\d r_1\cdots\d r_k\nonumber\\
	\leq&\bigg(\frac{2\tau}{\sqrt{n}}\bigg)^k(2\pi)^{\frac{1}{2}(k-1)}\int_0^{\infty}\sqrt{Q(n,r_k^2)}\,\d r_k\stackrel{\eqref{p9}}{\leq}\bigg(\frac{2\tau}{\sqrt{n}}\bigg)^k(2\pi)^{\frac{1}{2}(k-1)}2\sqrt{n}\left[1+\frac{1}{n}\e^{-n(1-\ln 2)}\right],\label{p13}
\end{align}
valid for all $n,k\in\mathbb{Z}_{\geq 1}$. Thereafter, for $\Re\lambda_0>0$, we first change variables according to $w_j=\sqrt{n}\,\Re\lambda_0+z_j/\sqrt{n}$,
\begin{equation}\label{p14}
	\bigg|\tr_{L^2(\Omega_0)}K_n^k\,\bigg|\leq\frac{1}{\pi^k}\int_{(\sqrt{n}\,\Re\lambda_0+\Omega_{\ast}/\sqrt{n}\,)^k}\left\{\prod_{j=1}^{k-1}\exp\left[-\frac{1}{2}(|w_j|-|w_{j+1}|)^2\right]\right\}\sqrt{Q(n,|w_k|^2)}\,\d^2w_1\cdots\d^2 w_k,
\end{equation}
and perform the angular integral for each $w_j=r_j\e^{\im\theta_j}$ with $n\geq n_0$ sufficiently large,
\begin{equation*}
	\int_{\Pi_j}\d\theta_j=g_n(r_j).
\end{equation*}
Here, $\Pi_j$ denotes the range of angles of the intersection of a circle of radius $r_j>\sqrt{n}\,\Re\lambda_0-\tau/\sqrt{n}>0$ with the strip $\sqrt{n}\,\Re\lambda_0+\Omega_{\ast}/\sqrt{n}$, and the continuous non-negative function $x\mapsto g_n(x)$ equals
\begin{equation*}
	g_n(x):=2\begin{cases}\arcsin(\sqrt{1-x^{-2}(\sqrt{n}\,\Re\lambda_0-\tau/\sqrt{n}\,)^2}),&x\in A_n\smallskip\\
	\arcsin(x^{-1}(\sqrt{n}\,\Re\lambda_0+\tau/\sqrt{n}))-\arcsin(x^{-1}(\sqrt{n}\,\Re\lambda_0-\tau/\sqrt{n})),&x\in B_n
	\end{cases},
\end{equation*}
with the intervals $A_n:=(\sqrt{n}\,\Re\lambda_0-\tau/\sqrt{n},\sqrt{n}\,\Re\lambda_0+\tau/\sqrt{n}\,)$ and $B_n:=[\sqrt{n}\,\Re\lambda_0+\tau/\sqrt{n},\infty)$. However, by Jordan's inequality \cite[$4.18.1$]{NIST},
\begin{equation*}
	xg_n(x)\leq\pi\sqrt{x^2-\Big(\sqrt{n}\,\Re\lambda_0-\frac{\tau}{\sqrt{n}}\Big)^2}\leq c\left[1+\frac{1}{\sqrt{n}}\right],\ \ x\in A_n;\ \ \ \ \ c=c(\tau,\Re\lambda_0)>0,
\end{equation*}
and also
\begin{align*}
	xg_n(x)=&\,\frac{4\tau}{\sqrt{n}}\int_0^1\frac{x\,\d u}{\sqrt{\big(x-\frac{2\tau}{\sqrt{n}}u-\sqrt{n}\,\Re\lambda_0+\frac{\tau}{\sqrt{n}}\big)\big(x+\frac{2\tau}{\sqrt{n}}u+\sqrt{n}\,\Re\lambda_0-\frac{\tau}{\sqrt{n}}\big)}}\\
	\leq&\,\frac{4\tau}{\sqrt{n}}\frac{x}{\sqrt{x+\sqrt{n}\,\Re\lambda_0-\frac{\tau}{\sqrt{n}}}}\int_0^1\frac{\d u}{\sqrt{x-\frac{2\tau}{\sqrt{n}}u-\sqrt{n}\,\Re\lambda_0+\frac{\tau}{\sqrt{n}}}}\\
	\leq&\,\frac{8\tau}{\sqrt{n}}\frac{\sqrt{x}}{\sqrt{x-\sqrt{n}\,\Re\lambda_0+\frac{\tau}{\sqrt{n}}}+\sqrt{x-\sqrt{n}\,\Re\lambda_0-\frac{\tau}{\sqrt{n}}}}\leq c\left[\frac{1}{\sqrt{n}}+\frac{n^{-\frac{1}{4}}}{\sqrt{x-\sqrt{n}\,\Re\lambda_0+\frac{\tau}{\sqrt{n}}}}\right],\ \ x\in B_n,
\end{align*}
with another $c=c(\tau,\Re\lambda_0)>0$, provided $\sqrt{n}\,\Re\lambda_0-\tau/\sqrt{n}>0$. Put together, for $x\in A_n\cup B_n$ and $n\geq n_0$ sufficiently large,
\begin{equation}\label{p15}
	xg_n(x)\leq c\left[\frac{1}{\sqrt{n}}+\min\left\{1,\frac{n^{-\frac{1}{4}}}{\sqrt{x-\sqrt{n}\,\Re\lambda_0+\frac{\tau}{\sqrt{n}}}}\right\}\right]\leq c\left[\frac{1}{\sqrt{n}}+\frac{1}{1+n^{\frac{1}{4}}\sqrt{x-\sqrt{n}\,\Re\lambda_0+\frac{\tau}{\sqrt{n}}}}\right].
\end{equation}
Equipped with \eqref{p15} we return to \eqref{p14}, apply the aforementioned polar coordinates,
\begin{align*}
	\bigg|\tr_{L^2(\Omega_0)}K_n^k\,\bigg|\leq\frac{1}{\pi^k}\int_{(\sqrt{n}\,\Re\lambda_0-\tau/\sqrt{n},\infty)^k}\left\{\prod_{j=1}^{k-1}g_n(r_j)r_j\exp\left[-\frac{1}{2}(r_j-r_{j+1})^2\right]\right\}g_n(r_k)r_k\sqrt{Q(n,r_k^2)}\,\d r_1\cdots\d r_k,
\end{align*}
and obtain by H\"older's inequality, for any $j\in\{1,\ldots,k-1\}$,
\begin{equation*}
	\int_{(\sqrt{n}\,\Re\lambda_0-\tau/\sqrt{n},\infty)}g_n(r_j)r_j\exp\left[-\frac{1}{2}(r_j-r_{j+1})^2\right]\d r_j\stackrel{\eqref{p15}}{\leq} \frac{c}{\sqrt[8]{n}}\ \ \ \forall\,n\geq n_0,\ \ \ \ c=c(\tau,\Re\lambda_0)>0.
\end{equation*}
Consequently,
\begin{eqnarray}
	\bigg|\tr_{L^2(\Omega_0)}K_n^k\,\bigg|\!\!\!&\leq&\!\!\!cn^{-\frac{1}{8}(k-1)}\int_{\sqrt{n}\,\Re\lambda_0+\Omega_{\ast}/\sqrt{n}}\sqrt{Q(n,|w_k|^2)}\,\d^2w_k\nonumber\\
	\!\!\!&\stackrel{\eqref{p10}}{\leq}&\!\!\!cn^{-\frac{1}{8}(k-1)}\int_{\sqrt{n}\,\Re\lambda_0-\tau/\sqrt{n}}^{\sqrt{n}\,\Re\lambda_0+\tau/\sqrt{n}}\left[\int_{-\infty}^{\infty}\sqrt{Q(n,y^2)}\,\d y\right]\d x\stackrel[\eqref{p10}]{\eqref{p9}}{\leq}cn^{-\frac{1}{8}(k-1)},\label{p16}
\end{eqnarray}
valid for all $n\geq n_0$ uniformly in $k\in\mathbb{Z}_{\geq 1}$ with $c=c(\tau,\Re\lambda_0)>0$. Seeing that the case $\Re\lambda_0<0$ can be treated as in \eqref{p16}, we now combine \eqref{p13},\eqref{p16} with \eqref{p11} and obtain in the right hand side of \eqref{p7},
\begin{equation}\label{p17}
	-\sum_{k=1}^{\infty}\frac{\gamma^k}{k}\tr_{L^2(\Omega_0)}K_n^k=-|\Delta|\gamma+o(1)+\mathcal{O}\Big(n^{-\frac{1}{8}}\Big)\ \ \ \ \forall\,n\geq n_0,\ \ \ |\gamma|\leq 1,
\end{equation}
i.e. the LHS and RHS in \eqref{p7} with $\Omega_{ab}^{cd}=\Omega_0$ are analytic for $|\gamma|\leq 1$ if $n\geq n_0$ is sufficiently large. However, \eqref{p7} says that both sides agree for $|\gamma|$ small and all $n\in\mathbb{Z}_{\geq 1}$, hence by the identity theorem for analytic functions, \eqref{p6},\eqref{p7} and \eqref{p17},
\begin{equation*}
	\lim_{n\rightarrow\infty}E_{n,0}\left(\lambda_0+\frac{\Omega_{\Delta}}{n\varrho_0(\Re\lambda_0)}\right)=\lim_{n\rightarrow\infty}E_{n,0}\left(\lambda_0+\frac{\Omega_0}{n}\right)=\e^{-|\Delta|},
\end{equation*}
uniformly in the bounded interval $\Delta\subset\mathbb{R}$ of length $|\Delta|\in(0,\infty)$. This completes our proof of \eqref{r14} and of \eqref{r17}.
\end{proof}

\begin{rem}
The workings in \eqref{p13} and \eqref{p16} show that $K_n:L^2(\Omega_0)\rightarrow L^2(\Omega_0)$ is small in the Hilbert-Schmidt norm as $n\rightarrow\infty$. A similar thing will happen in our proof of Lemma \ref{lem:2} when analyzing the limiting bulk gap probability in the \textnormal{eGinUE} at strong non-Hermiticity.
\end{rem}

\begin{proof}[Proof of Corollary \ref{cor:1}] By definition of conditional probability, for any $s>0$ and $n\in\mathbb{Z}_{\geq 1}$,
\begin{equation}\label{p18}
	q_{n,0}(s)=\lim_{h\downarrow 0}\frac{\mathbb{P}_{n,0}\big\{\mathcal{R}_n(\big(\lambda_0,\lambda_0+\frac{s}{n\varrho_0(\lambda_0)}\big])=0\ \ \textnormal{and}\ \ \mathcal{R}_n(\big(\lambda_0-\frac{h}{n\varrho_0(\lambda_0)},\lambda_0\big])>0\big\}}{\mathbb{P}_{n,0}\big\{\mathcal{R}_n(\big(\lambda_0-\frac{h}{n\varrho_0(\lambda_0)},\lambda_0\big])>0\big\}},
\end{equation}
where $\mathcal{R}_n(\Delta)$ is the number of eigenvalue real parts of $M_n\in\textnormal{GinUE}$ in a bounded interval $\Delta\subset\mathbb{R}$. But
\begin{equation*}
	\mathbb{P}_{n,0}\{\mathcal{R}_n(\Delta)=0\}=\mathbb{E}_{n,0}\left\{\prod_{\ell=1}^n\Big(1-\chi_{\Delta}\big(\Re\lambda_{\ell}(M_n)\big)\Big)\right\}\stackrel{\eqref{p3}}{=}E_{n,0}(\Omega_{\Delta})
\end{equation*}
by the inclusion-exclusion principle with $\Omega_{\Delta}:=\Delta\times\mathbb{R}\subset\mathbb{R}^2\simeq\mathbb{C}$ and so \eqref{p18} becomes
\begin{equation*}
	q_{n,0}(s)=\lim_{h\downarrow 0}\frac{E_{n,0}\Big(\lambda_0+\frac{(0,s]\times\mathbb{R}}{n\varrho_0(\lambda_0)}\Big)-E_{n,0}\Big(\lambda_0+\frac{(-h,s]\times\mathbb{R}}{n\varrho_0(\lambda_0)}\Big)}{1-E_{n,0}\Big(\lambda_0+\frac{(-h,0]\times\mathbb{R}}{n\varrho_0(\lambda_0)}\Big)},\ \ \ s>0,
\end{equation*}
since $\mathbb{P}(A\cap B)=\mathbb{P}(A)-\mathbb{P}(A\cap B^c)$. Then, by l'Hospital's rule, for any $s>0$ and $n\in\mathbb{Z}_{\geq 1}$,
\begin{equation}\label{p19}
	q_{n,0}(s)=\frac{\partial_xe_n(x,s)}{\partial_xe_n(x,0)}\bigg|_{x=0}\ \ \ \ \textnormal{in terms of}\ \ \ \ \ e_n(x,a):=E_{n,0}\left(\lambda_0+\frac{(-x,a]\times\mathbb{R}}{n\varrho_0(\lambda_0)}\right),\ \ x,a\geq 0,
\end{equation}
and our proof workings for Lemma \ref{lem:1} have shown that for sufficiently large $n\geq n_0$, see \eqref{p7},\eqref{p13} and \eqref{p16},
\begin{equation*}
	e_n(x,a)=\exp\left[-\sum_{k=1}^{\infty}\frac{1}{k}\tr_{L^2(\Omega_x^a)}K_n^k\right],\ \ \ \ \ \Omega_x^a:=\left(-\frac{x}{\varrho_0(\lambda_0)},\frac{a}{\varrho_0(\lambda_0)}\right]\times\mathbb{R}\subset\mathbb{R}^2\simeq\mathbb{C}.
\end{equation*}
We thus proceed by proving that the last series is differentiable with respect to $x$ and afterwards by computing the large $n$ limit of its derivative at $x=0$ for $a\in\{0,s\}$. To that end, abbreviate
\begin{equation*}
	T_k^n(x,a):=\frac{1}{k}\tr_{L^2(\Omega_x^a)}K_n^k,\ \ \ \ x\geq 0,\ \ \ \ k,n\in\mathbb{Z}_{\geq 1},
\end{equation*}
and treat $a\geq 0$ as arbitrary but fixed in what follows. Then,
\begin{align}
	\frac{\d}{\d x}T_1^n(x,a)=&\,\frac{1}{\pi\varrho_0(\lambda_0)}\frac{\d}{\d x}\int_{-x}^a\int_{-\infty}^{\infty}Q\bigg(n,n\left[\left(\lambda_0+\frac{s}{n\varrho_0(\lambda_0)}\right)^2+t^2\right]\bigg)\d t\,\d s\nonumber\\
	=&\,\frac{1}{\pi\varrho_0(\lambda_0)}\int_{-\infty}^{\infty}Q\bigg(n,n\left[\left(\lambda_0-\frac{x}{n\varrho_0(\lambda_0)}\right)^2+t^2\right]\bigg)\d t\stackrel[\eqref{p10}]{\eqref{p9}}{=}1+o(1),\ \ n\rightarrow\infty,\label{p20}
\end{align}
uniformly in $x\geq 0$ chosen from compact subsets. Moreover, for $k\in\mathbb{Z}_{\geq 2}$,
\begin{equation}\label{p21}
	\frac{\d}{\d x}T_k^n(x,a)=\frac{1}{\varrho_0(\lambda_0)}\int_{(\Omega_x^a)^{k-1}}\int_{-\infty}^{\infty}\left\{\prod_{j=1}^kn^{-2}K_{n,0}\left(\lambda_0+\frac{z_j}{n},\lambda_0+\frac{z_{j+1}}{n}\right)\right\}\d\Im z_1\,\d^2z_2\cdots\d^2z_k,
\end{equation}
with $\Re z_1=-x/\varrho_0(\lambda_0)$ and the convention $z_{k+1}\equiv z_1$. Here, we have exploited the fact that
\begin{equation*}
	\mathbb{R}^k\ni(\Re z_1,\ldots,\Re z_k)\mapsto\int_{\mathbb{R}^k}\left\{\prod_{j=1}^kn^{-2}K_{n,0}\left(\lambda_0+\frac{z_j}{n},\lambda_0+\frac{z_{j+1}}{n}\right)\right\}\d\Im z_1\cdots\d\Im z_k,\ \ \ \ z_{k+1}\equiv z_1,
\end{equation*}
is symmetric in its variables. Consequently, first estimating the integral over $\Im z_1$ in \eqref{p21}, and the remaining terms as in \eqref{p12},
\begin{align}
	\bigg|\frac{\d}{\d x}T_k^n(x,a)\,\bigg|\leq \frac{c\sqrt{n}}{(\pi n)^k}\int_{\Omega_{\ast}^{k-1}}&\,\left\{\prod_{j=2}^{k-1}\exp\left[-\frac{n}{2}\left(\bigg|\lambda_0+\frac{z_j}{n}\bigg|-\bigg|\lambda_0+\frac{z_{j+1}}{n}\bigg|\right)^2\right]\right\}\nonumber\\
	&\hspace{1cm}\times\sqrt{Q\Big(n,n\bigg|\lambda_0+\frac{z_k}{n}\bigg|^2\Big)}\,\d^2z_2\cdots\d^2z_k,\ \ \ c=c(\tau,\lambda_0)>0\label{p22}
\end{align}
valid for all $k\in\mathbb{Z}_{\geq 2},n\in\mathbb{Z}_{\geq 1}$ with $\Omega_{\ast}:=(-\tau,\tau)\times\mathbb{R}\supset\Omega_x^a$ where $\tau:=\sup\big\{|y|:\,y\in(-x/\varrho_0(\lambda_0),a/\varrho_0(\lambda_0)]\big\}$. The RHS in \eqref{p22} can be treated as the RHS in \eqref{p12}. Namely, for $\lambda_0=0$ one has
\begin{equation*}
	\bigg|\frac{\d}{\d x}T_k^n(x,a)\,\bigg|\leq c\bigg(\frac{2\tau}{\sqrt{n}}\bigg)^{k-1}(2\pi)^{\frac{1}{2}(k-2)}\ \ \ \forall\,k\in\mathbb{Z}_{\geq 2},\ n\in\mathbb{Z}_{\geq 1}\ \ \ \textnormal{with}\ \ c=c(\tau)>0,
\end{equation*}
and for $\lambda_0\in(-1,1)\setminus\{0\}$,
\begin{equation*}
	\bigg|\frac{\d}{\d x}T_k^n(x,a)\,\bigg|\leq \frac{c}{\sqrt{n}}\,n^{-\frac{1}{8}(k-2)}\ \ \ \ \forall\,k\in\mathbb{Z}_{\geq 2},\ n\geq n_0\ \ \ \textnormal{with}\ \ c=c(\tau,\lambda_0)>0.
\end{equation*}
Hence, for fixed $a\geq 0$,
\begin{equation}\label{p23}
	\sum_{k=1}^{\infty}\frac{\d}{\d x}T_k(x,a)=1+o(1)+\mathcal{O}\big(n^{-\frac{1}{2}}\big)\ \ \ \ \ \forall\,n\geq n_0,
\end{equation}
uniformly in $x\geq 0$ chosen from compact subsets, which yields
\begin{equation*}
	q_{n,0}(s)\stackrel{\eqref{p19}}{=}\frac{e_n(0,s)}{e_n(0,0)}\frac{\sum_{k=1}^{\infty}\frac{\d}{\d x}T_k(x,s)}{\sum_{k=1}^{\infty}\frac{\d}{\d x}T_k^n(x,0)}\bigg|_{x=0}\stackrel{\eqref{p23}}{=}e_n(0,s)+o(1)=\e^{-s}+o(1)\ \ \ \forall\,n\geq n_0
\end{equation*}
uniformly in $s\in(0,\infty)$ on compact subsets where we used $e_n(0,0)=1$ in the second equality and \eqref{r17} in the third. The last estimate yields \eqref{r18} and completes our proof of Corollary \ref{cor:1}.
\end{proof}

%%%%%%%%%%%%%%%%%%%%%%%%%%%%%%%%%%%%%%%%%%%%%%%%%%%%%%%%%%%%%

\section{The eGinUE at strong non-Hermiticity: Proof of \eqref{r24} and Lemma \ref{lem:2}}\label{sec3}

In our first section on the eGinUE we will derive the mean law \eqref{r24} for eigenvalue real parts and the bulk spacing limit \eqref{r26}, both in the limit of strong non-Hermiticity when $\tau\in(0,1)$ remains fixed. Our proof workings start out as in \cite[page $471$]{FSK}, see also \cite[page $1128$]{ACV}, but are then adapted to the \textit{unbounded} integration domain $\Omega_{\Delta}=\Delta\times\mathbb{R}$.
\begin{proof}[Proof of \eqref{r24}] Fix $\tau\in(0,1)$. We have
\begin{equation}\label{p24}
	\mathbb{E}_{n,\tau}\{\mathcal{R}_{n,\tau}(\Delta)\}=n\int_{\Omega_{\Delta}}\rho_{n,\tau}(z)\,\d^2z,\ \ \ \ \ \ \ \rho_{n,\tau}(z):=p_1^{(n,\tau)}(z)\stackrel{\eqref{r19}}{=}\frac{1}{n}K_{n,\tau}(z,z),
\end{equation}
in terms of the kernel
\begin{equation*}
	K_{n,\tau}(z,w)\stackrel{\eqref{r20}}{=}\frac{n}{\pi\sqrt{1-\tau^2}}\e^{-\frac{n}{2}V_{\tau}(z)-\frac{n}{2}V_{\tau}(w)}\sum_{\ell=0}^{n-1}\frac{(\tau/2)^{\ell}}{\ell!}H_{\ell}\bigg(\sqrt{\frac{n}{2\tau}}\,z\bigg)H_{\ell}\bigg(\sqrt{\frac{n}{2\tau}}\,\overline{w}\bigg),\ \ \ (z,w)\in\mathbb{C}^2,
\end{equation*}
with the Hermite polynomials $\{H_{\ell}\}_{\ell=0}^{\infty}\subset\mathbb{R}[x]$ in \eqref{r5} and the potential function
\begin{equation*}
	V_{\tau}(z)=\frac{1}{1-\tau^2}\Big(|z|^2-\tau\,\Re(z^2)\Big).
\end{equation*}
The integral in \eqref{p24} over $\Omega_{\Delta}=\Delta\times\mathbb{R}$ is well-defined for all $n\in\mathbb{Z}_{\geq 1}$, for with \cite[$18.6.1,18.10.10$]{NIST},
\begin{equation*}
	H_n(z)=\frac{(\pm 2\im)^n}{\sqrt{\pi}}\,\e^{z^2}\int_{-\infty}^{\infty}\e^{-t^2}t^n\e^{\mp 2\im zt}\,\d t,\ \ \ \ z\in\mathbb{C},
\end{equation*}
leading to
\begin{equation}\label{p25}
	\rho_{n,\tau}(z)=\frac{\e^{-nV_{\tau}(z)}}{\pi^2\sqrt{1-\tau^2}}\,\e^{\frac{n}{2\tau}(z^2+\overline{z}^2)}\sum_{\ell=0}^{n-1}\frac{(2\tau)^{\ell}}{\ell!}\int_{-\infty}^{\infty}\int_{-\infty}^{\infty}\e^{-(t^2+s^2)}(ts)^{\ell}\exp\left[2\im\sqrt{\frac{n}{2\tau}}\big(zt-\overline{z}s\big)\right]\d t\,\d s.
\end{equation}
Note that
\begin{equation*}
	\int_{\mathbb{C}}\rho_{n,\tau}(z)\,\d^2 z=1,
\end{equation*}
and hence $\mathbb{C}\ni z\mapsto\rho_{n,\tau}(z)$ is integrable with respect to $\d^2 z$ for any $n\in\mathbb{Z}_{\geq 1},\tau\in(0,1)$. Moving forward, we now evaluate the large $n$-limit of \eqref{p24} by the dominated convergence theorem and Fatou's lemma. To that end consider the change of variables 
\begin{equation*}
	u=(t+s)/\sqrt{n/(2\tau)},\ \ \ \ v=(t-s)/\sqrt{n/(2\tau)},
\end{equation*}
in \eqref{p25} and obtain for any $z\in\mathbb{C},n\in\mathbb{Z}_{\geq 1}$ and $\tau\in(0,1)$,
\begin{align*}
	\rho_{n,\tau}(z)=&\,\frac{1}{\pi^2\sqrt{1-\tau^2}}\Big(\frac{n}{4\tau}\Big)\int_{-\infty}^{\infty}\int_{-\infty}^{\infty}\exp\Bigg[-\frac{n}{4}\Big(\frac{1-\tau}{\tau}\Big)\bigg(u-\frac{\im(z-\overline{z})}{1-\tau}\bigg)^2\\
	&\,-\frac{n}{4}\Big(\frac{1+\tau}{\tau}\Big)\bigg(v-\frac{\im(z+\overline{z})}{1+\tau}\bigg)^2\,\Bigg]Q\Big(n,\frac{n}{4}\big(u^2-v^2\big)\Big)\,\d v\,\d u,\ \ \ \ Q(n,nz)=\e^{-nz}\sum_{\ell=0}^{n-1}\frac{n^{\ell}}{\ell!}z^{\ell},\ \ z\in\mathbb{C}.
\end{align*}
Starting with the innermost $v$-integral we first change variables,
\begin{align}
	\int_{-\infty}^{\infty}\exp&\,\left[-\frac{n}{4}\bigg(\frac{1+\tau}{\tau}\bigg)\left(v-\frac{\im(z+\overline{z})}{1+\tau}\right)^2\right]Q\Big(n,\frac{n}{4}\big(u^2-v^2\big)\Big)\,\d v\label{p26}\\
	=&\,\int_{\mathbb{R}_z}\exp\left[-\frac{n}{4}\bigg(\frac{1+\tau}{\tau}\bigg)v^2\right]Q\big(n,n\,\lambda_{\tau}^z(u,v)\big)\,\d v,\ \ \ \lambda_{\tau}^z(u,v):=\frac{u^2}{4}-\left(\frac{v}{2}+\frac{\im\Re z}{1+\tau}\right)^2,\nonumber
\end{align}
with $\mathbb{R}_z:=\mathbb{R}-2\im\frac{\Re z}{1+\tau}$ and then recall from \cite[$\S 8.12$]{NIST} the decomposition
\begin{equation}\label{p27}
	Q\big(n,n\,\lambda_{\tau}^z(u,v)\big)=\frac{1}{2}\textnormal{erfc}\big(\sqrt{n}\,\eta_{\tau}^z(u,v)\big)+S\big(n,\sqrt{2}\,\eta_{\tau}^z(u,v)\big).
\end{equation}
Here, using principal branches for all multivalued functions,
\begin{equation*}
	\eta_{\tau}^z\equiv\eta_{\tau}^z(u,v):=\big(\lambda_{\tau}^z-\ln\lambda_{\tau}^z-1\big)^{\frac{1}{2}}\sim\frac{1}{\sqrt{2}}(\lambda_{\tau}^z-1),\ \ \ \lambda_{\tau}^z\rightarrow 1,
\end{equation*}
and where, uniformly with respect to $\lambda_z^{\tau}$ in the sector $|\textnormal{arg}\,\lambda_z^{\tau}|<2\pi$,
\begin{equation}\label{p28}
	S\big(n,\sqrt{2}\,\eta_{\tau}^z\big)\sim\frac{\e^{-n(\eta_{\tau}^z)^2}}{\sqrt{2\pi n}}\left(\frac{1}{\lambda_{\tau}^z-1}-\frac{1}{\sqrt{2}\,\eta_{\tau}^z}\right),\ \ \ \textnormal{as}\ \ n\rightarrow\infty.
\end{equation}
Consequently, by the analytic and asymptotic properties of the integrand in the RHS of \eqref{p26}, as function of $v$, cf. \cite[$(7.12.1)$]{NIST}, we can collapse $\mathbb{R}_z$ back to $\mathbb{R}$ without affecting the value of the integral, namely
\begin{equation}\label{p29}
	\textnormal{LHS in}\ \eqref{p26}=\int_{-\infty}^{\infty}\exp\left[-\frac{n}{4}\bigg(\frac{1+\tau}{\tau}\bigg)v^2\right]Q\big(n,n\,\lambda_{\tau}^z(u,v)\big)\,\d v.
\end{equation}
Then, by Cauchy-Schwarz inequality,
\begin{equation}\label{little-2}
	\big|Q(n,nzw)\big|\leq\e^{\frac{n}{2}|z|^2+\frac{n}{2}|w|^2-n\Re(zw)}\sqrt{Q(n,n|z|^2)}\sqrt{Q(n,n|w|^2)}\stackrel{\eqref{p10}}{\leq}\e^{\frac{n}{2}|z|^2+\frac{n}{2}|w|^2-n\Re(zw)},\ \ \ \ z,w\in\mathbb{C},
\end{equation}
which yields
\begin{equation*}
	\left|\exp\left[-\frac{n}{4}\left(\frac{1+\tau}{\tau}\right)v^2\right]Q\big(n,n\,\lambda_{\tau}^z(u,v)\big)\right|\leq\exp\left[-\frac{n}{4}\left(\frac{1-\tau}{\tau}\right)v^2\right]\ \ \forall\,(u,v)\in\mathbb{R}^2,\ \tau\in(0,1),\ z\in\mathbb{C},
\end{equation*}
and thus shows that the integral in the right hand side of \eqref{p29} can be truncated to an integral over the short interval 
\begin{equation*}
	I_n:=\left(-\frac{1}{n^{\epsilon}},\frac{1}{n^{\epsilon}}\right)\subset\mathbb{R},\ \ \  \epsilon\in\Big(0,\frac{1}{2}\Big)\ \ \textnormal{fixed},
\end{equation*}
at the cost of an exponentially small error, once $n$ is large. In detail,
\begin{equation}\label{p30}
	\textnormal{LHS in}\ \eqref{p26}=\int_{I_n}\exp\left[-\frac{n}{4}\left(\frac{1+\tau}{\tau}\right)v^2\right]Q\big(n,n\,\lambda_{\tau}^z(u,v)\big)\,\d v+R_{\tau}^z(u,n),
\end{equation}
with
\begin{equation*}
	\big|R_{\tau}^z(u,n)\big|\leq\sqrt{\frac{\pi\tau}{n(1-\tau)}}\exp\left[-\frac{1-\tau}{4\tau}n^{1-2\epsilon}\right]\ \ \ \ \forall\,n\in\mathbb{Z}_{\geq 1},\ \tau\in(0,1),\ \epsilon\in\Big(0,\frac{1}{2}\Big),\ u\in\mathbb{R},\ z\in\mathbb{C}.
\end{equation*}
Hence, for the same parameter values,
\begin{align*}
	\rho_{n,\tau}(z)=&\,\frac{1}{\pi^2\sqrt{1-\tau^2}}\Big(\frac{n}{4\tau}\Big)\int_{-\infty}^{\infty}\int_{I_n}\exp\left[-\frac{n}{4}\left(\frac{1-\tau}{\tau}\right)\left(u+\frac{2\Im z}{1-\tau}\right)^2-\frac{n}{4}\left(\frac{1+\tau}{\tau}\right)v^2\right]\\
	&\ \ \ \times Q\big(n,n\,\lambda_{\tau}^z(u,v)\big)\,\d v\,\d u+S_{\tau}^z(n),\ \ \ \ \big|S_{\tau}^z(n)\big|\leq\frac{1}{2\pi\sqrt{1-\tau^2}}\frac{1}{1-\tau}\exp\left[-\frac{1-\tau}{4\tau}n^{1-2\epsilon}\right].
\end{align*}
Moving ahead, we note that $\Re\big(\lambda_{\tau}^z(u,v)-\ln\lambda_{\tau}^z(u,v)-1\big)>0$ for large $n\geq n_0$ uniformly in $v\in I_n,u\in\mathbb{R},\tau\in[0,1]$ and for all bounded $|\Re z|$. So with
\begin{equation}\label{alittle0}
	\left|Q(n,nz)-\begin{cases}1,&|z|<1\\ 0,&|z|>1\end{cases}\,\,\,\right|\leq\big|1-|z|\big|^{-1}\,\e^{-n\Re(z-\ln z-1)},\ \ \ z\in\mathbb{C}:\,|z|\neq 1,
\end{equation}
and which is obtained just as \eqref{p9}, we have
\begin{align}
	\rho_{n,\tau}(z)=&\,\frac{1}{\pi^2\sqrt{1-\tau^2}}\Big(\frac{n}{4\tau}\Big)\int_{-4}^{4}\int_{I_n}\exp\left[-\frac{n}{4}\left(\frac{1-\tau}{\tau}\right)\left(u+\frac{2\Im z}{1-\tau}\right)^2-\frac{n}{4}\left(\frac{1+\tau}{\tau}\right)v^2\right]\nonumber\\
	&\ \ \ \times Q\big(n,n\,\lambda_{\tau}^z(u,v)\big)\,\d v\,\d u+T_{\tau}^z(n)+S_{\tau}^z(n),\ \ \ \ \big|T_{\tau}^z(n)\big|\leq\frac{1}{\sqrt{1-\tau^2}}\frac{\e^{-\frac{n}{4}}}{\sqrt{\tau(1-\tau)}}\label{p31}
\end{align}
and 
\begin{equation*}
	\lim_{n\rightarrow\infty}Q\big(n,n\,\lambda_{\tau}^z(u,v)\big)=\begin{cases} 1,&(\frac{u}{2})^2+\big(\frac{\Re z}{1+\tau}\big)^2<1\smallskip\\
	0,&(\frac{u}{2})^2+\big(\frac{\Re z}{1+\tau}\big)^2>1\end{cases},
\end{equation*}
uniformly in $v\in I_n,\tau\in[0,1]$ and for all bounded $|\Re z|$. Consequently, by dominated convergence theorem,
\begin{eqnarray}
	\rho_{n,\tau}(z)&=&\frac{1+o(1)}{\pi^2\sqrt{1-\tau^2}}\sqrt{\frac{n}{4\tau}}\sqrt{\frac{\pi}{1+\tau}}\int_{\Delta_z}\exp\left[-\frac{n}{4}\bigg(\frac{1-\tau}{\tau}\bigg)\left(u+\frac{2\Im z}{1-\tau}\right)^2\right]\,\d u+o(1)\nonumber\\
	&=&\rho_{\textnormal{e}}(z)+o(1)\ \ \ \textnormal{as}\ n\rightarrow\infty;\ \ \ \ \ \ \ \ \ \ \ \ \ \ \rho_{\textnormal{e}}(z)=\frac{1}{\pi(1-\tau^2)}\chi_{\mathbb{D}_{\tau}}(z),\label{p32}
\end{eqnarray}
uniformly in $z\in\mathbb{C}\setminus\partial\mathbb{D}_{\tau}$ and in $\tau\in(0,1)$, both on compact subsets, with $\Delta_z:=\{u\in\mathbb{R}:\,u^2+\big(\frac{2\Re z}{1+\tau}\big)^2<4\}$. Equipped with the pointwise elliptic law \eqref{p32}, we now establish \eqref{r24}: we have for all $n\in\mathbb{Z}_{\geq 1}$ and $\tau\in(0,1)$,
\begin{equation}\label{little1}
	\int_{\mathbb{C}}\rho_{n,\tau}(z)\,\d^2 z=1\ \ \ \ \ \ \textnormal{and}\ \ \ \ \ \ \int_{\mathbb{C}}\rho_{\textnormal{e}}(z)\,\d^2 z=1.
\end{equation}
Moreover, by Fatou's lemma, noting \eqref{p32} and the fact that $\rho_{n,\tau}$ and $\rho_{\textnormal{e}}$ are non-negative, 
\begin{align*}
	A_{\tau}:=&\,\,\liminf_{n\rightarrow\infty}\int_{\Omega_{\Delta}}\rho_{n,\tau}(z)\,\d^2 z-\int_{\Omega_{\Delta}}\rho_{\textnormal{e}}(z)\,\d^2 z\geq 0,\\
	B_{\tau}:=&\,\,\liminf_{n\rightarrow\infty}\int_{\mathbb{C}\setminus\Omega_{\Delta}}\rho_{n,\tau}(z)\,\d^2 z-\int_{\mathbb{C}\setminus\Omega_{\Delta}}\rho_{\textnormal{e}}(z)\,\d^2 z\geq 0,
\end{align*}
where $\Omega_{\Delta}=\Delta\times\mathbb{R}\subset\mathbb{R}^2\simeq\mathbb{C}$. Hence, $A_{\tau}+B_{\tau}=0$ for $\tau\in(0,1)$ by \eqref{little1}, i.e. $A_{\tau}=B_{\tau}=0$ and so
\begin{equation}\label{little2}
	\liminf_{n\rightarrow\infty}\int_{\Omega_{\Delta}}\rho_{n,\tau}(z)\,\d^2z=\int_{\Omega_{\Delta}}\rho_{\textnormal{e}}(z)\,\d^2 z.
\end{equation}
Moving forward, \eqref{little1} also guarantees existence of a subsequence $(\rho_{n_k,\tau})_{k=1}^{\infty}\subset\{\rho_{n,\tau}\}_{n=1}^{\infty}$ so that
\begin{equation*}
	\lim_{k\rightarrow\infty}\int_{\Omega_{\Delta}}\rho_{n_k,\tau}(z)\,\d^2 z=\limsup_{n\rightarrow\infty}\int_{\Omega_{\Delta}}\rho_{n,\tau}(z)\,\d^2 z,
\end{equation*}
and consequently
\begin{equation}\label{little3}
	\limsup_{n\rightarrow\infty}\int_{\Omega_{\Delta}}\rho_{n,\tau}(z)\,\d^2 z=\lim_{k\rightarrow\infty}\int_{\Omega_{\Delta}}\rho_{n_k,\tau}(z)\,\d^2 z=\liminf_{k\rightarrow\infty}\int_{\Omega_{\Delta}}\rho_{n_k,\tau}(z)\,\d^2 z\stackrel{\eqref{little2}}{=}\int_{\Omega_{\Delta}}\rho_{\textnormal{e}}(z)\,\d^2 z.
\end{equation}
So, by \eqref{little2} and \eqref{little3}, as $n\rightarrow\infty$ for any fixed $\tau\in(0,1)$,
\begin{equation*}
	\int_{\Omega_{\Delta}}\rho_{n,\tau}(z)\,\d^2 z=\int_{\Omega_{\Delta}}\rho_{\textnormal{e}}(z)\,\d^2 z+o(1)\stackrel{\eqref{p32}}{=}\int_{\Delta}\varrho_{\tau}(x)\,\d x+o(1),
\end{equation*}
which yields \eqref{r24} upon substitution into \eqref{p24}. Our proof of \eqref{r24} is now complete.
\end{proof}
Equipped with the pointwise limit \eqref{p32}, we now establish Lemma \ref{lem:2}.
\begin{proof}[Proof of Lemma \ref{lem:2}] We adapt the proof workings of Lemma \ref{lem:1} to the eGinUE. To that end we note that the probability $E_{n,\tau}(\Omega)$ that $M_n\in\textnormal{eGinUE}$ has no eigenvalues in a bounded rectangle $\Omega\subset\mathbb{C}$, under \eqref{r1} with $\tau\in(0,1)$, admits a Fredholm determinant representation by Proposition \ref{prop:1}. Indeed, for any finite $n\in\mathbb{Z}_{\geq 1}$ and any $\lambda_0\in\mathbb{C}$,
\begin{eqnarray}
	E_{n,\tau}\left(\lambda_0+\frac{\Omega_{ab}^{cd}}{n}\right)\!\!\!\!\!&=&\!\!\!\!\!E_{n,\tau}\Big[\chi_{\lambda_0+\frac{1}{n}\Omega_{ab}^{cd}}\Big]\label{p36}\\
	&\stackrel{\eqref{r22}}{=}&\!\!\!\!\!1+\sum_{\ell=1}^n\frac{(-1)^{\ell}}{\ell!}\int_{(\Omega_{ab}^{cd})^{\ell}}\det\Big[n^{-2}K_{n,\tau}\left(\lambda_0+\frac{w_j}{n},\lambda_0+\frac{w_k}{n}\right)\Big]_{j,k=1}^{\ell}\d^2 w_1\cdots\d^2 w_{\ell},\nonumber
\end{eqnarray}
with $\Omega_{ab}^{cd}:=(a,b)\times (c,d)\subset\mathbb{R}^2\simeq\mathbb{C}$ where $-\infty<a<b<\infty,-\infty<c<d<\infty$. But $C=[C_{jk}]_{j,k=1}^{\ell}\in\mathbb{C}^{\ell\times\ell}$ with
\begin{equation*}
	C_{jk}:=n^{-2}K_{n,\tau}\left(\lambda_0+\frac{w_j}{n},\lambda_0+\frac{w_k}{n}\right)
\end{equation*}
is positive-definite since Hermitian, compare \eqref{r20}, and since for all $(t_1,\ldots,t_{\ell})\in\mathbb{C}^{\ell}$,
\begin{equation*}
	\sum_{j,k=1}^{\ell}C_{jk}t_j\overline{t_k}=\frac{1}{n\pi\sqrt{1-\tau^2}}\sum_{m=0}^{n-1}\frac{(\tau/2)^{m}}{m!}\left|\sum_{j=1}^{\ell}\exp\left[-\frac{n}{2}V_{\tau}\Big(\lambda_0+\frac{w_j}{n}\Big)\right]H_m\bigg(\sqrt{\frac{n}{2\tau}}\,\Big(\lambda_0+\frac{w_j}{n}\Big)\bigg)t_j\right|^2\geq 0.
\end{equation*}
Thus, by Hadamard's inequality, cf. \cite[$(5.2.72)$]{PS},
\begin{equation*}
	\det\big[C_{jk}\big]_{j,k=1}^{\ell}\leq\prod_{j=1}^{\ell}C_{jj}=\prod_{j=1}^{\ell}n^{-1}\rho_{n,\tau}\Big(\lambda_0+\frac{w_j}{n}\Big),
\end{equation*}
and since
\begin{equation*}
	\int_{\Omega_{ab}^{cd}}n^{-1}\rho_{n,\tau}\Big(\lambda_0+\frac{w}{n}\Big)\,\d^2 w\leq\int_{\mathbb{C}}n^{-1}\rho_{n,\tau}\Big(\lambda_0+\frac{w}{n}\Big)\,\d^2 w\stackrel{\eqref{little1}}{=}n,
\end{equation*}
the right hand side of \eqref{p36} is well-defined for all $-\infty<a<b<\infty,-\infty\leq c<d\leq\infty$, i.e. by continuity of $\mathbb{P}_{n,\tau}$, equality in \eqref{p36} holds also for unbounded vertical strips $\Omega_{ab}^{cd}$ with $c=-\infty$ or $d=\infty$, for any $n\in\mathbb{Z}_{\geq 1}$. Moving forward, the induced integral operator $K_{n,\tau}:L^2(\Omega_{ab}^{cd})\rightarrow L^2(\Omega_{ab}^{cd})$ with
\begin{equation*}
	(K_{n,\tau}f)(z):=\int_{\Omega_{ab}^{cd}}n^{-2}K_{n,\tau}\Big(\lambda_0+\frac{z}{n},\lambda_0+\frac{w}{n}\Big)f(w)\,\d^2 w,\ \ \ \ \ \ z\in\Omega_{ab}^{cd},
\end{equation*}
is well-defined by the properties of the reproducing kernel and has rank at most $n$. Hence, the operator $K_{n,\tau}$ is in particular trace-class and Hilbert-Schmidt on $L^2(\Omega_{ab}^{cd})$, with continuous kernel. In turn, see \cite[$(20)$]{CESX},
\begin{equation}\label{p37}
	\left|E_{n,\tau}\left(\lambda_0+\frac{\Omega_{ab}^{cd}}{n}\right)-\exp\bigg[-\tr_{L^2(\Omega_{ab}^{cd})}K_{n,\tau}\bigg]\right|\leq\|K_{n,\tau}\|_2\exp\left[\frac{1}{2}(\|K_{n,\tau}\|_2+1)^2-\tr_{L^2(\Omega_{ab}^{cd})}K_{n,\tau}\right],
\end{equation}
with the Hilbert-Schmidt norm $\|\cdot\|_2$. Choosing $(a,b)=\Delta/\varrho_{\tau}(\Re\lambda_0)=:\Delta_0^{\tau}$ and $(c,d)=\mathbb{R}$, assuming $\lambda_0\in\mathbb{C}$ belongs to the interior of $\mathbb{D}_{\tau}$, we now first evaluate
\begin{equation*}
	\tr_{L^2(\Omega_0^{\tau})}K_{n,\tau}=\int_{\Delta_0^{\tau}}\int_{-\infty}^{\infty}\rho_{n,\tau}\Big(\Re\lambda_0+\frac{x}{n}+\im y\Big)\,\d y\,\d x,\ \ \ \Omega_0^{\tau}:=\Delta_0^{\tau}\times\mathbb{R}\subset\mathbb{R}^2\simeq\mathbb{C},
\end{equation*}
asymptotically as $n\rightarrow\infty$ for any fixed $\tau\in(0,1)$ and afterwards estimate the Hilbert-Schmidt norm of $K_{n,\tau}$. To that end, by the proof workings of \eqref{r24},
\begin{align}
	\rho_{n,\tau}(x+\im y)=\frac{1}{\pi^2\sqrt{1-\tau^2}}&\,\Big(\frac{n}{4\tau}\Big)\int_{-\infty}^{\infty}\int_{-\infty}^{\infty}\exp\Bigg[-\frac{n}{4}\Big(\frac{1-\tau}{\tau}\Big)\bigg(u+\frac{2y}{1-\tau}\bigg)^2-\frac{n}{4}\Big(\frac{1+\tau}{\tau}\Big)v^2\Bigg]\nonumber\\
	&\,\times Q\big(n,n\,\lambda_{\tau}^x(u,v)\big)\,\d v\,\d u,\ \ \ \ \ \lambda_{\tau}^x(u,v):=\frac{u^2}{4}-\bigg(\frac{v}{2}+\frac{\im x}{1+\tau}\bigg)^2,\label{little-1}
\end{align}
which yields by Fubini's theorem,
\begin{equation}\label{little4}
	\int_{-\infty}^{\infty}\rho_{n,\tau}(x+\im y)\,\d y=\frac{1}{4\pi}\sqrt{\frac{n}{\pi\tau(1+\tau)}}\int_{-\infty}^{\infty}\int_{-\infty}^{\infty}\exp\left[-\frac{n}{4}\bigg(\frac{1+\tau}{\tau}\bigg)v^2\right]Q\big(n,n\,\lambda_{\tau}^x(u,v)\big)\,\d v\,\d u.
\end{equation}
Here, see again our proof workings of \eqref{r24}, especially \eqref{little-2},
\begin{equation*}
	\left|\exp\left[-\frac{n}{4}\bigg(\frac{1+\tau}{\tau}\bigg)v^2\right]Q\big(n,n\,\lambda_{\tau}^x(u,v)\big)\right|\leq\exp\left[-\frac{n}{4}\bigg(\frac{1-\tau}{\tau}\bigg)v^2\right]\sqrt{Q\bigg(n,n\left[\bigg(\frac{u-v}{2}\bigg)^2+\bigg(\frac{x}{1+\tau}\bigg)^2\right]\bigg)},
\end{equation*}
and so
\begin{align*}
	\big|\,\textnormal{LHS}\ \textnormal{in}\ \eqref{little4}\,\big|\leq&\,\,\frac{1}{4\pi}\sqrt{\frac{n}{\pi\tau(1+\tau)}}\int_{-\infty}^{\infty}\exp\left[-\frac{n}{4}\bigg(\frac{1-\tau}{\tau}\bigg)v^2\right]\,\d v\,\int_{-\infty}^{\infty}\sqrt{Q\bigg(n,n\left[\frac{u^2}{4}+\bigg(\frac{x}{1+\tau}\bigg)^2\right]\bigg)}\,\d u\\
	=&\,\,\frac{1}{2\pi}\frac{1}{\sqrt{1-\tau^2}}\int_{-\infty}^{\infty}\sqrt{Q\bigg(n,n\left[\frac{u^2}{4}+\bigg(\frac{x}{1+\tau}\bigg)^2\right]\bigg)}\,\d u
\end{align*}
which by \eqref{p9} and \eqref{p10} leads to the following bound,
\begin{equation}\label{little5}
	\left|\int_{-\infty}^{\infty}\rho_{n,\tau}\Big(\Re\lambda_0+\frac{x}{n}+\im y\Big)\,\d y\,\right|\leq\frac{5}{\pi}\frac{1}{\sqrt{1-\tau^2}}\ \ \ \ \forall\,n\in\mathbb{Z}_{\geq 1},\ \ x\in\mathbb{R},\ \ \tau\in(0,1),
\end{equation}
assuming $\lambda_0\in\mathbb{C}$ belongs to the interior of $\mathbb{D}_{\tau}$. Hence, by the dominated convergence theorem,
\begin{align*}
	\lim_{n\rightarrow\infty}\tr_{L^2(\Omega_0^{\tau})}&\,K_{n,\tau}=\int_{\Delta_0^{\tau}}\left[\lim_{n\rightarrow\infty}\int_{-\infty}^{\infty}\rho_{n,\tau}\Big(\Re\lambda_0+\frac{x}{n}+\im y\Big)\,\d y\right]\d x\\
	\stackrel{\eqref{little4}}{=}&\,\,\frac{1}{4\pi}\frac{1}{\sqrt{\pi\tau(1+\tau)}}\int_{\Delta_0^{\tau}}\left[\lim_{n\rightarrow\infty}\int_{-\infty}^{\infty}\int_{-\infty}^{\infty}\exp\left[-\frac{1}{4}\bigg(\frac{1+\tau}{\tau}\bigg)v^2\right]Q\big(n,n\,\lambda_{\tau}^{w(x)}\Big(u,\frac{v}{\sqrt{n}}\Big)\big)\,\d v\,\d u\right]\d x,
\end{align*}
with $w(x):=\Re\lambda_0+\frac{x}{n}$. However, there exists $c>0$ so that
\begin{align*}
	\bigg|\exp&\,\left[-\frac{1}{4}\bigg(\frac{1+\tau}{\tau}\bigg)v^2\right]Q\big(n,n\,\lambda_{\tau}^{w(x)}\Big(u,\frac{v}{\sqrt{n}}\Big)\big)\bigg|\leq\exp\left[-\frac{1}{4}\bigg(\frac{1-\tau}{\tau}\bigg)v^2\right]\sqrt{Q\bigg(n,\frac{n}{4}\bigg[u-\frac{v}{\sqrt{n}}\bigg]^2\bigg)}\\
	\leq&\,\,c\,\exp\left[-\frac{1}{4}\bigg(\frac{1-\tau}{\tau}\bigg)v^2\right]\begin{cases}1,&\frac{1}{4}(u-v)\in[-2,2]\smallskip\\ \displaystyle\exp\left[-\frac{1}{2}\Big(\frac{1}{4}(u-v)^2-\ln\Big(\frac{1}{4}(u-v)^2\Big)-1\Big)\right],&\frac{1}{4}(u-v)\in\mathbb{R}\setminus[-2,2]\end{cases}
\end{align*}
holds true for all $n\in\mathbb{Z}_{\geq 1},\tau\in(0,1)$ and $u,v\in\mathbb{R}$, compare \eqref{p9} and \eqref{p10}. Hence, again by the dominated convergence theorem, as $n\rightarrow\infty$,
\begin{equation*}
	\tr_{L^2(\Omega_0^{\tau})}K_{n,\tau}\sim\frac{1}{4\pi}\frac{1}{\sqrt{\pi\tau(1+\tau)}}\int_{\Delta_0^{\tau}}\left[\int_{-\infty}^{\infty}\int_{-\infty}^{\infty}\exp\left[-\frac{1}{4}\bigg(\frac{1+\tau}{\tau}\bigg)v^2\right]\lim_{n\rightarrow\infty}Q\big(n,n\,\lambda_{\tau}^{w(x)}\Big(u,\frac{v}{\sqrt{n}}\Big)\big)\,\d v\,\d u\right]\d x,
\end{equation*}
and so with the $(u,v,x,\tau,\lambda_0)$-pointwise limit
\begin{equation*}
	\lim_{n\rightarrow\infty}Q\big(n,n\,\lambda_{\tau}^{w(x)}\Big(u,\frac{v}{\sqrt{n}}\Big)\big)\stackrel{\eqref{alittle0}}{=}\begin{cases}1,&\big(\frac{u}{2}\big)^2+\big(\frac{\Re\lambda_0}{1+\tau}\big)^2<1\smallskip\\ 0,&\big(\frac{u}{2}\big)^2+\big(\frac{\Re\lambda_0}{1+\tau}\big)^2>1\end{cases},
\end{equation*}
in turn
\begin{equation}\label{little6}
	\tr_{L^2(\Omega_0^{\tau})}K_{n,\tau}=|\Delta|+o(1)\ \ \ \textnormal{as}\ \ n\rightarrow\infty\ \ \textnormal{for any fixed}\ \ \tau\in(0,1).
\end{equation}
Next, moving to the Hilbert-Schmidt norm while keeping \eqref{p37} in mind, we first have
\begin{align}
	\|K_{n,\tau}\|_2^2=&\,\int_{(\Delta_0^{\tau}\times\mathbb{R})^2}\bigg|\,n^{-2}K_{n,\tau}\Big(\lambda_0+\frac{z}{n},\lambda_0+\frac{w}{n}\Big)\bigg|^2\d^2 w\,\d^2 z\nonumber\\
	=&\,\int_{\Delta_0^{\tau}}\int_{-\infty}^{\infty}\int_{\Delta_0^{\tau}}\int_{-\infty}^{\infty}\bigg|n^{-1}K_{n,\tau}\Big(\Re\lambda_0+\frac{x_1}{n}+\im y_1,\Re\lambda_0+\frac{x_2}{n}+\im y_2\Big)\bigg|^2\d y_2\d x_2\d y_1\d x_1,\label{little7}
\end{align}
which we now evaluate, as $n\rightarrow\infty$, through an application of the dominated convergence theorem. To that end recall the estimate $|K_{n,\tau}(z,w)|\leq n\sqrt{\rho_{n,\tau}(z)\rho_{n,\tau}(w)},(z,w)\in\mathbb{C}^2$ where, by \eqref{little-1}, after changing variables,
\begin{align*}
	\rho_{n,\tau}(x+\im y)=\frac{1}{\pi^2\sqrt{1-\tau^2}}\Big(\frac{n}{4\tau}\Big)\int_{-\infty}^{\infty}\int_{-\infty}^{\infty}\exp\left[-\frac{n}{4}\left(\frac{1-\tau}{\tau}\right)u^2-\frac{n}{4}\left(\frac{1+\tau}{\tau}\right)v^2\right]Q\big(n,n\,\hat{\lambda}_{\tau}^{x,y}(u,v)\big)\,\d v\,\d u,
\end{align*}
in terms of
\begin{equation*}
	\hat{\lambda}_{\tau}^{x,y}(u,v):=\left(\frac{u}{2}-\frac{y}{1-\tau}\right)^2-\left(\frac{v}{2}+\frac{\im x}{1+\tau}\right)^2.
\end{equation*}
Hence, identifying $z=(\frac{u}{2}-\frac{y}{1-\tau})-(\frac{v}{2}+\frac{\im x}{1+\tau})$ and $w=(\frac{u}{2}-\frac{y}{1-\tau})+(\frac{v}{2}+\frac{\im x}{1+\tau})$ in \eqref{little-2}, we find
\begin{equation}\label{little8}
	\big|\rho_{n,\tau}(x+\im y)\big|\leq\frac{1}{\pi^2\sqrt{1-\tau^2}}\Big(\frac{n}{4\tau}\Big)\int_{-\infty}^{\infty}\int_{-\infty}^{\infty}\exp\left[-\frac{n}{4}\left(\frac{1-\tau}{\tau}\right)(u^2+v^2)\right]\sqrt{Q(n,n|z|^2)}\,\d v\,\d u,
\end{equation}
since $Q(n,nx)\leq 1$ for all $x\geq 0$. Then, after changing variables one more time in \eqref{little8}, for any $\tau\in(0,1)$,
\begin{align*}
	\big|\rho_{n,\tau}(x+\im y)\big|\leq c\sqrt{n}\int_{-\infty}^{\infty}\exp\left[-\frac{n}{8}\left(\frac{1-\tau}{\tau}\right)s^2\right]\sqrt{Q\Big(n,n\bigg[\left(\frac{s}{2}-\frac{y}{1-\tau}\right)^2+\left(\frac{x}{1+\tau}\right)^2\bigg]\Big)}\,\d s,
\end{align*}
with $c=c(\tau)>0$. Hence, by \eqref{p9},\eqref{p10}, for any $\tau\in(0,1)$,
\begin{align*}
	\big|\rho_{n,\tau}(x+\im y)\big|\leq c\sqrt{n}\left\{\int_{I_y^{\tau}}\exp\left[-\frac{n}{8}\left(\frac{1-\tau}{\tau}\right)s^2\right]\d s+\int_{\mathbb{R}\setminus I_y^{\tau}}\exp\left[-\frac{n}{8}\left(\frac{1-\tau}{\tau}\right)s^2-\frac{n}{2}\left(\frac{s}{2}-\frac{y}{1-\tau}\right)^2\right]\d s\right\}
\end{align*}
with $I_y^{\tau}:=\big\{s\in\mathbb{R}:\ |\frac{s}{2}-\frac{y}{1-\tau}|\leq\sqrt{6}\big\}$, and therefore
\begin{equation}\label{little9}
	\big|\rho_{n,\tau}(x+\im y)\big|\leq c_1\left\{\textnormal{erfc}\left[c_2\sqrt{n}\left(\frac{2|y|}{1-\tau}-\sqrt{6}\right)\right]+\e^{-nc_3y^2}\right\},\ \ \forall\,(x,y,n)\in\mathbb{R}^2\times\mathbb{N},
\end{equation}
where $c_k=c_k(\tau)>0$.	 Seeing that the right hand side in \eqref{little9} is decreasing in $n$ once $2|y|>\sqrt{6}(1-\tau)$, we then obtain back in the integrand of \eqref{little7}, for any fixed $\tau\in(0,1)$,
\begin{align*}
	\bigg|n^{-1}K_{n,\tau}\Big(\Re\lambda_0+\frac{x_1}{n}+&\,\im y_1,\Re\lambda_0+\frac{x_2}{n}+\im y_2\Big)\bigg|^2\leq \rho_{n,\tau}\Big(\Re\lambda_0+\frac{x_1}{n}+\im y_1\Big)\rho_{n,\tau}\Big(\Re\lambda_0+\frac{x_2}{n}+\im y_2\Big)\\
	&\,\stackrel{\eqref{little9}}{\leq}\phi(y_1)\phi(y_2)\ \ \ \textnormal{where}\ \ \phi\in L^1(\mathbb{R}),\ \ \ \textnormal{valid for all}\ \ \ (x_1,x_2,y_1,y_2,n)\in\mathbb{R}^4\times\mathbb{N}.
\end{align*}
So, by dominated convergence,
\begin{equation}\label{little10}
	\lim_{n\rightarrow\infty}\|K_{n,\tau}\|_2^2=\int_{\Delta_0^{\tau}}\int_{-\infty}^{\infty}\int_{\Delta_0^{\tau}}\int_{-\infty}^{\infty}\lim_{n\rightarrow\infty}\bigg|n^{-1}K_{n,\tau}\Big(\Re\lambda_0+\frac{x_1}{n}+\im y_1,\Re\lambda_0+\frac{x_2}{n}+\im y_2\Big)\bigg|^2\d y_2\d x_2\d y_1\d x_1,
\end{equation}
provided the limit in the integrand exists pointwise almost everywhere in $(x_1,x_2,y_1,y_2)\in\mathbb{R}^4$. But this is indeed the case as we now demonstrate: First, working along the same lines as back in \eqref{p25}, for any $z,w\in\mathbb{C}$,
\begin{align}
	K_{n,\tau}(z,w)=&\,\frac{n}{\pi^2\sqrt{1-\tau^2}}\left(\frac{n}{4\tau}\right)\exp\left[-\frac{1}{2}\frac{n}{1-\tau^2}\big(|z|^2-2z\overline{w}+|w|^2\big)-\frac{\im}{2}\frac{n\tau}{1-\tau^2}\big(\Im(z^2)-\Im(w^2)\big)\right]\nonumber\\
	&\times\int_{-\infty}^{\infty}\int_{-\infty}^{\infty}\exp\left[-\frac{n}{4}\left(\frac{1-\tau}{\tau}\right)u^2-\frac{n}{4}\left(\frac{1+\tau}{\tau}\right)v^2\right]Q\big(n,n\,\tilde{\lambda}_{\tau}^{z,w}(u,v)\big)\,\d v\,\d u,\label{p38}
%	
%	
%	&\times\iint_{\mathbb{R}^2}\exp\left[-\frac{n}{4}\left(\frac{1-\tau}{\tau}\right)\left(u-\frac{\im(z-\overline{w})}{1-\tau}\right)^2-\frac{n}{4}\left(\frac{1+\tau}{\tau}\right)\left(v-\frac{\im(z+\overline{w})}{1+\tau}\right)^2\right]Q\left(n,\frac{n}{4}\big(u^2-v^2\big)\right)\d v\,\d u.\nonumber
\end{align}
with the abbreviation
\begin{equation*}
	\tilde{\lambda}_{\tau}^{z,w}(u,v):=\left(\frac{u}{2}+\frac{\im(z-\overline{w})}{2(1-\tau)}\right)^2-\left(\frac{v}{2}+\frac{\im(z+\overline{w})}{2(1+\tau)}\right)^2.
\end{equation*}
Note that, once $z\mapsto z':=\Re\lambda_0+x_1/n+\im y_1$ and $w\mapsto w':=\Re\lambda_0+x_2/n+\im y_2$ as needed in \eqref{little10},
\begin{equation*}
	-\frac{1}{2}\frac{n}{1-\tau^2}\big(|z|^2-2z\overline{w}+|w|^2\big)-\frac{\im}{2}\frac{n\tau}{1-\tau^2}\big(\Im(z^2)-\Im(w^2)\big)+\frac{1}{2}\frac{n}{1-\tau^2}(y_1-y_2)^2+\frac{1}{2n}\frac{1}{1-\tau^2}(x_1-x_2)^2
\end{equation*}
is purely imaginary, i.e. by triangle inequality,
\begin{align*}
	\big|n^{-1}K_{n,\tau}(z',w')\big|\leq&\,\frac{1}{\pi^2\sqrt{1-\tau^2}}\left(\frac{n}{4\tau}\right)\exp\left[-\frac{1}{2}\frac{n}{1-\tau^2}(y_1-y_2)^2-\frac{1}{2n}\frac{1}{1-\tau^2}(x_1-x_2)^2\right]\\
	&\,\times\int_{-\infty}^{\infty}\int_{-\infty}^{\infty}\exp\left[-\frac{n}{4}\left(\frac{1-\tau}{\tau}\right)u^2-\frac{n}{4}\left(\frac{1+\tau}{\tau}\right)v^2\right]\Big|Q\big(n,n\,\tilde{\lambda}_{\tau}^{z',w'}(u,v)\big)\Big|\,\d v\,\d u.
\end{align*}
Using \eqref{p8}, we then record the coarse estimate
\begin{equation*}
	\big|Q(n,nz)\big|\leq 1+\sqrt{n}\,\big|\e^{-n(z-1)}z^n\big|\ \ \ \ \ \textnormal{valid for any} \ z\in\mathbb{C}\ \textnormal{and}\ n\in\mathbb{N},
\end{equation*}
and find in turn, after some simplifications,
\begin{equation}\label{p39a}
	\big|n^{-1}K_{n,\tau}(z',w')\big|\leq I_{n,\tau}\big(\{x_j,y_j\}_{j=1}^2\big)+J_{n,\tau}\big(\{x_j,y_j\}_{j=1}^2\big),\ \ n\in\mathbb{N},\ \tau\in(0,1),\ x_j,y_j\in\mathbb{R},
\end{equation}
where
\begin{equation}\label{hp1}
	I_{n,\tau}\big(\{x_j,y_j\}_{j=1}^2\big):=\frac{1}{\pi(1-\tau^2)}\exp\left[-\frac{1}{2}\frac{n}{1-\tau^2}(y_1-y_2)^2-\frac{1}{2n}\frac{1}{1-\tau^2}(x_1-x_2)^2\right],
\end{equation}
and
\begin{align}
	J_{n,\tau}\big(\{x_j,y_j\}_{j=1}^2\big):=\frac{\e^{\frac{n}{4}a_nb_n}}{\pi^2\sqrt{1-\tau^2}}&\,\left(\frac{n^{3/2}}{8\tau}\right)\left(\frac{\e}{4}\right)^n\exp\left[-\frac{1}{2\tau}\frac{n}{1-\tau}(y_1^2+y_2^2)-\frac{1}{2n}\frac{1}{1-\tau^2}(x_1-x_2)^2\right]\nonumber\\
	&\,\times f_{n,\tau}(y_1,a_n)f_{n,\tau}(y_2,b_n),\label{p39aa}
\end{align}
which involves the integral
\begin{equation}\label{p39aaa}
	f_{n,\tau}(y,\omega):=\int_{-\infty}^{\infty}\exp\left[-\frac{n}{8\tau}(t^2+4ty)\right]\big(t^2+\omega^2\big)^{\frac{n}{2}}\d t, \ \ \ \ \ \ y,\omega\in\mathbb{R},
\end{equation}
and the variables
\begin{equation*}
	a_n:=\frac{x_1-x_2}{n(1-\tau)}+\frac{2\Re\lambda_0}{1+\tau}+\frac{x_1+x_2}{n(1+\tau)},\ \ \ \ \ \ b_n:=\frac{x_1-x_2}{n(1-\tau)}-\frac{2\Re\lambda_0}{1+\tau}-\frac{x_1+x_2}{n(1+\tau)}.
\end{equation*}
In order to estimate \eqref{p39aa} further we temporarily fix $1<p,q<\infty$ such that $\frac{1}{p}+\frac{1}{q}=1$ and apply H\"older's inequality in \eqref{p39aaa},
\begin{equation*}
	f_{n,\tau}(y,\omega)\leq\left(\frac{8\pi\tau}{n}\right)^{\frac{1}{2p}}\exp\left[\frac{np}{2\tau}y^2+\frac{n\omega^2}{8\tau q}\right]\left\{\int_{-\infty}^{\infty}\exp\left[-\frac{n}{8\tau}(t^2+\omega^2)\right]\big(t^2+\omega^2\big)^{\frac{nq}{2}}\d t\right\}^{\frac{1}{q}}.
\end{equation*}
Note that $[0,\infty)\ni x\mapsto\e^{-x/8\tau}x^{q/2}=:g_{\tau,q}(x)$ is maximized for $x=4q\tau>0$, and therefore
\begin{equation*}
	f_{n,\tau}(y,\omega)\leq\left(\frac{8\pi\tau}{n}\right)^{\frac{1}{2p}}\exp\left[\frac{np}{2\tau}y^2+\frac{n\omega^2}{8\tau q}\right]\big(g_{\tau,q}(4q\tau)\big)^{\frac{n}{q}}\underbrace{\left\{\int_{-\infty}^{\infty}\frac{g_{\tau,q}(t^2+\omega^2)}{g_{\tau,q}(4q\tau)}\,\d t\right\}^{\frac{1}{q}}}_{=:\,C_{\tau,q}(\omega)>0},
\end{equation*}
as well as
\begin{equation}\label{p39aaaa}
	f_{n,\tau}(y_1,a_n)f_{n,\tau}(y_2,b_n)\leq C_{\tau,q}(a_n)C_{\tau,q}(b_n)\left(\frac{8\pi\tau}{n}\right)^{\frac{1}{p}}\exp\left[\frac{np}{2\tau}(y_1^2+y_2^2)+\frac{n}{8\tau q}(a_n^2+b_n^2)\right]\e^{-n}(4q\tau)^n.
\end{equation}
Note that $(0,1)\ni u\mapsto\frac{1}{2\tau}\frac{1}{1-u}(y_1^2+y_2^2)+\frac{u}{8\tau}(a_n^2+b_n^2)-\ln u=:h(u)$ has a unique global minimum at $u=u_{\ast}=u_{\ast}(\tau,y_1,y_2,a_n,b_n)\in(0,1)$ which is determined as the real-valued solution of
\begin{equation*}
	h'(u)=\frac{1}{2\tau}\frac{1}{(1-u)^2}(y_1^2+y_2^2)+\frac{1}{8\tau}(a_n^2+b_n^2)-\frac{1}{u}=0.
\end{equation*}
Writing $q_{\ast}=\frac{1}{u_{\ast}}\in(1,\infty)$, we then obtain by choosing $q=q_{\ast},p=1/(1-u_{\ast})$ in \eqref{p39aaaa},
\begin{equation*}
	f_{n,\tau}(y_1,a_n)f_{n,\tau}(y_2,b_n)\leq C_{\tau,q_{\ast}}(a_n)C_{\tau,q_{\ast}}(b_n)\left(\frac{8\pi\tau}{n}\right)^{1-u_{\ast}}\e^{nh(u_{\ast})}\Big(\frac{4\tau}{\e}\Big)^n,
\end{equation*}
and thus for any $n>8\pi\tau$ back in \eqref{p39aa},
\begin{equation}\label{hp2}
	J_{n,\tau}\big(\{x_j,y_j\}_{j=1}^2\big)\leq\frac{\e^{n(h(u_{\ast})-h(\tau))}}{\pi^2\sqrt{1-\tau^2}}\left(\frac{n^{3/2}}{8\tau}\right)\exp\left[\frac{\tau}{n(1-\tau)^2(1+\tau)}(x_1-x_2)^2\right]C_{\tau,q_{\ast}}(a_n)C_{\tau,q_{\ast}}(b_n).
\end{equation}
Notice that $h(u_{\ast})<h(\tau)$ almost everywhere in $(x_1,x_2,y_1,y_2)\in\mathbb{R}^4$ and uniformly in $n\in\mathbb{N}$. Moreover, $C_{\tau,q_{\ast}}(a_n)$ and $C_{\tau,q_{\ast}}(b_n)$ are bounded from above uniformly in $n\in\mathbb{N}$, hence both terms in \eqref{p39a}, compare \eqref{hp1} and \eqref{hp2}, tend to zero as $n\rightarrow\infty$, pointwise almost everywhere in $(x_1,x_2,y_1,y_2)\in\mathbb{R}^4$, for any fixed $\tau\in(0,1)$. So, back in \eqref{little10},
\begin{equation}\label{p38aaa}
	\|K_{n,\tau}\|_2=o(1)\ \ \ \textnormal{as}\ \ n\rightarrow\infty\ \ \ \textnormal{for any fixed}\ \ \tau\in(0,1),
\end{equation}
by the dominated convergence theorem, and consequently, \eqref{p38aaa},\eqref{little6} and \eqref{p37} establish
\begin{equation*}
	\lim_{n\rightarrow\infty}E_{n,\tau}\left(\lambda_0+\frac{\Omega_{\Delta}}{n\varrho_{\tau}(\Re\lambda_0)}\right)=\e^{-|\Delta|},\ \ \ \ \Omega_{\Delta}=\Delta\times\mathbb{R}\subset\mathbb{R}^2\simeq\mathbb{C},
\end{equation*}
as claimed in \eqref{r26}, for any fixed $\tau\in(0,1)$, any bounded interval $\Delta\subset\mathbb{R}$ of finite length $|\Delta|\in(0,\infty)$ and any point $\lambda_0$ in the interior of $\mathbb{D}_{\tau}$. Our proof of Lemma \ref{lem:2} is complete.
\end{proof}

%%%%%%%%%%%%%%%%%%%%%%%%%%%%%%%%%%%%%%%%%%%%%%%%%%%%%%%%%%%%%%%%%%%%%

\section{The eGinUE at weak non-Hermiticity: proof of \eqref{r33},\eqref{r34}, Proposition \ref{prop:3} and Theorem \ref{theo:2}}\label{sec4}
In this section we establish the Fredholm determinant representations \eqref{r33} and \eqref{r34}. Afterwards Proposition \ref{prop:3} will follow from the bound \cite[$(52)$]{ACV} and the pointwise limit \eqref{r31} established in \cite[Theorem $3$(b)]{ACV}. Lastly we obtain \eqref{r35} by using Proposition \ref{prop:3} and Taylor expansion.

\begin{proof}[Proof of \eqref{r33} and \eqref{r34}] By \eqref{r22}, after changing variables $w_j=(z_j-\lambda_0)n\varrho_1(\lambda_0)$,
\begin{align}
	E_{n,\tau_n}[\phi_n]=1+\sum_{\ell=1}^n&\,\frac{(-1)^{\ell}}{\ell!}\int_{J^{\ell}}\phi(w_1)\cdots\phi(w_{\ell})\nonumber\\
	&\times\det\bigg[\big(n\varrho_1(\lambda_0)\big)^{-2}K_{n,\tau_n}\left(\lambda_0+\frac{w_j}{n\varrho_1(\lambda_0)},\lambda_0+\frac{w_k}{n\varrho_1(\lambda_0)}\right)\bigg]_{j,k=1}^{\ell}\d^2w_1\cdots\d^2w_{\ell},\label{p39}
\end{align}
so that by \eqref{r31}, noticing en route that
\begin{align*}
	\big(\varrho_1(\lambda_0)\sqrt{n}\,\big)^{-2\ell}p_{\ell}^{(n,\tau_n)}&\,\left(\lambda_0+\frac{w_1}{n\varrho(\lambda_0)},\ldots,\lambda_0+\frac{w_{\ell}}{n\varrho_1(\lambda_0)}\right)=\prod_{j=1}^{\ell-1}\left(1-\frac{j}{n}\right)^{-1}\\
	&\times\det\bigg[\big(n\varrho_1(\lambda_0)\big)^{-2}K_{n,\tau_n}\left(\lambda_0+\frac{w_j}{n\varrho_1(\lambda_0)},\lambda_0+\frac{w_k}{n\varrho_1(\lambda_0)}\right)\bigg]_{j,k=1}^{\ell},
\end{align*}
we can pass to the limit $n\rightarrow\infty$ in every term of the sum in \eqref{p39}, given that $J\subset\mathbb{C}$ is compact and $\phi:\mathbb{C}\rightarrow\mathbb{C}$ bounded. We thus need to prove that the same limit can be carried out termwise, i.e. that the $\ell^{\textnormal{th}}$ term of the sum in \eqref{p39} is bounded by the $\ell^{\textnormal{th}}$ term of a convergent series, for large $n$. Once more, we shall achieve this by using a version of Hadamard's inequality. Namely, $D=[D_{jk}]_{j,k=1}^{\ell}\in\mathbb{C}^{\ell\times\ell}$ with
\begin{equation*}
	D_{jk}:=\big(n\varrho_1(\lambda_0)\big)^{-2}K_{n,\tau_n}\left(\lambda_0+\frac{w_j}{n\varrho_1(\lambda_0)},\lambda_0+\frac{w_k}{n\varrho_1(\lambda_0)}\right),
\end{equation*}
is a positive-definite matrix for it is Hermitian and we have for all $(t_1,\ldots,t_{\ell})\in\mathbb{C}^{\ell}$, compare the proof workings of Lemma \ref{lem:2},
\begin{align*}
	\sum_{j,k=1}^{\ell}D_{jk}t_j\overline{t_k}=&\,\frac{(\varrho_1(\lambda_0))^{-2}}{n\pi\sqrt{1-\tau_n^2}}\sum_{m=0}^{n-1}\frac{(\tau_n/2)^m}{m!}\\
	&\ \ \ \times\left|\sum_{j=1}^{\ell}\exp\left[-\frac{n}{2}V_{\tau_n}\left(\lambda_0+\frac{w_j}{n\varrho_1(\lambda_0)}\right)\right]H_m\left(\sqrt{\frac{n}{2\tau_n}}\,\left(\lambda_0+\frac{w_j}{n\varrho_1(\lambda_0)}\right)\right)t_j\right|^2\geq 0.
\end{align*}
Consequently, by \cite[$(5.2.72)$]{PS},
\begin{equation*}
	\det\big[D_{jk}\big]_{j,k=1}^{\ell}\leq\prod_{j=1}^{\ell}D_{jj}=\prod_{j=1}^{\ell}\big(\varrho_1(\lambda_0)\sqrt{n}\,\big)^{-2}\rho_{n,\tau_n}\left(\lambda_0+\frac{w_j}{n\varrho_1(\lambda_0)}\right),
\end{equation*}
and our workings in \eqref{little-1} show
\begin{align*}
	\frac{1}{n}\rho_{n,\tau_n}\left(\lambda_0+\frac{z}{n}\right)=\frac{1}{\pi^2\sqrt{1-\tau_n^2}}\Big(\frac{1}{4\tau_n}\Big)&\,\int_{-\infty}^{\infty}\int_{-\infty}^{\infty}\exp\left[-\frac{n}{4}\left(\frac{1-\tau_n}{\tau_n}\right)\left(u+\frac{2y}{1-\tau_n}\right)^2-\frac{n}{4}\left(\frac{1+\tau_n}{\tau_n}\right)v^2\right]\\
	&\times Q\big(n,n\,\lambda_{\tau_n}^x(u,v)\big)\,\d v\,\d u,\ \ \ \ \ \ x\equiv\lambda_0+\frac{\Re z}{n},\ \ y\equiv\frac{\Im z}{n}.
\end{align*}
But for $\lambda_0\in(-2,2)$,
\begin{equation*}
	\lim_{n\rightarrow\infty}\frac{1}{n}\rho_{n,\tau_n}\left(\lambda_0+\frac{z}{n}\right)=\sqrt{\pi}\left(\frac{\varrho_1(\lambda_0)}{2\pi\sigma}\right)^2\int_{-2\pi\sigma}^{2\pi\sigma}\exp\left[-\frac{1}{4}\left(v+\frac{2}{\sigma}\varrho_1(\lambda_0)\Im z\right)^2\right]\d v,
\end{equation*}
uniformly in a neighborhood of $\lambda_0\in(-2,2)$ for any fixed $\sigma>0$. Thus, there exists $a\in(0,\infty)$ such that the inequality
\begin{equation*}
	\frac{1}{n}\rho_{n,\tau_n}\left(\lambda_0+\frac{z}{n\varrho_1(\lambda_0)}\right)\leq a
\end{equation*}
is valid uniformly in $z$ varying in a compact subset of $\mathbb{C}$ if $n$ is sufficiently large, with $\sigma>0$ fixed. Hence, the $\ell^{\textnormal{th}}$ term of the sum in \eqref{p39} is bounded by $(a\|\phi\|_{\infty}|J|\varrho_1^{-2}(\lambda_0))^{\ell}/\ell!$ and thus \eqref{r33} is established by \eqref{p39} and \eqref{r31}. Lastly, the limiting gap probability is a special case of \eqref{r33} since
\begin{equation*}
	E_{n,\tau_n}\left(\lambda_0+\frac{\Omega}{n\varrho_1(\lambda_0)}\right)=E_{n,\tau_n}\big[\chi_{\lambda_0+\Omega/(n\varrho_1(\lambda_0))}\big]=E_{n,\tau_n}[\phi_n]\ \ \ \ \textnormal{with}\ \ \ \ \phi_n(z):=\chi_{\Omega}\big((z-\lambda_0)n\varrho_1(\lambda_0)\big),
\end{equation*}
so the equality for $E_{\sigma}(\Omega)$ follows indeed from \eqref{r33}. Our proof of \eqref{r33} and \eqref{r34} is complete.
\end{proof}
\begin{proof}[Proof of Proposition \ref{prop:3}] Given our workings in the proof of \eqref{r24}, the step from \eqref{r34} to the unbounded $\Omega_s=(0,s)\times\mathbb{R}$ in Proposition \ref{prop:3} is not hard. Indeed, Theorem \ref{theo:1} establishes the Fredholm determinant identity
\begin{align}
	E_{n,\tau_n}&\,\left(\lambda_0+\frac{\Omega_{ab}^{cd}}{n\varrho_1(\lambda_0)}\right)\label{p40}\\
	=&\,1+\sum_{\ell=1}^n\frac{(-1)^{\ell}}{\ell!}\int_{(\Omega_{ab}^{cd})^{\ell}}\det\left[\big(n\varrho_1(\lambda_0)\big)^{-2}K_{n,\tau_n}\left(\lambda_0+\frac{w_j}{n\varrho_1(\lambda_0)},\lambda_0+\frac{w_k}{n\varrho_1(\lambda_0)}\right)\right]_{j,k=1}^{\ell}\d^2 w_1\cdots\d^2 w_{\ell}\nonumber
\end{align}
for any finite $n\in\mathbb{Z}_{\geq 1}$, any $\lambda_0\in(-2,2)$ and any $-\infty <a<b<\infty,-\infty<c<d<\infty$. And the same still holds true once $c=-\infty$ or $d=\infty$, for by Hadamard's inequality, see our proof of \eqref{r33} for $D=[D_{jk}]_{j,k=1}^{\ell}$,
\begin{equation*}
	\det\big[D_{jk}\big]_{j,k=1}^{\ell}\leq\prod_{j=1}^{\ell}\big(\varrho_1(\lambda_0)\sqrt{n}\,\big)^{-2}\rho_{n,\tau_n}\left(\lambda_0+\frac{w_j}{n\varrho_1(\lambda_0)}\right),
\end{equation*}
and by \cite[(52)]{ACV},
\begin{equation}\label{p41}
	\frac{1}{n}\rho_{n,\tau_n}\left(x+\frac{\im y}{n}\right)\leq \frac{C_1}{n}\exp\left[-C_2\Big(\frac{y}{n}\Big)^2\right],%\min\left\{\exp\left[-\bigg(\frac{y\varrho_1(\lambda_0)}{\sigma}\bigg)^2\right],\exp\left[-\frac{1}{2}\bigg(\frac{y\varrho_1^2(\lambda_0)}{\sigma^2}\bigg)^2\right]\right\},
\end{equation}
for all $x\in\mathbb{R}$ on compact subsets such that either $|x|\geq 2+\delta$ or $0\leq|x|\leq 2-\delta$, for all $y\in\mathbb{R}$, all $\sigma>0$ fixed and all $n\in\mathbb{Z}_{\geq 1}$ with $C_j=C_j(\delta,\sigma)>0$. Hence, noting that $\lambda_0+\Re w_j/(n\varrho_1(\lambda_0))$ is contained in a compact subset of $\mathbb{R}$ for all $n\in\mathbb{Z}_{\geq 1}$ since $a,b$ are finite in \eqref{p40}, we have for all $-\infty\leq c<d\leq\infty$,
\begin{equation}\label{p42}
	\int_{(\Omega_{ab}^{cd})^{\ell}}\det\big[D_{jk}\big]_{j,k=1}^{\ell}\d^2w_1\cdots\d^2 w_{\ell}
	\leq\left(\frac{C_1(b-a)}{\varrho_1^2(\lambda_0)}\right)^{\ell}\left[\int_{-\infty}^{\infty}\e^{-C_2y^2} \d y\right]^{\ell}<\infty.
\end{equation}
Estimate \eqref{p42} proves that the right hand side of \eqref{p40} is well-defined for all $-\infty<a<b<\infty$ and all $-\infty\leq c<d\leq\infty$. Moreover the $\ell^{\textnormal{th}}$ term in the same sum is bounded by the $\ell^{\textnormal{th}}$ term of a convergent series, so, choosing $a=0,b=s>0$ and $c=-\infty,d=\infty$, we obtain
\begin{equation*}
	E_{n,\tau_n}\left(\lambda_0+\frac{\Omega_s}{n\varrho_1(\lambda_0)}\right)\stackrel{\eqref{p40}}{=}E_{n,\tau_n}[\phi_n],\ \ \ \ \ \ \phi_n(z):=\chi_{\Omega_s}\big((z-\lambda_0)n\varrho_1(\lambda_0)\big).
\end{equation*}
Here, the right hand side converges to $E_{\sigma}[\chi_{\Omega_s}]$ as $n\rightarrow\infty$, pointwise in $s,\sigma>0$ by the dominated convergence theorem using \eqref{p41} and the estimate \cite[$(52)$]{ACV} which asserts that the limit \eqref{r31} is also uniform in $(w_1,\ldots,w_{\ell})$ chosen from any vertical strip in $\mathbb{C}^{\ell}$ of bounded width. In turn,
\begin{equation*}
	E_{\sigma}(s)=\lim_{n\rightarrow\infty}E_{n,\tau_n}\left(\lambda_0+\frac{\Omega_s}{n\varrho_1(\lambda_0)}\right)=E_{\sigma}[\chi_{\Omega_s}],
\end{equation*}
for any $s,\sigma>0$, as claimed in Proposition \ref{prop:3}. Our proof is complete.
\end{proof}
We are now prepared to derive Theorem \ref{theo:2}

\begin{proof}[Proof of Theorem \ref{theo:2}] As in the proof of Corollary \ref{cor:1} we begin with the definition of conditional probability, compare \eqref{p18}, and then use
\begin{equation*}
	\mathbb{P}_{n,\tau_n}\big\{\mathcal{R}_{n,\tau_n}(\Delta)=0\big\}=E_{n,\tau_n}(\Omega_{\Delta}),\ \ \ \ \Omega_{\Delta}=\Delta\times\mathbb{R}.
\end{equation*}
What results is
\begin{equation}\label{p43}
	q_n^{\sigma}(s)=\lim_{h\downarrow 0}\frac{E_{n,\tau_n}\big(\lambda_0+\frac{(0,s]\times\mathbb{R}}{n\varrho_1(\lambda_0)}\big)-E_{n,\tau_n}\big(\lambda_0+\frac{(-h,s]\times\mathbb{R}}{n\varrho_1(\lambda_0)}\big)}{1-E_{n,\tau_n}\big(\lambda_0+\frac{(-h,0]\times\mathbb{R}}{n\varrho_1(\lambda_0)}\big)},\ \ \ s>0;\ \ \ \ \tau_n=1-\frac{1}{n}\Big(\frac{\sigma}{\varrho_1(\lambda_0)}\Big)^2\in(0,1),
\end{equation}
and we now use \eqref{p40} in the evaluation of each $E_{n,\tau_n}$. First, as $h\downarrow 0$, uniformly in $n\in\mathbb{Z}_{\geq 1}$, and for any fixed $\sigma>0$,
\begin{equation*}
	E_{n,\tau_n}\left(\lambda_0+\frac{(-h,0]\times\mathbb{R}}{n\varrho_1(\lambda_0)}\right)=1-\frac{h}{\varrho_1(\lambda_0)}\int_{-\infty}^{\infty}\frac{1}{n}\rho_{n,\tau_n}\left(\lambda_0+\frac{\im y}{n}\right)\,\d y+\mathcal{O}\big(h^2\big).
\end{equation*}
Next, by symmetry of $p_{\ell}^{(n,\tau_n)}(z_1,\ldots,z_{\ell})$, as $h\downarrow 0$,
\begin{align*}
	\frac{1}{h}&\,\left[E_{n,\tau_n}\left(\lambda_0+\frac{(0,s]\times\mathbb{R}}{n\varrho_1(\lambda_0)}\right)-E_{n,\tau_n}\left(\lambda_0+\frac{(-h,s]\times\mathbb{R}}{n\varrho_1(\lambda_0)}\right)\right]=\sum_{\ell=1}^n\frac{(-1)^{\ell-1}}{(\ell-1)!}\\
	&\,\times\int_{(0,s)^{\ell-1}}\int_{\mathbb{R}^{\ell}}\det\bigg[\big(n\varrho_1(\lambda_0)\big)^{-2}K_{n,\tau_n}\left(\lambda_0+\frac{w_j}{n\varrho_1(\lambda_0)},\lambda_0+\frac{w_k}{n\varrho_1(\lambda_0)}\right)\bigg]_{\substack{j,k=1\smallskip\\ \Re w_1=0}}^{\ell}\d\Im w_1\d^2w_2\cdots\d^2w_{\ell}+o(1),
\end{align*}
again uniformly in $n\in\mathbb{Z}_{\geq 1}$ and for any fixed $\sigma>0$. Consequently, using also our  proof workings that led to Proposition \ref{prop:3} and by the limit \eqref{r29},
\begin{equation}\label{p44}
	\lim_{n\rightarrow\infty}q_n^{\sigma}(s)=\sum_{\ell=1}^{\infty}\frac{(-1)^{\ell-1}}{(\ell-1)!}\int_{(0,s)^{\ell-1}}\int_{\mathbb{R}^{\ell}}\det\Big[K_{\sin}^{\sigma}(z_j,z_k)\Big]_{\substack{j,k=1\smallskip\\ \Re z_1=0}}^{\ell}\d\Im z_1\,\d^2z_2\cdots\d^2z_{\ell}\ \ \ \ \ \ \ \forall\,s,\sigma>0.
\end{equation}
It remains to notice that
\begin{equation*}
	-\frac{\d}{\d\epsilon}\left[1+\sum_{\ell=1}^{\infty}\frac{(-1)^{\ell}}{\ell!}\int_{(-\epsilon,s)^{\ell}}\int_{\mathbb{R}^{\ell}}\det\big[K_{\sin}^{\sigma}(z_j,z_k)\big]_{j,k=1}^{\ell}\d^2z_1\cdots\d^2 z_{\ell}\right]\Bigg|_{\epsilon=0}=\textnormal{RHS of}\ \eqref{p44},
\end{equation*}
and so by translation invariance of $K_{\sin}^{\sigma}(z,w)$ in the horizontal direction, see \eqref{r32}, and by Proposition \ref{prop:3},
\begin{equation*}
	\textnormal{RHS of}\ \eqref{p44}=-\frac{\d}{\d s}E_{\sigma}[\chi_{\Omega_s}].
\end{equation*}
Thus, combined together,
\begin{equation}\label{p45}
	\wp(s,\sigma)=1-\lim_{n\rightarrow\infty}q_n^{\sigma}(s)=1+\frac{\d}{\d s}E_{\sigma}[\chi_{\Omega_s}]=\int_0^s\frac{\d^2}{\d u^2}E_{\sigma}[\chi_{\Omega_u}]\,\d u;\ \ \ \ \ \ \ \frac{\d}{\d s}E_{\sigma}[\chi_{\Omega_s}]\bigg|_{s=0}=-1.
\end{equation}
Identity \eqref{p45} yields \eqref{r35} provided the induced integral operator $K_{\sin}^{\sigma}:L^2(\Omega_s)\rightarrow L^2(\Omega_s)$ with kernel \eqref{r32} is trace class. We will establish the same in Proposition \ref{alex} below so are now left with the small $s$-behavior of $\wp(s,\sigma)$ for fixed $\sigma>0$. To that end we use the upcoming \eqref{h4},
\begin{equation*}
	E_{\sigma}[\chi_{\Omega_s}]=1-s+\frac{\pi^2s^4}{96}\int_0^2\int_0^2(x-y)^2\e^{-\frac{1}{2}(\pi\sigma)^2(x-y)^2}\,\d x\,\d y+\mathcal{O}\big(s^6\big),\ \ \ \textnormal{as}\ s\downarrow 0,
\end{equation*}
where the error term is twice $s$-differentiable. In turn, for any fixed $\sigma>0$,
\begin{equation*}
	\frac{\d}{\d s}\wp(s,\sigma)=\frac{\d^2}{\d s^2}E_{\sigma}[\chi_{\Omega_s}]=\frac{(\pi s)^2}{8}\int_0^2\int_0^2(x-y)^2\e^{-\frac{1}{2}(\sigma\pi)^2(x-y)^2}\,\d x\,\d y+\mathcal{O}\big(s^4\big),\ \ \ s\downarrow 0.
\end{equation*}
This last item concludes our proof of Theorem \ref{theo:2}.
\end{proof}

%%%%%%%%%%%%%%%%%%%%%%%%%%%%%%%%%%%%%%%%%%%%%%%%%%%%%%%%%%%%%%

\section{The search for an integrable system}\label{sec5}
Our goal is to identify an integrable system for the limiting gap probability $E_{\sigma}(s)$ in \eqref{cool} which, by Proposition \ref{prop:3}, admits a Fredholm determinant representation. Indeed, aligning our notations from now on with \cite[Chapter $21$]{M}, we have $E_{\sigma}(s)=G_{\sigma}(t)$ with $t=\frac{s}{2}>0$ and 
\begin{equation}\label{h1}
	G_{\sigma}(t):=1+\sum_{\ell=1}^{\infty}\frac{(-1)^{\ell}}{\ell!}\int_{(J_t)^{\ell}}\det\big[K_{\sin}^{\sigma}(z_j,z_k)\big]_{j,k=1}^{\ell}\d^2 z_1\cdots\d^2z_{\ell},
\end{equation}
in terms of the kernel \eqref{r32} and the integration domain $J_t:=(-t,t)\times\mathbb{R}\subset\mathbb{R}^2\simeq\mathbb{C}$. The kernel function $\mathbb{C}^2\ni(z,w)\mapsto K_{\sin}^{\sigma}(z,w)$ is continuous for any $\sigma>0$ and it determines a trace class operator on $L^2(J_t)$.
\begin{prop}\label{alex} Let $K_{\sin}^{\sigma}:L^2(J_t)\rightarrow L^2(J_t)$ denote the integral operator with kernel \eqref{r32}, i.e.
\begin{equation}\label{h2}
	(K_{\sin}^{\sigma}f)(z):=\int_{J_t}K_{\sin}^{\sigma}(z,w)f(w)\,\d^2w.
\end{equation}
Then $K_{\sin}^{\sigma}$ is trace class for any $t,\sigma>0$.
\end{prop}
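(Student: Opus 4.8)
The plan is to exhibit $K_{\sin}^{\sigma}$ as a product $BB^{*}$ of two Hilbert--Schmidt operators, which immediately yields that it is trace class (and positive), by standard trace-ideal theory, cf. \cite{GGK}. The input is the separable structure of the kernel \eqref{r32}: writing $z_k=x_k+\im y_k$, using $\cos\theta=\frac12(\e^{\im\theta}+\e^{-\im\theta})$ together with the evenness of $u\mapsto\e^{-(\sigma u)^2}$, the inner integral in \eqref{r32} equals $\int_{-\pi}^{\pi}\e^{-(\sigma u)^2}\e^{\im u(z_1-\overline{z_2})}\,\d u$, and since $\e^{\im u(z_1-\overline{z_2})}=\e^{\im uz_1}\,\overline{\e^{\im uz_2}}$, setting
\[
\Phi_u(z):=\frac{1}{\sqrt{2\pi\sigma\sqrt{\pi}}}\;\e^{-\frac12(\sigma u)^2}\,\e^{-\frac{(\Im z)^2}{2\sigma^2}}\,\e^{\im uz},\qquad u\in(-\pi,\pi),\ z\in\mathbb{C},
\]
one checks by direct substitution that $K_{\sin}^{\sigma}(z_1,z_2)=\int_{-\pi}^{\pi}\Phi_u(z_1)\,\overline{\Phi_u(z_2)}\,\d u$.

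Next I would introduce $B\colon L^2\big((-\pi,\pi)\big)\to L^2(J_t)$ by $(B\varphi)(z):=\int_{-\pi}^{\pi}\Phi_u(z)\varphi(u)\,\d u$, whose adjoint is $(B^{*}f)(u)=\int_{J_t}\overline{\Phi_u(z)}\,f(z)\,\d^2z$, and I would verify that $BB^{*}$ is the integral operator on $L^2(J_t)$ generated by $\int_{-\pi}^{\pi}\Phi_u(z_1)\overline{\Phi_u(z_2)}\,\d u=K_{\sin}^{\sigma}(z_1,z_2)$. The key computation is that $B$ is Hilbert--Schmidt:
\[
\|B\|_{\mathrm{HS}}^2=\int_{-\pi}^{\pi}\int_{J_t}\big|\Phi_u(z)\big|^2\,\d^2z\,\d u=\frac{1}{2\pi\sigma\sqrt{\pi}}\int_{-\pi}^{\pi}\e^{-(\sigma u)^2}\left(\int_{-t}^{t}\d x\right)\left(\int_{-\infty}^{\infty}\e^{-\frac{y^2}{\sigma^2}-2uy}\,\d y\right)\d u=2t<\infty,
\]
where I used $|\e^{\im uz}|^2=\e^{-2uy}$ and the elementary Gaussian integral $\int_{\mathbb{R}}\e^{-y^2/\sigma^2-2uy}\,\d y=\sigma\sqrt{\pi}\,\e^{(\sigma u)^2}$, which exactly cancels the factor $\e^{-(\sigma u)^2}$. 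Consequently $B$ and $B^{*}$ are Hilbert--Schmidt, hence $K_{\sin}^{\sigma}=BB^{*}$ is trace class on $L^2(J_t)$ with $\tr_{L^2(J_t)}K_{\sin}^{\sigma}=\tr(B^{*}B)=\|B\|_{\mathrm{HS}}^2=2t$ for every $t,\sigma>0$ --- a value consistent with the unit intensity behind \eqref{p45} and with translation invariance of \eqref{r32}.

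The genuinely routine parts are the two Gaussian integrals and the pointwise verification of the factorization identity for the kernel. The only step deserving care is the use of Fubini/Tonelli twice: first in the computation of $\|B\|_{\mathrm{HS}}^2$ and in identifying $BB^{*}$ with the $K_{\sin}^{\sigma}$-integral operator, and second in confirming that $B$ indeed maps $L^2((-\pi,\pi))$ into $L^2(J_t)$. All of these are legitimate because $(u,z)\mapsto|\Phi_u(z)|$ is non-negative and measurable with finite double integral over $(-\pi,\pi)\times J_t$ (as just computed), and because the Gaussian decay in $\Im z$, combined with the bounded ranges of $u$ and $\Re z$, makes the relevant triple products $(u,z_1,z_2)\mapsto|\Phi_u(z_1)\Phi_u(z_2)|$ absolutely integrable. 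I do not expect any substantive obstacle here.
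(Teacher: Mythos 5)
Your proof is correct and is essentially the paper's own argument: the paper factors $K_{\sin}^{\sigma}=A_{\sigma}^{\ast}A_{\sigma}$ through exactly the same Hilbert--Schmidt map to and from $L^2(-\pi,\pi)$ (with the same kernel, up to relabeling, and the same computation $\|A_{\sigma}\|_{\mathrm{HS}}^2=2t$), then invokes the standard fact that a product of two Hilbert--Schmidt operators is trace class. Your verification of the factorization identity via evenness of $u\mapsto\e^{-(\sigma u)^2}$ and the Gaussian integral is the right routine check, and your observation that $\tr K_{\sin}^{\sigma}=2t$ is a consistent bonus.
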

\begin{proof} Writing $z=x+\im y\in\mathbb{C}$, we define the integral operator $A_{\sigma}:L^2(J_t)\rightarrow L^2(-\pi,\pi)$ with kernel
\begin{equation*}
	A_{\sigma}(a,z):=\frac{1}{\sqrt{2\pi^{\frac{3}{2}}\sigma}}\exp\left[-\frac{y^2}{2\sigma^2}-\frac{1}{2}(\sigma a)^2-\im a \overline{z}\right],\ \ \ \ a\in(-\pi,\pi),\ z\in J_t.
\end{equation*}
Since
\begin{equation*}
	\int_{-\pi}^{\pi}\int_{J_t}\big|A_{\sigma}(a,z)\big|^2\d^2 z\,\d a=2t<\infty,
\end{equation*}
the same constitutes a Hilbert-Schmidt transformation from $L^2(J_t)$ to $L^2(-\pi,\pi)$, and thus in particular a bounded linear transformation between the same Hilbert spaces. Moreover, its Hilbert space adjoint is given by the Hilbert-Schmidt integral operator $A_{\sigma}^{\ast}:L^2(-\pi,\pi)\rightarrow L^2(J_t)$ with kernel
\begin{equation*}
	A_{\sigma}^{\ast}(z,a):=\frac{1}{\sqrt{2\pi^{\frac{3}{2}}\sigma}}\exp\left[-\frac{y^2}{2\sigma^2}-\frac{1}{2}(\sigma a)^2+\im az\right],\ \ \ \ z\in J_t,\ a\in(-\pi,\pi).
\end{equation*}
It now remains to check that the action of $A_{\sigma}^{\ast}A_{\sigma}:L^2(J_t)\rightarrow L^2(J_t)$ coincides with the action of $K_{\sin}^{\sigma}:L^2(J_t)\rightarrow L^2(J_t)$ and thus the latter is trace class by \cite[Chapter IV, Lemma $7.2$]{GGK}.
\end{proof}
Equipped with Proposition \ref{alex}, \cite[Theorem $3.7,3.10$]{Sim} yields the following equivalent representation of $G_{\sigma}(t)$, namely
\begin{equation*}
	G_{\sigma}(t)=\prod_{j=1}^{\infty}\big(1-\omega_j(K_{\sin}^{\sigma},J_t)\big),\ \ \ t,\sigma>0,
\end{equation*}
in terms of the non-zero eigenvalues $\omega_j(K_{\sin}^{\sigma},J_t)$ of the integral operator $K_{\sin}^{\sigma}:L^2(J_t)\rightarrow L^2(J_t)$ defined in \eqref{h2}, taking multiplicities into account. More importantly, by Sylvester's identity \cite[Chapter IV, $(5.9)$]{GGK} and the proof workings in Proposition \ref{alex} we record another representation for $G_{\sigma}(t)$ in \eqref{h4} below.
\begin{cor} Introduce the kernel function
\begin{equation}\label{h3}
	S_t^{\sigma}(a,b):=\frac{\sin \pi(a-b)}{\pi(a-b)}\exp\left[-\left(\frac{\pi\sigma}{2t}(a-b)\right)^2\right]
\end{equation}
and let $S_t^{\sigma}:L^2(-t,t)\rightarrow L^2(-t,t)$ denote the corresponding integral operator with $t,\sigma>0$. Then
\begin{equation}\label{h4}
	G_{\sigma}(t)=\prod_{k=1}^{\infty}\Big(1-\omega_k\big(S_t^{\sigma},(-t,t)\big)\Big),\ \ \ t,\sigma>0,
\end{equation}
where $\omega_k(S_t^{\sigma},(-t,t))$ are the non-zero eigenvalues of $S_t^{\sigma}:L^2(-t,t)\rightarrow L^2(-t,t)$, counted according to multiplicities.
\end{cor}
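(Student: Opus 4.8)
The plan is to exploit the operator factorization $K_{\sin}^{\sigma}=A_{\sigma}^{\ast}A_{\sigma}$ recorded in the proof of Proposition \ref{alex} together with Sylvester's identity. Since $A_{\sigma}:L^2(J_t)\rightarrow L^2(-\pi,\pi)$ and its adjoint $A_{\sigma}^{\ast}:L^2(-\pi,\pi)\rightarrow L^2(J_t)$ are both Hilbert-Schmidt, the products $A_{\sigma}^{\ast}A_{\sigma}$ and $A_{\sigma}A_{\sigma}^{\ast}$ are trace class on $L^2(J_t)$ and $L^2(-\pi,\pi)$ respectively, so all Fredholm determinants below are well defined and Sylvester's identity \cite[Chapter IV, $(5.9)$]{GGK} yields
\begin{equation*}
	\prod_{j=1}^{\infty}\big(1-\omega_j(K_{\sin}^{\sigma},J_t)\big)=\det_{L^2(J_t)}\big(I-A_{\sigma}^{\ast}A_{\sigma}\big)=\det_{L^2(-\pi,\pi)}\big(I-A_{\sigma}A_{\sigma}^{\ast}\big).
\end{equation*}
It therefore suffices to identify $A_{\sigma}A_{\sigma}^{\ast}$ on $L^2(-\pi,\pi)$, up to unitary equivalence, with $S_t^{\sigma}$ on $L^2(-t,t)$.

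First I would compute the kernel of $A_{\sigma}A_{\sigma}^{\ast}$ explicitly as $\int_{J_t}A_{\sigma}(a,z)A_{\sigma}^{\ast}(z,b)\,\d^2 z$. Writing $z=x+\im y$, the $x$-integration over $(-t,t)$ produces the factor $2\sin\big((a-b)t\big)/(a-b)$, while the $y$-integration over $\mathbb{R}$ is an elementary Gaussian integral; after collecting the constant prefactors and completing the square in the exponent one arrives at
\begin{equation*}
	(A_{\sigma}A_{\sigma}^{\ast})(a,b)=\frac{1}{\pi}\,\frac{\sin\big((a-b)t\big)}{a-b}\,\exp\left[-\frac{\sigma^2}{4}(a-b)^2\right],\qquad a,b\in(-\pi,\pi).
\end{equation*}
Next I would perform the linear change of variables $a=\pi\alpha/t$, $b=\pi\beta/t$, which is implemented by a unitary map $L^2(-t,t)\rightarrow L^2(-\pi,\pi)$ and hence preserves the set of non-zero eigenvalues together with their multiplicities; the Jacobian factor $\pi/t$, combined with the rescaling inside the sine and inside the Gaussian, transforms the kernel above into precisely $S_t^{\sigma}(\alpha,\beta)$ as defined in \eqref{h3}. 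Consequently $A_{\sigma}A_{\sigma}^{\ast}$ and $S_t^{\sigma}$ have identical non-zero spectra, and \eqref{h4} follows from the displayed chain of identities together with the product representation of the Fredholm determinant over eigenvalues, cf. \cite[Theorem $3.7$]{Sim}.

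All computations in this argument are routine; the only points demanding a little care are the bookkeeping that legitimises Sylvester's identity (Hilbert-Schmidt factors make the relevant products trace class, so every determinant in play converges) and the observation that $S_t^{\sigma}$ is itself trace class on $L^2(-t,t)$ — which is automatic once its unitary equivalence with $A_{\sigma}A_{\sigma}^{\ast}$ has been established, thereby also guaranteeing convergence of the infinite product in \eqref{h4}. I do not anticipate any genuine obstacle.
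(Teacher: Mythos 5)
Your proposal is correct and follows essentially the same route as the paper: the factorization $K_{\sin}^{\sigma}=A_{\sigma}^{\ast}A_{\sigma}$ from Proposition \ref{alex}, Sylvester's identity, the explicit kernel computation for $A_{\sigma}A_{\sigma}^{\ast}$ (your formula agrees with the paper's), and a rescaling to $(-t,t)$. The only cosmetic difference is that you justify the rescaling via unitary equivalence of the operators, whereas the paper records the equality of traces of all powers and invokes the Plemelj--Smithies formul\ae; both arguments are valid and interchangeable here.
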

\begin{proof} The trace class operator $A_{\sigma}^{\ast}A_{\sigma}:L^2(-\pi,\pi)\rightarrow L^2(-\pi,\pi)$ has kernel
\begin{equation*}
	(A_{\sigma}A_{\sigma}^{\ast})(a,b)=\int_{J_t}A_{\sigma}(a,z)A_{\sigma}^{\ast}(z,b)\,\d^2 z=\frac{\sin t(a-b)}{\pi(a-b)}\exp\left[-\left(\frac{\sigma}{2}(a-b)\right)^2\right]
\end{equation*}
and rescaling yields the trace equalities
\begin{equation*}
	\tr_{L^2(-\pi,\pi)}(A_{\sigma}A_{\sigma}^{\ast})^k=\tr_{L^2(-t,t)}(S_t^{\sigma})^k,\ \ \ \ \ t,\sigma>0,
\end{equation*}
for all $k\in\mathbb{Z}_{\geq 1}$. Thus \eqref{h4} follows from Sylvester's identity, the Plemelj-Smithies formul\ae\,\cite[Theorem $5.4$]{Sim} and \cite[Theorem $3.7$]{Sim}.
\end{proof}
The simple dependence of \eqref{h3} on $\sigma>0$ makes it possible to analyze the $\sigma\downarrow 0$ and $\sigma\rightarrow\infty$ degenerations of $G_{\sigma}(t)$ at this point.
\begin{cor}\label{deg1} For any fixed $t>0$,
\begin{equation}\label{h5}
	\lim_{\sigma\downarrow 0}G_{\sigma}(t)=1+\sum_{\ell=1}^{\infty}\frac{(-1)^{\ell}}{\ell!}\int_{(-t,t)^{\ell}}\det\left[\frac{\sin\pi(x_j-x_k)}{\pi(x_j-x_k)}\right]_{j,k=1}^{\ell}\d x_1\cdots\d x_{\ell},
\end{equation}
and thus in particular $\lim_{\sigma\downarrow 0}\wp(s,\sigma)=F_1(s),s>0$ in terms of \eqref{r11}.
\end{cor}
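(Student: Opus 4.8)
The plan is to reduce everything to the one-dimensional representation \eqref{h4}, $G_{\sigma}(t)=\prod_{k\geq1}\bigl(1-\omega_k(S_t^{\sigma},(-t,t))\bigr)=\det(I-S_t^{\sigma})$ on $L^2(-t,t)$, which trades the two-dimensional operator of \eqref{h1} for the integral operator on the bounded interval $(-t,t)$ with the bounded continuous kernel \eqref{h3}. The elementary fact driving the whole argument is that the Gaussian damping in \eqref{h3} obeys $\exp\bigl[-(\tfrac{\pi\sigma}{2t}(a-b))^2\bigr]\to1$ as $\sigma\downarrow0$ uniformly for $(a,b)$ in the bounded square $(-t,t)^2$; hence $S_t^{\sigma}(a,b)\to\frac{\sin\pi(a-b)}{\pi(a-b)}$ uniformly on $(-t,t)^2$, while $|S_t^{\sigma}(a,b)|\leq1$ for every $\sigma>0$ and, being a pointwise product of a sinc kernel and a Gaussian kernel, $S_t^{\sigma}$ has a positive-definite kernel.

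Next I would upgrade this kernel convergence to convergence of the Fredholm determinant. The cleanest route is to apply dominated convergence directly to the Fredholm series $G_{\sigma}(t)=1+\sum_{\ell\geq1}\frac{(-1)^{\ell}}{\ell!}\int_{(-t,t)^{\ell}}\det\bigl[S_t^{\sigma}(a_j,a_k)\bigr]_{j,k=1}^{\ell}\,\d a_1\cdots\d a_{\ell}$: by Hadamard's inequality for positive-definite matrices, $0\leq\det\bigl[S_t^{\sigma}(a_j,a_k)\bigr]_{j,k=1}^{\ell}\leq\prod_{j=1}^{\ell}S_t^{\sigma}(a_j,a_j)=1$, so the $\ell$-th summand is bounded in absolute value by $(2t)^{\ell}/\ell!$, uniformly in $\sigma$, and the pointwise (indeed uniform) convergence of each $\ell\times\ell$ determinant then yields, term by term and then in the $\ell$-sum, the limit \eqref{h5}. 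The same conclusion also follows from trace-norm continuity of $\det(I-\cdot)$: $S_t^{\sigma}$ is positive and trace class with $\tr S_t^{\sigma}=\int_{-t}^{t}S_t^{\sigma}(a,a)\,\d a=2t$ for every $\sigma$, and it converges to $S_t^{0}$ in Hilbert--Schmidt norm by the uniform kernel convergence, whence $\|S_t^{\sigma}-S_t^{0}\|_1\to0$. Since the right-hand side of \eqref{h5} is exactly the Gaudin--Mehta sine-kernel determinant $D(t)$ of \eqref{r12} over $(-t,t)$, this proves $\lim_{\sigma\downarrow0}G_{\sigma}(t)=D(t)$.

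Finally I would transfer this to $\wp$. By the horizontal translation invariance of \eqref{r32} one has $\prod_{j\geq1}\bigl(1-\omega_j(K_{\sin}^{\sigma},\Omega_u)\bigr)=G_{\sigma}(u/2)$, so \eqref{r35} reads $\wp(s,\sigma)=\int_0^s\frac{\d^2}{\d u^2}G_{\sigma}(u/2)\,\d u$, while \eqref{r11} reads $F_1(s)=\int_0^s\frac{\d^2}{\d u^2}D(u/2)\,\d u$. To pass $\lim_{\sigma\downarrow0}$ through the two $u$-derivatives and the integral, rescale $(-t,t)$ to $(-1,1)$, after which $G_{\sigma}(t)=\det(I-t\widehat S_t^{\sigma})$ on $L^2(-1,1)$ with $\widehat S_t^{\sigma}(\alpha,\beta):=S_t^{\sigma}(t\alpha,t\beta)$ depending holomorphically on $t$; the Hadamard domination above persists uniformly for $t$ in compact subsets of $\mathbb{C}$ (the sine being entire of exponential type), so $G_{\sigma}\to D$ uniformly on compacta of $\mathbb{C}$, whence by the Weierstrass convergence theorem $G_{\sigma}''\to D''$ uniformly on $[0,s]$ and therefore $\frac{\d^2}{\d u^2}G_{\sigma}(u/2)\to\frac{\d^2}{\d u^2}D(u/2)$ uniformly on $u\in[0,s]$; integrating over $(0,s)$ gives $\lim_{\sigma\downarrow0}\wp(s,\sigma)=F_1(s)$.

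I expect the only mildly delicate point to be this last interchange with the \emph{second} $u$-derivative: uniform convergence of the determinants themselves is immediate, and one just needs either the analyticity-plus-Weierstrass observation above or, equivalently, to differentiate the Fredholm series twice in $u$ and re-run dominated convergence on the differentiated series (legitimate since the $u$-derivative of the $\ell$-th term is again dominated by a summable sequence). Everything else is a routine combination of Hadamard's inequality and dominated convergence.
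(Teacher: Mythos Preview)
Your argument is correct. For the determinant limit \eqref{h5} you give two routes: the first, dominated convergence applied directly to the Fredholm series with the Hadamard bound $\det[S_t^{\sigma}(a_j,a_k)]\leq 1$, is more elementary than the paper's approach, which instead invokes Simon's theorem that weak convergence plus $\|S_t^{\sigma}\|_1=2t=\|K_{\sin}\|_1$ for positive operators forces trace-norm convergence, and then continuity of $\det(I-\cdot)$ in trace norm. Your second route (Hilbert--Schmidt convergence plus equal traces) is essentially the paper's argument with weak convergence upgraded to Hilbert--Schmidt, which is harmless here. The Fredholm-series argument has the advantage of being self-contained; the trace-norm argument is cleaner once one is willing to cite the relevant trace-ideal machinery. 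For the passage to $\wp$, the paper simply writes ``and thus in particular'' without further justification, whereas you supply the analyticity-in-$t$ argument via rescaling to $L^2(-1,1)$ and Weierstrass; this is a genuine addition. One small wording issue: when you say ``the Hadamard domination above persists'' for complex $t$, the positive-definiteness bound $\det\leq\prod_j A_{jj}$ no longer applies, and you are implicitly switching to the general Hadamard inequality $|\det A|\leq \ell^{\ell/2}\max_{j,k}|A_{jk}|^{\ell}$ combined with the uniform boundedness of the entire sinc on compacta---which is exactly what your parenthetical ``the sine being entire of exponential type'' signals, so the logic is fine but could be phrased more precisely.
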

\begin{proof} Clearly, pointwise in $a,b\in(-t,t)$ with $t>0$,
\begin{equation}\label{h5a}
	\lim_{\sigma\downarrow 0}S_t^{\sigma}(a,b)=K_{\sin}(a,b):=\frac{\sin\pi(a-b)}{\pi(a-b)}.
\end{equation}
Also, $S_t^{\sigma}$ and $K_{\sin}$ (the operator on $L^2(-t,t)$ with kernel $K_{\sin}(a,b)$) are non-negative and self-adjoint, so we need to show $S_t^{\sigma}\rightarrow K_{\sin}$ weakly and $\|S_t^{\sigma}\|_1\rightarrow\|K_{\sin}\|_1$ in trace norm as $\sigma\downarrow 0$ to conclude the validity of \eqref{h5} by \cite[Theorem $2.20,3.4$]{Sim}. However, for all $\sigma>0$,
\begin{equation*}
	\|S_t^{\sigma}\|_1=\int_{-t}^tS_t^{\sigma}(a,a)\,\d a=2t=\int_{-t}^tK_{\sin}(a,a)\,\d a=\|K_{\sin}\|_1,
\end{equation*}
and, for any $f,g\in L^2(-t,t)$, by the dominated convergence theorem seeing that $|S_t^{\sigma}(a,b)|\leq 1$ for all $a,b\in\mathbb{R}$ and all $t,\sigma>0$,
\begin{equation*}
	\lim_{\sigma\downarrow 0}\int_{-t}^t\int_{-t}^tS_t^{\sigma}(a,b)f(b)\overline{g(a)}\,\d a\,\d b=\int_{-t}^t\int_{-t}^tK_{\sin}(a,b)f(b)\overline{g(a)}\,\d a\,\d b.
\end{equation*}
The proof of \eqref{h5} is complete.
\end{proof}
\begin{cor}\label{deg2} For any fixed $t>0$,
\begin{equation}\label{h6}
	\lim_{\sigma\rightarrow\infty}G_{\sigma}(t)=1+\sum_{\ell=1}^{\infty}\frac{(-1)^{\ell}}{\ell!}\int_{(-t,t)^{\ell}}\det\big[K_{\textnormal{Poi}}(x_j,x_k)\big]_{j,k=1}^{\ell}\d x_1\cdots\d x_{\ell}=\e^{-2t},
\end{equation}
with the Poisson correlation kernel corresponding to unit density,
\begin{equation*}
	K_{\textnormal{Poi}}(x,y)=\begin{cases}1,&x=y\\ 0,&x\neq y\end{cases},\ \ \ \ \ \ x,y\in(-t,t),
\end{equation*}
and thus in particular $\lim_{\sigma\rightarrow\infty}\wp(s,\sigma)=F_0(s),s>0$ in terms of \eqref{r18}.
\end{cor}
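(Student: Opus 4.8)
\emph{Plan.} The plan is to work from the reduced representation \eqref{h4}, under which $G_\sigma(t)=\det_{L^2(-t,t)}(I-S_t^\sigma)$ with $S_t^\sigma$ the positive trace class operator of the kernel \eqref{h3} (positivity follows either from the Schur--product structure of \eqref{h3} or from the factorization $S_t^\sigma\cong A_\sigma A_\sigma^{\ast}$ used in the proof of \eqref{h4}). In contrast to Corollary \ref{deg1}, the kernel \eqref{h3} tends pointwise to $0$ off the diagonal as $\sigma\to\infty$, so $S_t^\sigma$ converges to the zero operator in the usual sense; yet $G_\sigma(t)$ does \emph{not} tend to $1$, because the trace of $S_t^\sigma$ stays pinned while its Hilbert--Schmidt norm collapses. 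The key elementary facts are that $\tr_{L^2(-t,t)}S_t^\sigma=\int_{-t}^tS_t^\sigma(a,a)\,\d a=2t$ for every $\sigma>0$, whereas, using $|S_t^\sigma(a,b)|^2\le\exp[-\tfrac{\pi^2\sigma^2}{2t^2}(a-b)^2]$ and integrating the Gaussian in $a-b$ over all of $\mathbb R$,
\[
\tr_{L^2(-t,t)}(S_t^\sigma)^2=\int_{-t}^t\!\!\int_{-t}^t\bigl|S_t^\sigma(a,b)\bigr|^2\,\d a\,\d b\le\frac{2\sqrt2\,t^2}{\sqrt\pi\,\sigma}=:\varepsilon_\sigma(t),
\]
which tends to $0$ as $\sigma\to\infty$. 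Hence $\|S_t^\sigma\|_{\mathrm{op}}\le(\tr(S_t^\sigma)^2)^{1/2}\le\sqrt{\varepsilon_\sigma(t)}<1$ for $\sigma$ large, so $\log G_\sigma(t)=-\sum_{k\ge1}\frac1k\tr_{L^2(-t,t)}(S_t^\sigma)^k$, and since $S_t^\sigma\ge0$ gives $\tr(S_t^\sigma)^k\le\|S_t^\sigma\|_{\mathrm{op}}^{\,k-2}\tr(S_t^\sigma)^2\le\varepsilon_\sigma(t)^{k/2}$ for $k\ge2$, one obtains $|\log G_\sigma(t)+2t|\le\sum_{k\ge2}\varepsilon_\sigma(t)^{k/2}=\varepsilon_\sigma(t)/(1-\sqrt{\varepsilon_\sigma(t)})\to0$. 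This proves $\lim_{\sigma\to\infty}G_\sigma(t)=\e^{-2t}$ for each fixed $t>0$. The middle identity in \eqref{h6} is then immediate: $[K_{\textnormal{Poi}}(x_j,x_k)]_{j,k=1}^\ell$ is the identity matrix for a.e.\ $(x_1,\dots,x_\ell)$, so $\int_{(-t,t)^\ell}\det[K_{\textnormal{Poi}}]\,\d x_1\cdots\d x_\ell=(2t)^\ell$ and $1+\sum_{\ell\ge1}\frac{(-1)^\ell}{\ell!}(2t)^\ell=\e^{-2t}$.

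For the statement about $\wp$, the plan is to upgrade this pointwise limit to convergence of $s$-derivatives by exploiting analyticity in $t$. Rescaling $a\mapsto ta$ identifies $S_t^\sigma$ on $L^2(-t,t)$ with the operator on $L^2(-1,1)$ whose kernel $\frac{\sin(\pi t(a-b))}{\pi(a-b)}\exp[-(\pi\sigma(a-b)/2)^2]$ is entire in $t\in\mathbb C$ (and smooth in $(a,b)$), so $t\mapsto G_\sigma(t)$ extends to an entire function. The two trace estimates above survive complexification with constants locally uniform in $t$: using $|\sin(\pi t u)|\le\pi|t||u|\,\e^{\pi|\Im t||u|}$, the linear exponential is absorbed into the Gaussian once $\sigma$ is large, so $\tr(S_t^\sigma)^2=O_Q(1/\sigma)$ on any compact $Q\subset\mathbb C$ while $\tr S_t^\sigma=2t$ is unchanged; hence $G_\sigma(t)\to\e^{-2t}$ uniformly on compact subsets of $\mathbb C$, and Weierstrass' convergence theorem gives $G_\sigma^{(k)}(t)\to(-2)^k\e^{-2t}$ uniformly on compacts for every $k\in\mathbb Z_{\ge0}$. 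Since $\prod_{j}\bigl(1-\omega_j(K_{\sin}^\sigma,\Omega_u)\bigr)=E_\sigma(u)=G_\sigma(u/2)$ (the first equality by Proposition \ref{alex} and Fredholm theory, the second by the identification preceding \eqref{h1}), we get $\frac{\d^2}{\d u^2}\prod_j(1-\omega_j(K_{\sin}^\sigma,\Omega_u))=\tfrac14G_\sigma''(u/2)\to\e^{-u}$ uniformly for $u\in[0,s]$, and feeding this into \eqref{r35} yields $\lim_{\sigma\to\infty}\wp(s,\sigma)=\int_0^s\e^{-u}\,\d u=1-\e^{-s}=F_0(s)$, as in \eqref{r18}.

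The only step that is not wholly routine is the local uniformity of the trace bounds in complex $t$, which is what licenses the Weierstrass passage from convergence of $G_\sigma(t)$ to convergence of $\wp(s,\sigma)$; the verification is elementary but requires tracking how the linear exponential $\e^{\pi|\Im t||u|}$ competes against the Gaussian $\e^{-\pi^2\sigma^2u^2/4}$. An alternative avoiding analyticity altogether is to differentiate \eqref{r35} directly and re-run the Taylor/dominated-convergence scheme from the proof of Theorem \ref{theo:2}: expand $\frac{\d^2}{\d u^2}E_\sigma[\chi_{\Omega_u}]$ as a signed sum of Fredholm-minor series in $K_{\sin}^\sigma$, dominate each term uniformly in large $\sigma$ via Hadamard's inequality together with $\tr(S_t^\sigma)^2=O(1/\sigma)$, and pass to the limit term by term; this is more laborious but entirely elementary.
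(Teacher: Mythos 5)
Your argument is correct, but it runs along a genuinely different track from the paper's. The paper proves \eqref{h6} exactly as it proves Corollary \ref{deg1}: starting from \eqref{h4} with the kernel \eqref{h3}, it observes the pointwise limit $S_t^{\sigma}(a,b)\rightarrow K_{\textnormal{Poi}}(a,b)$ as $\sigma\rightarrow\infty$, bounds each Fredholm minor by $1$ via Hadamard's inequality, and passes to the limit termwise in the Fredholm series by dominated convergence; the statement about $\wp$ is then asserted without further detail. You instead exploit the second-moment collapse $\tr_{L^2(-t,t)}(S_t^{\sigma})^2=\mathcal{O}(\sigma^{-1})$ against the pinned first trace $\tr_{L^2(-t,t)}S_t^{\sigma}=2t$, together with positivity of $S_t^{\sigma}$ (which indeed follows from the factorization in the proof of \eqref{h4}, or from Lemma \ref{posprop} via \eqref{h8}), to control $\log G_{\sigma}(t)+2t$ through the Plemelj--Smithies expansion. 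This buys two things the paper's route does not: a quantitative rate $\mathcal{O}(\sigma^{-1})$ for the convergence, and --- via the complexification in $t$ and Weierstrass' theorem --- convergence of $t$-derivatives of $G_{\sigma}$, which is precisely what is needed to pass from convergence of the gap probability to convergence of $\wp(s,\sigma)=1+\frac{\d}{\d s}E_{\sigma}[\chi_{\Omega_s}]$ in \eqref{p45},\eqref{r35}. Since the paper leaves that last step implicit, your treatment of the $\wp$-degeneration is actually the more complete of the two; the one point to be careful about, which you flag, is that for complex $t$ the operator is no longer positive, so the bounds $|\tr(S_t^{\sigma})^k|\leq\|S_t^{\sigma}\|_2^{k}$ must be run through the Hilbert--Schmidt norm rather than through eigenvalue positivity --- this works, with constants locally uniform in $t$ because the factor $\e^{2\pi|t||a-b|}$ is swallowed by the Gaussian once $\sigma$ is large.
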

\begin{proof} Identity \eqref{h3} readily shows that, pointwise in $a,b\in(-t,t)$ with $t>0$,
\begin{equation}\label{h7}
	\lim_{\sigma\rightarrow\infty}S_t^{\sigma}(a,b)=K_{\textnormal{Poi}}(a,b).
\end{equation}
Moreover, by Hadamard's inequality, for any $\ell\in\mathbb{Z}_{\geq 1}$,
\begin{equation*}
	\det\big[S_t^{\sigma}(x_j,x_k)\big]_{j,k=1}^{\ell}\leq\prod_{j=1}^{\ell}S_t^{\sigma}(x_j,x_j)=1,\ \ \ \ \ x_j\in(-t,t),
\end{equation*}
and thus by the dominated convergence theorem, for any fixed $t>0$,
\begin{align*}
	\lim_{\sigma\rightarrow\infty}G_{\sigma}(t)=&\,1+\lim_{\sigma\rightarrow\infty}\sum_{\ell=1}^{\infty}\frac{(-1)^{\ell}}{\ell!}\int_{(-t,t)^{\ell}}\det\big[S_t^{\sigma}(x_j,x_k)\big]_{j,k=1}^{\ell}\d x_1\cdots\d x_{\ell}\\
	=&\,1+\sum_{\ell=1}^{\infty}\frac{(-1)^{\ell}}{\ell!}\int_{(-t,t)^{\ell}}\lim_{\sigma\rightarrow\infty}\det\big[S_t^{\sigma}(x_j,x_k)\big]_{j,k=1}^{\ell}\d x_1\cdots\d x_{\ell}\stackrel{\eqref{h7}}{=}\e^{-2t}.
\end{align*}
This completes our proof of \eqref{h6}.
\end{proof}
While the representation \eqref{h4}, with \eqref{h3} in place, is very convenient in the analysis of the degenerations of $G_{\sigma}(t)$ as $\sigma\downarrow 0$ and $\sigma\rightarrow\infty$, our derivation of an integrable system for $G_{\sigma}(t)$ will make use of yet another, that is finite-temperature, representation of $G_{\sigma}(t)$. The details are as follows.
\begin{prop} Abbreviate
\begin{equation*}
	\Phi(x):=\frac{1}{\sqrt{\pi}}\int_{-\infty}^x\e^{-y^2}\,\d y,\ \ \ x\in\mathbb{R}.
\end{equation*}
Then for any $a,b\in(-t,t)$ with $t>0$ and any $\sigma>0$,
\begin{equation}\label{h8}
	S_t^{\sigma}(a,b)=\int_0^{\infty}\left[\Phi\Big(\frac{t}{\sigma}(z+1)\Big)-\Phi\Big(\frac{t}{\sigma}(z-1)\Big)\right]\cos\big(\pi(a-b)z\big)\,\d z,
\end{equation}
i.e. $S_t^{\sigma}(a,b)$ is an example of a finite-temperature sine kernel.
\end{prop}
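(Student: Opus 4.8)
The statement \eqref{h8} asserts that $S_t^{\sigma}(a,b)$ is the Fourier--cosine transform over $(0,\infty)$ of the weight
\[
	w(z):=\Phi\Big(\tfrac{t}{\sigma}(z+1)\Big)-\Phi\Big(\tfrac{t}{\sigma}(z-1)\Big),\qquad \Phi(x)=\frac{1}{\sqrt{\pi}}\int_{-\infty}^x\e^{-y^2}\,\d y,
\]
evaluated at frequency $\xi=\pi(a-b)$, so the plan is to compute this transform directly. First I would record two elementary properties of $w$. Since $\Phi(x)+\Phi(-x)=1$, which follows at once from the evenness of $\e^{-y^2}$, one checks that $w(-z)=w(z)$, i.e. $w$ is even. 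Moreover $1-\Phi(x)=\tfrac12\textnormal{erfc}(x)$ decays like a Gaussian as $x\to+\infty$, hence $w$ itself has Gaussian decay at $\pm\infty$; in particular $w\in L^1(\mathbb{R})\cap L^2(\mathbb{R})$ and every integral below converges absolutely. These two facts license the maneuvres used in the computation: replacing $\int_0^{\infty}w(z)\cos(\xi z)\,\d z$ by $\tfrac12\int_{\mathbb{R}}w(z)\e^{\im\xi z}\,\d z$, and an interchange of order of integration via Fubini's theorem.

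Next I would rewrite $w$ as a convolution of a Gaussian with an indicator. Substituting $y=\tfrac{t}{\sigma}s$ in the definition of $\Phi$ gives
\[
	w(z)=\frac{t}{\sigma\sqrt{\pi}}\int_{z-1}^{z+1}\e^{-\frac{t^2}{\sigma^2}s^2}\,\d s=\frac{t}{\sigma\sqrt{\pi}}\int_{-1}^{1}\e^{-\frac{t^2}{\sigma^2}(z-v)^2}\,\d v,
\]
that is, $w=g\ast\chi_{[-1,1]}$ with $g(z):=\tfrac{t}{\sigma\sqrt{\pi}}\e^{-\frac{t^2}{\sigma^2}z^2}$, a Gaussian of unit mass.

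For the main step I would insert this integral representation into $\tfrac12\int_{\mathbb{R}}w(z)\e^{\im\xi z}\,\d z$, swap the $z$- and $v$-integrations, and carry out the inner shifted Gaussian integral,
\[
	\int_{\mathbb{R}}\e^{-\frac{t^2}{\sigma^2}(z-v)^2}\e^{\im\xi z}\,\d z=\e^{\im\xi v}\,\frac{\sigma\sqrt{\pi}}{t}\,\e^{-\frac{\sigma^2\xi^2}{4t^2}},
\]
which reduces the problem to the elementary integral $\int_{-1}^{1}\e^{\im\xi v}\,\d v=\tfrac{2\sin\xi}{\xi}$. Collecting the constants, this yields
\[
	\int_0^{\infty}w(z)\cos(\xi z)\,\d z=\frac{\sin\xi}{\xi}\,\e^{-\frac{\sigma^2\xi^2}{4t^2}},
\]
and setting $\xi=\pi(a-b)$ the right-hand side is precisely $S_t^{\sigma}(a,b)$ as defined in \eqref{h3}, which establishes \eqref{h8}. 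Equivalently, one may invoke the convolution theorem together with the classical transforms $\widehat{g}(\xi)=\e^{-\sigma^2\xi^2/(4t^2)}$ and $\widehat{\chi_{[-1,1]}}(\xi)=2\sin\xi/\xi$, which gives the same answer. There is no genuine obstacle in this argument: the only points requiring a sentence of justification are the even/odd bookkeeping that converts the half-line cosine transform into half of the full-line Fourier transform, and the Fubini interchange, both of which rest on the Gaussian tail bound for $w$ recorded at the outset.
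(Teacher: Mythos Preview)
Your argument is correct. Both proofs amount to the same Fourier computation but run in opposite directions: the paper starts from the closed form of $S_t^{\sigma}(a,b)$, unfolds the Gaussian factor via $\e^{-x^2/4}=\pi^{-1/2}\int_{\mathbb{R}}\e^{-u^2+\im xu}\,\d u$, splits the resulting $v$-integral by parity of the sine, and then integrates by parts to trade $\sin(\pi(a-b)v)/(\pi(a-b))$ for $\cos(\pi(a-b)v)$ with the $\Phi$-antiderivative appearing as the new weight. You instead start from the right-hand side, identify $w=g\ast\chi_{[-1,1]}$ as a convolution, and read off its Fourier transform as the product of the two elementary transforms. Your route is a touch more conceptual and avoids the integration-by-parts step entirely; the paper's route is more constructive in that it shows how one would \emph{discover} the weight $w_\alpha$ starting from the kernel. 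Either way the analytic content is the same Gaussian integral plus Fubini, and your justification of both is adequate.
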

\begin{proof} Using the Gaussian integral
\begin{equation*}
	\e^{-\frac{1}{4}x^2}=\frac{1}{\sqrt{\pi}}\int_{-\infty}^{\infty}\e^{-u^2\pm\im xu}\,\d u,\ \ \ \ x\in\mathbb{R},
\end{equation*}
we rewrite \eqref{h3} as
\begin{align*}
	S_t^{\sigma}(a,b)=&\,\frac{t}{\sigma\sqrt{\pi}}\int_{-\infty}^{\infty}\exp\left[-\Big(\frac{t}{\sigma}\Big)^2(v-1)^2\right]\frac{\sin\big(\pi(a-b)v\big)}{\pi(a-b)}\,\d v\\
	=&\,\frac{t}{\sigma\sqrt{\pi}}\int_0^{\infty}\left\{\exp\left[-\Big(\frac{t}{\sigma}\Big)^2(v-1)^2\right]-\exp\left[-\Big(\frac{t}{\sigma}\Big)^2(v+1)^2\right]\right\}\frac{\sin\big(\pi(a-b)v\big)}{\pi(a-b)}\,\d v,
\end{align*}
and then integrate by parts,
\begin{equation*}
	S_t^{\sigma}(a,b)=\int_0^{\infty}\left(\frac{t}{\sigma\sqrt{\pi}}\int_v^{\infty}\left\{\exp\left[-\Big(\frac{t}{\sigma}\Big)^2(u-1)^2\right]-\exp\left[-\Big(\frac{t}{\sigma}\Big)^2(u+1)^2\right]\right\}\d u\right)\cos\big(\pi(a-b)v\big)\,\d v.
\end{equation*}
All that remains is to express the $v$-antiderivative in terms of $\Phi$ and what results is precisely \eqref{h8}. Our proof is complete.
\end{proof}
Note that \eqref{h8} is an example of the type
\begin{equation}\label{h9}
	K_{\alpha}(x,y):=\int_0^{\infty}\cos\big(\pi(x-y)z\big)w_{\alpha}(z)\,\d z,\ \ \ \ \ x,y\in(-t,t),\ \ \alpha>0
\end{equation}
with smooth, even weight $w_{\alpha}:\mathbb{R}\rightarrow[0,1)$ such that $\lim_{z\rightarrow\infty}w_{\alpha}(z)=0$ exponentially fast\footnote{The $\alpha$-dependence of the weight is not essential in the derivation of an integrable system. We keep it in place only because of \eqref{h8} which has $\alpha=t/\sigma>0$ built in.}. Staying within the same class of kernels, we now gather the following facts.
\begin{lem}\label{traceprop} The integral operator $K_{\alpha}:L^2(-t,t)\rightarrow L^2(-t,t)$ given by
\begin{equation}\label{h10}
	(K_{\alpha}f)(x):=\int_{-t}^tK_{\alpha}(x,y)f(y)\,\d y,
\end{equation}
on the open integral $(-t,t)\subset\mathbb{R}$ with kernel \eqref{h9} is trace class for any $t,\alpha>0$.
\end{lem}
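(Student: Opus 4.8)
The plan is to mirror the proof of Proposition \ref{alex} and exhibit $K_{\alpha}$ as the product of two Hilbert-Schmidt operators. First I would use the evenness of $w_{\alpha}$ to rewrite \eqref{h9} as
\begin{equation*}
	K_{\alpha}(x,y)=\frac{1}{2}\int_{-\infty}^{\infty}\e^{\im\pi(x-y)z}\,w_{\alpha}(z)\,\d z,\ \ \ x,y\in(-t,t),
\end{equation*}
the integral converging absolutely because the assumed exponential decay of $w_{\alpha}$ forces $w_{\alpha}\in L^1(\mathbb{R})$; in particular $\mathbb{R}\ni x-y\mapsto K_{\alpha}(x,y)$ is continuous and bounded. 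This factored form suggests introducing the integral operator $A_{\alpha}:L^2(-t,t)\to L^2(\mathbb{R})$ with kernel
\begin{equation*}
	A_{\alpha}(z,y):=\frac{1}{\sqrt{2}}\sqrt{w_{\alpha}(z)}\,\e^{-\im\pi yz},\ \ \ z\in\mathbb{R},\ y\in(-t,t).
\end{equation*}

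Next I would check the Hilbert-Schmidt property: since
\begin{equation*}
	\int_{\mathbb{R}}\int_{-t}^t\big|A_{\alpha}(z,y)\big|^2\,\d y\,\d z=t\int_{\mathbb{R}}w_{\alpha}(z)\,\d z<\infty,
\end{equation*}
$A_{\alpha}$ is a Hilbert-Schmidt transformation from $L^2(-t,t)$ to $L^2(\mathbb{R})$, hence a bounded linear transformation, and its Hilbert space adjoint $A_{\alpha}^{\ast}:L^2(\mathbb{R})\to L^2(-t,t)$ is the Hilbert-Schmidt operator with kernel $A_{\alpha}^{\ast}(y,z)=\overline{A_{\alpha}(z,y)}=\frac{1}{\sqrt{2}}\sqrt{w_{\alpha}(z)}\,\e^{\im\pi yz}$. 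It then remains to verify that $A_{\alpha}^{\ast}A_{\alpha}:L^2(-t,t)\to L^2(-t,t)$ acts as $K_{\alpha}$, which follows from Fubini's theorem (again legitimate since $w_{\alpha}\in L^1(\mathbb{R})$) because the composed kernel equals
\begin{equation*}
	\int_{\mathbb{R}}A_{\alpha}^{\ast}(x,z)A_{\alpha}(z,y)\,\d z=\frac{1}{2}\int_{\mathbb{R}}\e^{\im\pi(x-y)z}w_{\alpha}(z)\,\d z=K_{\alpha}(x,y).
\end{equation*}
Consequently $K_{\alpha}=A_{\alpha}^{\ast}A_{\alpha}$ is a product of two Hilbert-Schmidt operators, hence trace class by \cite[Chapter IV, Lemma $7.2$]{GGK}, for any $t,\alpha>0$.

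I do not anticipate any real obstacle here: the argument is essentially the one already employed for Proposition \ref{alex}, and the only points deserving a word of care are the absolute convergence of the defining integral for $K_{\alpha}$ and the use of Fubini in identifying $A_{\alpha}^{\ast}A_{\alpha}$ with $K_{\alpha}$, both of which are immediate from the hypothesis that $w_{\alpha}$ decays exponentially and thus lies in $L^1(\mathbb{R})$. Note that for this particular statement neither the boundedness $w_{\alpha}<1$ nor the smoothness of $w_{\alpha}$ is used; only $w_{\alpha}\geq 0$ and $w_{\alpha}\in L^1(\mathbb{R})$ enter the proof.
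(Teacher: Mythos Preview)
Your proof is correct, but the paper takes a different route. Rather than factoring $K_{\alpha}$ as a product of two Hilbert-Schmidt operators, the paper observes directly that $K_{\alpha}(x,y)$ is a continuous Hermitian positive function on $\mathbb{R}^2$ (the positive-definiteness following from the same Fourier representation you wrote down, since $\sum_{j,k}z_jK_{\alpha}(x_j,x_k)\overline{z_k}=\frac{1}{2}\int_{\mathbb{R}}|\sum_j z_j\e^{\im\pi x_jz}|^2w_{\alpha}(z)\,\d z\geq 0$), checks that the diagonal integral $\int_{-t}^tK_{\alpha}(x,x)\,\d x=2t\int_0^{\infty}w_{\alpha}(z)\,\d z$ is finite, and then invokes \cite[Theorem~2.12]{Sim} to conclude trace class. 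Your factorization approach is the natural companion to Proposition~\ref{alex} and is arguably more self-contained, since it avoids citing the positive-kernel criterion and instead reuses the Hilbert-Schmidt product argument already in the paper; the paper's approach is shorter and makes the positive-definiteness explicit, which is also used in Lemma~\ref{posprop}. Both ultimately rest on the same structural fact, namely the Fourier-integral representation of $K_{\alpha}$ with nonnegative weight $w_{\alpha}\in L^1(\mathbb{R})$.
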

\begin{proof} The kernel function \eqref{h9} is a Hermitian positive function, i.e. we have $K_{\alpha}(x,y)=\overline{K_{\alpha}(y,x)}$ on $\mathbb{R}^2$ and for any $x_1,\ldots,x_n\in(-1,1),z\in\mathbb{C}^n$,
\begin{equation*}
	\sum_{j,k=1}^nz_jK_{\alpha}(x_j,x_k)\overline{z_k}=\frac{1}{2}\int_{-\infty}^{\infty}\left|\sum_{j=1}^nz_j\e^{\im\pi x_jz}\right|^2w_{\alpha}(z)\,\d z\geq 0.
\end{equation*}
Since $\mathbb{R}^2\ni(x,y)\mapsto K_{\alpha}(x,y)$ is also continuous and since
\begin{equation*}
	\int_{-t}^tK_{\alpha}(x,x)\,\d x=2t\int_0^{\infty}w_{\alpha}(z)\,\d z<\infty,
\end{equation*}
it follows by \cite[Theorem $2.12$]{Sim} that the operator $K_{\alpha}$ in \eqref{h10} is trace class on $L^2(-t,t)$ for all $t,\alpha>0$.
\end{proof}
\begin{lem}\label{posprop} For every $t,\alpha>0$, the operator $K_{\alpha}$ in \eqref{h10} satisfies $0\leq K_{\alpha}\leq 1$ and thus in operator norm, $\|K_{\alpha}\|\leq 1$. Additionally, $I-K_{\alpha}$ is invertible on $L^2(-t,t)$ for all $t,\alpha>0$.
\end{lem}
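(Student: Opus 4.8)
The plan is to reduce the whole statement to a single Fourier--Parseval computation. First I would use that $w_{\alpha}$ is even to rewrite the kernel as $K_{\alpha}(x,y)=\frac12\int_{-\infty}^{\infty}\e^{\im\pi(x-y)z}w_{\alpha}(z)\,\d z$; then, since the exponential decay hypothesis guarantees $w_{\alpha}\in L^1(\mathbb{R})$, Fubini gives for every $f\in L^2(-t,t)$
\begin{equation*}
	\langle K_{\alpha}f,f\rangle_{L^2(-t,t)}=\frac12\int_{-\infty}^{\infty}w_{\alpha}(z)\,|F(z)|^2\,\d z,\qquad F(z):=\int_{-t}^{t}\e^{-\im\pi yz}f(y)\,\d y .
\end{equation*}
A change of variables together with Plancherel's theorem (applied to the zero-extension of $f$) shows $\frac12\int_{-\infty}^{\infty}|F(z)|^2\,\d z=\|f\|_{L^2(-t,t)}^2$. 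Because $0\le w_{\alpha}(z)<1$ for all $z\in\mathbb{R}$, this immediately yields $0\le\langle K_{\alpha}f,f\rangle\le\|f\|^2$ for every $f$, hence $0\le K_{\alpha}\le I$ and in particular $\|K_{\alpha}\|\le 1$.

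For the invertibility of $I-K_{\alpha}$ I would invoke the Fredholm alternative: by Lemma \ref{traceprop} the operator $K_{\alpha}$ is trace class, hence compact, and it is self-adjoint, so $I-K_{\alpha}$ is boundedly invertible as soon as it is injective, i.e. as soon as $1$ is not an eigenvalue of $K_{\alpha}$. If $K_{\alpha}f=f$, then the two identities above force $\frac12\int_{-\infty}^{\infty}\big(1-w_{\alpha}(z)\big)|F(z)|^2\,\d z=0$; since $1-w_{\alpha}(z)>0$ for every $z$, this gives $F\equiv 0$, and as $F$ is (up to a constant) the Fourier transform of the zero-extension of $f$, we conclude $f\equiv 0$. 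Thus $I-K_{\alpha}$ is injective and, being a compact perturbation of the identity, invertible on $L^2(-t,t)$ for all $t,\alpha>0$.

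The argument is completely elementary once the kernel is brought into the displayed convolution form, and the only points requiring a little care are the justification of the interchange of integrations (which follows from $w_{\alpha}\in L^1$) and the bookkeeping of the $2\pi$-factors in the Parseval step. I expect no genuine obstacle here: this lemma is a soft preliminary whose only role is to ensure that the Fredholm determinant $\det(I-K_{\alpha})$ is well defined and nonvanishing, thereby setting the stage for the Tracy--Widom-type analysis of $\det(I-K_{\alpha})$ carried out in Section \ref{sec6}.
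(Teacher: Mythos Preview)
Your proposal is correct and follows essentially the same approach as the paper: both compute $\langle K_{\alpha}f,f\rangle$ as a weighted $L^2$-norm of the Fourier transform of the zero-extension of $f$, invoke Plancherel to obtain $0\le K_{\alpha}\le I$, and then use the strict inequality $w_{\alpha}<1$ together with the Fredholm alternative to rule out $1$ as an eigenvalue. The only cosmetic difference is notation ($F$ versus $\check{f_t}$) and that the paper phrases the last step as a contradiction rather than directly as injectivity.
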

\begin{proof} The argument is standard for a Wiener-Hopf type operator such as \eqref{h10}: one has in the inner product $\langle\cdot,\cdot\rangle$ on $L^2(-t,t)$,
\begin{equation*}
	\langle f,K_{\alpha} f\rangle=\frac{1}{2}\int_{-\infty}^{\infty}\left|\int_{-t}^tf(x)\e^{\im\pi xz}\,\d x\right|^2w_{\alpha}(z)\,\d z,\ \ \ \ \ t,\alpha>0,
\end{equation*}
and thus in terms of the Fourier transform $\check{g}(y):=\frac{1}{\sqrt{2\pi}}\int_{-\infty}^{\infty}g(x)\e^{-\im xy}\,\d x$,
\begin{equation}\label{h11}
	\langle f,K_{\alpha}f\rangle=\pi\int_{-\infty}^{\infty}\big|\check{f_t}(-\pi z)\big|^2w_{\alpha}(z)\,\d z,
\end{equation}
where $f_t(x):=f(x)\chi_{(-t,t)}(x)$ with the indicator function $\chi_{(-t,t)}$ on $(-t,t)$. Hence, by Plancherel's theorem,
\begin{equation}\label{h12}
	\langle f,K_{\alpha}f\rangle\stackrel{\eqref{h11}}{\leq}\pi\int_{-\infty}^{\infty}\big|\check{f}_t(-\pi z)\big|^2\d z=\int_{-\infty}^{\infty}\big|f_t(x)\big|^2\d x=\langle f,f\rangle,
\end{equation}
which says $0\leq K_{\alpha}\leq 1$ and where we used that $w_{\alpha}:\mathbb{R}\rightarrow[0,1)$. Moreover, by self-adjointness of $K_{\alpha}$,
\begin{equation*}
	\|K_{\alpha}\|=\sup_{\langle f,f\rangle=1}\big|\langle f,K_{\alpha}f\rangle\big|\stackrel{\eqref{h12}}{\leq} 1,
\end{equation*}
so we are left to verify invertibility of $I-K_{\alpha}$. To that end, note that $K_{\alpha}$ is compact by Lemma \ref{traceprop}, so we assume that there exists $f\in L^2(-t,t)\setminus\{0\}$ such that $K_{\alpha}f=f$. For the same $f$, equality in \eqref{h12} takes place, in particular we must have
\begin{equation*}
	\langle f,K_{\alpha}f\rangle=\pi\int_{-\infty}^{\infty}\big|\check{f}_t(-\pi z)\big|^2\d z\ \ \ \ \stackrel{\eqref{h11}}{\Rightarrow}\ \ \ \ \int_{-\infty}^{\infty}\big(1-w_{\alpha}(z)\big)\big|\check{f}_t(-\pi z)\big|^2\d z=0.
\end{equation*}
But $w_{\alpha}:\mathbb{R}\rightarrow[0,1)$ is smooth, i.e. the last equality necessarily implies
\begin{equation*}
	\check{f}_t(-\pi z)=\frac{1}{\sqrt{2\pi}}\int_{-t}^tf(x)\e^{\im\pi zx}\,\d x=0\ \ \ \forall\,z\in\mathbb{R}.
\end{equation*}
This yields $f=0\in L^2(-t,t)\subset L^1(-t,t)$ by the Fourier uniqueness theorem, a contradiction. Thus the operator $I-K_{\alpha}$ is injective on $L^2(-t,t)$ for all $t,\alpha>0$ and hence invertible by the Fredholm Alternative. Our proof is complete.
\end{proof}
By Lemma \ref{traceprop}, the Fredholm determinant $D(t,\alpha)$ of the operator $K_{\alpha}$ on $L^2(-t,t)$, i.e.
\begin{equation}\label{h13}
	D(t,\alpha):=\prod_{j=1}^{\infty}\Big(1-\omega_j\big(K_{\alpha},(-t,t)\big)\Big),
\end{equation}
is well-defined and non-zero for all $t,\alpha>0$. In \eqref{h13} we let $\omega_j(K_{\alpha},(-t,t))$ denote the non-zero eigenvalues of $K_{\alpha}$ counted according to their multiplicities. Moving ahead, we will now express \eqref{h13} in terms of an integro-differential dynamical system in the main variable $t>0$, and the same representation will be valid for all values $\alpha>0$. To get back to \eqref{h4} we then use \eqref{h8}, i.e. we choose
\begin{equation}\label{h14}
	w_{\alpha}(z)=\Phi\big(\alpha(z+1)\big)-\Phi\big(\alpha(z-1)\big),\ \ \ \ \ z\in\mathbb{R},\ \alpha>0,
\end{equation}
and equate, with \eqref{h14} in place, $G_{\sigma}(t)=D(t,\alpha)\big|_{\alpha=t/\sigma}$. The details are worked out in the next section.
%%%%%%%%%%%%%%%%%%%%%%%%%%%%%%%%%%%%%%%%%%%%%%%%%%%%%%%%%%%%%%

\section{Deriving the integrable system: Proof of Theorem \ref{theo:3}}\label{sec6}
In order to derive an integrable system for $D(t,\alpha)$ we follow, at least for a while, the algebraic steps of Tracy and Widom \cite[Section VI]{TW0}, see also \cite[Section $3.6$]{AGZ}, in their derivation of the Jimbo-Miwa-Mori-Sato equations for the sine kernel determinant \eqref{h9} when $w_{\alpha}$ is the indicator function on $(-1,1)$. We begin by rewriting \eqref{h9} in generalized \textit{integrable form}, cf. \cite{IIKS}. Throughout we treat $t$ and $\alpha$ as independent variables.
\begin{lem} Set $\phi(x):=\frac{1}{\pi}\sin(\pi x)$ and $\psi(x):=\phi'(x)$, then for any $x,y\in(-t,t)$ with $t>0$ and any $\alpha>0$,
\begin{equation}\label{i1}
	K_{\alpha}(x,y)=-\int_0^{\infty}\left[\frac{\phi(xz)\psi(zy)-\psi(xz)\phi(zy)}{x-y}\right]\d w_{\alpha}(z),\ \ \ \ \ \ \d w_{\alpha}(z)\equiv w_{\alpha}'(z)\,\d z.
\end{equation}
\end{lem}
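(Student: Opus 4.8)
The plan is to verify \eqref{i1} by direct computation: start from the right-hand side, simplify the bracketed rank-one type kernel using elementary trigonometry, and then remove the derivative on $w_\alpha$ by an integration by parts in $z$ so as to land back on the definition \eqref{h9}. No Riemann--Hilbert or operator-theoretic input is needed; the identity is purely calculus.

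First I would record that, with $\phi(x)=\frac1\pi\sin(\pi x)$ and $\psi(x)=\phi'(x)=\cos(\pi x)$, the sine addition formula gives, for all $x,y,z\in\mathbb{R}$,
\begin{equation*}
\phi(xz)\psi(zy)-\psi(xz)\phi(zy)=\frac1\pi\Big(\sin(\pi xz)\cos(\pi yz)-\cos(\pi xz)\sin(\pi yz)\Big)=\frac{1}{\pi}\sin\big(\pi(x-y)z\big),
\end{equation*}
so that the bracket appearing in \eqref{i1} equals $\sin(\pi(x-y)z)/(\pi(x-y))$ when $x\neq y$, with removable-singularity value $z$ at $x=y$; in particular it is a smooth function of $(x,y,z)$. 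Since $\dfrac{\d}{\d z}\dfrac{\sin(\pi(x-y)z)}{\pi(x-y)}=\cos\big(\pi(x-y)z\big)$, integrating by parts in the right-hand side of \eqref{i1} yields
\begin{equation*}
-\int_0^{\infty}\frac{\sin(\pi(x-y)z)}{\pi(x-y)}\,\d w_\alpha(z)=-\left[\frac{\sin(\pi(x-y)z)}{\pi(x-y)}\,w_\alpha(z)\right]_{z=0}^{z=\infty}+\int_0^{\infty}\cos\big(\pi(x-y)z\big)\,w_\alpha(z)\,\d z,
\end{equation*}
and the remaining integral is exactly $K_\alpha(x,y)$ by \eqref{h9}.

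All that is left is to dispose of the boundary term: at $z=0$ it vanishes because $\sin 0=0$, and at $z=\infty$ it vanishes because $z\mapsto\sin(\pi(x-y)z)/(\pi(x-y))$ grows at most linearly (it is $z$ when $x=y$) while $w_\alpha(z)\to0$ exponentially fast, which is exactly the standing hypothesis on the weight in \eqref{h9} and is immediate for the concrete choice \eqref{h14}, where $\Phi$ is smooth with a Gaussian-decaying derivative, so that $w_\alpha\in C^1$ with $w_\alpha'\in L^1(0,\infty)$. Consequently the integration by parts is legitimate and the boundary contribution is zero, giving \eqref{i1}. There is no genuine obstacle here; the only two points deserving a line of care are the removable singularity at $x=y$ (handled by the trigonometric identity above) and the exponential decay of $w_\alpha$ needed to kill the boundary term at infinity.
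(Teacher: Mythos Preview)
Your proof is correct and follows essentially the same route as the paper's: integration by parts in $z$ between \eqref{h9} and the $\sin(\pi(x-y)z)/(\pi(x-y))$ form, combined with the sine addition formula to pass to the $\phi,\psi$ representation. The paper runs the argument from \eqref{h9} toward \eqref{i1} while you run it in the reverse direction and are more explicit about the boundary terms and the removable singularity at $x=y$, but the substance is identical.
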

\begin{proof} Simply integrate \eqref{h9} by parts, using that $w_{\alpha}$ is smooth and vanishes at $\infty$,
\begin{equation*}
	K_{\alpha}(x,y)=-\int_0^{\infty}\left[\frac{\sin(\pi(x-y)z)}{\pi(x-y)}\right]\d w_{\alpha}(z).
\end{equation*}
The last identity yields \eqref{i1} after an application of the sine addition formula.
\end{proof}
Moving ahead, we will think of the operator $K_{\alpha}$ as acting, not on $(-t,t)$ but on $\mathbb{R}$ and to have kernel
\begin{equation*}
	K_{\alpha}^t(x,y):=K_{\alpha}(x,y)\chi_J(y),\ \ \ \ \ \ \ J:=[-t,t]\subset\mathbb{R},\ \ \ \ (x,y)\in\mathbb{R}^2.
\end{equation*}
Observe that $K_{\alpha}^t$ is no longer symmetric, unlike $K_{\alpha}$. Since $K_{\alpha}^t$ is trace class, we begin with the identity, 
\begin{equation*}
	\frac{\partial}{\partial t}\ln D(t,\alpha)=-\tr_{L^2(\mathbb{R})}\left((I-K_{\alpha}^t)^{-1}\frac{\partial K_{\alpha}^t}{\partial t}\right),\ \ \ \ \ \frac{\partial K_{\alpha}^t}{\partial t}\tw K_{\alpha}^t(x,t)\delta(y-t)+K_{\alpha}^t(x,-t)\delta(y+t),
\end{equation*}
where ``$\tw$'' means ``has kernel'' and conclude from it
\begin{equation}\label{i2}
	\frac{\partial}{\partial t}\ln D(t,\alpha)=-R_{\alpha}^t(t,t)-R_{\alpha}^t(-t,-t)=-2R_{\alpha}^t(t,t),
\end{equation}
by using the resolvent kernel $R_{\alpha}^t(x,y)$ of $K_{\alpha}^t$, i.e. $R_{\alpha}^t=(I-K_{\alpha}^t)^{-1}K_{\alpha}^t\tw R_{\alpha}^t(x,y)$ and where the second equality in \eqref{i2} follows from the fact that the kernel of $K_{\alpha}$ is symmetric and even for $x,y\in[-t,t]$, see \eqref{h9}. Note that $R_{\alpha}^t(x,y)$ is smooth in $x$, but discontinuous at $y=t$, so $R_{\alpha}^t(t,t)$ really means
\begin{equation*}
	\lim_{y\uparrow t}R_{\alpha}^t(t,y).
\end{equation*}
Following the standard moves in the game of \cite{TW0} we now assemble formul\ae\,for $R_{\alpha}^t(x,y)$ and closely related quantities.
\subsection{Formula for $R_{\alpha}^t(x,y)$} Let $M$ denote multiplication by the independent variable. Then for the commutator
\begin{equation}\label{i3}
	\big[M,K_{\alpha}^t\big]\tw x K_{\alpha}^t(x,y)-K_{\alpha}^t(x,y)y\stackrel{\eqref{i1}}{=}-\int_0^{\infty}\big(\phi(xz)\psi(zy)-\psi(xz)\phi(zy)\big)\chi_J(y)\,\d w_{\alpha}(z),
\end{equation}
and so, using the dilation $\tau_z$ with $(\tau_zf)(x):=f(xz)$, we obtain the commutator kernel identity
\begin{align}
	\big[M,&\,(I-K_{\alpha}^t)^{-1}\big]=(I-K_{\alpha}^t)^{-1}\big[M,K_{\alpha}^t\big](I-K_{\alpha}^t)^{-1}\nonumber\\
	&\,\tw -\int_0^{\infty}\Big[Q_{\alpha}(x;z,t)\big((I-K_{\alpha}^{t\ast})^{-1}(\tau_z\psi)\chi_J\big)(y)-P_{\alpha}(x;z,t)\big((I-K_{\alpha}^{t\ast})^{-1}(\tau_z\phi)\chi_J\big)(y)\Big]\d w_{\alpha}(z),\label{i4}
\end{align}
where $K_{\alpha}^{t\ast}$ is the real adjoint of $K_{\alpha}^t$ and we use the below generalizations of \cite[$(53),(54)$]{TW0},
\begin{equation*}
	Q_{\alpha}(x;z,t):=\big((I-K_{\alpha}^t)^{-1}\tau_z\phi\big)(x),\ \ \ \ \ \ \ P_{\alpha}(x;z,t):=\big((I-K_{\alpha}^t)^{-1}\tau_z\psi\big)(x),
\end{equation*}
defined for any $x\in\mathbb{R}$ and $z,t,\alpha>0$. Next, with $(I-K_{\alpha}^t)^{-1}\tw\delta(x-y)+R_{\alpha}^t(x,y)$, we also have 
\begin{equation*}
	[M,(I-K_{\alpha}^t)^{-1}]\tw (x-y)R_{\alpha}^t(x,y)\ \ \textnormal{on}\  \mathbb{R}^2,
\end{equation*}
and so after comparison with \eqref{i4},
\begin{equation}\label{i5}
	R_{\alpha}^t(x,y)=-\int_0^{\infty}\left[\frac{Q_{\alpha}(x;z,t)P_{\alpha}(y;z,t)-P_{\alpha}(x;z,t)Q_{\alpha}(y;z,t)}{x-y}\right]\d w_{\alpha}(z),\ \ \ x\in[-t,t],\ y\in(-t,t),
\end{equation}
where we used the identities, valid for any $z,t>0$,
\begin{equation*}
	\begin{cases}(I-K_{\alpha}^{t\ast})^{-1}(\tau_z\psi)\chi_J=(I-K_{\alpha}^t)^{-1}\tau_z\psi=P_{\alpha}\\
	(I-K_{\alpha}^{t\ast})^{-1}(\tau_z\phi)\chi_J=(I-K_{\alpha}^t)^{-1}\tau_z\phi=Q_{\alpha}
	\end{cases}\ \ \ \ \ \textnormal{on}\ \ (-t,t),
\end{equation*}
and which are based on the symmetry of $K_{\alpha}$, i.e. the identity $K_{\alpha}=K_{\alpha}^{\ast}$. The special case $y\rightarrow x\in(-t,t)$ in \eqref{i5} reads
\begin{equation*}
	R_{\alpha}^t(x,x)=-\int_0^{\infty}\Big[Q_{\alpha}'(x;z,t)P_{\alpha}(x;z,t)-P_{\alpha}'(x;z,t)Q_{\alpha}(x;z,t)\Big]\d w_{\alpha}(z),\ \ \ x\in(-t,t),
\end{equation*}
with $(')$ as $x$-derivative (the first variable in $Q_{\alpha},P_{\alpha}$) and it yields the below generalization of \cite[Lemma $4$]{TW0}.
\begin{lem} For any $t,\alpha>0$,
\begin{equation}\label{i6}
	R_{\alpha}^t(t,t)=\lim_{y\uparrow t} R_{\alpha}^t(t,y)=-\int_0^{\infty}\Big[Q_{\alpha}'(t;z,t)p_{\alpha}(t,z)-P_{\alpha}'(t;z,t)q_{\alpha}(t,z)\Big]\d w_{\alpha}(z)
\end{equation}
with the quantities, generalizing \cite[$(6.10)$]{TW0},
\begin{equation*}
	p_{\alpha}(t,z):=P_{\alpha}(t;z,t),\ \ \ \ \ \ \ \ q_{\alpha}(t,z):=Q_{\alpha}(t;z,t),\ \ \ \ \ z,t,\alpha>0.
\end{equation*}
\end{lem}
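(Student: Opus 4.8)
The plan is to extract \eqref{i6} directly from the resolvent representation \eqref{i5}, by specializing $x=t$ and performing the one-sided limit $y\uparrow t$. First I would set $x=t$ in \eqref{i5}, which is legitimate since that identity was derived for $x\in[-t,t]$, $y\in(-t,t)$, and abbreviate the numerator appearing there as
\begin{equation*}
	N_\alpha(y;z,t):=Q_\alpha(t;z,t)P_\alpha(y;z,t)-P_\alpha(t;z,t)Q_\alpha(y;z,t),
\end{equation*}
so that $R_\alpha^t(t,y)=-\int_0^\infty N_\alpha(y;z,t)(t-y)^{-1}\,\d w_\alpha(z)$ for $y\in(-t,t)$. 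The key observation is that $N_\alpha(t;z,t)=0$, so the factor $(t-y)^{-1}$ produces a removable singularity: writing $N_\alpha(y;z,t)(t-y)^{-1}=-\bigl(N_\alpha(y;z,t)-N_\alpha(t;z,t)\bigr)(y-t)^{-1}$, the quotient converges, as $y\uparrow t$, to $-\partial_y N_\alpha(t;z,t)$. Differentiating in $y$, evaluating at $y=t$, using $\partial_y P_\alpha(y;z,t)\big|_{y=t}=P_\alpha'(t;z,t)$ and the analogous identity for $Q_\alpha$, and substituting the definitions $p_\alpha(t,z)=P_\alpha(t;z,t)$, $q_\alpha(t,z)=Q_\alpha(t;z,t)$, one finds that this limit equals $Q_\alpha'(t;z,t)p_\alpha(t,z)-P_\alpha'(t;z,t)q_\alpha(t,z)$, which is precisely the integrand in \eqref{i6}; the bookkeeping of signs works out, and I would also recall that the second equality in \eqref{i2} already reduces everything to $R_\alpha^t(t,t)$ alone, by evenness of the kernel \eqref{h9}, so $R_\alpha^t(-t,-t)$ needs no separate treatment.

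Two routine justifications accompany this computation. First, $x\mapsto Q_\alpha(x;z,t)$ and $x\mapsto P_\alpha(x;z,t)$ are $C^1$ (indeed $C^\infty$) on $(-t,t)$ and extend smoothly up to the endpoint $x=t$: this holds because $\phi(x)=\tfrac1\pi\sin(\pi x)$ and $\psi=\phi'$ are entire, hence $\tau_z\phi,\tau_z\psi$ are smooth, and $(I-K_\alpha^t)^{-1}=I+R_\alpha^t$ sends smooth functions to smooth functions, the resolvent kernel of the trace class, smooth-kernel operator $K_\alpha^t$ (Lemma \ref{traceprop}) being smooth in its first slot. Second, the interchange of $\lim_{y\uparrow t}$ with $\int_0^\infty\cdots\,\d w_\alpha(z)$ is by dominated convergence: Lemma \ref{posprop} gives a bound on $(I-K_\alpha)^{-1}$, so the estimates $|\phi(xz)|\le\tfrac1\pi$, $|\psi(xz)|\le1$ yield bounds on $Q_\alpha(x;z,t),P_\alpha(x;z,t)$ uniform for $x\in[-t,t]$ and $z$ in compacts, while differentiating the resolvent identity $Q_\alpha(x;z,t)=\phi(xz)+\int_{-t}^tR_\alpha^t(x,y)\phi(yz)\,\d y$ (and the analogue for $P_\alpha$) in $x$ shows $Q_\alpha'(t;z,t)$ and $P_\alpha'(t;z,t)$ grow at most linearly in $z$; the exponential decay of $w_\alpha$, hence of $\d w_\alpha$, then furnishes an integrable majorant valid for all $y$ in a neighborhood of $t$.

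The only point requiring genuine care is this last domination step, namely controlling the $z$-growth of the $x$-derivatives $Q_\alpha'(t;z,t)$, $P_\alpha'(t;z,t)$ against the decay of $\d w_\alpha$; once the a priori bound on $(I-K_\alpha)^{-1}$ from Lemma \ref{posprop} is in hand this is standard, and the remainder of the argument is purely the algebraic manipulation of the Tracy--Widom scheme. Thus the proof is short: specialize \eqref{i5} to $x=t$, resolve the $0/0$ limit by a first-order Taylor expansion of $N_\alpha(\,\cdot\,;z,t)$ at $y=t$, pass the limit under the $z$-integral by dominated convergence, and recognize the result as \eqref{i6}.
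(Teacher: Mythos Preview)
Your proposal is correct and follows essentially the same approach as the paper: both derive \eqref{i6} by specializing the resolvent formula \eqref{i5} to the diagonal via a difference-quotient limit, the paper taking $y\to x$ for $x\in(-t,t)$ and then letting $x\to t$, while you set $x=t$ first and send $y\uparrow t$. Your version is in fact more thorough, since you spell out the smoothness of $Q_\alpha,P_\alpha$ in the first slot and justify the interchange of limit and $z$-integral by dominated convergence, points the paper leaves implicit.
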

Because of \eqref{i2} and \eqref{i6} we now proceed with the computation of the derivatives $Q_{\alpha}'(x;z,t)$ and $P_{\alpha}'(x;z,t)$, adapting \cite[$(6.12)-(6.18)$]{TW0} to our needs in the course of it. First, with $D$ as differentiation with respect to the first variable, for any $x\in\mathbb{R}$ and $z,t,\alpha>0$,
\begin{align}
	Q_{\alpha}'(x;z,t)=&\,\,\big(D(I-K_{\alpha}^t)^{-1}\tau_z\phi\big)(x)=\big((I-K_{\alpha}^t)^{-1}D\tau_z\phi\big)(x)+\big([D,(I-K_{\alpha}^t)^{-1}]\tau_z\phi\big)(x)\nonumber\\
	=&\,\,zP_{\alpha}(x;z,t)+\big([D,(I-K_{\alpha}^t)^{-1}]\tau_z\phi\big)(x).\label{i7}
\end{align}
having used $D\tau_z\phi=z\tau_z\psi$ in the third equality. Next we compute the commutator $[D,(I-K_{\alpha}^t)^{-1}]$.
\begin{prop} For any $x,y\in\mathbb{R}$ and $t,\alpha>0$,
\begin{equation}\label{i8}
	[D,(I-K_{\alpha}^t)^{-1}]\tw-\big(R_{\alpha}^t(x,t)\rho(t,y)-R_{\alpha}^t(x,-t)\rho(-t,y)\big),
\end{equation}
where $\rho(x,y)=\delta(x-y)+R_{\alpha}^t(x,y)$ is the distributional kernel of $(I-K_{\alpha}^t)^{-1}$.
\end{prop}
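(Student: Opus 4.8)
The plan is to view $K_{\alpha}$, once extended to a convolution operator on all of $\mathbb{R}$, and to exploit the factorization $K_{\alpha}^t=K_{\alpha}M_{\chi_J}$, where $M_{\chi_J}$ denotes multiplication by the indicator $\chi_J$ of $J=[-t,t]$. One first computes $[D,K_{\alpha}^t]$, which turns out to be of rank-structured (boundary-supported) type, and then conjugates by $(I-K_{\alpha}^t)^{-1}$.

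\emph{Step 1.} I would first record that the kernel $K_{\alpha}(x,y)$ in \eqref{i1} depends on $x,y$ only through $x-y$: indeed $\phi(xz)\psi(zy)-\psi(xz)\phi(zy)=\tfrac{1}{\pi}\sin\big(\pi z(x-y)\big)$, so by \eqref{h9} it equals $k(x-y)$ with $k(u):=\int_0^{\infty}\cos(\pi uz)\,w_{\alpha}(z)\,\d z$, a smooth function with all derivatives bounded since $w_{\alpha}$ decays exponentially. Viewing $K_{\alpha}$ as the associated convolution operator on $L^2(\mathbb{R})$ gives $[D,K_{\alpha}]=0$, and since $K_{\alpha}^t=K_{\alpha}M_{\chi_J}$ one obtains $[D,K_{\alpha}^t]=K_{\alpha}[D,M_{\chi_J}]$. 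A direct computation gives $[D,M_{\chi_J}]f(x)=\chi_J'(x)f(x)=\delta(x+t)f(-t)-\delta(x-t)f(t)$, i.e. $[D,M_{\chi_J}]\tw\delta(x+t)\delta(y+t)-\delta(x-t)\delta(y-t)$, and hence
\begin{equation*}
	[D,K_{\alpha}^t]\tw K_{\alpha}(x,-t)\,\delta(y+t)-K_{\alpha}(x,t)\,\delta(y-t).
\end{equation*}
The same identity also drops out of integration by parts together with $\partial_xK_{\alpha}=-\partial_yK_{\alpha}$.

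\emph{Step 2.} Next I would use the algebraic identity $[D,(I-K_{\alpha}^t)^{-1}]=(I-K_{\alpha}^t)^{-1}[D,K_{\alpha}^t](I-K_{\alpha}^t)^{-1}$, which is legitimate because $I-K_{\alpha}^t$ is invertible by Lemma \ref{posprop} (multiply both sides by $I-K_{\alpha}^t$ and use $[D,I-K_{\alpha}^t]=-[D,K_{\alpha}^t]$). Substituting the kernel of Step 1, recalling that the kernel of $(I-K_{\alpha}^t)^{-1}$ is $\rho(x,y)=\delta(x-y)+R_{\alpha}^t(x,y)$ and that $(I-K_{\alpha}^t)^{-1}K_{\alpha}^t=R_{\alpha}^t$, I would identify the left factor $\big((I-K_{\alpha}^t)^{-1}K_{\alpha}(\cdot,\pm t)\big)(x)=R_{\alpha}^t(x,\pm t)$ (read as $\lim_{y\to\pm t}R_{\alpha}^t(x,y)$, using continuity of $K_{\alpha}$) and the right factor as $\rho(\pm t,y)$, which yields
\begin{equation*}
	[D,(I-K_{\alpha}^t)^{-1}]\tw R_{\alpha}^t(x,-t)\rho(-t,y)-R_{\alpha}^t(x,t)\rho(t,y),
\end{equation*}
that is, \eqref{i8}.

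The step requiring care is the bookkeeping with the distributional kernels near $y=\pm t$: as already noted, $R_{\alpha}^t(x,y)$ is smooth in $x$ but discontinuous at $y=\pm t$ (it carries a $\chi_J$ in its second slot), so the boundary values $R_{\alpha}^t(x,\pm t)$ must consistently be interpreted as interior limits, and one must check that the conjugation in Step 2 places the surviving $\delta$'s so that it is precisely $\rho(\pm t,y)$ — not some other boundary datum — that appears on the right. Once these conventions are fixed the computation is routine, exactly as in \cite[\S VI]{TW0}; the unboundedness of $D$ causes no genuine difficulty, every manipulation being read off on the dense domain of smooth compactly supported functions, equivalently as an identity of (smooth-in-$x$, distributional-in-$y$) kernels.
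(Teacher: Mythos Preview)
Your proof is correct and follows essentially the same route as the paper: both compute $[D,K_{\alpha}^t]$ (you via the factorization $K_{\alpha}^t=K_{\alpha}M_{\chi_J}$ and $[D,K_{\alpha}]=0$, the paper via integration by parts and $\partial_xK_{\alpha}+\partial_yK_{\alpha}=0$, which you rightly note are equivalent) and then conjugate by $(I-K_{\alpha}^t)^{-1}$ to obtain \eqref{i8}. Your extra care about the interior-limit interpretation of $R_{\alpha}^t(x,\pm t)$ is appropriate and consistent with the paper's conventions.
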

\begin{proof} We use $[D,(I-K_{\alpha}^t)^{-1}]=(I-K_{\alpha}^t)^{-1}[D,K_{\alpha}^t](I-K_{\alpha}^t)^{-1}$ and integration by parts,
\begin{eqnarray*}
	\big[D,K_{\alpha}^t\big]\!\!\!\!&\tw&\!\!\!\!\left(\frac{\partial}{\partial x}K_{\alpha}(x,y)+\frac{\partial}{\partial y}K_{\alpha}(x,y)\right)\chi_J(y)-\big(K_{\alpha}^t(x,t)\delta(y-t)-K_{\alpha}^t(x,-t)\delta(y+t)\big)\\
	\!\!\!\!&\stackrel{\eqref{h9}}{=}&\!\!\!\!-\big(K_{\alpha}^t(x,t)\delta(y-t)-K_{\alpha}^t(x,-t)\delta(y+t)\big),
\end{eqnarray*}
given that $\frac{\partial}{\partial x}K_{\alpha}(x,y)+\frac{\partial}{\partial y}K_{\alpha}(x,y)=0$. The above yields
\begin{equation*}
	[D,(I-K_{\alpha}^t)^{-1}]\tw-\big(R_{\alpha}^t(x,t)\rho(t,y)-R_{\alpha}^t(x,-t)\rho(-t,y)\big),
\end{equation*}
and thus verifies the claim \eqref{i8}.
\end{proof}
Inserting \eqref{i8} into \eqref{i7} we find for all $x\in\mathbb{R}$ and $z,t,\alpha>0$,
\begin{eqnarray}
	Q_{\alpha}'(x;z,t)\!\!&=&\!\!zP_{\alpha}(x;z,t)-R_{\alpha}^t(x,t)Q_{\alpha}(t;z,t)+R_{\alpha}^t(x,-t)Q_{\alpha}(-t;z,t)\nonumber\\
	&=&\!\!zP_{\alpha}(x;z,t)-\big(R_{\alpha}^t(x,t)+R_{\alpha}^t(x,-t)\big)q_{\alpha}(t,z),\label{i9}
\end{eqnarray}
since $Q_{\alpha}(-t;z,t)=-Q_{\alpha}(t;z,t)=-q_{\alpha}(t,z)$ by symmetry and evenness of $K_{\alpha}(x,y)$, see \eqref{h9}. Next,
\begin{eqnarray}
	P_{\alpha}'(x;z,t)\!\!\!\!&=&\!\!\!\!\big(D(I-K_{\alpha}^t)^{-1}\tau_z\psi\big)(x)=\big((I-K_{\alpha}^t)^{-1}D\tau_z\psi\big)(x)+\big([D,(I-K_{\alpha}^t)^{-1}]\tau_z\psi\big)(x)\nonumber\\
	&\stackrel{\eqref{i8}}{=}&\!\!\!\!-z\pi^2Q_{\alpha}(x;z,t)-\big(R_{\alpha}^t(x,t)-R_{\alpha}^t(x,-t)\big)p_{\alpha}(t,z),\label{i10}
\end{eqnarray}
valid for any $x\in\mathbb{R}$ and $z,t,\alpha>0$, where we used $D\tau_z\psi=-\pi^2z\tau_z\phi$ and $P_{\alpha}(-t;z,t)=P_{\alpha}(t;z,t)=p_{\alpha}(t,z)$, again by symmetry and evenness of $K_{\alpha}(x,y)$. Summarizing our immediate results, from \eqref{i9} and \eqref{i10} for $x=t$, which are needed in \eqref{i2} and \eqref{i6}:
\begin{prop} For any $z,t,\alpha>0$,
\begin{eqnarray*}
	Q_{\alpha}'(t;z,t)\!\!\!&=&\ \ \ \,zp_{\alpha}(t,z)-\big(R_{\alpha}^t(t,t)+R_{\alpha}^t(t,-t)\big)q_{\alpha}(t,z),\\
	P_{\alpha}'(t;z,t)\!\!\!&=&\!\!\!-z\pi^2q_{\alpha}(t,z)-\big(R_{\alpha}^t(t,t)-R_{\alpha}^t(t,-t)\big)p_{\alpha}(t,z),
\end{eqnarray*}
so we have in particular back in \eqref{i6}, for any $t,\alpha>0$,
\begin{equation}\label{i11}
	R_{\alpha}^t(t,t)=-\int_0^{\infty}\Big[zp_{\alpha}^2(t,z)+z\pi^2q_{\alpha}^2(t,z)-2R_{\alpha}^t(t,-t)q_{\alpha}(t,z)p_{\alpha}(t,z)\Big]\d w_{\alpha}(z),
\end{equation}
which generalizes \cite[$(6.25)$]{TW0} to our setup \eqref{h9}. Moreover, from \eqref{i5} and with $Q_{\alpha}(-t;z,t)=-q_{\alpha}(t,z)$ as well as $P_{\alpha}(-t;z,t)=p_{\alpha}(t,z)$,
\begin{equation}\label{i12}
	tR_{\alpha}^t(t,-t)=-\int_0^{\infty}q_{\alpha}(t,z)p_{\alpha}(t,z)\,\d w_{\alpha}(z),\ \ \ \ t,\alpha>0.
\end{equation}
%which aligns nicely with \cite[$(6.23)$]{TW0}.
\end{prop}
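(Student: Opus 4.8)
The plan is to obtain the two derivative formulae by simply specializing the already-established identities \eqref{i9} and \eqref{i10} to $x=t$, and then to feed them, together with \eqref{i5}, into the trace identity \eqref{i6}; the entire argument is a chain of substitutions whose only genuinely non-mechanical ingredient is the parity of the auxiliary functions $Q_\alpha$ and $P_\alpha$.

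First I would record that, by the definitions $q_\alpha(t,z)=Q_\alpha(t;z,t)$ and $p_\alpha(t,z)=P_\alpha(t;z,t)$, setting $x=t$ in \eqref{i9} and \eqref{i10} yields at once
\begin{align*}
Q_\alpha'(t;z,t)&=zp_\alpha(t,z)-\big(R_\alpha^t(t,t)+R_\alpha^t(t,-t)\big)q_\alpha(t,z),\\
P_\alpha'(t;z,t)&=-z\pi^2q_\alpha(t,z)-\big(R_\alpha^t(t,t)-R_\alpha^t(t,-t)\big)p_\alpha(t,z),
\end{align*}
which are the first two assertions. The signs here rely on $Q_\alpha(-t;z,t)=-q_\alpha(t,z)$ and $P_\alpha(-t;z,t)=p_\alpha(t,z)$: this parity statement holds because the kernel \eqref{h9} depends on $x-y$ only, so $K_\alpha$ commutes with the reflection $f(x)\mapsto f(-x)$ on $L^2(-t,t)$, while $\tau_z\phi$ is odd and $\tau_z\psi$ is even in their argument; hence $Q_\alpha(\cdot;z,t)=(I-K_\alpha^t)^{-1}\tau_z\phi$ is odd and $P_\alpha(\cdot;z,t)=(I-K_\alpha^t)^{-1}\tau_z\psi$ is even on $(-t,t)$.

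Next I would substitute these two expressions into the integrand of \eqref{i6}, i.e. into $Q_\alpha'(t;z,t)p_\alpha(t,z)-P_\alpha'(t;z,t)q_\alpha(t,z)$: the two contributions proportional to $R_\alpha^t(t,t)$ cancel, the two contributions proportional to $R_\alpha^t(t,-t)$ combine into $-2R_\alpha^t(t,-t)q_\alpha p_\alpha$, and what remains is $zp_\alpha^2(t,z)+z\pi^2q_\alpha^2(t,z)-2R_\alpha^t(t,-t)q_\alpha(t,z)p_\alpha(t,z)$; integrating against $\d w_\alpha$ gives \eqref{i11}. Finally, for \eqref{i12} I would evaluate \eqref{i5} at $x=t$ and $y\to-t$ (a one-sided boundary limit, legitimate exactly as in the interpretation of $R_\alpha^t(t,t)$ in \eqref{i6}, since $R_\alpha^t(t,\cdot)$ has one-sided limits at $\pm t$ by the smoothness of $K_\alpha$): the numerator equals $q_\alpha(t,z)p_\alpha(t,z)-p_\alpha(t,z)\,(-q_\alpha(t,z))=2q_\alpha(t,z)p_\alpha(t,z)$ and the denominator equals $t-(-t)=2t$, whence $R_\alpha^t(t,-t)=-\frac1t\int_0^\infty q_\alpha(t,z)p_\alpha(t,z)\,\d w_\alpha(z)$, which is \eqref{i12}.

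I do not expect a substantive obstacle: granted \eqref{i5}, \eqref{i6}, \eqref{i9} and \eqref{i10}, the statement is purely algebraic. The only points worth a line of justification—rather than being entirely mechanical—are the parity of $Q_\alpha$ and $P_\alpha$ invoked above and the existence of the requisite one-sided limits of the resolvent kernel at $y=\pm t$; both follow from the structure of $K_\alpha$ recorded in Lemmas \ref{traceprop} and \ref{posprop} and the translation invariance of \eqref{h9}.
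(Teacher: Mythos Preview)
Your proposal is correct and follows essentially the same path as the paper: the paper introduces the proposition with the phrase ``Summarizing our immediate results, from \eqref{i9} and \eqref{i10} for $x=t$'' and then states the formulas without further argument, relying (as you do) on the parity facts $Q_\alpha(-t;z,t)=-q_\alpha(t,z)$, $P_\alpha(-t;z,t)=p_\alpha(t,z)$ already recorded after \eqref{i9}--\eqref{i10}, on substitution into \eqref{i6} for \eqref{i11}, and on specializing \eqref{i5} at $x=t$, $y\to -t$ for \eqref{i12}. Your added remarks on why the parity holds and why the one-sided boundary limit is legitimate are welcome elaborations but do not depart from the paper's route.
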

At this point of our calculation we derive first order equations for $q_{\alpha}$ and $p_{\alpha}$.
\subsection{Equations for $q_{\alpha}$ and $p_{\alpha}$} By chain rule
\begin{equation*}
	\frac{\partial}{\partial t}q_{\alpha}(t,z)=\left(\frac{\partial}{\partial x}+\frac{\partial}{\partial t}\right)Q_{\alpha}(x;z,t)\bigg|_{x=t},\ \ \ \ \ \ \frac{\partial}{\partial t}p_{\alpha}(t,z)=\left(\frac{\partial}{\partial x}+\frac{\partial}{\partial t}\right)P_{\alpha}(x;z,t)\bigg|_{x=t},
\end{equation*}
so we can use the $x$-derivatives in \eqref{i9} and \eqref{i10} combined with
\begin{align*}
	\frac{\partial}{\partial t}Q_{\alpha}(x;z,t)=&\,\frac{\partial}{\partial t}\big((I-K_{\alpha}^t)^{-1}\tau_z\phi\big)(x)=\Big((I-K_{\alpha}^t)^{-1}\frac{\partial K_{\alpha}^t}{\partial t}(I-K_{\alpha}^t)^{-1}\tau_z\phi\Big)(x),\\
	\frac{\partial}{\partial t}P_{\alpha}(x;z,t)=&\,\frac{\partial}{\partial t}\big((I-K_{\alpha}^t)^{-1}\tau_z\psi\big)(x)=\Big((I-K_{\alpha}^t)^{-1}\frac{\partial K_{\alpha}^t}{\partial t}(I-K_{\alpha}^t)^{-1}\tau_z\psi\Big)(x),
\end{align*}
and $\frac{\partial}{\partial t}K_{\alpha}^t\tw K_{\alpha}^t(x,t)\delta(y-t)+K_{\alpha}^t(x,-t)\delta(y+t)$. What results is summarized below.
\begin{prop} For any $z,t,\alpha>0$,
\begin{eqnarray}
	\frac{\partial}{\partial t}q_{\alpha}(t,z)\!\!\!&=&\ \ \ \,zp_{\alpha}(t,z)-2R_{\alpha}^t(t,-t)q_{\alpha}(t,z),\label{i13}\\
	\frac{\partial}{\partial t}p_{\alpha}(t,z)\!\!\!&=&\!\!\!-z\pi^2q_{\alpha}(t,z)+2R_{\alpha}^t(t,-t)p_{\alpha}(t,z).\label{i14}
\end{eqnarray}
which we can compare with \cite[$(6.24)$]{TW0}.
\end{prop}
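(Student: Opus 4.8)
The plan is to obtain \eqref{i13} and \eqref{i14} by the usual chain-rule device: since $q_{\alpha}(t,z)=Q_{\alpha}(t;z,t)$ and $p_{\alpha}(t,z)=P_{\alpha}(t;z,t)$ are restrictions to the diagonal of functions of two variables, one has
\[
	\frac{\partial}{\partial t}q_{\alpha}(t,z)=\left(\frac{\partial}{\partial x}+\frac{\partial}{\partial t}\right)Q_{\alpha}(x;z,t)\bigg|_{x=t},\qquad
	\frac{\partial}{\partial t}p_{\alpha}(t,z)=\left(\frac{\partial}{\partial x}+\frac{\partial}{\partial t}\right)P_{\alpha}(x;z,t)\bigg|_{x=t}.
\]
The $x$-derivatives at $x=t$ are already in hand from \eqref{i9} and \eqref{i10}: they equal $zp_{\alpha}(t,z)-\big(R_{\alpha}^t(t,t)+R_{\alpha}^t(t,-t)\big)q_{\alpha}(t,z)$ and $-z\pi^2q_{\alpha}(t,z)-\big(R_{\alpha}^t(t,t)-R_{\alpha}^t(t,-t)\big)p_{\alpha}(t,z)$ respectively, so the only new input needed is the $t$-derivative at fixed $x$.

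To compute $\frac{\partial}{\partial t}Q_{\alpha}(x;z,t)$ I would differentiate the resolvent: using the operator identity $\frac{\partial}{\partial t}(I-K_{\alpha}^t)^{-1}=(I-K_{\alpha}^t)^{-1}\big(\frac{\partial}{\partial t}K_{\alpha}^t\big)(I-K_{\alpha}^t)^{-1}$ together with $\frac{\partial}{\partial t}K_{\alpha}^t\tw K_{\alpha}^t(x,t)\delta(y-t)+K_{\alpha}^t(x,-t)\delta(y+t)$, which follows from $J=[-t,t]$. Applying this to $\tau_z\phi$ and collapsing the two delta functions against $Q_{\alpha}(\cdot;z,t)=(I-K_{\alpha}^t)^{-1}\tau_z\phi$ turns the inner function into $x\mapsto K_{\alpha}^t(x,t)Q_{\alpha}(t;z,t)+K_{\alpha}^t(x,-t)Q_{\alpha}(-t;z,t)$; the evenness and symmetry of $K_{\alpha}(x,y)$, see \eqref{h9}, give $Q_{\alpha}(-t;z,t)=-q_{\alpha}(t,z)$, and a final application of $(I-K_{\alpha}^t)^{-1}$, using $(I-K_{\alpha}^t)^{-1}K_{\alpha}^t(\cdot,s)\tw R_{\alpha}^t(\cdot,s)$, yields
\[
	\frac{\partial}{\partial t}Q_{\alpha}(x;z,t)=\big(R_{\alpha}^t(x,t)-R_{\alpha}^t(x,-t)\big)q_{\alpha}(t,z).
\]
The same computation with $\psi$ in place of $\phi$, now using $P_{\alpha}(-t;z,t)=p_{\alpha}(t,z)$, produces $\frac{\partial}{\partial t}P_{\alpha}(x;z,t)=\big(R_{\alpha}^t(x,t)+R_{\alpha}^t(x,-t)\big)p_{\alpha}(t,z)$.

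Evaluating these at $x=t$ and adding them to the $x$-derivatives, the $R_{\alpha}^t(t,t)$ contributions cancel pairwise: one gets $zp_{\alpha}(t,z)-\big(R_{\alpha}^t(t,t)+R_{\alpha}^t(t,-t)\big)q_{\alpha}(t,z)+\big(R_{\alpha}^t(t,t)-R_{\alpha}^t(t,-t)\big)q_{\alpha}(t,z)=zp_{\alpha}(t,z)-2R_{\alpha}^t(t,-t)q_{\alpha}(t,z)$, which is \eqref{i13}, and likewise $-z\pi^2q_{\alpha}(t,z)-\big(R_{\alpha}^t(t,t)-R_{\alpha}^t(t,-t)\big)p_{\alpha}(t,z)+\big(R_{\alpha}^t(t,t)+R_{\alpha}^t(t,-t)\big)p_{\alpha}(t,z)=-z\pi^2q_{\alpha}(t,z)+2R_{\alpha}^t(t,-t)p_{\alpha}(t,z)$, which is \eqref{i14}. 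The step requiring the most care is the bookkeeping of the two boundary delta terms at $y=\pm t$ and the parity signs of $Q_{\alpha}$ and $P_{\alpha}$ under $x\mapsto-x$; once these are fixed the argument is purely formal, mirroring \cite[Section VI]{TW0}. All manipulations are legitimate because $K_{\alpha}^t$ is trace class by Lemma \ref{traceprop} and $I-K_{\alpha}^t$ is boundedly invertible by Lemma \ref{posprop}, so the resolvent and its $t$-derivative are well-defined and depend smoothly on $t$.
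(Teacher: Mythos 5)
Your proposal is correct and follows essentially the same route as the paper: the chain-rule decomposition $\partial_t = (\partial_x+\partial_t)|_{x=t}$, the $x$-derivatives taken from \eqref{i9}--\eqref{i10}, and the $t$-derivative of the resolvent computed from $\frac{\partial}{\partial t}K_{\alpha}^t\tw K_{\alpha}^t(x,t)\delta(y-t)+K_{\alpha}^t(x,-t)\delta(y+t)$ together with the parity relations $Q_{\alpha}(-t;z,t)=-q_{\alpha}(t,z)$ and $P_{\alpha}(-t;z,t)=p_{\alpha}(t,z)$. The cancellation of the $R_{\alpha}^t(t,t)$ terms is exactly as in the paper's derivation.
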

We are now prepared to derive the integro-differential dynamical system for $D(t,\alpha)$ and precisely at this moment our steps start to deviate from \cite{TW0}. Namely, we do not generalize \cite[(6.29),(6.30),(6.31)]{TW0} to \eqref{h9}, but instead \cite[$(21.3.6)$]{M} or equivalently \cite[$(3.6.35),(3.6.36)$]{AGZ}.
\subsection{Final steps} Introduce
\begin{equation*}
	r_{\alpha}(t,z):=\int_{-t}^t\rho(t,y)\e^{-\im\pi zy}\,\d y,\ \ \ \ \ z,t,\alpha>0
\end{equation*}
in terms of the distributional kernel $\rho(x,y)=\delta(x-y)+R_{\alpha}^t(x,y)$ of $(I-K_{\alpha}^t)^{-1}$. Now Fourier-decompose the real-valued $q_{\alpha}(t,z),p_{\alpha}(t,z)$,
\begin{equation*}
	q_{\alpha}(t,z)=\frac{1}{2\pi\im}\Big[\overline{r_{\alpha}(t,z)}-r_{\alpha}(t,z)\Big],\ \ \ \ \ \ \ p_{\alpha}(t,z)=\frac{1}{2}\Big[\overline{r_{\alpha}(t,z)}+r_{\alpha}(t,z)\Big],
\end{equation*}
and obtain from \eqref{i13},\eqref{i14} that
\begin{equation}\label{i15}
	\frac{\partial}{\partial t}r_{\alpha}(t,z)=\frac{\partial}{\partial t}\big(p_{\alpha}(t,z)-\im\pi q_{\alpha}(t,z)\big)=-\im\pi zr_{\alpha}(t,z)+2R_{\alpha}^t(t,-t)\overline{r_{\alpha}(t,z)}.
\end{equation}
Also, \eqref{i11} and \eqref{i12} in terms of $r_{\alpha}(t,z)$ become, when using \eqref{i12} in \eqref{i11},
\begin{eqnarray}
	R_{\alpha}^t(t,t)&=&-\int_0^{\infty}z\Big[p_{\alpha}^2(t,z)+\pi^2q_{\alpha}^2(t,z)\Big]\d w_{\alpha}(z)+2R_{\alpha}^t(t,-t)\int_0^{\infty}q_{\alpha}(t,z)p_{\alpha}(t,z)\,\d w_{\alpha}(z)\nonumber\\
	&=&-\int_0^{\infty}z\big|r_{\alpha}(t,z)\big|^2\d w_{\alpha}(z)-\frac{1}{2\pi^2 t}\left[\int_0^{\infty}\Im\big(r_{\alpha}^2(t,z)\big)\,\d w_{\alpha}(z)\right]^2,\label{i16}\\
	tR_{\alpha}^t(t,-t)&=&\frac{1}{2\pi}\int_0^{\infty}\Im\big(r_{\alpha}^2(t,z)\big)\,\d w_{\alpha}(z).\nonumber
\end{eqnarray}
Combining \eqref{i11},\eqref{i12},\eqref{i13},\eqref{i14},\eqref{i15} and \eqref{i16} we arrive at the following integro-differential Gaudin-Mehta identities, cf. \cite[$(6.26),(6.27),(6.28)$]{TW0} or \cite[$(21.1.11)-(21.1.13)$]{M} for the classical versions of these identities when $w_{\alpha}$ is the indicator function on $(-1,1)$.
\begin{prop}[Integro-differential Gaudin-Mehta identities] For any $t,\alpha>0$,
\begin{equation}\label{i17}
	\frac{\partial}{\partial t}\big(tR_{\alpha}^t(t,-t)\big)=-\int_0^{\infty}z\,\Re\big(r_{\alpha}^2(t,z)\big)\,\d w_{\alpha}(z),
\end{equation}
followed by
\begin{equation}\label{i18}
	\frac{\partial}{\partial t}\big(tR_{\alpha}^t(t,t)\big)=-\int_0^{\infty}z\big|r_{\alpha}(t,z)\big|^2\d w_{\alpha}(z),
\end{equation}
and concluding with
\begin{equation}\label{i19}
	\frac{\partial}{\partial t}R_{\alpha}^t(t,t)=2\big(R_{\alpha}^t(t,-t)\big)^2.
\end{equation}
\end{prop}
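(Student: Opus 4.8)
\emph{Proof proposal.} The plan is to derive \eqref{i17}, \eqref{i18} and \eqref{i19} purely by differentiating the two closed expressions collected in \eqref{i16} with respect to the main variable $t$, substituting the evolution equation \eqref{i15} for $\partial_t r_\alpha(t,z)$, and separating real from imaginary parts. To keep the bookkeeping light I would set $v=v(t):=tR_\alpha^t(t,-t)$ and first record two elementary facts: (i) $R_\alpha^t(t,-t)$ is real, because by \eqref{i12} it equals $-t^{-1}\int_0^\infty q_\alpha(t,z)p_\alpha(t,z)\,\d w_\alpha(z)$ with $q_\alpha,p_\alpha$ real-valued; (ii) the distributional kernel $\rho(x,y)=\delta(x-y)+R_\alpha^t(x,y)$ of $(I-K_\alpha^t)^{-1}$ is real, so $\overline{r_\alpha(t,z)}=\int_{-t}^t\rho(t,y)\e^{\im\pi zy}\,\d y$. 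A preliminary technical step, carried out before any differentiation, is to check that $(t,z)\mapsto r_\alpha(t,z)$ is smooth and that $r_\alpha$ together with $\partial_t r_\alpha$ grows at most polynomially in $z$, locally uniformly in $t>0$; this follows from the resolvent/Neumann expansion of $R_\alpha^t$, using the uniform invertibility of $I-K_\alpha$ from Lemma~\ref{posprop} and the continuity of the kernel \eqref{h9}. Since the weight \eqref{h14} has $w_\alpha'$ decaying exponentially, all the $z$–integrals below converge and every differentiation under the integral sign is justified by dominated convergence.

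For \eqref{i17} I would differentiate $v=\frac{1}{2\pi}\int_0^\infty\Im\big(r_\alpha^2(t,z)\big)\,\d w_\alpha(z)$ (the last line of \eqref{i16}) to get $\partial_t v=\frac{1}{\pi}\int_0^\infty\Im\big(r_\alpha\,\partial_t r_\alpha\big)\,\d w_\alpha(z)$ and then insert \eqref{i15}. Because $R_\alpha^t(t,-t)=v/t$ is real, $r_\alpha\,\partial_t r_\alpha=-\im\pi z\,r_\alpha^2+\tfrac{2v}{t}|r_\alpha|^2$, whose imaginary part is $-\pi z\,\Re(r_\alpha^2)$; this is precisely \eqref{i17}. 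For \eqref{i18} I would combine both lines of \eqref{i16} to write $tR_\alpha^t(t,t)=-t\int_0^\infty z\,|r_\alpha(t,z)|^2\,\d w_\alpha(z)-2v^2$, differentiate, and use \eqref{i15} to compute $\partial_t|r_\alpha|^2=2\Re(\overline{r_\alpha}\,\partial_t r_\alpha)=\tfrac{4v}{t}\Re(r_\alpha^2)$; the resulting three terms are
\[ \partial_t\big(tR_\alpha^t(t,t)\big)=-\int_0^\infty z\,|r_\alpha|^2\,\d w_\alpha-4v\int_0^\infty z\,\Re(r_\alpha^2)\,\d w_\alpha-4v\,\partial_t v, \]
and the last two cancel by \eqref{i17}, leaving \eqref{i18}. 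Finally, \eqref{i19} requires no further differentiation: expanding the left side of \eqref{i18} as $R_\alpha^t(t,t)+t\,\partial_t R_\alpha^t(t,t)$ and subtracting from it the identity $R_\alpha^t(t,t)=-\int_0^\infty z|r_\alpha|^2\,\d w_\alpha-2v^2/t$ from \eqref{i16}, the integral terms cancel and one is left with $t\,\partial_t R_\alpha^t(t,t)=2v^2/t$, i.e.\ $\partial_t R_\alpha^t(t,t)=2(v/t)^2=2\big(R_\alpha^t(t,-t)\big)^2$.

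The genuinely delicate part is the preliminary regularity claim rather than the algebra: one must justify differentiating under the $z$-integral in \eqref{i16}, which amounts to producing integrable dominating functions for the integrands and their $t$-derivatives. This rests on the exponential decay of $w_\alpha'$ from \eqref{h14} and on uniform-in-$t$ polynomial bounds for $r_\alpha(t,z)$ and $\partial_t r_\alpha(t,z)$ on compact $t$–ranges, obtained from the resolvent expansion together with $\|(I-K_\alpha)^{-1}\|<\infty$ (Lemma~\ref{posprop}) and $\|K_\alpha\|_1<\infty$ (Lemma~\ref{traceprop}). Once that domination is in place, all three identities follow by the routine computation above, which is just the Tracy–Widom calculus of \cite[Section~VI]{TW0} carried through with the indicator weight replaced by the general exponentially decaying weight $w_\alpha$ of \eqref{h9}.
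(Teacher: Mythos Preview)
Your proof is correct and is essentially a careful, explicit version of what the paper leaves implicit: the paper simply writes ``Combining \eqref{i11}, \eqref{i12}, \eqref{i13}, \eqref{i14}, \eqref{i15} and \eqref{i16} we arrive at the following integro-differential Gaudin-Mehta identities'' and states the proposition without a written-out proof. Your derivation---differentiate the two formulas in \eqref{i16}, feed in \eqref{i15}, and separate real/imaginary parts---is exactly the intended computation, and your attention to the regularity needed for differentiation under the $z$-integral (polynomial bounds on $r_\alpha,\partial_t r_\alpha$ from Lemma~\ref{posprop} and Lemma~\ref{traceprop}, exponential decay of $w_\alpha'$) makes the argument more complete than the paper's one-line pointer.
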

Finally we return to the Fredholm determinant $D(t,\alpha)\in(0,1)$ in \eqref{h13} and compute with \eqref{i2},\eqref{i19},
\begin{equation}\label{i19a}
	\frac{\partial^2}{\partial t^2}\ln D(t,\alpha)=-4\big(R_{\alpha}^t(t,-t)\big)^2=-\left[\frac{1}{\pi t}\int_0^{\infty}\Im\big(r_{\alpha}^2(t,z)\big)\,\d w_{\alpha}(z)\right]^2,\ \ \ t,\alpha>0.
\end{equation}
We now introduce
\begin{equation}\label{i20}
	u_{\alpha}(t,z):=-\frac{1}{\pi t}\Im\big(r_{\alpha}^2(t,z)\big)=\frac{2}{t}p_{\alpha}(t,z)q_{\alpha}(t,z)\sim 2z,\ \ \ \ t\downarrow 0,
\end{equation}
where the small $t$-behavior holds pointwise in $z,\alpha>0$ since $\|K_{\alpha}^t\|\rightarrow 0$ as $t\downarrow 0$ in operator norm for any $\alpha>0$. Moving ahead, \eqref{i15} yields, for any $z,\alpha>0$,
\begin{equation}\label{i21}
	\frac{\partial}{\partial t}\big(tu_{\alpha}(t,z)\big)=-\frac{1}{\pi}\frac{\partial}{\partial t}\Im\big(r_{\alpha}^2(t,z)\big)\stackrel{\eqref{i15}}{=}2z\,\Re\big(r_{\alpha}^2(t,z)\big),
\end{equation}
and thus
\begin{align}
	\frac{\partial^2}{\partial t^2}\big(tu_{\alpha}(t,z)\big)\stackrel{\eqref{i15}}{=}-4(\pi z)^2tu_{\alpha}(t,z)&\,+8zR_{\alpha}^t(t,-t)\big|r_{\alpha}(t,z)\big|^2\nonumber\\
	&\,=-4(\pi z)^2tu_{\alpha}(t,z)-4z\big|r_{\alpha}(t,z)\big|^2\int_0^{\infty}u_{\alpha}(t,z)\,\d w_{\alpha}(z).\label{i22}
\end{align}
Now square \eqref{i22} to obtain
\begin{equation}\label{i23}
	\left[\frac{\partial^2}{\partial t^2}\big(tu_{\alpha}(t,z)\big)+4(\pi z)^2tu_{\alpha}(t,z)\right]^2=16z^2\big|r_{\alpha}(t,z)\big|^4\left[\int_0^{\infty}u_{\alpha}(t,z)\,\d w_{\alpha}(z)\right]^2,
\end{equation}
and since $|r_{\alpha}|^4=|r_{\alpha}^2|^2=(\Re(r_{\alpha}^2))^2+(\Im(r_{\alpha}^2))^2$ we can use \eqref{i21} to replace $\Re(r_{\alpha}^2)$ and \eqref{i20} to replace $\Im(r_{\alpha}^2)$. This yields the following integro-differential representation formula for $D(t,\alpha)$.
\begin{theo}\label{main} For any $t,\alpha>0$,
\begin{equation}\label{i24}
	D(t,\alpha)=\exp\left[-2t\int_0^{\infty}w_{\alpha}(z)\,\d z-\int_0^t(t-s)\left\{\int_0^{\infty}u_{\alpha}(s,z)\,\d w_{\alpha}(z)\right\}^2\d s\right]
\end{equation}
where $u_{\alpha}=u_{\alpha}(t,z)$ solves the integro-differential equation
\begin{equation*}
	\left[\frac{\partial^2}{\partial t^2}\big(tu_{\alpha}(t,z)\big)+4(\pi z)^2tu_{\alpha}(t,z)\right]^2=4\left[\int_0^{\infty}u_{\alpha}(t,z)\,\d w_{\alpha}(z)\right]^2\left[\left(\frac{\partial}{\partial t}\big(tu_{\alpha}(t,z)\big)\right)^2+\big(2\pi ztu_{\alpha}(t,z)\big)^2\right]
\end{equation*}
and obeys the boundary condition $u_{\alpha}(t,z)\sim 2z$ as $t\downarrow 0$ for any fixed $z,\alpha>0$.
\end{theo}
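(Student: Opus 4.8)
The derivation has, in effect, already been carried out in the paragraphs preceding the statement; the remaining task is to assemble the pieces, and this is what I would do. The plan is threefold: first extract the integro-differential equation for $u_\alpha$ from the squared relation \eqref{i23}, then integrate the second-order identity \eqref{i19a} twice and match the initial data to obtain \eqref{i24}, and finally read off the $t\downarrow 0$ asymptotics of $u_\alpha$ from its definition \eqref{i20}.

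For the equation itself I would start from \eqref{i23} and use $|r_\alpha(t,z)|^4=|r_\alpha^2(t,z)|^2=\big(\Re r_\alpha^2(t,z)\big)^2+\big(\Im r_\alpha^2(t,z)\big)^2$. Identity \eqref{i21} gives $\Re r_\alpha^2(t,z)=\tfrac{1}{2z}\tfrac{\partial}{\partial t}\big(tu_\alpha(t,z)\big)$, and the definition \eqref{i20} gives $\Im r_\alpha^2(t,z)=-\pi t\,u_\alpha(t,z)$, so that
\begin{equation*}
16z^2|r_\alpha(t,z)|^4=4\Big(\tfrac{\partial}{\partial t}\big(tu_\alpha\big)\Big)^2+16\pi^2z^2t^2u_\alpha^2=4\left[\Big(\tfrac{\partial}{\partial t}\big(tu_\alpha\big)\Big)^2+\big(2\pi z\,tu_\alpha\big)^2\right].
\end{equation*}
Inserting this into \eqref{i23} produces precisely the stated integro-differential equation for $u_\alpha=u_\alpha(t,z)$; no uniqueness is needed, only that the function defined in \eqref{i20} satisfies it.

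For the determinant formula I would rewrite \eqref{i19a} by means of $u_\alpha=-\tfrac{1}{\pi t}\Im r_\alpha^2$ as $\tfrac{\partial^2}{\partial t^2}\ln D(t,\alpha)=-\big[\int_0^\infty u_\alpha(t,z)\,\d w_\alpha(z)\big]^2$ for all $t,\alpha>0$. Since $t\mapsto K_\alpha^t$ is a smooth family of trace class operators (Lemma \ref{traceprop}) and $I-K_\alpha$ is invertible (Lemma \ref{posprop}), $t\mapsto\ln D(t,\alpha)$ is smooth and nonzero on $(0,\infty)$; from $\|K_\alpha^t\|\to 0$ as $t\downarrow 0$ one has $\ln D(t,\alpha)=-\tr_{L^2(-t,t)}K_\alpha+\mathcal O\big(\|K_\alpha^t\|_1^2\big)$ with $\tr_{L^2(-t,t)}K_\alpha=2t\int_0^\infty w_\alpha(z)\,\d z$ and $\|K_\alpha^t\|_1=\mathcal O(t)$, hence $\ln D(0^+,\alpha)=0$ and $\tfrac{\partial}{\partial t}\ln D(t,\alpha)\big|_{t=0^+}=-2\int_0^\infty w_\alpha(z)\,\d z$ (equivalently, use \eqref{i2} together with $R_\alpha^t(t,t)\to K_\alpha(0,0)=\int_0^\infty w_\alpha(z)\,\d z$ as $t\downarrow0$). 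Integrating the second-order identity from a base point $t_0>0$ and letting $t_0\downarrow0$ then gives $\ln D(t,\alpha)=-2t\int_0^\infty w_\alpha(z)\,\d z-\int_0^t(t-s)\big[\int_0^\infty u_\alpha(s,z)\,\d w_\alpha(z)\big]^2\d s$, which is \eqref{i24}; the integrability of the remainder near $s=0$ follows from the boundary behaviour $u_\alpha(s,z)\sim 2z$ and the exponential decay of $w_\alpha'$.

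The boundary condition $u_\alpha(t,z)\sim 2z$ as $t\downarrow 0$ I would obtain directly from $u_\alpha(t,z)=\tfrac{2}{t}p_\alpha(t,z)q_\alpha(t,z)$ in \eqref{i20}, with $q_\alpha(t,z)=\big((I-K_\alpha^t)^{-1}\tau_z\phi\big)(t)$ and $p_\alpha(t,z)=\big((I-K_\alpha^t)^{-1}\tau_z\psi\big)(t)$: as $t\downarrow 0$ one has $(I-K_\alpha^t)^{-1}\to I$, so $q_\alpha(t,z)\to\phi(tz)=\tfrac1\pi\sin(\pi tz)\sim tz$ and $p_\alpha(t,z)\to\psi(tz)=\cos(\pi tz)\to 1$, whence $u_\alpha(t,z)\sim\tfrac{2}{t}\cdot tz\cdot 1=2z$ for each fixed $z,\alpha>0$. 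The only point demanding genuine care is the uniformity of these $t\downarrow 0$ limits, i.e.\ controlling the Neumann series for $(I-K_\alpha^t)^{-1}$ on $L^2(-t,t)$ for small $t$ so as to justify the termwise limits above; this follows from $\|K_\alpha^t\|\le\|K_\alpha^t\|_1=\mathcal O(t)$ and the continuity of the kernel \eqref{h9}. Everything else is the algebra already performed in \eqref{i1}--\eqref{i23}.
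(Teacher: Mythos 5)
Your proposal is correct and follows essentially the same route as the paper: the integro-differential equation is read off from \eqref{i23} using \eqref{i20} and \eqref{i21} to eliminate $\Im(r_\alpha^2)$ and $\Re(r_\alpha^2)$, the determinant formula comes from integrating \eqref{i19a} twice with $D(0,\alpha)=1$, and the boundary condition follows from $\|K_\alpha^t\|\to 0$ as $t\downarrow 0$ in \eqref{i20}. The only cosmetic difference is in pinning down the linear-in-$t$ term: you use the trace-log expansion (or the limit $R_\alpha^t(t,t)\to K_\alpha(0,0)$), whereas the paper expands the Fredholm series \eqref{i27} and bounds the remainder by Hadamard's inequality — these amount to the same computation.
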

\begin{proof} We have already explained how the integro-differential equation for $u_{\alpha}(t,z)$ follows from \eqref{i23}. Likewise, the behavior of $u_{\alpha}(t,z)$ near $t=0$ had been previously established, so it remains to justify \eqref{i24}. By \eqref{i19a} and \eqref{i20},
\begin{equation}\label{i25}
	\frac{\partial^2}{\partial t^2}\ln D(t,\alpha)=-\left\{\int_0^{\infty}u_{\alpha}(t,z)\,\d w_{\alpha}(z)\right\}^2\ \ \ \ \forall\ t,\alpha>0.
\end{equation}
But 
\begin{equation*}
	(0,\infty)\ni t\mapsto\int_0^{\infty}u_{\alpha}(t,z)\d w_{\alpha}(z)
\end{equation*}
is continuous and integrable at $t=0$ by the asymptotic and analytic properties of $u_{\alpha}$ and $w_{\alpha}$. Hence, integrating \eqref{i25} once, we find
\begin{equation*}
	\frac{\partial}{\partial t}\ln D(t,\alpha)=f(\alpha)-\int_0^t\left\{\int_0^{\infty}u_{\alpha}(s,z)\,\d w_{\alpha}(z)\right\}^2\d s,\ \ \ \ \ t,\alpha>0,
\end{equation*}
where $f(\alpha)$ only depends on $\alpha>0$, but not on $t>0$. However,
\begin{equation*}
	(0,\infty)\ni t\mapsto\int_0^t\left\{\int_0^{\infty}u_{\alpha}(s,z)\,\d w_{\alpha}(z)\right\}^2\d s,\ \ \ \alpha>0,
\end{equation*}
is also continuous and integrable at $t=0$, i.e. by another integration, Fubini's theorem and since $D(0,\alpha)=1$,
\begin{eqnarray}
	\ln D(t,\alpha)\!\!&=&\!\!tf(\alpha)-\int_0^t\left[\int_0^{\lambda}\left\{\int_0^{\infty}u_{\alpha}(s,z)\,\d w_{\alpha}(z)\right\}^2\d s\right]\d\lambda\nonumber\\
	&=&\!\!tf(\alpha)-\int_0^t(t-s)\left\{\int_0^{\infty}u_{\alpha}(s,z)\,\d w_{\alpha}(z)\right\}^2\d s,\ \ \ t,\alpha>0.\label{i26}
\end{eqnarray}
In order to compute the outstanding function $f$ we shall use the Fredholm series for $D(t,\alpha)$,
\begin{equation}\label{i27}
	D(t,\alpha)=1-2t\int_0^{\infty}w_{\alpha}(z)\,\d z+\sum_{\ell=2}^{\infty}\frac{(-1)^{\ell}}{\ell!}\int_{(-t,t)^{\ell}}\det\big[K_{\alpha}(x_j,x_k)\big]_{j,k=1}^{\ell}\d x_1\cdots\d x_{\ell}.
\end{equation}
Here the remainder is easily seen to be of order $\mathcal{O}(t^2)$ as $t\downarrow 0$ by Hadamard's inequality and thus
\begin{equation*}
	f(\alpha)=\frac{\partial}{\partial t}D(t,\alpha)\bigg|_{t=0}\stackrel{\eqref{i27}}{=}-2\int_0^{\infty}w_{\alpha}(z)\,\d z,
\end{equation*}
which, once substituted into \eqref{i26}, completes our proof of the Theorem.
\end{proof}
Equipped with Theorem \ref{main} we are prepared to derive Theorem \ref{theo:3}.
\begin{proof}[Proof of Theorem \ref{theo:3}] Simply choose
\begin{equation*}
	w_{\alpha}(z)=\Phi\big(\alpha(z+1)\big)-\Phi\big(\alpha(z-1)\big),\ \ \ z\in\mathbb{R},\ \alpha>0
\end{equation*}
in Theorem \ref{main}, compare \eqref{h8}, and evaluate \eqref{i24} at $\alpha=t/\sigma>0$. What results from \eqref{i24} is precisely \eqref{r37} and \eqref{r38} for $D_{\sigma}(t)$ as in \eqref{r36}.
\end{proof}

%%%%%%%%%%%%%%%%%%%%%%%%%%%%%%%%%%%%%%%%%%%%%%%%%%%%%%%%%%%%%%

\begin{bibsection}
\begin{biblist}

\bib{ACQ}{article}{
AUTHOR = {Amir, Gideon},
author={Corwin, Ivan},
author={Quastel, Jeremy},
     TITLE = {Probability distribution of the free energy of the continuum
              directed random polymer in {$1+1$} dimensions},
   JOURNAL = {Comm. Pure Appl. Math.},
  FJOURNAL = {Communications on Pure and Applied Mathematics},
    VOLUME = {64},
      YEAR = {2011},
    NUMBER = {4},
     PAGES = {466--537},
      ISSN = {0010-3640},
   MRCLASS = {60K35 (60B20 60F05 60H15 82C22 82C44)},
  MRNUMBER = {2796514},
MRREVIEWER = {Timo Sepp\"{a}l\"{a}inen},
       DOI = {10.1002/cpa.20347},
       URL = {https://doi-org.bris.idm.oclc.org/10.1002/cpa.20347},
}

\bib{ACV}{article}{
AUTHOR = {Akemann, Gernot},
author={Cikovic, Milan},
author={Venker, Martin},
     TITLE = {Universality at weak and strong non-{H}ermiticity beyond the
              elliptic {G}inibre ensemble},
   JOURNAL = {Comm. Math. Phys.},
  FJOURNAL = {Communications in Mathematical Physics},
    VOLUME = {362},
      YEAR = {2018},
    NUMBER = {3},
     PAGES = {1111--1141},
      ISSN = {0010-3616},
   MRCLASS = {60B20 (82B05)},
  MRNUMBER = {3845296},
MRREVIEWER = {Dominique L\'{e}pingle},
       DOI = {10.1007/s00220-018-3201-1},
       URL = {https://doi-org.bris.idm.oclc.org/10.1007/s00220-018-3201-1},
}

\bib{AGZ}{book}{
AUTHOR = {Anderson, Greg W.},
author={Guionnet, Alice},
author={Zeitouni, Ofer},
     TITLE = {An introduction to random matrices},
    SERIES = {Cambridge Studies in Advanced Mathematics},
    VOLUME = {118},
 PUBLISHER = {Cambridge University Press, Cambridge},
      YEAR = {2010},
     PAGES = {xiv+492},
      ISBN = {978-0-521-19452-5},
   MRCLASS = {60B20 (46L53 46L54)},
  MRNUMBER = {2760897},
MRREVIEWER = {Terence Tao},
}

\bib{Ben}{article}{
AUTHOR = {Bender, Martin},
     TITLE = {Edge scaling limits for a family of non-{H}ermitian random
              matrix ensembles},
   JOURNAL = {Probab. Theory Related Fields},
  FJOURNAL = {Probability Theory and Related Fields},
    VOLUME = {147},
      YEAR = {2010},
    NUMBER = {1-2},
     PAGES = {241--271},
      ISSN = {0178-8051},
   MRCLASS = {60B20 (60G55 60G70)},
  MRNUMBER = {2594353},
       DOI = {10.1007/s00440-009-0207-9},
       URL = {https://doi-org.bris.idm.oclc.org/10.1007/s00440-009-0207-9},
}

\bib{Bo0}{article}{
AUTHOR = {Bothner, T.},
     TITLE = {On the origins of {R}iemann-{H}ilbert problems in mathematics},
   JOURNAL = {Nonlinearity},
  FJOURNAL = {Nonlinearity},
    VOLUME = {34},
      YEAR = {2021},
    NUMBER = {4},
     PAGES = {R1--R73},
      ISSN = {0951-7715},
   MRCLASS = {30E25 (01A60 45M05 60B20)},
  MRNUMBER = {4246443},
       DOI = {10.1088/1361-6544/abb543},
       URL = {https://doi-org.bris.idm.oclc.org/10.1088/1361-6544/abb543},
}

\bib{Bo}{article}{
AUTHOR = {Bothner, T.},
TITLE = {A Riemann-Hilbert approach to Fredholm determinants of Hankel composition operators: scalar-valued kernels},
YEAR = {2022},
eprint={https://arxiv.org/abs/2205.15007},
      archivePrefix={arXiv},
      primaryClass={math-ph},
}

\bib{BCT}{article}{
AUTHOR = {Bothner, Thomas},
author={Cafasso, Mattia},
author={Tarricone, Sofia},
     TITLE = {Momenta spacing distributions in anharmonic oscillators and
              the higher order finite temperature {A}iry kernel},
   JOURNAL = {Ann. Inst. Henri Poincar\'{e} Probab. Stat.},
  FJOURNAL = {Annales de l'Institut Henri Poincar\'{e} Probabilit\'{e}s et
              Statistiques},
    VOLUME = {58},
      YEAR = {2022},
    NUMBER = {3},
     PAGES = {1505--1546},
      ISSN = {0246-0203},
   MRCLASS = {45J05 (30E25 33C10 35J10 42A38 81V70)},
  MRNUMBER = {4452641},
       DOI = {10.1214/21-aihp1211},
       URL = {https://doi-org.bris.idm.oclc.org/10.1214/21-aihp1211},
}

\bib{BL}{article}{
  eprint = {https://arxiv.org/abs/2208.04684},
  archivePrefix={arXiv},
    primaryClass={math.PH},
  author = {Bothner, Thomas},
  author={Little, Alex},
  title = {The complex elliptic Ginibre ensemble at weak non-Hermiticity: edge spacing distributions},
  year = {2022},
  copyright = {arXiv.org perpetual, non-exclusive license}
}

\bib{BG}{article}{
AUTHOR = {Boyer, Robert},
author={Goh, William M. Y.},
     TITLE = {On the zero attractor of the {E}uler polynomials},
   JOURNAL = {Adv. in Appl. Math.},
  FJOURNAL = {Advances in Applied Mathematics},
    VOLUME = {38},
      YEAR = {2007},
    NUMBER = {1},
     PAGES = {97--132},
      ISSN = {0196-8858},
   MRCLASS = {30C15 (33C45)},
  MRNUMBER = {2288197},
MRREVIEWER = {Heinrich Begehr},
       DOI = {10.1016/j.aam.2005.05.008},
       URL = {https://doi.org/10.1016/j.aam.2005.05.008},
}

\bib{CCR}{article}{
AUTHOR = {Cafasso, Mattia},
author={Claeys, Tom},
author={Ruzza, Giulio},
     TITLE = {Airy kernel determinant solutions to the {K}d{V} equation and
              integro-differential {P}ainlev\'{e} equations},
   JOURNAL = {Comm. Math. Phys.},
  FJOURNAL = {Communications in Mathematical Physics},
    VOLUME = {386},
      YEAR = {2021},
    NUMBER = {2},
     PAGES = {1107--1153},
      ISSN = {0010-3616},
   MRCLASS = {35Q53 (33C10 34M50 34M55 45K05)},
  MRNUMBER = {4294287},
       DOI = {10.1007/s00220-021-04108-9},
       URL = {https://doi-org.bris.idm.oclc.org/10.1007/s00220-021-04108-9},
}

 \bib{CESX}{article}{
AUTHOR = {Cipolloni, Giorgio},
author={Erd\"os, L\'{a}szl\'{o}},
author={Schr\"{o}der, Dominik},
author={Xu, Yuanyuan},
     TITLE = {Directional extremal statistics for {G}inibre eigenvalues},
   JOURNAL = {J. Math. Phys.},
  FJOURNAL = {Journal of Mathematical Physics},
    VOLUME = {63},
      YEAR = {2022},
    NUMBER = {10},
     PAGES = {Paper No. 103303, 11},
      ISSN = {0022-2488},
   MRCLASS = {60B20 (15A18 15B52)},
  MRNUMBER = {4496015},
       DOI = {10.1063/5.0104290},
       URL = {https://doi-org.bris.idm.oclc.org/10.1063/5.0104290},
}

\bib{CESX2}{article}{
AUTHOR = {Cipolloni, Giorgio},
author={Erd\"os, L\'{a}szl\'{o}},
author={Schr\"{o}der, Dominik},
author={Xu, Yuanyuan},
title={On the rightmost eigenvalue of non-Hermitian random matrices},
YEAR = {2022},
eprint={https://arxiv.org/abs/2206.04448},
      archivePrefix={arXiv},
      primaryClass={math.PR},
}

\bib{FGIL}{article}{
AUTHOR = {Di Francesco, P.},
author={Gaudin, M.},
author={Itzykson, C.},
author={Lesage, F.},
     TITLE = {Laughlin's wave functions, {C}oulomb gases and expansions of
              the discriminant},
   JOURNAL = {Internat. J. Modern Phys. A},
  FJOURNAL = {International Journal of Modern Physics A. Particles and
              Fields. Gravitation. Cosmology},
    VOLUME = {9},
      YEAR = {1994},
    NUMBER = {24},
     PAGES = {4257--4351},
      ISSN = {0217-751X},
   MRCLASS = {81V70 (05E10 22E70 52B11 82D10)},
  MRNUMBER = {1289574},
MRREVIEWER = {Peter N. Zhevandrov},
       DOI = {10.1142/S0217751X94001734},
       URL = {https://doi-org.bris.idm.oclc.org/10.1142/S0217751X94001734},
}

\bib{DDMS}{article}{
  title = {Noninteracting fermions at finite temperature in a $d$-dimensional trap: Universal correlations},
  author = {Dean, David S.},
  author={Le Doussal, Pierre},
  author={Majumdar, Satya N.},
  author={Schehr, Gr\'egory},
  journal = {Phys. Rev. A},
  volume = {94},
  issue = {6},
  pages = {063622},
  numpages = {41},
  year = {2016},
  month = {Dec},
  publisher = {American Physical Society},
  doi = {10.1103/PhysRevA.94.063622},
  url = {https://link.aps.org/doi/10.1103/PhysRevA.94.063622}
}

\bib{DKMVZ}{article}{
AUTHOR = {Deift, P.},
author={Kriecherbauer, T.},
author={McLaughlin, K. T.-R.},
author={Venakides, S.},
author={Zhou, X.},
     TITLE = {Uniform asymptotics for polynomials orthogonal with respect to
              varying exponential weights and applications to universality
              questions in random matrix theory},
   JOURNAL = {Comm. Pure Appl. Math.},
  FJOURNAL = {Communications on Pure and Applied Mathematics},
    VOLUME = {52},
      YEAR = {1999},
    NUMBER = {11},
     PAGES = {1335--1425},
      ISSN = {0010-3640},
   MRCLASS = {42C05 (15A52 41A60 82B41)},
  MRNUMBER = {1702716},
MRREVIEWER = {D. S. Lubinsky},
       DOI =
              {10.1002/(SICI)1097-0312(199911)52:11<1335::AID-CPA1>3.0.CO;2-1},
       URL =
              {https://doi-org.bris.idm.oclc.org/10.1002/(SICI)1097-0312(199911)52:11<1335::AID-CPA1>3.0.CO;2-1},
}

\bib{Dy1}{article}{
AUTHOR = {Dyson, Freeman J.},
     TITLE = {The threefold way. {A}lgebraic structure of symmetry groups
              and ensembles in quantum mechanics},
   JOURNAL = {J. Mathematical Phys.},
  FJOURNAL = {Journal of Mathematical Physics},
    VOLUME = {3},
      YEAR = {1962},
     PAGES = {1199--1215},
      ISSN = {0022-2488},
   MRCLASS = {81.22 (22.57)},
  MRNUMBER = {177643},
MRREVIEWER = {A. S. Wightman},
       DOI = {10.1063/1.1703863},
       URL = {https://doi-org.bris.idm.oclc.org/10.1063/1.1703863},
}

\bib{F1}{book}{
AUTHOR = {Forrester, P. J.},
     TITLE = {Log-gases and random matrices},
    SERIES = {London Mathematical Society Monographs Series},
    VOLUME = {34},
 PUBLISHER = {Princeton University Press, Princeton, NJ},
      YEAR = {2010},
     PAGES = {xiv+791},
      ISBN = {978-0-691-12829-0},
   MRCLASS = {82-02 (33C45 60B20 82B05 82B41 82B44)},
  MRNUMBER = {2641363},
MRREVIEWER = {Steven Joel Miller},
       DOI = {10.1515/9781400835416},
       URL = {https://doi-org.bris.idm.oclc.org/10.1515/9781400835416},
}

\bib{FJ}{article}{
AUTHOR = {Forrester, P. J.},
author={Jancovici, B.},
     TITLE = {Two-dimensional one-component plasma in a quadrupolar field},
   JOURNAL = {Internat. J. Modern Phys. A},
  FJOURNAL = {International Journal of Modern Physics A. Particles and
              Fields. Gravitation. Cosmology},
    VOLUME = {11},
      YEAR = {1996},
    NUMBER = {5},
     PAGES = {941--949},
      ISSN = {0217-751X},
   MRCLASS = {82B05 (82B23 82D10)},
  MRNUMBER = {1371262},
MRREVIEWER = {M. Lawrence Glasser},
       DOI = {10.1142/S0217751X96000432},
       URL = {https://doi-org.bris.idm.oclc.org/10.1142/S0217751X96000432},
}

\bib{FKS1}{article}{
AUTHOR = {Fyodorov, Yan V.},
author={Khoruzhenko, Boris A.},
author={Sommers, Hans-J\"{u}rgen},
     TITLE = {Almost-{H}ermitian random matrices: eigenvalue density in the
              complex plane},
   JOURNAL = {Phys. Lett. A},
  FJOURNAL = {Physics Letters. A},
    VOLUME = {226},
      YEAR = {1997},
    NUMBER = {1-2},
     PAGES = {46--52},
      ISSN = {0375-9601},
   MRCLASS = {82B05 (82B44)},
  MRNUMBER = {1431718},
MRREVIEWER = {Estelle L. Basor},
       DOI = {10.1016/S0375-9601(96)00904-8},
       URL = {https://doi-org.bris.idm.oclc.org/10.1016/S0375-9601(96)00904-8},
}

\bib{FKS2}{article}{
AUTHOR = {Fyodorov, Yan V.},
author={Khoruzhenko, Boris A.},
author={Sommers, Hans-J\"{u}rgen},
     TITLE = {Almost {H}ermitian random matrices: crossover from
              {W}igner-{D}yson to {G}inibre eigenvalue statistics},
   JOURNAL = {Phys. Rev. Lett.},
  FJOURNAL = {Physical Review Letters},
    VOLUME = {79},
      YEAR = {1997},
    NUMBER = {4},
     PAGES = {557--560},
      ISSN = {0031-9007},
   MRCLASS = {82B41},
  MRNUMBER = {1459918},
       DOI = {10.1103/PhysRevLett.79.557},
       URL = {https://doi-org.bris.idm.oclc.org/10.1103/PhysRevLett.79.557},
}

\bib{FSK}{article}{
AUTHOR = {Fyodorov, Yan V.},
author={Sommers, Hans-J\"{u}rgen},
author={Khoruzhenko, Boris A.},
     TITLE = {Universality in the random matrix spectra in the regime of
              weak non-{H}ermiticity},
      NOTE = {Classical and quantum chaos},
   JOURNAL = {Ann. Inst. H. Poincar\'{e} Phys. Th\'{e}or.},
  FJOURNAL = {Annales de l'Institut Henri Poincar\'{e}. Physique Th\'{e}orique},
    VOLUME = {68},
      YEAR = {1998},
    NUMBER = {4},
     PAGES = {449--489},
      ISSN = {0246-0211},
   MRCLASS = {60F99 (15A52 60B15 82B41)},
  MRNUMBER = {1634312},
MRREVIEWER = {Oleksiy Khorunzhiy},
       URL = {http://www.numdam.org/item?id=AIHPA_1998__68_4_449_0},
}

\bib{GA}{article}{
title={Connectance of large dynamic (cybernetic) systems: critical values for stability},
  author={Gardner, Mark R},
  author={Ashby, W Ross},
  journal={Nature},
  volume={228},
  number={5273},
  pages={784--784},
  year={1970},
  publisher={Nature Publishing Group}
}

\bib{Gi}{article}{
AUTHOR = {Ginibre, Jean},
     TITLE = {Statistical ensembles of complex, quaternion, and real
              matrices},
   JOURNAL = {J. Mathematical Phys.},
  FJOURNAL = {Journal of Mathematical Physics},
    VOLUME = {6},
      YEAR = {1965},
     PAGES = {440--449},
      ISSN = {0022-2488},
   MRCLASS = {22.60 (53.90)},
  MRNUMBER = {173726},
MRREVIEWER = {J. Dieudonn\'{e}},
       DOI = {10.1063/1.1704292},
       URL = {https://doi-org.bris.idm.oclc.org/10.1063/1.1704292},
}

\bib{Gir}{article}{
AUTHOR = {Girko, V. L.},
     TITLE = {The elliptic law},
   JOURNAL = {Teor. Veroyatnost. i Primenen.},
  FJOURNAL = {Akademiya Nauk SSSR. Teoriya Veroyatnoste\u{\i} i ee Primeneniya},
    VOLUME = {30},
      YEAR = {1985},
    NUMBER = {4},
     PAGES = {640--651},
      ISSN = {0040-361X},
   MRCLASS = {60F99 (81G45 82A31)},
  MRNUMBER = {816278},
MRREVIEWER = {Nina B. Maslova},
}

\bib{GGK}{book}{
AUTHOR = {Gohberg, Israel},
author={Goldberg, Seymour},
author={Krupnik, Nahum},
     TITLE = {Traces and determinants of linear operators},
    SERIES = {Operator Theory: Advances and Applications},
    VOLUME = {116},
 PUBLISHER = {Birkh\"{a}user Verlag, Basel},
      YEAR = {2000},
     PAGES = {x+258},
      ISBN = {3-7643-6177-8},
   MRCLASS = {47B10 (45B05 45P05 47A53 47G10 47L10)},
  MRNUMBER = {1744872},
MRREVIEWER = {Hermann K\"{o}nig},
       DOI = {10.1007/978-3-0348-8401-3},
       URL = {https://doi-org.bris.idm.oclc.org/10.1007/978-3-0348-8401-3},
}

\bib{HKPV}{book}{
AUTHOR = {Hough, J. Ben},
author={Krishnapur, Manjunath},
author={Peres, Yuval},
author={Vir\'{a}g, B\'{a}lint},
     TITLE = {Zeros of {G}aussian analytic functions and determinantal point
              processes},
    SERIES = {University Lecture Series},
    VOLUME = {51},
 PUBLISHER = {American Mathematical Society, Providence, RI},
      YEAR = {2009},
     PAGES = {x+154},
      ISBN = {978-0-8218-4373-4},
   MRCLASS = {60G55 (30B20 30C15 60B20 60F10 60G15 65H04 82B31)},
  MRNUMBER = {2552864},
MRREVIEWER = {Dmitry Beliaev},
       DOI = {10.1090/ulect/051},
       URL = {https://doi.org/10.1090/ulect/051},
}

\bib{IIKS}{article}{
AUTHOR = {Its, A. R.},
AUTHOR={Izergin, A. G.},
AUTHOR={Korepin, V. E.},
AUTHOR={ Slavnov, N. A.},
     TITLE = {Differential equations for quantum correlation functions},
 BOOKTITLE = {Proceedings of the {C}onference on {Y}ang-{B}axter
              {E}quations, {C}onformal {I}nvariance and {I}ntegrability in
              {S}tatistical {M}echanics and {F}ield {T}heory},
   JOURNAL = {Internat. J. Modern Phys. B},
  FJOURNAL = {International Journal of Modern Physics B},
    VOLUME = {4},
      YEAR = {1990},
    NUMBER = {5},
     PAGES = {1003--1037},
      ISSN = {0217-9792},
   MRCLASS = {82B10 (35Q40 58G40 82C10)},
  MRNUMBER = {1064758},
MRREVIEWER = {Anatoliy Prykarpatsky},
       DOI = {10.1142/S0217979290000504},
       URL = {https://doi-org.bris.idm.oclc.org/10.1142/S0217979290000504},
}

\bib{JMMS}{article}{
AUTHOR = {Jimbo, Michio},
author={Miwa, Tetsuji},
author={M\^{o}ri, Yasuko},
author={Sato, Mikio},
     TITLE = {Density matrix of an impenetrable {B}ose gas and the fifth
              {P}ainlev\'{e} transcendent},
   JOURNAL = {Phys. D},
  FJOURNAL = {Physica D. Nonlinear Phenomena},
    VOLUME = {1},
      YEAR = {1980},
    NUMBER = {1},
     PAGES = {80--158},
      ISSN = {0167-2789},
   MRCLASS = {82A15 (14D05 58F07)},
  MRNUMBER = {573370},
       DOI = {10.1016/0167-2789(80)90006-8},
       URL = {https://doi-org.bris.idm.oclc.org/10.1016/0167-2789(80)90006-8},
}

\bib{Joh0}{article}{
author={Johansson, Kurt},
 TITLE={Random matrices and determinantal processes},
    YEAR={2005},
    journal={Lecture notes from the Les Houches summer school on Mathematical Statistical Physics},
    eprint={https://arxiv.org/abs/math-ph/0510038v1},
    archivePrefix={arXiv},
    primaryClass={math.PH},
}

\bib{Joh}{article}{
AUTHOR = {Johansson, K.},
     TITLE = {From {G}umbel to {T}racy-{W}idom},
   JOURNAL = {Probab. Theory Related Fields},
  FJOURNAL = {Probability Theory and Related Fields},
    VOLUME = {138},
      YEAR = {2007},
    NUMBER = {1-2},
     PAGES = {75--112},
      ISSN = {0178-8051},
   MRCLASS = {60G70 (15A52 60G07 62G32 82B41)},
  MRNUMBER = {2288065},
MRREVIEWER = {Alexander Roitershtein},
       DOI = {10.1007/s00440-006-0012-7},
       URL = {https://doi-org.bris.idm.oclc.org/10.1007/s00440-006-0012-7},
}

\bib{Kos}{article}{
AUTHOR = {Kostlan, Eric},
     TITLE = {On the spectra of {G}aussian matrices},
      NOTE = {Directions in matrix theory (Auburn, AL, 1990)},
   JOURNAL = {Linear Algebra Appl.},
  FJOURNAL = {Linear Algebra and its Applications},
    VOLUME = {162/164},
      YEAR = {1992},
     PAGES = {385--388},
      ISSN = {0024-3795},
   MRCLASS = {62H10},
  MRNUMBER = {1148410},
MRREVIEWER = {N. Giri},
       DOI = {10.1016/0024-3795(92)90386-O},
       URL = {https://doi.org/10.1016/0024-3795(92)90386-O},
}

\bib{Kra}{article}{
AUTHOR = {Krajenbrink, Alexandre},
     TITLE = {From {P}ainlev\'{e} to {Z}akharov-{S}habat and beyond: {F}redholm
              determinants and integro-differential hierarchies},
   JOURNAL = {J. Phys. A},
  FJOURNAL = {Journal of Physics. A. Mathematical and Theoretical},
    VOLUME = {54},
      YEAR = {2021},
    NUMBER = {3},
     PAGES = {Paper No. 035001, 51},
      ISSN = {1751-8113},
   MRCLASS = {37K10 (34M55 37J65)},
  MRNUMBER = {4209129},
       DOI = {10.1088/1751-8121/abd078},
       URL = {https://doi-org.bris.idm.oclc.org/10.1088/1751-8121/abd078},
}

\bib{LS}{article}{
AUTHOR = {Lehmann, Nils},
author={Sommers, Hans-J\"{u}rgen},
     TITLE = {Eigenvalue statistics of random real matrices},
   JOURNAL = {Phys. Rev. Lett.},
  FJOURNAL = {Physical Review Letters},
    VOLUME = {67},
      YEAR = {1991},
    NUMBER = {8},
     PAGES = {941--944},
      ISSN = {0031-9007},
   MRCLASS = {82B41 (15A18 15A52 82C32)},
  MRNUMBER = {1121461},
       DOI = {10.1103/PhysRevLett.67.941},
       URL = {https://doi-org.bris.idm.oclc.org/10.1103/PhysRevLett.67.941},
}

\bib{LW}{article}{
AUTHOR = {Liechty, Karl},
author={Wang, Dong},
     TITLE = {Asymptotics of free fermions in a quadratic well at finite
              temperature and the {M}oshe-{N}euberger-{S}hapiro random
              matrix model},
   JOURNAL = {Ann. Inst. Henri Poincar\'{e} Probab. Stat.},
  FJOURNAL = {Annales de l'Institut Henri Poincar\'{e} Probabilit\'{e}s et
              Statistiques},
    VOLUME = {56},
      YEAR = {2020},
    NUMBER = {2},
     PAGES = {1072--1098},
      ISSN = {0246-0203},
   MRCLASS = {60B20 (15B52 82B23)},
  MRNUMBER = {4076776},
       DOI = {10.1214/19-AIHP994},
       URL = {https://doi-org.bris.idm.oclc.org/10.1214/19-AIHP994},
}

\bib{Ma}{article}{
AUTHOR={May, R.},
TITLE={Will a Large Complex System be Stable?},
JOURNAL={Nature},
year={1972},
volume={238},
number={5364},
pages={413--414},
doi={10.1038/238413a0},
url={https://doi.org/10.1038/238413a0},
}

\bib{M}{book}{
AUTHOR = {Mehta, Madan Lal},
     TITLE = {Random matrices},
    SERIES = {Pure and Applied Mathematics (Amsterdam)},
    VOLUME = {142},
   EDITION = {Third},
 PUBLISHER = {Elsevier/Academic Press, Amsterdam},
      YEAR = {2004},
     PAGES = {xviii+688},
      ISBN = {0-12-088409-7},
   MRCLASS = {82-02 (15-02 15A52 60B99 60K35 82B41)},
  MRNUMBER = {2129906},
}

\bib{NIST}{book}{
TITLE = {N{IST} handbook of mathematical functions},
    EDITOR = {Olver, Frank W. J.}
    editor={Lozier, Daniel W.}
    editor={Boisvert, Ronald F.}
    editor={Clark, Charles W.},
      %NOTE = {With 1 CD-ROM (Windows, Macintosh and UNIX)},
 PUBLISHER = {U.S. Department of Commerce, National Institute of Standards
              and Technology, Washington, DC; Cambridge University Press,
              Cambridge},
      YEAR = {2010},
     PAGES = {xvi+951},
      ISBN = {978-0-521-14063-8},
   MRCLASS = {33-00 (00A20 65-00)},
  MRNUMBER = {2723248},
}

\bib{PS}{book}{
AUTHOR = {Pastur, Leonid},
author={Shcherbina, Mariya},
     TITLE = {Eigenvalue distribution of large random matrices},
    SERIES = {Mathematical Surveys and Monographs},
    VOLUME = {171},
 PUBLISHER = {American Mathematical Society, Providence, RI},
      YEAR = {2011},
     PAGES = {xiv+632},
      ISBN = {978-0-8218-5285-9},
   MRCLASS = {60B20 (15A18 15B52 60F05 62H10 62H99)},
  MRNUMBER = {2808038},
MRREVIEWER = {Terence Tao},
       DOI = {10.1090/surv/171},
       URL = {https://doi-org.bris.idm.oclc.org/10.1090/surv/171},
}

\bib{Po}{book}{
author={Porter, C.E.},
title={Fluctuations of quantal spectra},
series={Statistical Theories of Spectra: Fluctuations},
PUBLISHER={Academic Press, New York},
YEAR={1965},
}

\bib{Sim}{book}{
AUTHOR = {Simon, Barry},
     TITLE = {Trace ideals and their applications},
    SERIES = {Mathematical Surveys and Monographs},
    VOLUME = {120},
   EDITION = {Second},
 PUBLISHER = {American Mathematical Society, Providence, RI},
      YEAR = {2005},
     PAGES = {viii+150},
      ISBN = {0-8218-3581-5},
   MRCLASS = {47L20 (47A40 47A55 47B10 47B36 47E05 81Q15 81U99)},
  MRNUMBER = {2154153},
MRREVIEWER = {Pavel B. Kurasov},
       DOI = {10.1090/surv/120},
       URL = {https://doi-org.bris.idm.oclc.org/10.1090/surv/120},
}

\bib{Ta}{article}{
AUTHOR = {Tao, Terence},
     TITLE = {The asymptotic distribution of a single eigenvalue gap of a
              {W}igner matrix},
   JOURNAL = {Probab. Theory Related Fields},
  FJOURNAL = {Probability Theory and Related Fields},
    VOLUME = {157},
      YEAR = {2013},
    NUMBER = {1-2},
     PAGES = {81--106},
      ISSN = {0178-8051},
   MRCLASS = {60B20},
  MRNUMBER = {3101841},
MRREVIEWER = {Nizar Demni},
       DOI = {10.1007/s00440-012-0450-3},
       URL = {https://doi-org.bris.idm.oclc.org/10.1007/s00440-012-0450-3},
}

\bib{TW0}{article}{
AUTHOR = {Tracy, Craig A.},
author={Widom, Harold},
     TITLE = {Introduction to random matrices},
 BOOKTITLE = {Geometric and quantum aspects of integrable systems
              ({S}cheveningen, 1992)},
    SERIES = {Lecture Notes in Phys.},
    VOLUME = {424},
     PAGES = {103--130},
 PUBLISHER = {Springer, Berlin},
      YEAR = {1993},
   MRCLASS = {82B41},
  MRNUMBER = {1253763},
       DOI = {10.1007/BFb0021444},
       URL = {https://doi-org.bris.idm.oclc.org/10.1007/BFb0021444},
}

\bib{TW}{article}{
AUTHOR = {Tracy, Craig A.}
author={Widom, Harold},
     TITLE = {Level-spacing distributions and the {A}iry kernel},
   JOURNAL = {Comm. Math. Phys.},
  FJOURNAL = {Communications in Mathematical Physics},
    VOLUME = {159},
      YEAR = {1994},
    NUMBER = {1},
     PAGES = {151--174},
      ISSN = {0010-3616},
   MRCLASS = {82B05 (33C90 47A75 47G10 47N55 82B10)},
  MRNUMBER = {1257246},
MRREVIEWER = {Estelle L. Basor},
       URL = {http://projecteuclid.org/euclid.cmp/1104254495},
}

\end{biblist}
\end{bibsection}
\end{document}